\documentclass[10pt]{article}

\usepackage[table]{xcolor} %
\usepackage{vmargin}
\setmarginsrb{28mm}{25mm}{28mm}{25mm}{0pt}{0mm}{0pt}{0mm}
\usepackage{amssymb}
\usepackage{amsmath}
\usepackage{amsthm}
\usepackage{pgfplots} 
\pgfplotsset{compat=1.9} %
\usepackage{graphicx}
\graphicspath{{Figures/}}
\usepackage[utf8]{inputenc}
\usepackage{tikz}
\usepackage{bbm}
\usepackage{mathtools}
\usepackage{subcaption}
\usepackage[boxruled]{algorithm2e}
\usepackage{parskip}
\usepackage[title,titletoc]{appendix}
\usepackage{booktabs}
\usepackage[numbers,sort&compress]{natbib} %

\usepackage{enumerate}
\usepackage[]{hyperref}
\hypersetup{colorlinks,citecolor=blue,linkcolor=blue,urlcolor=blue}
\usepackage{mathrsfs} %
\usepackage{placeins} %


\newcommand{\F}{\mathcal{F}}
\newcommand{\norm}[1]{\left\lVert#1\right\rVert}

\renewcommand{\phi}{\varphi}

\newtheorem{theorem}{Theorem}[section]
\newtheorem{proposition}{Proposition}[section]
\newtheorem{corollary}{Corollary}[section]
\newtheorem{lemma}{Lemma}[section]
\newtheorem{definition}{Definition}[section]
\newtheorem{remark}{Remark}[section]
\newtheorem{example}{Example}[section]

\newcounter{auxlem}
\renewcommand{\theauxlem}{\Alph{auxlem}}
\newcommand{\auxlemma}[1]{\refstepcounter{auxlem}\noindent\textbf{Lemma \theauxlem\ (#1).}}

\newcommand{\R}{\mathbb{R}}
\newcommand{\E}{\mathbb{E}}

\newcommand{\Bn}{\widetilde{B}}                  %
\newcommand{\Bb}{\bar{B}}                        %
\newcommand{\Db}{\bar{D}}                        %
\newcommand{\Lm}{L_-}                            %
\newcommand{\Lmn}{\mathcal{L}_-}                 %
\DeclareMathOperator{\tr}{tr}
\DeclareMathOperator{\vol}{vol}
\newcommand{\volb}{\overline{\operatorname{vol}}}
\newcommand{\Ib}{\mathit{Ib}}
\newtheorem{alg}{Algorithm}[section]

\numberwithin{equation}{section}
\newlength{\perdsbar}\newlength{\perdsrowtop}

\title{A Bochner--Weitzenb\"ock Decomposition for Cellular Sheaf
Laplacians,\\ Curvature and Neural Sheaf Diffusion}

\author{%
  Francesco Maria Guadagnuolo$^{1,2}$\thanks{Corresponding author:
  \texttt{guadagnuolo.2315644@studenti.uniroma1.it}}
  \and Alessio Borgi$^{1,3}$
  \and Dorina Thanou$^{4}$
  \and Pietro Li\`o$^{3}$
}

\date{%
  \normalsize
  $^{1}$Sapienza University of Rome, Rome, Italy\\
  $^{2}$Italian Institute of Artificial Intelligence (AI4I), Turin, Italy\\
  $^{3}$University of Cambridge, Cambridge, United Kingdom\\
  $^{4}$\'Ecole Polytechnique F\'ed\'erale de Lausanne (EPFL), Lausanne,
  Switzerland
}

\begin{document}

\maketitle

\begin{abstract}
We extend the Bochner--Weitzenb\"ock decomposition to cellular sheaf
Laplacians, splitting it into a positive semidefinite term,
the Bochner sheaf Laplacian, and a block diagonal term, the Forman
curvature.
We show that the Bochner Laplacians form a class of cellular sheaves that
extends both connection and matrix-weighted Laplacians, and generalizes
signed graphs to $d$ dimensions.
We give a holonomy characterization of their kernel, prove the related
Cheeger inequalities, and attach to them a random walk that generalizes
the Gremban expansion.
For the Forman term, we prove a discrete Bochner theorem, which bounds the
first cohomology of a cellular sheaf with nonnegative curvature, and we read
the whole decomposition as an opinion-dynamics model.
We then apply this machinery to Sheaf Neural Networks, whose learnt
Laplacian is otherwise unobserved.
The Bochner interpretation yields a node-level conviction score that acts as
an operational importance measure under frozen-model interventions, while
the Forman profile provides a local descriptor of the learnt operator.
\end{abstract}

\section{Introduction}
\label{sec:introduction}

A sheaf prescribes how local data glue into global data, and turns the
obstruction to gluing into a cohomological
invariant~\cite{bredon1997sheaf,iversen1986cohomology,kashiwara1990sheaves,houzel1990short}.
On a graph or a simplicial complex, a \emph{cellular
sheaf}~\cite{curry2014sheavescosheavesapplications,rosiak2022sheaf} attaches
a vector space to each cell and a linear map to each incidence, so that the
theory becomes finite linear algebra.
In this form sheaves have been used for fusing sensor
readings~\cite{robinson2016sheavescanonicaldatastructuresensor}, for
distributed coordination and
optimization~\cite{hanks2025distributedmultiagentcoordinationcellular,distributed_optimization},
and in topological data
analysis~\cite{Curry2015,Edelsbrunner2002,edelsbrunner2010computational},
wherever a relation on a graph needs more than a scalar weight.
Their central operator is the sheaf Laplacian, which has the foundations of
a spectral theory~\cite{hansen2019toward} and a signal processing
theory~\cite{dacunto2026sheaftheoreticsignalprocessinggraphs}, and which
models opinion dynamics, the stalks carrying private opinions and the
restriction maps the discourse between agents~\cite{Discourse_sheaves}.

In machine learning, where topology has entered from geometric deep
learning~\cite{bronstein2021geometricdeeplearninggrids} to higher-order
domains such as simplicial complexes and
hypergraphs~\cite{pontryagin1952foundations,bretto2013hypergraph}, the most
notable application of sheaves is the \emph{Sheaf Neural
Network}~\cite{hansen2020sheaf,bodnar2022neural}.
A graph neural network diffuses node features along the edges, and fails on
heterophilic graphs, where neighbours tend to carry different labels.
A Sheaf Neural Network learns a sheaf over the graph and diffuses with its
sheaf Laplacian, so that each node transforms its message differently for
each neighbour, which restores separation power on heterophilic data.
The sheaf is then an object produced by training.
The limiting behaviour of sheaf diffusion is described by the Hodge
decomposition, which identifies the kernel of the sheaf Laplacian with sheaf
cohomology.
Three questions that are standard for the graph Laplacian, however, remain
open for a general sheaf Laplacian.
The first is \emph{spectral}: a general spectral theory of cellular sheaves
is still far from
complete~\cite{hansen2019toward,dacunto2026sheaftheoreticsignalprocessinggraphs},
and Cheeger-type bounds, classical for graphs~\cite{chung1997spectral}, are
available only in special cases, notably connection
Laplacians~\cite{bandeira2013cheegerinequalitygraphconnection,singer2011vectordiffusionmapsconnection}.
The second is \emph{probabilistic}: it is not clear how to attach to a sheaf
Laplacian a vector-valued random walk, a question left open
in~\cite{hansen2019toward}.
The third is \emph{combinatorial}: cellular sheaves lack an analogue of
Forman's Bochner decomposition~\cite{Forman2003Bochner}, a local
curvature, cheaper to compute than the spectrum and easier to interpret.

We address the three at once with a Bochner decomposition for cellular
sheaves.
Forman splits the Hodge Laplacian of a complex into a positive semidefinite
term and a diagonal curvature term, using the absolute value of the
off-diagonal entries; replacing it by the matrix absolute value
$|A|:=\sqrt{AA^\top}$ extends the construction to every symmetric block
matrix, hence to every sheaf Laplacian, on a graph or on a complex.
We call \emph{Bochner matrices} the class of Laplacians the first term
belongs to.
It extends the connection Laplacians, the matrix-weighted
Laplacians~\cite{trinh2026matrix,expansion_matrix_weighted_graphs} and the
weighted signed graphs~\cite{Tian_2024}: a Bochner matrix is an anisotropic
connection Laplacian, in which a positive semidefinite matrix replaces the
scalar weight of an edge.
We show that its kernel admits the same holonomy characterization known for
connection Laplacians.

We then prove \emph{Cheeger inequalities} for Bochner matrices, extending
those of Bandeira, Singer and Spielman for connection
Laplacians~\cite{bandeira2013cheegerinequalitygraphconnection}: they relate
the low eigenvalues of the normalized Bochner Laplacian to three frustration
constants, the residuals of synchronization over $\mathbb{S}^{d-1}$, partial
or full, and over $O(d)$.
We also attach to a Bochner matrix a \emph{random walk} on
$V\times\mathbb{S}^{d-1}$, whose jump law depends on the orientation the
walker carries.
In the language of signed graphs it is a \emph{generalized Gremban
expansion}, each node being lifted to a whole sphere instead of two copies,
and in the isotropic case it recovers the vector-valued walk of the
connection Laplacian~\cite{singer2011vectordiffusionmapsconnection}.
These two results answer the spectral and the probabilistic question.

The \emph{Forman term} answers the combinatorial one.
It is block diagonal, each block computed from the restriction maps at a
single node.
On $1$-cells we call it the \emph{sheaf Forman Ricci curvature}, and we prove
a Bochner theorem for cellular sheaves: under nonnegative curvature, the
first sheaf cohomology of a compact connected combinatorial $n$-manifold with
$d$-dimensional stalks has dimension at most $nd$ for $n\leq3$, and at most
$d\cdot D(M)$ in any dimension, with $D(M)$ the local homological dimension
of the complex.
Together, the two terms give the decomposition a dynamical reading that
generalizes~\cite{Nurisso2025}: sheaf diffusion can be seen as a
reaction-diffusion process, the Bochner term diffusive and the Forman term
reactive.
On a graph this is an opinion dynamics model, in which the Bochner term
drives the agents towards a common \emph{conviction}, the norm a stalk
carries, and the Forman block measures how loudly an agent speaks relative to
its neighbours, its \emph{relative volume}.

We then return to Sheaf Neural Networks.
The available theory describes what they can represent: convergence of sheaf
diffusion, and a universal approximation theorem by which a network can learn
any sheaf on a graph.
What is missing is a description of what the learning mechanism \emph{does}:
which region of the sheaf space a training pipeline reaches, and whether
different pipelines are biased towards different regions.
From the decomposition we derive two diagnostics of neural sheaf diffusion,
one per term: the \emph{conviction} and the \emph{Forman profile}.
The conviction is an operational score of node importance: on a frozen
trained network, removing the highest-conviction nodes is the most damaging
intervention at every budget and in every perturbation mode, while removing
the lowest-conviction ones costs less than removing arbitrary nodes of the
same graph.
The Forman profile tracks, layer by layer, how a learning mechanism navigates
the sheaf space, and each of the three mechanisms we test leaves its own
signature: normalized general sheaf learning converges to local alignment
with the normalized graph Laplacian, unnormalized learning aligns with
neither reference, and joint
diffusion~\cite{caralt2024jointdiffusionprocessesinductive} reverses the role
of normalization.
Evaluated without normalization, the profile of the same learnt maps stays
far from that of the identity sheaf, so we read the alignment as an inductive
bias towards one region of the sheaf space.

\subsection{Related work}
\label{sec:related_work}

Sheaves on directed and undirected graphs have recently been given a
differential calculus~\cite{fioresi2026sheavesgraphsdifferentialcalculi},
formalized for the Grothendieck topology and developed in the spirit of
noncommutative geometry, where a Bochner Laplacian is defined as
$\nabla^*\nabla$ for $\nabla$ the connection of a bundle over a graph.
We work instead with cellular sheaves, hence over the Alexandrov topology,
and our starting point is the decomposition of an existing sheaf Laplacian.
A recent work also brings Forman curvature and sheaves
together~\cite{wang2026curvatureguidedsheafdiffusionunsupervised}, building
restriction maps as a function of the Forman curvature of the edges and
using the resulting sheaf Laplacian for community detection; the direction
there is the opposite of ours: we define a curvature \emph{from} a cellular
sheaf, while they use a graph curvature to define a particular sheaf.
Finally, dynamical processes on connection Laplacians have received
considerable attention
lately~\cite{gallo2025global,gallo2026synchronization,gallo2026turing,lambiotte_matrix_weighted},
with a focus on Kuramoto models, Turing patterns and network coherence, and
our framework opens the possibility of carrying those results to the
anisotropic setting while still controlling the balance of the system
through holonomy.
We note that these works use the term ``matrix-weighted graph'' to mean a
connection Laplacian, whereas we reserve it for the sense
of~\cite{trinh2026matrix}.

Curvature has also been studied on connection Laplacians independently of
sheaves.
Liu, M\"unch and Peyerimhoff~\cite{Liu_2019_curvature_buser} define a
Bakry--\'Emery curvature for connection graphs and derive higher-order Buser
inequalities from it, with Cheeger constants mixing the expansion of the
underlying graph with the frustration index of the signature.
This notion of curvature was moreover reformulated
in~\cite{hu2026discretebakryemerycurvaturetensors}.
Such curvatures are available on a connection Laplacian because it generates
a random walk, with scalar transition rates.
For a general sheaf Laplacian it is not clear how to obtain one, and that
failure is exactly the Forman term, which measures how far the diagonal is
from the sum of the magnitudes of the off-diagonal blocks.
Sheaf Laplacians with vanishing Forman term are the Bochner matrices, and
these, as we show in this work, do carry a random walk, so we believe a Bakry--\'Emery curvature can
be built for them, and we leave this to future work.

Sheaf Neural Networks were introduced by Hansen and
Gebhart~\cite{hansen2020sheaf} on a sheaf fixed in advance; Bodnar et
al.~\cite{bodnar2022neural} learn the sheaf from the data, and prove that a
network can learn any sheaf on a graph.
Several works have since studied the parametrization of the restriction
maps, through connection Laplacians aligned with the data
manifold~\cite{pmlr-v196-barbero22a}, equivariant
transport~\cite{borgi2026equivariantsheafneuralnetworks}, or a diffusion
process on the restriction maps themselves, run jointly with the one on the
node features~\cite{caralt2024jointdiffusionprocessesinductive}.
Others have extended the architecture, to features on SPD
manifolds~\cite{peng2026sheafneuralnetworksspd} and to message diffusion over
flat vector bundles~\cite{bamberger2025bundleneuralnetworksmessage}.
The sheaf formalism has also been used to generalize attention in graph
neural networks~\cite{ali2026letheadstalkdiagonal}.
On the theoretical side, oversmoothing has been related to a degeneration of
the representation underlying the learnt
sheaf~\cite{donmez2026oversmoothingrepresentationdegeneracyneural}.
On the other hand, Hernandez Caralt et
al.~\cite{caralt2026necessitylearnablesheaflaplacians} showed empirically
that identity restriction maps perform similarly to learnt ones, which
questions to what extent learning the sheaf matters.
Finally, crucial features of graph neural networks such as over-squashing
have been related to the curvature of the graph~\cite{Topping2021UnderstandingOA}.
While we build a notion of Forman curvature for cellular sheaves, we do not
pursue this direction, which would involve the Forman term of the sheaf
Hodge Laplacian in degree $1$; in Section~\ref{sec:forman_profile} we
compute instead the Forman term of the graph sheaf Laplacian, in degree
$0$, which to avoid ambiguity we call Forman term and not curvature.
Studying over-squashing in Sheaf Neural Networks from this curvature
perspective is left for future work.
We refer to Sections~\ref{sec:snn} and~\ref{sec:forman_profile} for a further
discussion.

\subsection{Organization of the paper}
\label{sec:organization}

The paper is organized as follows.
Section~\ref{sec:background} fixes notation and recalls cellular sheaves,
signed graphs and Forman's decomposition.
Section~\ref{sec:bochner_decomposition} introduces the Bochner decomposition
for sheaf Laplacians, and Sections~\ref{sec:bochner_matrices}
and~\ref{sec:forman_term} study its two terms in turn: within the first,
Section~\ref{sec:cheeger} proves the Cheeger inequalities and
Section~\ref{sec:random_walk} attaches the random walk.
Section~\ref{sec:applications} develops the dynamical reading and the
applications to Sheaf Neural Networks.
We conclude in Section~\ref{sec:conclusion}.
Proofs deferred from the body are collected in Appendix~\ref{sec:proofs},
each statement being restated before its proof, further properties of
Bochner matrices are in Appendix~\ref{sec:further}, and the experimental
protocols with the additional material are in
Appendices~\ref{sec:supp_conv} and~\ref{sec:supp_forman}.

\section{Background}
\label{sec:background}

This section fixes notation for cellular sheaves and their Laplacians, and
recalls the facts on signed graphs and on Forman's Bochner decomposition that
the paper builds on.

\subsection{Cellular sheaves on simplicial complexes}
\label{sec:sheaves}

We follow~\cite{muhammad2006control_kyoto,grady2010discrete} for simplicial
cochain complexes and~\cite{rosiak2022sheaf,curry2014sheavescosheavesapplications}
for cellular sheaves.
Let $X$ be a finite simplicial complex, $X_k$ its set of $k$-simplices and
$n:=|X_0|$.
We write $\sigma\trianglelefteq\tau$ for inclusion of simplices, with no
restriction on the difference in dimension, so that the simplices of $X$ form
the face poset; the words face and coface are
reserved for the codimension-one case, the only one entering the formulas
below.
Fixing an orientation of each simplex, $\epsilon_{\sigma\tau}\in\{-1,0,+1\}$
is the relative orientation of a codimension-one pair, and vanishes when
$\sigma$ is not a face of $\tau$.

\begin{definition}[Cellular sheaf]
\label{def:cellular_sheaf}
A cellular sheaf $\F$ on $X$ assigns to every simplex $\sigma$ a
finite-dimensional real vector space $\F(\sigma)$, its stalk, and
to every pair $\sigma\trianglelefteq\tau$ a linear restriction map
$\F_{\sigma\trianglelefteq\tau}\colon\F(\sigma)\to\F(\tau)$, with
$\F_{\sigma\trianglelefteq\sigma}=\mathrm{id}_{\F(\sigma)}$ and
$\F_{\rho\trianglelefteq\tau}=\F_{\sigma\trianglelefteq\tau}\,\F_{\rho\trianglelefteq\sigma}$
whenever $\rho\trianglelefteq\sigma\trianglelefteq\tau$.
\end{definition}

Equivalently, $\F$ is a functor from the face poset to finite-dimensional
real vector spaces~\cite{curry2014sheavescosheavesapplications}, and it is
determined by the maps attached to faces.
Throughout, every stalk has dimension $d$.
On $C^k(X;\F)=\bigoplus_{\sigma\in X_k}\F(\sigma)$ the coboundary
$\delta_k\colon C^k(X;\F)\to C^{k+1}(X;\F)$ is:
\begin{equation}
  (\delta_kc)_\tau \;=\;
  \sum_{\substack{\sigma\trianglelefteq\tau\\ \dim\sigma=k}}
  \epsilon_{\sigma\tau}\,\F_{\sigma\trianglelefteq\tau}\,c_\sigma ,
\end{equation}
it satisfies $\delta_{k+1}\delta_k=0$, and
$\mathcal H^k(X;\F):=\ker\delta_k/\operatorname{im}\delta_{k-1}$ is the
sheaf cohomology of $\F$; the constant sheaf $\R$ recovers the
simplicial cochain complex of $X$.
With the standard inner product on each $C^k(X;\F)$ and $\delta_k^*$ the
adjoint of $\delta_k$, the Hodge sheaf Laplacian
$H_k^\F=\delta_{k-1}\delta_{k-1}^*+\delta_k^*\delta_k$
is a positive semidefinite matrix with $d\times d$ blocks indexed by $X_k$,
and $\ker H_k^\F\cong\mathcal H^k(X;\F)$.
We write $\partial:=\delta^*$ for the boundary operator, with the
degree left implicit, so that $\partial^*=\delta$.
At $k=0$ we write $L_\F:=H_0^\F=\delta_0^*\delta_0$ for the graph
sheaf Laplacian; on a graph $G=(V,E)$ its blocks are
$(L_\F)_{vv}=\sum_{e:\,v\trianglelefteq e}\F_{v\trianglelefteq e}^\top\F_{v\trianglelefteq e}$
and, on an edge $e=(v,u)$,
$(L_\F)_{vu}=-\F_{v\trianglelefteq e}^\top\F_{u\trianglelefteq e}$.

Two classes of graph sheaf Laplacians recur below.
Taking $\F_{v\trianglelefteq e}=\sqrt{w_{vu}}\,O_{v\trianglelefteq e}$ with
$O_{v\trianglelefteq e}$ orthogonal and $w_{vu}>0$ gives off-diagonal blocks
$-w_{vu}W_{vu}^\top$, with $W_{vu}^\top:=O_{v\trianglelefteq e}^\top
O_{u\trianglelefteq e}\in O(d)$: these are the \emph{connection
Laplacians}~\cite{singer2011vectordiffusionmapsconnection,bandeira2013cheegerinequalitygraphconnection},
called \emph{bundle sheaves} in Sheaf Neural
Networks~\cite{hansen2020sheaf,bodnar2022neural}.
Taking $\F_{v\trianglelefteq e}=M_{vu}^{1/2}$ with $M_{vu}\succeq0$ gives
off-diagonal blocks $-M_{vu}$, without rotation: these are the
\emph{matrix-weighted
Laplacians}~\cite{trinh2026matrix,expansion_matrix_weighted_graphs}.
With $D:=\operatorname{blockdiag}\bigl((L_\F)_{vv}\bigr)_{v\in V}$ the degree
matrix, the \emph{normalized sheaf Laplacian}
$\Delta_\F:=D^{-1/2}L_\F D^{-1/2}$ has spectrum in
$[0,2]$~\cite{hansen2019toward}.
We work with unweighted sheaves: every weighted sheaf Laplacian, and in
particular every normalized one, is the Laplacian of some unweighted
sheaf~\cite{hansen2019toward}, a fact Sections~\ref{sec:bochner_decomposition}
and~\ref{sec:forman_profile} use directly.

\subsection{Signed graphs, balance and holonomy}
\label{sec:signed_graphs}

A signed graph $\Gamma=(V,E,W)$ carries a symmetric weighted
adjacency matrix $W$ whose entries may be negative~\cite{Tian_2024}.
With degrees $d_v:=\sum_u|w_{vu}|$, $D:=\operatorname{diag}(d_v)$, and
$W=W^+-W^-$ split into its nonnegative parts, the signed Laplacian
is $L_\Gamma:=D-W=D-W^++W^-$, and its quadratic form is:
\begin{equation}
\label{eq:signed_energy}
  x^\top L_\Gamma x \;=\;
  \sum_{vu\in E}|w_{vu}|\,\bigl(x_v-\operatorname{sgn}(w_{vu})\,x_u\bigr)^2 ,
\end{equation}
so $L_\Gamma\succeq0$.
Unlike the unsigned case, the kernel may be trivial, and its dimension is a
structural invariant~\cite{ZASLAVSKY198247}, controlled by balance.
A cycle is positive if it carries an even number of negative edges.

\begin{definition}[Balance and antibalance]
\label{def:balance}
$\Gamma$ is balanced if every cycle is positive, and
antibalanced if the signed graph obtained by reversing every edge
sign is balanced.
\end{definition}

\begin{theorem}[Structure theorem for balance~\cite{Harary1953OnTN,structural_duality}]
\label{thm:structural_balance}
$\Gamma$ is balanced if and only if there is a partition $V=V_1\sqcup V_2$
such that every edge within $V_1$ or within $V_2$ is positive and every edge
between $V_1$ and $V_2$ is negative.
It is antibalanced if and only if the same holds with the two signs
exchanged.
\end{theorem}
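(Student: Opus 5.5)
The plan is to use a spanning forest to build a sign function on the vertices, show that balance is exactly the condition that this function is consistent on every edge, and then read off the partition. Reducing to a connected graph is harmless: a partition for each component can be merged by taking unions of the parts.

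For the forward direction of the first claim, assume $\Gamma$ is balanced and fix a spanning tree $T$ of a component with root $r$. Define $s\colon V\to\{\pm1\}$ by setting $s(v)$ equal to the product of the signs $\operatorname{sgn}(w_{e})$ along the unique $T$-path from $r$ to $v$. Take $V_1:=\{s=+1\}$ and $V_2:=\{s=-1\}$. For a tree edge $vu$ we have $s(u)=\operatorname{sgn}(w_{vu})s(v)$ by construction. For a non-tree edge $vu$, the edge closes a fundamental cycle with the tree paths, and the sign of that cycle is $s(v)\,s(u)\,\operatorname{sgn}(w_{vu})$. Balance forces this product to be $+1$, so $\operatorname{sgn}(w_{vu})=s(v)s(u)$. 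Thus every edge is positive exactly when its endpoints lie in the same part, which is the required partition. The one place to be careful is that positivity of fundamental cycles suffices. Here we do not even need cycle-space generation, since balance gives positivity of every cycle, fundamental ones included.

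For the converse, assume such a partition exists and set $s=+1$ on $V_1$ and $s=-1$ on $V_2$, so that $\operatorname{sgn}(w_{vu})=s(v)s(u)$ on every edge. For any cycle $v_0v_1\cdots v_k=v_0$, the product of edge signs is $\prod_i s(v_i)s(v_{i+1})$. Each vertex appears twice in this product, so it equals $1$, and the cycle is positive.

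The antibalanced statement follows by applying the first part to the graph with all edge signs reversed. Reversing the signs swaps the roles of positive and negative edges in the partition condition.

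I expect no real obstacle here. The only subtlety is well-definedness of $s$ on each component, and this is handled by using the tree path rather than arbitrary paths.
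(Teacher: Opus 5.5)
Your proof is correct and complete. The tree-path sign function $s$ satisfies $\operatorname{sgn}(w_{vu})=s(v)s(u)$ on tree edges by construction, and on non-tree edges by positivity of the fundamental cycle. The converse is the telescoping identity $\prod_i s(v_i)s(v_{i+1})=1$. The antibalanced case is the first part applied to the sign-reversed graph, which matches how the paper defines antibalance.

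The paper does not prove this theorem. It quotes it from Harary as background, so there is no in-paper proof to compare with. The closest analogue is the proof of Proposition~\ref{prop:holonomy}. Its $d=1$ case is essentially your argument:
\begin{itemize}
\item $s(v)$ plays the role of the transport map $T_\gamma$ from a base vertex;
\item $\mathbbm{1}_{V_1}-\mathbbm{1}_{V_2}$ is the resulting kernel vector of Theorem~\ref{thm:kernel_signed}.
\end{itemize}

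Defining $s$ through tree paths also gains you something over that formulation. The paper's holonomy condition is stated for every closed walk, but balance only constrains cycles. Because you test only fundamental cycles, you never need the extra step that positivity of every cycle implies positivity of every closed walk, which would otherwise require a parity decomposition of a closed walk into cycles.
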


The partition is visible in the
spectrum~\cite{Harary1953OnTN,gallier2016spectraltheoryunsignedsigned,Atay_2020_cheeger}.
Writing $\mathbbm{1}_{V_1}$ for the indicator vector of $V_1$:

\begin{theorem}[Kernel of a signed Laplacian]
\label{thm:kernel_signed}
Let $\Gamma$ be connected.
Then $\Gamma$ is balanced if and only if $\dim\ker L_\Gamma=1$, in which
case the kernel is spanned by $\mathbbm{1}_{V_1}-\mathbbm{1}_{V_2}$ for the
partition of Theorem~\ref{thm:structural_balance}.
\end{theorem}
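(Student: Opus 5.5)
The plan is to compute the kernel directly from the quadratic form~\eqref{eq:signed_energy}. Since every term $|w_{vu}|\bigl(x_v-\operatorname{sgn}(w_{vu})x_u\bigr)^2$ is nonnegative and $L_\Gamma\succeq0$, we have $x\in\ker L_\Gamma$ if and only if $x^\top L_\Gamma x=0$. This in turn holds if and only if $x_v=\operatorname{sgn}(w_{vu})\,x_u$ on every edge $vu\in E$, where $|w_{vu}|>0$ because $vu$ is an edge. Thus the kernel is the solution space of a system of edge constraints, each of which transports a value along an edge, multiplied by the sign of that edge.

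The first step is to show $\dim\ker L_\Gamma\le1$ unconditionally. Fix a root $r\in V$. Since $\Gamma$ is connected, every vertex $v$ is joined to $r$ by a path. Iterating the edge constraint along that path gives $x_v=\pm x_r$, with the sign equal to the product of the edge signs. Hence $x$ is determined by $x_r$, and the kernel is at most one-dimensional. It follows that $\dim\ker L_\Gamma=1$ if and only if the kernel contains a nonzero vector, and such a vector has $x_r\neq0$, so every entry is nonzero.

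For the direction "balanced $\Rightarrow$ dimension one", invoke Theorem~\ref{thm:structural_balance} to obtain the partition $V=V_1\sqcup V_2$, and set $x=\mathbbm{1}_{V_1}-\mathbbm{1}_{V_2}$. We check the constraints on each edge:
\begin{itemize}
\item on an edge inside $V_1$ or inside $V_2$ the sign is positive and $x_v=x_u$;
\item on an edge across the partition the sign is negative and $x_v=-x_u$.
\end{itemize}
So $x\neq0$ lies in the kernel, and by the first step it spans it.

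For the converse, take a nonzero kernel vector $x$. By the first step all its entries are nonzero, so we may define $V_1=\{x_v>0\}$ and $V_2=\{x_v<0\}$. The edge constraint $x_v=\operatorname{sgn}(w_{vu})x_u$ forces positive edges to join vertices of the same sign and negative edges to join vertices of opposite signs. This is exactly the partition of Theorem~\ref{thm:structural_balance}, so $\Gamma$ is balanced.

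There is no genuine obstacle. The only point needing care is the first step: one must note that the sign propagated to $v$ does not depend on the chosen path once a nonzero solution exists. This is automatic, since we only use that $x$ is determined by $x_r$ and never need well-definedness of the propagated sign. The balance condition enters solely through Theorem~\ref{thm:structural_balance}, which is assumed.
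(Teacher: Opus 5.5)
Your proof is correct. The paper does not prove this statement on its own (it is quoted from the literature), but it recovers it as the case $d=1$ of Proposition~\ref{prop:holonomy}. Your first step is exactly the injectivity half of that proof: evaluation at a root is injective on the kernel, by propagating the edge constraints along paths.

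The two routes part ways at the equality case. The paper asks whether the propagated value is path independent, i.e.\ whether $T_\gamma=I$ on every closed walk. For $d=1$, $T_\gamma$ is the product of the edge signs along $\gamma$, so this is balance directly from Definition~\ref{def:balance}, with no appeal to Harary's theorem. Transporting $x_r$ from the root then produces the kernel vector, whose sign pattern is the partition. You instead use Theorem~\ref{thm:structural_balance} in both directions: forward, to write down $\mathbbm{1}_{V_1}-\mathbbm{1}_{V_2}$ and check it edge by edge, and backward, to read a balancing partition off the signs of a kernel vector.

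Your route has the merit of exhibiting at once the spanning vector that the statement names. The holonomy route avoids the structure theorem for the equivalence itself, and it is the one that extends to $O(d)$-valued transport, where there is no partition to read off. The path independence you set aside in your last paragraph is precisely what the paper's route turns into the criterion. It would also let you finish your converse without Theorem~\ref{thm:structural_balance}: around a cycle $v_0,\dots,v_k=v_0$, the edge constraint gives $\operatorname{sgn}(w_{v_iv_{i+1}})=x_{v_i}/x_{v_{i+1}}$, so the product of the signs telescopes to $1$.
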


At the other end of the spectrum, the normalized signed Laplacian
$I-D^{-1/2}WD^{-1/2}$ has spectrum in $[0,2]$ and attains $2$ when $\Gamma$ is
antibalanced~\cite{antibalance_signed_graph}; a connected non-bipartite
signed graph cannot be both balanced and antibalanced, since an odd cycle
would be positive and negative at once.
When $\Gamma$ is not balanced, its frustration index, the minimal
number of edges whose removal makes it balanced, measures how far it is from
balance~\cite{zaslavsky1987balanced,harary1959measurement}.
Other quantitative measures of balance have also been proposed, together
with related Cheeger inequalities for signed
graphs~\cite{Atay_2020_cheeger,Lange_2015_frustration}.
In the \emph{Gremban
expansion}~\cite{GrembanPHD,Numerical_methods_Gremban_expansion,Investigation_spectral_signed_matrix}
each vertex is split into a positive and a negative copy, a walker crossing a
negative edge switching copy, so that $L_\Gamma$ becomes an unsigned graph
Laplacian on $2n$ vertices~\cite{Tian_2024}.

On a connection Laplacian, balance becomes holonomy.
Every walk $\gamma:v_0\to\cdots\to v_k$ carries the \emph{transport map}
$T_\gamma:=W_{v_{k-1}v_k}\cdots W_{v_0v_1}\in O(d)$,
the connection has \emph{trivial holonomy} when $T_\gamma=I$ for every
closed walk, and the kernel of the connection Laplacian has dimension at most
$d$, with equality exactly in that
case~\cite{singer2011vectordiffusionmapsconnection,bandeira2013cheegerinequalitygraphconnection,bodnar2022neural}.
For $d=1$, $T_\gamma$ is the product of the signs along $\gamma$, trivial
holonomy is balance, and this is Theorem~\ref{thm:kernel_signed}.

\subsection{Forman's Bochner decomposition}
\label{sec:forman_background}

On a Riemannian manifold the Bochner--Weitzenb\"ock formula splits the Hodge
Laplacian on $1$-forms as
$\Delta=\nabla^*\nabla+\mathrm{Ric}$~\cite{Petersen2016RiemannianGeometry},
and the Bochner theorem follows from it: under nonnegative Ricci curvature
every harmonic $1$-form is parallel, so the first cohomology has dimension
at most $n$.
Forman gave a combinatorial counterpart of both~\cite{Forman2003Bochner},
resting on a splitting that applies to any symmetric matrix.

\begin{definition}[Bochner and curvature terms]
\label{def:scalar_bochner}
Let $A$ be a symmetric $n\times n$ matrix.
Define $B$ by $B_{ij}=A_{ij}$ for $i\neq j$ and $B_{ii}=\sum_{k\neq
i}|A_{ik}|$, and set $F:=A-B$, the diagonal matrix with
$F_{ii}=A_{ii}-\sum_{k\neq i}|A_{ik}|$.
\end{definition}

The matrix $B$ is diagonally dominant, hence positive semidefinite, and it is
a signed graph Laplacian; conversely a signed graph Laplacian is its own
Bochner term, with $F=0$.
In the scalar case the decomposition therefore reads: every symmetric matrix
is a signed graph Laplacian plus a diagonal.

When $A$ is a Hodge Laplacian, the diagonal remainder can be seen as a
curvature.
Forman works on weighted complexes, with a weight $w_\sigma>0$ on each
simplex.
Writing the Hodge Laplacian in the orthonormal basis of the weighted inner
product, applying Definition~\ref{def:scalar_bochner} and rescaling by
$w_\sigma$ gives Forman's curvature~\cite{Forman2003Bochner}:
\begin{equation}
  \begin{aligned}
  \mathscr{F}_k(\sigma) = w_\sigma\Biggl[
    &\sum_{\sigma \trianglelefteq\tau^{(k+1)}}\,\frac{w_\sigma}{w_\tau}
    + \sum_{\gamma^{(k-1)} \trianglelefteq \sigma} \frac{w_\gamma}{w_\sigma}\\
    &- \sum_{\sigma'\neq\sigma}
    \left|
      \sum_{\tau^{(k+1)}} \epsilon_{\sigma\tau}\epsilon_{\sigma'\tau}
        \frac{\sqrt{w_\sigma w_{\sigma'}}}{w_\tau}
      + \sum_{\gamma^{(k-1)}} \epsilon_{\gamma\sigma}\epsilon_{\gamma\sigma'}
        \frac{w_\gamma}{\sqrt{w_\sigma w_{\sigma'}}}
    \right|
  \Biggr],
  \end{aligned}
\end{equation}
the bracket being exactly the Forman term $(F_k)_{\sigma\sigma}$, so that
$\mathscr{F}_k(\sigma)=w_\sigma\cdot(F_k)_{\sigma\sigma}$ and the two
coincide when every weight is $1$.
A positive factor does not change the sign on which every result below
depends, and in the sheaf setting there are no weights, by the reduction of
Section~\ref{sec:sheaves}; therefore, we identify the curvature with the term
and write $F_k$ and $\mathrm{Ric}$.

With unit weights, the Forman term is a local count.
Two distinct $k$-simplices are upper adjacent,
$\sigma\frown\sigma'$, if they share a coface, and lower adjacent,
$\sigma\smile\sigma'$, if they share a face; they are \emph{parallel} if
exactly one of the two holds, and \emph{transverse} if both do, in which
case the contributions through the face and through the coface cancel.
With $\deg_u$ and $\deg_l$ the numbers of cofaces and of faces:
\begin{equation}
  F_k(\sigma) \;=\; \deg_u(\sigma) + \deg_l(\sigma)
  \;-\; \#\{\text{parallel neighbours of }\sigma\},
\end{equation}
which on an edge, where $\deg_l(e)=2$, is the \emph{Forman--Ricci
curvature}:
\begin{equation}
  \mathrm{Ric}(e) \;=\; \#\{\text{triangles containing } e\} + 2
  - \#\{\text{parallel neighbours of } e\}.
\end{equation}
At $k=0$ every neighbour of a vertex is parallel, so $F_0\equiv0$, and the
construction carries no information at the vertices of a graph.

Forman's theorem holds on quasiconvex complexes.
A CW complex is regular when the characteristic map of every cell
is a homeomorphism onto its image, so that the closure of each cell is a
closed ball; incidence numbers then lie in $\{0,\pm1\}$, and exactly two
$p$-cells lie between a $(p-1)$-cell and a $(p+1)$-cell.
A compact regular CW complex is quasiconvex when two
$(p+1)$-cells whose closures share a $p$-cell $\beta$ meet exactly in
$\bar\beta$.
Every simplicial and every polyhedral complex is quasiconvex; our results
hold at this generality, and we write ``simplicial complex'' for readability.
A combinatorial $n$-manifold is a complex in which the link of every
vertex $v$, the subcomplex $\{\sigma:v\notin\sigma,\ \sigma\cup\{v\}\in X\}$,
is a triangulated $(n-1)$-sphere.

\begin{theorem}[Forman's Bochner theorem]
\label{thm:forman_scalar}
Let $M$ be a compact, connected, quasiconvex combinatorial $n$-manifold with
$n\leq3$, and suppose $\mathrm{Ric}(e)\geq0$ for every edge $e$.
Then $\dim\mathcal H^1(M)\leq n$.
\end{theorem}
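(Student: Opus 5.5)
The plan follows Forman's original argument through the decomposition of Definition~\ref{def:scalar_bochner} applied to $A=H_1$, the Hodge Laplacian on $1$-cochains. I write $H_1=B_1+F_1$, where $F_1$ is diagonal with entries $\mathrm{Ric}(e)$. Since $M$ is quasiconvex, incidence numbers lie in $\{0,\pm1\}$, and two edges are either parallel, transverse, or non-adjacent. So $B_1$ is the signed graph Laplacian (Definition~\ref{def:balance}) of the \emph{parallelism graph} $P$ on $X_1$. Its edge signs are $\pm1$ and are determined by the relative orientations through the shared vertex or the shared triangle.

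\textbf{Step 1: harmonic cochains are parallel.} Let $\omega\in\ker H_1\cong\mathcal H^1(M)$. Then
\[
0=\omega^\top H_1\omega=\omega^\top B_1\omega+\sum_e \mathrm{Ric}(e)\,\omega_e^2 .
\]
Both terms are nonnegative: the first by diagonal dominance, or directly by \eqref{eq:signed_energy}, and the second by the hypothesis $\mathrm{Ric}\ge0$. Hence both vanish. In particular $\omega\in\ker B_1$, so by \eqref{eq:signed_energy} we have $\omega_e=\mathrm{sgn}\,\omega_{e'}$ along every parallel pair. On each connected component $C$ of $P$, Theorem~\ref{thm:kernel_signed} then gives that $\omega|_C$ is determined by a single value: it is a multiple of $\mathbbm 1_{V_1}-\mathbbm 1_{V_2}$ if $C$ is balanced, and it is zero otherwise. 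Therefore $\dim\mathcal H^1(M)$ is at most the number of balanced components of $P$ on which a harmonic form can be nonzero.

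\textbf{Step 2: counting the components.} This is the combinatorial heart of the argument and the main obstacle. I must show that at most $n$ components can carry a nonzero harmonic form, which mirrors the fact that parallel $1$-forms on a manifold span at most $\dim T_pM=n$. I would localize at a vertex $v$. A harmonic $\omega$ is closed and coclosed. Coclosedness at $v$, i.e.\ $(\partial\omega)_v=\sum_{e\ni v}\pm\omega_e=0$, relates the values on the edges of the star of $v$. Closedness on the triangles of the star relates consecutive edges in the link. Since the link of $v$ is a triangulated $(n-1)$-sphere with $n\le3$, i.e.\ a point pair, a polygon, or a $2$-sphere, I would check case by case that the edges through $v$ meet at most $n$ classes of $P$ that are pairwise ``independent at $v$''. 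Concretely, two edges at $v$ either are parallel, or are transverse, in which case they lie in a common triangle and closedness ties their values together. This bounds by $n$ the dimension of the restriction map $\omega\mapsto(\omega_e)_{e\ni v}$ on $\ker H_1$. I expect the $n=3$ case, with an arbitrary triangulated $2$-sphere as link, to need the most care. There I would use the vanishing of $\sum_e\mathrm{Ric}(e)\omega_e^2$ to force $\omega_e=0$ on positively curved edges, so that only flat edges, which have a rigid local structure, remain.

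\textbf{Step 3: globalizing.} Finally I show that the restriction to the star of one fixed vertex $v_0$ is injective on $\ker H_1$. By Step~1, $\omega$ is determined by its value on one edge of each component of $P$, and by Step~1 again it propagates along parallel edges and along the coclosed and closed relations between adjacent stars. Since $M$ is connected, every edge can be reached from the star of $v_0$ through a chain of stars. If $\omega$ vanishes on the star of $v_0$, the local relations of Step~2, applied star by star along this chain, force $\omega$ to vanish on each successive star, and hence $\omega=0$ on $M$. Combining injectivity with the local bound of Step~2 gives $\dim\mathcal H^1(M)\le n$.
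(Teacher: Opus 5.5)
Your architecture is the right one, and it is the one the paper uses for its sheaf version (Theorem~\ref{thm:bochner_manifold}, which for $d=1$ and the constant sheaf is this statement). Nonnegative curvature puts a harmonic $\omega$ in $\ker B_1$. A count at a single vertex $v$ bounds the rank of $\omega\mapsto(\omega_e)_{e\ni v}$ by $n$. Unique continuation (your Step~3, the paper's Theorem~\ref{thm:unique_cont}) makes this restriction injective.

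The gap is Step~2. You flag it as the heart of the proof but only sketch it, and the mechanism you sketch does not work. Two transverse edges $e=[v,a]$ and $e'=[v,b]$ are not tied together by anything living on the star of $v$:
\begin{itemize}
\item their entry in $B_1$ is $0$, because the contributions through $v$ and through the common triangle cancel;
\item closedness on that triangle reads $\pm\omega_e\pm\omega_{e'}\pm\omega_{[a,b]}=0$, and this involves the link edge $[a,b]$, which is not incident to $v$.
\end{itemize}
So the only constraints on $(\omega_e)_{e\ni v}$ are parallel propagation and the single vertex relation $(\partial\omega)_v=0$. Closedness is what drives Step~3, not Step~2. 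Killing $\omega$ on edges with $\mathrm{Ric}(e)>0$ and invoking a ``rigid local structure'' of flat edges is neither needed nor justified. Your intermediate goal, that at most $n$ components of $P$ carry a nonzero harmonic form, is also a stronger claim than the rank bound you actually use.

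The missing idea is Forman's choice of edges at $v$. On the edges at $v$ (the vertices of the link), put the graph whose adjacency is transversality; edges at $v$ that are not adjacent in it are parallel. Take a maximal complete subgraph $K$. By maximality, each edge at $v$ outside $K$ is parallel to a member of $K$, so $\omega|_K=0$ forces $\omega=0$ on the star. It remains to bound $|K|$ and to dispose of the case $|K|=n+1$ with the vertex relation:
\begin{itemize}
\item For $n=1$, the two edges at $v$ are parallel, so one edge suffices.
\item For $n=2$, the link is a polygon and two consecutive edges suffice. The exception is when $v$ has exactly three pairwise transverse edges; then the vertex relation kills the third.
\item For $n=3$, the transversality graph embeds in the link $\cong\mathbb S^2$, so $|K|\le4$ because $K_5$ is not planar. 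If $|K|=4$ and $v$ has only these four edges, keep three and use the vertex relation. Otherwise a fifth edge $e_5$ lies in one of the four triangular regions cut out by $K$, say the one opposite $e_4$, and so is parallel to $e_4$. Keep $e_2,e_3,e_4$: propagation kills $e_5$ and every further edge, since none is transverse to both $e_4$ and $e_5$, and the vertex relation then kills $e_1$.
\end{itemize}
Without this selection argument, the bound $n$ is not established.
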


\section{The Bochner Decomposition of a Sheaf Laplacian}
\label{sec:bochner_decomposition}

For sheaf Laplacians with $d$-dimensional stalks the off-diagonal blocks
$H_{ij}$ are $d\times d$ matrices, so the scalar absolute value appearing in
Forman's construction must be replaced by the \emph{matrix absolute value}
$|A| := \sqrt{AA^\top}$.
All stalks are assumed to have the same dimension $d$.

\begin{definition}[Block Bochner decomposition]
\label{def:block_bochner}
Let $H$ be a symmetric block matrix in $\mathbb{R}^{nd,nd}$ with $d\times d$
blocks $H_{ij}$.
The \emph{block matrix Bochner decomposition} is:
\begin{equation}
  B_{ij} = \begin{cases} H_{ij} & i \neq j, \\ \sum_{j\neq i}|H_{ij}| & i=j, \end{cases}
  \qquad
  F_{ij} = \begin{cases} 0 & i \neq j, \\ H_{ij} - \sum_{k\neq i}|H_{ik}| & i=j. \end{cases}
\end{equation}
\end{definition}

For this to define a Laplacian, $B$ must be positive semidefinite.
This holds for a class of block matrices wider than the Bochner term itself.

\begin{definition}[Block diagonally dominant matrix]
\label{def:block_diag_dominant}
A block matrix $C \in \R^{nd,nd}$, with blocks $C_{ij}$ of dimension $d$, is
\emph{block diagonally dominant} if $C_{ii} \succeq \sum_{j\neq
i}|C_{ij}|$ for every $i$.
\end{definition}

\begin{proposition}
\label{prop:psd}
Every symmetric, block diagonally dominant matrix is positive semidefinite.
\end{proposition}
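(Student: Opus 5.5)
The plan is to expand the quadratic form $z^\top C z$ blockwise and bound each off-diagonal cross term by diagonal contributions weighted by the matrix absolute values. Everything then reduces to a single two-block inequality, which I would prove first:
\begin{equation}
  2\,\bigl|x^\top A y\bigr| \;\leq\; x^\top |A|\, x + y^\top |A^\top|\, y
  \qquad\text{for all } A\in\R^{d,d},\ x,y\in\R^d,
\end{equation}
where $|A|=\sqrt{AA^\top}$ and $|A^\top|=\sqrt{A^\top A}$.

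To prove it, I would use the polar decomposition $A=PU$ with $P=|A|\succeq0$ and $U\in O(d)$. This exists even when $A$ is singular, for instance from the singular value decomposition $A=Q\Sigma R^\top$, taking $P=Q\Sigma Q^\top$ and $U=QR^\top$. Conjugating gives
\begin{equation}
  U^\top P U = R\Sigma R^\top = \sqrt{A^\top A} = |A^\top| .
\end{equation}
Writing $x^\top A y = (P^{1/2}x)^\top (P^{1/2}Uy)$, Cauchy--Schwarz followed by AM--GM gives
\begin{equation}
  2\,|x^\top A y| \;\leq\; \|P^{1/2}x\|^2+\|P^{1/2}Uy\|^2 \;=\; x^\top P x + y^\top U^\top P U y ,
\end{equation}
which is the claimed inequality.

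Next, for $z=(z_1,\dots,z_n)$ with $z_i\in\R^d$, symmetry of $C$ gives $C_{ji}=C_{ij}^\top$. Hence
\begin{equation}
  z^\top C z=\sum_i z_i^\top C_{ii}z_i+\sum_{i<j}2\,z_i^\top C_{ij}z_j .
\end{equation}
Applying the inequality with $A=C_{ij}$, each cross term is at least $-\bigl(z_i^\top|C_{ij}|z_i+z_j^\top|C_{ij}^\top|z_j\bigr)$. Since $C_{ij}^\top = C_{ji}$, we have $|C_{ij}^\top| = |C_{ji}|$, so the second piece is exactly the $(j,i)$ absolute value appearing in the dominance condition at node $j$. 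Summing over all pairs, every ordered pair $(i,j)$ with $i\neq j$ appears exactly once, and
\begin{equation}
  z^\top C z\;\geq\;\sum_i z_i^\top\Bigl(C_{ii}-\sum_{j\neq i}|C_{ij}|\Bigr)z_i .
\end{equation}
The right-hand side is nonnegative by Definition~\ref{def:block_diag_dominant}, so $C\succeq0$.

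The main obstacle is the two-block inequality, specifically getting the correct pairing of $|A|$ on $x$ with $|A^\top|$ on $y$. This pairing is what makes the index bookkeeping match the dominance condition row by row. The polar-decomposition conjugation identity $U^\top|A|U=|A^\top|$ handles it, including in the degenerate singular case.
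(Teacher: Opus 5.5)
Your proof is correct and is essentially the paper's argument. Both reduce to the two-block bound $2\,|x^\top Ay|\le x^\top|A|x+y^\top|A^\top|y$ and sum it over unordered pairs, using $|C_{ij}^\top|=|C_{ji}|$ to match the dominance condition row by row; the only cosmetic difference is that the paper obtains the two-block bound from positive semidefiniteness of $\begin{psmallmatrix}|A|&A\\A^\top&|A^\top|\end{psmallmatrix}$ via its SVD factorization, whereas you obtain it from the polar decomposition and Cauchy--Schwarz, which rest on the same SVD.
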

\begin{proof}
Fix $i \neq j$ and write $C_{ij} = U\Sigma V^\top$ via the SVD, so that
$|C_{ij}| = U\Sigma U^\top$ and $|C_{ij}^\top| = V\Sigma V^\top$.
The block matrix $\begin{psmallmatrix}|C_{ij}| & C_{ij}\\ C_{ij}^\top &
|C_{ij}^\top|\end{psmallmatrix} =
\begin{psmallmatrix}U\\V\end{psmallmatrix}\Sigma\begin{psmallmatrix}U^\top &
V^\top\end{psmallmatrix} \succeq 0$, and evaluating its quadratic form at
$(x_i^\top,x_j^\top)^\top$ gives:
\begin{equation}
  x_i^\top C_{ij}x_j + x_j^\top C_{ij}^\top x_i
  \;\geq\;
  -\bigl(x_i^\top|C_{ij}|x_i + x_j^\top|C_{ij}^\top|x_j\bigr).
\end{equation}
Summing this bound over every off-diagonal pair:
\begin{equation}
  x^\top Cx
  = \sum_i x_i^\top C_{ii}x_i + \sum_{i\neq j} x_i^\top C_{ij}x_j
  \;\geq\;
  \sum_i x_i^\top\Bigl(C_{ii} - \sum_{j\neq i}|C_{ij}|\Bigr)x_i \;\geq\; 0.
\end{equation}
\end{proof}

\begin{remark}
The step above, $\begin{psmallmatrix}|A| & A\\ A^\top &
|A^\top|\end{psmallmatrix}\succeq 0$ for every matrix $A$, is essentially
Corollary~1.3.7 of Bhatia~\cite{PSD_Bhatia2007}.
Feingold and Varga's block diagonal dominance
criterion~\cite{block_diagonally_dominant_matrices} uses operator norms in
place of the matrix absolute value, so Proposition~\ref{prop:psd} is a
different hypothesis.
\end{remark}

By the proposition above, $B \succeq 0$, hence $B = C^\top C$ for some $C$,
in analogy with $B_k = \nabla^*\nabla$ in the continuous case.

\noindent\textbf{Graph and Hodge sheaf Laplacians.}
Applying Definition~\ref{def:block_bochner} to the Hodge sheaf Laplacian
$H_k^\F = \delta_{k-1}\delta_{k-1}^* + \delta_k^*\delta_k$ gives
$(B_k)_{\sigma\sigma'} = (H_k)_{\sigma\sigma'}$ off the diagonal and
$(F_k)_{\sigma\sigma} = (H_k)_{\sigma\sigma} -
\sum_{\sigma'\neq\sigma}|(H_k)_{\sigma\sigma'}|$ on it, so that $H_k^\F =
B_k^\F + F_k^\F$, writing $L_\F = B_\F + F_\F$ for $k=0$, the graph case.
The Forman term is block diagonal, an endomorphism acting within each stalk,
not a scalar.
Following Forman's terminology we call $\mathrm{Ric}_\F := F_1^\F$ the
curvature on $1$-cells; on an edge $e$ its block is:
\begin{equation}
 \begin{aligned}
 \mathrm{Ric}_\F(e) = &\sum_{v:\,v\trianglelefteq e}\F_{v\trianglelefteq e}\F_{v\trianglelefteq e}^\top
  + \sum_{t:\,e\trianglelefteq t}\F_{e\trianglelefteq t}^\top\F_{e\trianglelefteq t}\\
  &- \sum_{\tilde e\neq e}\Bigl|
      \sum_v \epsilon_{ve}\epsilon_{v\tilde e}\F_{v\trianglelefteq e}\F_{v\trianglelefteq \tilde e}^\top
      + \sum_t \epsilon_{et}\epsilon_{\tilde e t}\F_{e\trianglelefteq t}^\top\F_{\tilde e\trianglelefteq t}
    \Bigr|,
 \end{aligned}
  \label{eq:hodge_sheaf_forman}
\end{equation}
$\epsilon_{\alpha\beta}\in\{-1,0,+1\}$ encoding relative orientation, and
the general $k$ case being identical with $k$-cells in place of edges.
Specializing to $k=0$, where cells have no faces, gives the vertex block:
\begin{equation}
  (F_\F)_{vv} = \sum_{e:\,v\trianglelefteq e}
    \Bigl(\F_{v\trianglelefteq e}^\top\F_{v\trianglelefteq e}
    - \sqrt{\F_{v\trianglelefteq e}^\top\F_{u\trianglelefteq e}\,\F_{u\trianglelefteq e}^\top\F_{v\trianglelefteq e}}\Bigr).
  \label{eq:sheaf_forman}
\end{equation}
with $u$ the other endpoint of $e$.
Unlike the classical graph Laplacian, whose Forman term vanishes at $k=0$,
here it need not, so the vertex-level case is worth analyzing in its own
right.
A second difference with the classical theory is that there is no weight
factor $w_\sigma$ out front: by~\cite{hansen2019toward}, every weighted
sheaf Laplacian is the Laplacian of some unweighted sheaf, so no generality
is lost by working unweighted, and the Forman term of the decomposition
is the Forman curvature itself, with no rescaling.
For a higher-order connection Laplacian, with $\F_{v\trianglelefteq
e}\F_{v\trianglelefteq\tilde e}^\top = \F_{e\trianglelefteq
t}^\top\F_{\tilde e\trianglelefteq t}$ whenever $e,\tilde e$ share both a
face and a coface, \eqref{eq:hodge_sheaf_forman} collapses to $(F_\F)_{ee} =
\chi(e)\cdot I$, with $\chi(e)$ exactly the scalar Forman curvature of the
unweighted simplicial complex.

\section{Bochner Matrices}
\label{sec:bochner_matrices}

The Bochner term of a sheaf Laplacian belongs to a class of block matrices
that is worth studying on its own, as we show it contains Connection
Laplacians, Matrix-Weighted Laplacians, and signed graphs.
This section develops that class: its kernel and the top of its spectrum
here, its Cheeger inequalities, and the random walk it carries.

\subsection{Polar factors, the kernel and holonomy}

We call a symmetric block matrix produced by
Definition~\ref{def:block_bochner} a Bochner matrix:
\begin{equation}
  B = \begin{pmatrix}
    \ddots & \vdots & & \vdots & \\
    \cdots & \sum_{w\sim v}|B_{vw}| & \cdots & -B_{vu} & \cdots \\
    & \vdots & \ddots & \vdots & \\
    \cdots & -B_{vu}^\top & \cdots & \sum_{w\sim u}|B_{uw}| & \cdots \\
    & \vdots & & \vdots & \ddots
  \end{pmatrix},
\end{equation}
supported exactly on the edges of $G$.
Throughout we assume (H1) \emph{block symmetry}, $B_{uv}=B_{vu}^\top$; (H2)
\emph{invertibility}, every $B_{vu}$ invertible; (H3) \emph{connectivity}:
the graph on $v\sim u\iff B_{vu}\neq0$ is connected.

Writing $B_{vu}=U_{vu}\Sigma_{vu}V_{vu}^\top$ via the SVD and setting
$\mathcal{B}_{v\trianglelefteq e}:=\Sigma_{vu}^{1/2}U_{vu}^\top$,
$\mathcal{B}_{u\trianglelefteq e}:=\Sigma_{vu}^{1/2}V_{vu}^\top$ on each
edge $e=(v,u)$ defines a sheaf $\mathcal{B}$ with graph Laplacian $B$.
So the class of Bochner matrices is included in the class of graph sheaf
Laplacians.
Specializing the Bochner decomposition to $k=0$ gives $L_\F =
L_{\mathcal{B}} + F_\F$: every sheaf Laplacian is a Bochner sheaf Laplacian
plus a block-diagonal perturbation, the Forman term.
More generally, considering the decomposition $H_k^\F = B_k^\F + F_k^\F$,
$B_k^{\F}$ lives instead on the order-$k$ adjacency graph of the
complex, whose nodes are the $k$-simplices, two being adjacent iff they are
lower or upper adjacent.

Writing $B_{vu}=U_{vu}\Sigma_{vu}V_{vu}^\top =
U_{vu}\Sigma_{vu}U_{vu}^\top\,U_{vu}V_{vu}^\top$ gives the polar
decomposition $B_{vu}=M_{vu}W_{vu}^\top$, with radial factor
$M_{vu}:=|B_{vu}|=U_{vu}\Sigma_{vu}U_{vu}^\top\succeq0$ and angular factor
$W_{vu}:=V_{vu}U_{vu}^\top\in O(d)$.
Since $B_{uv}=B_{vu}^\top$, we get $W_{uv}=W_{vu}^\top$ and
$W_{vu}M_{vu}W_{vu}^\top=M_{uv}$.
Write $\norm{z}_M^2:=z^\top Mz$ for $M\succeq0$.
Since $|B_{vu}|=M_{vu}$, the diagonal blocks of $B$ are
$D_v:=B_{vv}=\sum_{u\sim v}M_{vu}$, and we write
$D:=\operatorname{blockdiag}(D_1,\dots,D_n)$.
A Bochner matrix could thus be seen as an \emph{anisotropic bundle sheaf}:
a rotation $W_{vu}$, as in a connection Laplacian, together with a
connection strength $M_{vu}$ that varies not only across edges but across
directions of the stalk.
This is the feature separating Bochner matrices from connection Laplacians,
and what makes them a multidimensional generalization of weighted signed
graphs, the sign of an edge replaced by $W_{vu}$.

\begin{remark}[On the terminology]
Throughout this work we treat Bochner matrices as a class extending the
connection Laplacians of Singer and
coauthors~\cite{singer2011vectordiffusionmapsconnection,bandeira2013cheegerinequalitygraphconnection}.
In differential geometry the Bochner Laplacian and the connection Laplacian
are essentially the same operator.
In our discrete setting they refer to distinct families of Laplacians, but
this is simply because they come from two different constructions:
connection Laplacians were obtained by discretizing bundles
over manifolds, in the context of local PCA and vector diffusion maps,
whereas we arrive at Bochner matrices in purely algebraic terms, as the
first term of the Bochner--Weitzenb\"ock decomposition of
Section~\ref{sec:bochner_decomposition}.
The two nonetheless share the properties needed to model holonomy and
parallel transport on a graph.
\end{remark}

We now turn to the kernel characterization of the Bochner Laplacian.

\begin{lemma}[Energy identity]
\label{lem:energy_id}
For every $x\in\R^{nd}$, $x^\top Bx = \frac12\sum_{v\neq u}\norm{x_v -
W_{vu}^\top x_u}^2_{M_{vu}}\geq 0$, each unordered edge contributing equally
from either orientation.
\end{lemma}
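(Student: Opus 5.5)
We expand both sides directly and compare blockwise. First, the notation in the statement needs a sign convention. In the displayed Bochner matrix the off-diagonal block at $(v,u)$ is written $-B_{vu}$, so the polar factors $M_{vu},W_{vu}$ refer to $B_{vu}$ with that sign. We therefore expand
\[
x^\top Bx=\sum_v x_v^\top D_v x_v-\sum_{v\neq u}x_v^\top B_{vu}x_u ,
\]
with $D_v=\sum_{u\sim v}M_{vu}$ and $B_{vu}=M_{vu}W_{vu}^\top$. The sum over ordered pairs runs over the neighbours $u\sim v$, since $B_{vu}=0$ otherwise.

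On the right-hand side, expand each ordered term:
\[
\norm{x_v-W_{vu}^\top x_u}^2_{M_{vu}}
= x_v^\top M_{vu}x_v - 2\,x_v^\top M_{vu}W_{vu}^\top x_u + x_u^\top W_{vu}M_{vu}W_{vu}^\top x_u .
\]
Use the identity $W_{vu}M_{vu}W_{vu}^\top=M_{uv}$ established after the polar decomposition. The last term then becomes $x_u^\top M_{uv}x_u$, and the middle term is $-2x_v^\top B_{vu}x_u$.

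Sum over all ordered pairs $v\neq u$ and divide by two. The quadratic terms contribute
\[
\tfrac12\sum_{v\neq u}\bigl(x_v^\top M_{vu}x_v+x_u^\top M_{uv}x_u\bigr).
\]
Relabelling $u\leftrightarrow v$ in the second summand shows the two halves are equal. The sum is therefore $\sum_v x_v^\top\bigl(\sum_{u}M_{vu}\bigr)x_v=\sum_v x_v^\top D_vx_v$. The cross terms give $-\sum_{v\neq u}x_v^\top B_{vu}x_u$, which is exactly the off-diagonal part. This establishes the identity.

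Nonnegativity is immediate, because each $M_{vu}\succeq0$ makes each $\norm{\cdot}_{M_{vu}}^2\geq0$. The claim that each unordered edge contributes equally from either orientation means
\[
\norm{x_u-W_{uv}^\top x_v}^2_{M_{uv}}=\norm{x_v-W_{vu}^\top x_u}^2_{M_{vu}} .
\]
To see this, use $W_{uv}=W_{vu}^\top$ to write $x_u-W_{vu}x_v=-W_{vu}\bigl(x_v-W_{vu}^\top x_u\bigr)$. Then
\[
\norm{W_{vu}z}_{M_{uv}}^2=z^\top W_{vu}^\top M_{uv}W_{vu}z=z^\top M_{vu}z ,
\]
again by $M_{uv}=W_{vu}M_{vu}W_{vu}^\top$.

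The only delicate point is the consistency of the relations $W_{uv}=W_{vu}^\top$ and $M_{uv}=W_{vu}M_{vu}W_{vu}^\top$ with (H1). These come from $B_{uv}=B_{vu}^\top=W_{vu}M_{vu}=(W_{vu}M_{vu}W_{vu}^\top)W_{vu}$ together with uniqueness of the polar decomposition, which is guaranteed by (H2).
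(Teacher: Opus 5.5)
Your proof is correct and follows the paper's own argument: you expand each ordered term, use $W_{vu}M_{vu}W_{vu}^\top=M_{uv}$ to rewrite the last term, relabel to recover $\sum_v x_v^\top D_vx_v$, and match the cross term with the off-diagonal blocks $-B_{vu}$. Your explicit check that the two orientations of an edge contribute equally, and your derivation of the polar-factor relations from (H1) with uniqueness from (H2), spell out details the paper only states.
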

\begin{proof}
Expand $\norm{x_v-W_{vu}^\top x_u}_{M_{vu}}^2 = x_v^\top M_{vu}x_v
-2x_v^\top M_{vu}W_{vu}^\top x_u + x_u^\top W_{vu}M_{vu}W_{vu}^\top x_u$.
The last term is $x_u^\top M_{uv}x_u$, so summing the first and last terms
over all ordered pairs and relabelling gives $\sum_v x_v^\top D_vx_v$; the
middle term is $-2x_v^\top B_{vu}x_u$, the off-diagonal contribution to
$x^\top Bx$.
Nonnegativity is immediate since each $M_{vu}\succeq 0$.
\end{proof}

When $M_{vu}=w_{vu}I$ this recovers the Dirichlet energy $\sum_{v\neq
u}w_{vu}\norm{x_v-W_{vu}^\top x_u}^2$ of the Connection
Laplacian~\cite{bandeira2013cheegerinequalitygraphconnection}.
Since $B\succeq0$, Lemma~\ref{lem:energy_id} shows $x\in\ker(B)$ iff every
summand vanishes:

\begin{proposition}[Kernel]
\label{prop:kernel}
Let $B$ satisfy (H1)--(H2).
Then $x\in\ker(B)$ iff $x_u = W_{vu}x_v$ for every edge $(v,u)$; in
particular $\norm{x_v}$ is constant on each connected component.
\end{proposition}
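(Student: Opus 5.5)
The plan is to read the kernel off the energy identity of Lemma~\ref{lem:energy_id}, using (H2) to upgrade the seminorms $\norm{\cdot}_{M_{vu}}$ to genuine norms. The first step is to observe that $B\succeq0$, by Proposition~\ref{prop:psd} or directly from Lemma~\ref{lem:energy_id}. For a positive semidefinite matrix, $x\in\ker(B)$ iff $x^\top Bx=0$: one direction is trivial, and for the other, $x^\top Bx=\norm{B^{1/2}x}^2=0$ gives $B^{1/2}x=0$, hence $Bx=0$. By Lemma~\ref{lem:energy_id}, $x^\top Bx$ is a sum of nonnegative terms $\norm{x_v-W_{vu}^\top x_u}^2_{M_{vu}}$. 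So $x\in\ker(B)$ iff each term vanishes, that is, iff $(x_v-W_{vu}^\top x_u)^\top M_{vu}(x_v-W_{vu}^\top x_u)=0$ on every edge.

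The step needing care is passing from a vanishing $M_{vu}$-seminorm to a vanishing vector. For this I use (H2). Since $B_{vu}=U_{vu}\Sigma_{vu}V_{vu}^\top$ is invertible, every singular value is positive. Hence $M_{vu}=U_{vu}\Sigma_{vu}U_{vu}^\top\succ0$, and $\norm{z}_{M_{vu}}=0$ forces $z=0$. Thus each edge gives $x_v=W_{vu}^\top x_u$. Because $W_{vu}\in O(d)$, this is equivalent to $x_u=W_{vu}x_v$. The identity $W_{uv}=W_{vu}^\top$ guarantees that the condition does not depend on the orientation chosen for the edge. Conversely, if $x_u=W_{vu}x_v$ on every edge, every summand vanishes, so $x^\top Bx=0$ and $x\in\ker(B)$. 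This proves the equivalence.

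For the norm statement, take $u,v$ in the same connected component and a path $v=v_0\sim v_1\sim\cdots\sim v_k=u$. Iterating the edge relation gives $x_u=T_\gamma x_v$, with $T_\gamma=W_{v_{k-1}v_k}\cdots W_{v_0v_1}$ the transport map of Section~\ref{sec:signed_graphs}. Since $T_\gamma\in O(d)$, it follows that $\norm{x_u}=\norm{x_v}$. Connectivity (H3) is not needed for the statement as phrased, since the conclusion is only per component.

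I expect no real obstacle. The only subtle point is the definiteness of $M_{vu}$. Without (H2), the kernel condition would only say that $x_v-W_{vu}^\top x_u\in\ker M_{vu}$, and the norm need not be preserved.
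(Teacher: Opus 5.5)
Your proof is correct and follows the paper's own argument. Since $B\succeq0$, the energy identity of Lemma~\ref{lem:energy_id} reduces membership in the kernel to the vanishing of every edge summand. Under (H2) each $M_{vu}=|B_{vu}|$ is positive definite, so a vanishing summand forces $x_v=W_{vu}^\top x_u$, and orthogonality of $W_{vu}$ then gives constant norms along paths. You spell out more carefully than the paper the definiteness of $M_{vu}$ and the orientation-independence via $W_{uv}=W_{vu}^\top$, and you are right that (H3) is not needed for the per-component claim.
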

An immediate corollary: each summand vanishes iff $x_v=W_{vu}^\top x_u$, and
the orthogonal $W_{vu}$ then preserves the norm along every edge.

With the kernel in hand, the same evaluation-map argument used for the
Connection Laplacian~\cite{bandeira2013cheegerinequalitygraphconnection,
bodnar2022neural} applies verbatim (proof in Appendix~\ref{sec:proofs}):

\begin{proposition}[Kernel and holonomy]
\label{prop:holonomy}
Let $B$ satisfy (H1)--(H3), and for a walk $\gamma:v_0\to\cdots\to v_k$ let
$T_\gamma := W_{v_{k-1}v_k}\cdots W_{v_0v_1}\in O(d)$.
Then $\dim\ker(B)\leq d$, with equality iff $T_\gamma = I$ for every closed
walk $\gamma$.
\end{proposition}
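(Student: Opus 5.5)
The plan is to reduce everything to Proposition~\ref{prop:kernel} and then run the standard evaluation-map argument. Fix a base vertex $v_0$ and define the evaluation map $\mathrm{ev}\colon\ker(B)\to\R^d$, $x\mapsto x_{v_0}$. By Proposition~\ref{prop:kernel}, every $x\in\ker(B)$ satisfies $x_u=W_{vu}x_v$ along every edge, with the convention $W_{uv}=W_{vu}^\top$ making this consistent in both orientations. Iterating along any walk $\gamma$ from $v_0$ to $v$ therefore gives $x_v=T_\gamma x_{v_0}$. By (H3) every vertex is reached by some walk from $v_0$, so $x$ is determined by $x_{v_0}$. Hence $\mathrm{ev}$ is injective and $\dim\ker(B)\le d$.

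For the equality case, suppose first that $T_\gamma=I$ for every closed walk. Then for two walks $\gamma,\gamma'$ from $v_0$ to $v$, the closed walk $\gamma'^{-1}\gamma$ has transport $T_{\gamma'}^{\top}T_\gamma$. Here the reversed walk has transport $T_{\gamma'}^{-1}=T_{\gamma'}^\top$, because each step's map is replaced by its transpose and the order is reversed. It follows that $T_\gamma=T_{\gamma'}$, so $T_v:=T_\gamma$ is well defined. Given any $z\in\R^d$, set $x_v:=T_v z$. For an edge $(v,u)$, appending the step $v\to u$ to a walk to $v$ yields $T_u=W_{vu}T_v$, so $x_u=W_{vu}x_v$. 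By Proposition~\ref{prop:kernel}, $x\in\ker(B)$. Thus $\mathrm{ev}$ is surjective and $\dim\ker(B)=d$.

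Conversely, if $\dim\ker(B)=d$, then $\mathrm{ev}$ is a bijection, so for every $z$ there is a kernel vector with $x_{v_0}=z$. Any closed walk $\gamma$ based at $v_0$ then gives $z=x_{v_0}=T_\gamma z$ for all $z$, so $T_\gamma=I$. A closed walk $\gamma$ based at another vertex $v$ is conjugate to one based at $v_0$: take a walk $\eta$ from $v_0$ to $v$ and form $\eta^{-1}\gamma\eta$, whose transport is $T_\eta^\top T_\gamma T_\eta$. This transport equals $I$ by the previous step, so $T_\gamma=I$ as well.

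The only delicate point is the bookkeeping of orientations. We must check that the relation $x_u=W_{vu}x_v$ from Proposition~\ref{prop:kernel} is symmetric under $W_{uv}=W_{vu}^\top$, which follows from (H1) and the polar decomposition, and that the reversed-walk transport is the inverse transport. This is also where (H2) is used, since the polar factor $W_{vu}$ is uniquely defined and orthogonal for invertible $B_{vu}$. Once these two points are checked, the rest is routine.
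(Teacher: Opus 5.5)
Your proof is correct and follows the paper's argument: injectivity of the evaluation map at a base vertex via Proposition~\ref{prop:kernel} and (H3), surjectivity via path-independent transport, and the converse read off from kernel vectors. The one difference is in the last step, where the paper passes through path independence and concatenates with a reversed reference walk, while you conjugate closed walks back to the base point; the two are equivalent. One small remark: the more essential use of (H2) is that it makes $M_{vu}\succ0$, so that a vanishing summand in Lemma~\ref{lem:energy_id} forces $x_v=W_{vu}^\top x_u$ inside Proposition~\ref{prop:kernel}, beyond making the polar factor unique.
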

While the technique is unchanged from the connection-Laplacian case, the
result is novel: despite carrying the extra radial data $M_{vu}$, Bochner
matrices admit exactly the same holonomy characterization of the kernel,
that depends only on angular factors.
The same two propositions also settle the opposite end of the spectrum.
By~\cite{hansen2019toward}, the normalized Laplacian of any weighted
cellular sheaf has spectrum in $[0,2]$, with the same degree normalization
used here; in particular $\operatorname{spec}(\Bn)\subseteq[0,2]$, writing
$\Bn:=D^{-1/2}BD^{-1/2}$.
The upper value is governed by the \emph{twist matrix} $N:=2D-B$: since
$N_{vu}=-B_{vu}=-M_{vu}W_{vu}^\top$, $N$ is again a Bochner matrix, with the
same radial factors as $B$ but angular factors $-W_{vu}$, whence $x^\top
Bx\leq 2x^\top Dx$.
Attaining the value $2$ is therefore the balance of $N$, read through its
twisted transport maps $\widehat
T_\gamma:=(-1)^{|\gamma|}T_\gamma$, and this is the multidimensional
counterpart of antibalance.

\begin{theorem}[Antibalance characterization]
\label{thm:antibalance}
Let $E_2(\Bn)$ be the eigenspace of $\Bn$ for eigenvalue $2$.
Then $E_2(\Bn) = D^{1/2}\ker(N)$, with $z\in\ker(N)$ iff $z_u=-W_{vu}z_v$ on
every edge; and $\dim E_2(\Bn)\leq d$, with equality iff $T_\gamma =
(-1)^{|\gamma|}I$ for every closed walk.
\end{theorem}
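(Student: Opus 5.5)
The plan is to reduce everything to the already-proved kernel statements, Proposition~\ref{prop:kernel} and Proposition~\ref{prop:holonomy}, applied to the twist matrix $N=2D-B$.

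\textbf{Step 1: the eigenvalue equation.} Write $y=D^{1/2}z$, using that $D\succ0$. This holds by (H2) and (H3): each $M_{vu}=|B_{vu}|$ is positive definite, and every vertex has at least one neighbour. Then
\[
\Bn y=2y \iff D^{-1/2}BD^{-1/2}y=2y \iff Bz=2Dz \iff Nz=0 ,
\]
which gives $E_2(\Bn)=D^{1/2}\ker(N)$. The map $D^{1/2}$ is an isomorphism, so $\dim E_2(\Bn)=\dim\ker N$.

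\textbf{Step 2: $N$ is a Bochner matrix.} Its off-diagonal blocks are $N_{vu}=-B_{vu}$. These satisfy (H1) and (H2), and have the same support as $B$, so (H3) holds too. The polar decomposition of $N_{vu}$ is
\[
-M_{vu}W_{vu}^\top=M_{vu}(-W_{vu})^\top ,
\]
with radial factor $|{-B_{vu}}|=M_{vu}$ and angular factor $-W_{vu}\in O(d)$. Hence the diagonal blocks of $N$, namely $2D_v-D_v=D_v=\sum_u|N_{vu}|$, are exactly those prescribed by Definition~\ref{def:block_bochner}.

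One point needs care: the sign convention for off-diagonal blocks of a Bochner matrix. The display writes $-B_{vu}$, while Lemma~\ref{lem:energy_id} uses the middle term $-2x_v^\top B_{vu}x_u$. I will fix the convention exactly as in the energy identity and check that negating the off-diagonal block corresponds to replacing $W_{vu}$ by $-W_{vu}$ with the same $M_{vu}$. This is a one-line check, since $|{-A}|=|A|$.

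\textbf{Step 3: apply the kernel results.} By Proposition~\ref{prop:kernel} for $N$, $z\in\ker N$ iff $z_u=(-W_{vu})z_v$ on every edge. This is the stated characterization.

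By Proposition~\ref{prop:holonomy} for $N$, $\dim\ker N\le d$, with equality iff the transport along every closed walk is the identity. The transport of $N$ along a walk $\gamma$ of length $k$ is
\[
(-W_{v_{k-1}v_k})\cdots(-W_{v_0v_1})=(-1)^{k}T_\gamma=\widehat T_\gamma .
\]
So equality holds iff $T_\gamma=(-1)^{|\gamma|}I$ for every closed walk, which is the claim.

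\textbf{Expected difficulty.} The main obstacle is only the bookkeeping in Step 2, namely confirming that $N$ genuinely satisfies the hypotheses (H1)--(H3) and the Bochner form under the paper's sign convention. Once that is settled, the rest is a direct transfer of the two propositions.
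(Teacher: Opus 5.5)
Your proposal is correct and follows the paper's own argument. The paper uses the identity $2I-\Bn=D^{-1/2}ND^{-1/2}$ (your Step 1), observes that $N$ is a Bochner matrix with radial factors $M_{vu}$ and angular factors $-W_{vu}$, and applies Propositions~\ref{prop:kernel} and~\ref{prop:holonomy} to $N$, whose transport maps are $(-1)^{|\gamma|}T_\gamma$; your checks of (H1)--(H3) for $N$, of $D\succ0$ and of the sign convention are just the bookkeeping the paper leaves implicit.
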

This is immediate from $2I-\Bn = D^{-1/2}ND^{-1/2}$ together with
Propositions~\ref{prop:kernel} and~\ref{prop:holonomy} applied to $N$.
When $G$ is not bipartite, every $x\in\ker(B)$ and every $z\in\ker(N)$,
the latter giving $E_2(\Bn)=D^{1/2}\ker(N)$, are moreover orthogonal at
every vertex, $\langle x_v,z_v\rangle=0$, and the two spaces can coexist,
which has no scalar analogue (Appendix~\ref{sec:further}).

\subsection{Cheeger inequalities and the synchronization problems}
\label{sec:cheeger}

Seeing Bochner matrices as a generalization of both matrix-weighted
Laplacians and connection Laplacians lets us extend techniques known for
these two classes.
Hansen's comparison between the block spectrum of a matrix-weighted
Laplacian and a scalar one~\cite{expansion_matrix_weighted_graphs} extends
to Bochner matrices (Appendix~\ref{sec:further}).
Here we follow the synchronization framework of Bandeira, Singer and
Spielman~\cite{bandeira2013cheegerinequalitygraphconnection,Gao2021},
extended to the anisotropic setting.
Each of the three theorems below comes with an algorithm attaining the
bound, since the nontrivial half of a Cheeger-type inequality is proved by
exhibiting a good combinatorial object.

\begin{definition}[Anisotropy constants and companion graph]
\label{def:anisotropy}
For each edge, $w^-_{vu}:=\lambda_{\min}(M_{vu})$,
$w^+_{vu}:=\lambda_{\max}(M_{vu})$, and we set:
\begin{equation}
  \kappa_E := \max_{e\in E}\frac{w^+_e}{w^-_e} \geq 1,
  \qquad
  \kappa_D := \max_{v\in V}\frac{\lambda_{\max}(D_v)}{\lambda_{\min}(D_v)}.
\end{equation}
For each vertex, $d^\pm_v := \sum_{u\sim v}w^\pm_{vu}$.
The companion graph $G_-$ is the scalar weighted graph on $V$ with
weights $w^-_{vu}$ and degree matrix $D^-_0:=\operatorname{diag}(d^-_v)$;
its Laplacian $\Lm$ has $(\Lm)_{vv}=d^-_v$ and $(\Lm)_{vu}=-w^-_{vu}$ for
$v\neq u$, and its normalized Laplacian is
$\Lmn:=(D^-_0)^{-1/2}\Lm(D^-_0)^{-1/2}$.
For $x\in\R^{nd}$, the local norm vector $n_x\in\R^n$ is
$(n_x)_v:=\norm{x_v}$.
\end{definition}
By (H2)--(H3), $G_-$ has the same edges as $G$ and $\lambda_2(\Lmn)>0$.
Connection Laplacians are exactly the case $\kappa_E=1$; matrix-weighted
graphs ($W\equiv I$) place no constraint on $\kappa_E$, so the two
degenerations are independent, and the results below reduce to the classical
ones in the first.

\begin{definition}[Frustration constants]
\label{def:frustration}
For $M\succeq0$ write $\norm{A}^2_{M,F}:=\tr(A^\top MA)$.
With $\eta(x):=x^\top Bx/x^\top Dx$, set:
\begin{equation}
  \eta_G := \min_{x:V\to\mathbb{S}^{d-1}}\eta(x),
  \qquad
  \eta^*_G := \min_{\substack{x:V\to\mathbb{S}^{d-1}\cup\{0\}\\x\neq0}}\eta(x),
\end{equation}
and, for a group potential $g:V\to O(d)$:
\begin{equation}
  \nu(g) := \frac{\frac12\sum_{v\neq u}\norm{g_v-W_{vu}^\top g_u}^2_{M_{vu},F}}{\vol(G)},
  \qquad
  \vol(G):=\sum_v\tr(D_v),
  \qquad
  \nu_G := \min_{g:V\to O(d)}\nu(g).
\end{equation}
\end{definition}
Each of the three measures how far $B$ is from having a kernel
configuration, in the sense of Proposition~\ref{prop:kernel}, just as the
frustration index of a signed graph (Section~\ref{sec:signed_graphs})
measures how far it is from being balanced.
The three constants correspond to three feasible sets: a unit vector at
\emph{some} vertices, at \emph{every} vertex, and a full orthonormal
\emph{frame} at every vertex, each measuring the residual of explaining the
measured rotations $W_{vu}$ as $g_ug_v^{-1}$.
Expanding $\nu(g)$ columnwise and applying Lemma~\ref{lem:energy_id} to each
column gives $\nu(g)=\sum_{k\leq d}(g^{(k)})^\top Bg^{(k)}\big/\sum_{k\leq
d}(g^{(k)})^\top Dg^{(k)}$, so $\nu$ could be seen as $\eta$ evaluated on
$d$ columns at once.
This is why $\sum_{k\leq d}\lambda_k(\Bn)$, not $\lambda_1(\Bn)$ alone, is
the quantity compared against $\nu_G$ below.
Figure~\ref{fig:synchronization} pictures the $\mathbb{S}^{d-1}$ and the
$O(d)$ problem.

\begin{figure}[htbp]
    \centering
    \includegraphics[width=\linewidth]{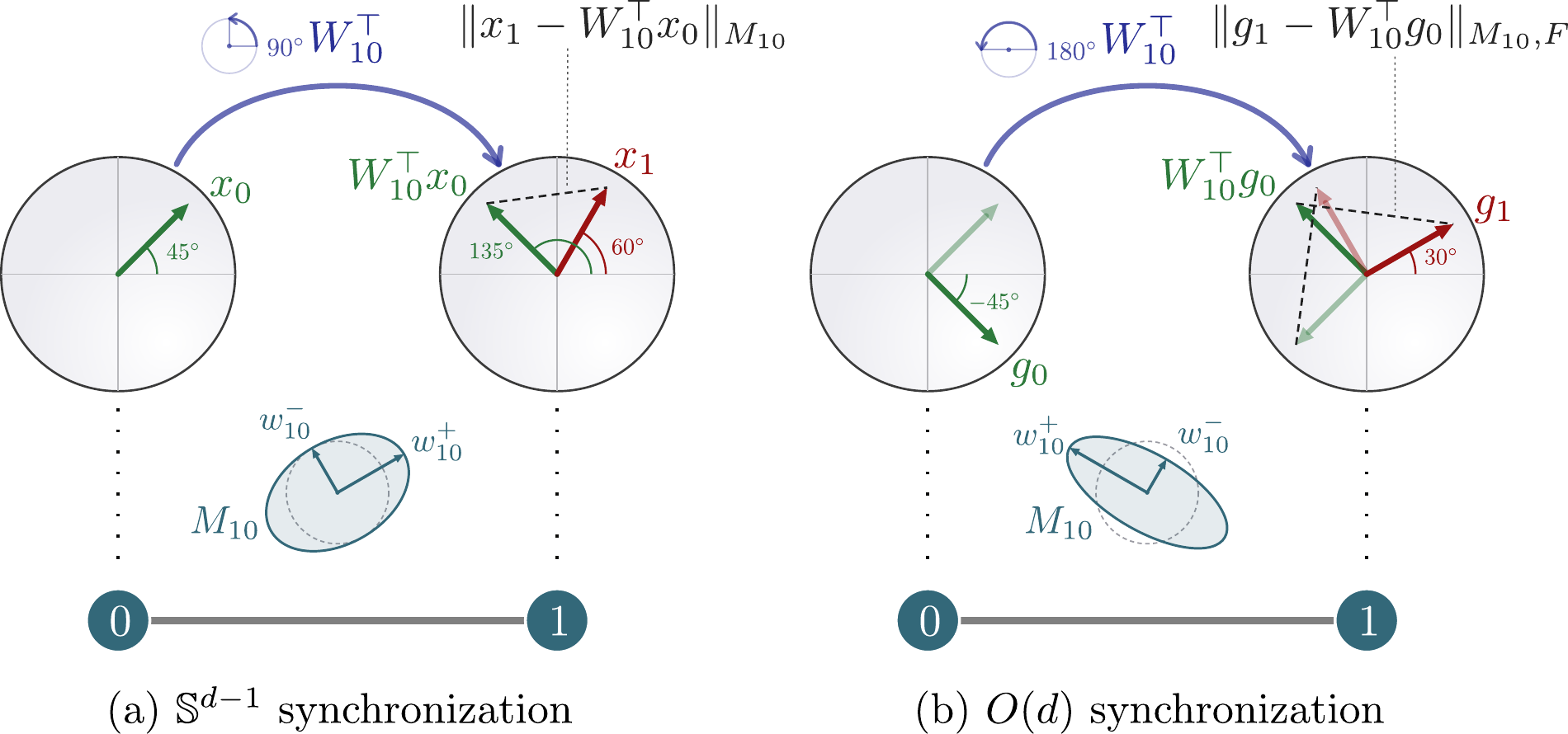}
    \caption{The $\mathbb{S}^{d-1}$ and $O(d)$ synchronization problems on a
single edge, $d=2$.
(a) Each node carries a unit vector. The angular factor $W_{10}^\top$, here
a rotation by $90^\circ$, transports $x_0$ to the stalk of node $1$, and the
frustration of the edge is the discrepancy between $x_1$ and $W_{10}^\top x_0$
(dashed), weighted by the radial factor $M_{10}$. The ellipse below the edge
draws $M_{10}$, with principal axes labelled by its largest and smallest
eigenvalues $w^+_{10}$ and $w^-_{10}$: a discrepancy along the first costs
more than one along the second, while the dashed circle is the isotropic
case of a connection Laplacian.
(b) Each node carries an orthogonal frame $g_v\in O(d)$, drawn as its two
columns, and the frustration is $\norm{g_1-W_{10}^\top g_0}_{M_{10},F}$.}
    \label{fig:synchronization}
\end{figure}

\begin{proposition}[Relaxation]
\label{prop:relaxation}
Under (H1)--(H3): (a) $\lambda_1(\Bn)\leq\eta^*_G\leq\eta_G$; (b)
$\lambda_1(\Bn)\leq\nu_G$; (c)
$\dfrac{1}{d\,\kappa_D}\sum_{k=1}^{d}\lambda_k(\Bn)\leq\nu_G$.
\end{proposition}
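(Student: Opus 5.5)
The plan is to run everything through the Rayleigh quotient of $\Bn$. Since $\Bn=D^{-1/2}BD^{-1/2}$ and $D\succ0$ by (H2)--(H3), Courant--Fischer in the variable $y=D^{1/2}x$ gives
\begin{equation}
  \lambda_1(\Bn)=\min_{x\neq0}\frac{x^\top Bx}{x^\top Dx}=\min_{x\neq0}\eta(x).
\end{equation}
The Ky Fan principle gives, for any $x^{(1)},\dots,x^{(d)}$ that are $D$-orthonormal,
\begin{equation}
  \sum_{k\leq d}\lambda_k(\Bn)=\min\sum_{k}(x^{(k)})^\top Bx^{(k)}.
\end{equation}
Part (a) is then immediate from the inclusions of feasible sets. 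Maps $V\to\mathbb{S}^{d-1}$ are among the nonzero maps $V\to\mathbb{S}^{d-1}\cup\{0\}$, which gives $\eta^*_G\leq\eta_G$. Both sets are contained in $\R^{nd}\setminus\{0\}$, which gives $\lambda_1(\Bn)\leq\eta^*_G$.

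For (b), I will use the columnwise identity stated after Definition~\ref{def:frustration}. For any $g:V\to O(d)$ it gives $\nu(g)=\sum_k a_k/\sum_k b_k$, where $a_k=(g^{(k)})^\top Bg^{(k)}$ and $b_k=(g^{(k)})^\top Dg^{(k)}$. The denominator check is $\sum_k b_k=\sum_v\tr(g_v^\top D_vg_v)=\vol(G)$, since $g_v$ is orthogonal. Each column $g^{(k)}$ is nonzero because it has unit vectors at every vertex, so $a_k\geq\lambda_1(\Bn)b_k$. A mediant inequality then gives $\nu(g)\geq\min_k a_k/b_k\geq\lambda_1(\Bn)$. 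Minimizing over $g$ proves (b). This also shows directly that $\nu(g)\geq\eta_G$, since each column is a sphere-valued map.

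For (c), the obstacle is that the columns $g^{(k)}$ are Euclidean-orthonormal at each vertex but not $D$-orthogonal, so Ky Fan cannot be applied directly. I plan to compare $D$ with scalar multiples of the identity vertexwise. Set $G:=[g^{(1)}\cdots g^{(d)}]\in\R^{nd\times d}$, with Gram matrix $S:=G^\top DG=\sum_v g_v^\top D_vg_v\succ0$. The matrix $Y:=D^{1/2}GS^{-1/2}$ has orthonormal columns, so Ky Fan gives
\begin{equation}
  \sum_{k\leq d}\lambda_k(\Bn)\leq\tr(S^{-1/2}G^\top BGS^{-1/2})\leq\lambda_{\max}(S^{-1})\tr(G^\top BG).
\end{equation}
The last step holds because $G^\top BG\succeq0$. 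Next, $S\succeq\sum_v\lambda_{\min}(D_v)I$, so $\lambda_{\max}(S^{-1})\leq1/\sum_v\lambda_{\min}(D_v)$. Finally, $\vol(G)=\sum_v\tr D_v\leq d\sum_v\lambda_{\max}(D_v)\leq d\kappa_D\sum_v\lambda_{\min}(D_v)$.

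Combining these,
\begin{equation}
  \sum_k\lambda_k(\Bn)\leq\frac{d\kappa_D\,\tr(G^\top BG)}{\vol(G)}=d\kappa_D\,\nu(g).
\end{equation}
Minimizing over $g$ gives (c). The only delicate point is this last chain of constants, where I must check that the factor lost is exactly $d\kappa_D$ and not worse.
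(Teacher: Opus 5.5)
Your proof is correct. Parts (a) and (c) follow the paper's argument.

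For (c), the paper sets $Y=D^{1/2}[g^{(1)}\cdots g^{(d)}]$ and takes a QR factorization $Y=QR$. It writes $\nu(g)=\tr(SP)/\tr(P)$ with $S=Q^\top\Bn Q$ and $P=RR^\top$. It then combines three facts: $\tr(SP)\geq\lambda_{\min}(P)\tr(S)$; the compression bound $\tr(S)\geq\sum_{k\leq d}\lambda_k(\Bn)$; and $\lambda_{\min}(P)\geq\sum_v\lambda_{\min}(D_v)\geq\tr(P)/(d\kappa_D)$. Your symmetric orthonormalization $D^{1/2}GS^{-1/2}$ spans the same subspace as $Q$. So your Ky Fan trace equals the paper's $\tr(S)$, your $\lambda_{\min}(S)$ is the paper's $\lambda_{\min}(P)$, and the constant bookkeeping is identical. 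The factor lost is exactly $d\kappa_D$, as you wanted to confirm.

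Part (b) is where you genuinely differ. The paper derives it from the same QR setup, via $\tr(SP)\geq\lambda_1(S)\tr(P)$ and compression $\lambda_1(S)\geq\lambda_1(\Bn)$. That route needs the rank-$d$ argument for $Y$ to make $R$ invertible. You instead apply the Rayleigh bound to each column separately and sum. This is more elementary. Because each column $g^{(k)}$ is $\mathbb{S}^{d-1}$-valued, it also yields the stronger chain $\lambda_1(\Bn)\leq\eta^*_G\leq\eta_G\leq\nu_G$, which the paper does not record. The paper's route buys only a single common setup for (b) and (c).

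One cosmetic point: rename your column matrix $G$, since in your final display it collides with the graph in $\vol(G)$.
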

Proof in Appendix~\ref{sec:proofs}.
Parts (a) and (b) match the connection-Laplacian case; (c) does not, since
ordinary vertex degrees no longer make the columns of a group potential
$D$-orthogonal, and $\kappa_D$ is the penalty for that failure.

\begin{proposition}[Zero sets]
\label{prop:zero_sets}
Under (H1)--(H3), $\lambda_1(\Bn)=0 \iff \ker(B)\neq\{0\} \iff \eta_G=0 \iff
\eta^*_G=0$.
Moreover $\nu_G=0$ iff the transport maps $T_\gamma$ are path independent,
equivalently, by Proposition~\ref{prop:holonomy}, iff $\dim\ker(B)=d$.
\end{proposition}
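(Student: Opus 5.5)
We prove the first chain by a cycle of implications and then treat $\nu_G$ separately. Since $\Bn=D^{-1/2}BD^{-1/2}$ and $D$ is invertible (each $D_v=\sum_{u\sim v}M_{vu}$ is positive definite by (H2)--(H3)), we have $\ker\Bn=D^{1/2}\ker B$. Since $\Bn\succeq0$, this gives $\lambda_1(\Bn)=0\iff\ker(B)\neq\{0\}$.

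For the remaining equivalences we use the chain $\lambda_1(\Bn)\leq\eta^*_G\leq\eta_G$ of Proposition~\ref{prop:relaxation}(a), which reduces the first statement to two implications.
\begin{enumerate}[(i)]
\item $\eta_G=0\Rightarrow\eta^*_G=0\Rightarrow\lambda_1(\Bn)=0$. This is immediate from the chain, because all three quantities are nonnegative.
\item $\ker(B)\neq\{0\}\Rightarrow\eta_G=0$. Take $0\neq x\in\ker B$. By Proposition~\ref{prop:kernel} and (H3), $\norm{x_v}$ is a constant $c$ on the whole graph, and $c>0$ since $x\neq0$. Then $x/c$ is a map $V\to\mathbb{S}^{d-1}$ with $\eta(x/c)=0$, so $\eta_G=0$.
\end{enumerate}
Together with the first paragraph this closes the cycle. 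Note that every minimum here is attained by compactness, with $x^\top Dx>0$ on the feasible sets, so a value of $0$ is realized by an actual configuration.

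For $\nu_G$, first suppose $\nu_G=0$. Then there is a minimizer $g:V\to O(d)$ with $\nu(g)=0$. Applying the columnwise form of Lemma~\ref{lem:energy_id}, each column $g^{(k)}$ lies in $\ker B$. By Proposition~\ref{prop:kernel}, $g_u=W_{vu}g_v$ on every edge, so $W_{vu}=g_ug_v^{-1}$. Along any walk from $a$ to $b$ the product telescopes to $T_\gamma=g_bg_a^{-1}$, which depends only on the endpoints; that is, transport is path independent. Conversely, assume path independence. Fix a root $r$ and set $g_v:=T_\gamma$ for any walk $\gamma:r\to v$; this is well defined and lies in $O(d)$. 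Then $g_u=W_{vu}g_v$ on every edge, every summand of $\nu(g)$ vanishes, and $\nu_G=0$.

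The equivalence of path independence with $T_\gamma=I$ for every closed walk is standard: concatenate a walk with the reverse of another, using $W_{uv}=W_{vu}^\top$. Proposition~\ref{prop:holonomy} then identifies this condition with $\dim\ker(B)=d$. The argument is routine throughout. The one point needing care is the step from $\nu(g)=0$ to each column lying in $\ker B$, which requires the summands of the columnwise sum to be individually nonnegative; this holds since each $M_{vu}\succeq0$.
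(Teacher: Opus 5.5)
Your proof is correct and follows essentially the paper's route. You use the same rank/kernel correspondence for $\lambda_1(\Bn)=0\iff\ker(B)\neq\{0\}$, the same normalization of a constant-norm kernel vector to get $\eta_G=0$, and the same columnwise reduction of $\nu(g)=0$ to $g_u=W_{vu}g_v$; you also usefully make explicit the telescoping $T_\gamma=g_bg_a^{-1}$ and the construction $g_v:=T_\gamma$, which the paper only asserts.

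The one difference is how you close the cycle: you go through $\lambda_1(\Bn)\leq\eta^*_G$ from Proposition~\ref{prop:relaxation}(a), whereas the paper proves $\eta^*_G=0\Rightarrow\ker(B)\neq\{0\}$ directly with a support-propagation argument under (H2)--(H3). Your route is shorter and loses nothing, since $x^\top Bx=0$ with $B\succeq0$ already places $x$ in $\ker(B)$.
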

Proof in Appendix~\ref{sec:proofs}.

\noindent\textbf{Partial synchronization.}
In the first problem some vertices are allowed to abstain.

\begin{alg}[Partial synchronization]
\label{alg:partial}
\emph{Input:} a Bochner matrix $B$ satisfying (H1)--(H3).
\begin{enumerate}
  \item Form $\Bn=D^{-1/2}BD^{-1/2}$ and compute a unit eigenvector $z$ for
$\lambda_1(\Bn)$.
  \item Set $x:=D^{-1/2}z$ and rescale it so that $\max_v\norm{x_v}=1$.
  \item For $t\geq0$ let $x^{>t}$ be the \emph{thresholded normalization}:
  \begin{equation}
    x^{>t}_v := \begin{cases}
      x_v/\norm{x_v}, & \norm{x_v}^2>t,\\
      0, & \text{otherwise,}
    \end{cases}
  \end{equation}
and form the candidates $x^{>t}$ at the $n$ thresholds $t=\norm{x_v}^2$,
$v\in V$.
  \item \emph{Output:} the candidate minimizing $\eta$.
\end{enumerate}
Steps 3 and 4 turn the eigenvector into an admissible assignment, the step
that Cheeger-type proofs call rounding.
Step 3 needs only finitely many thresholds: as $t$ varies, $x^{>t}$ changes
only when $t$ crosses one of the values $\norm{x_v}^2$, so trying these at
most $n$ values covers every possible candidate.
\end{alg}

\begin{theorem}[Partial synchronization]
\label{thm:partial}
Under (H1)--(H3):
\begin{equation}
  \lambda_1(\Bn) \;\leq\; \eta^*_G \;\leq\; \kappa_E^{3/2}\sqrt{10\,\lambda_1(\Bn)},
\end{equation}
and Algorithm~\ref{alg:partial} returns a partial assignment attaining the
upper bound.
\end{theorem}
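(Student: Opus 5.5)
The lower bound $\lambda_1(\Bn)\leq\eta^*_G$ is Proposition~\ref{prop:relaxation}(a), so only the upper bound needs proof. For it we adapt the randomized-rounding argument of Bandeira, Singer and Spielman for connection Laplacians. Let $z$ be the unit eigenvector of Algorithm~\ref{alg:partial} and $x=D^{-1/2}z$, rescaled so that $\max_v\norm{x_v}=1$; then $x^\top Bx=\lambda_1 x^\top Dx$ with $\lambda_1:=\lambda_1(\Bn)$. Draw $t$ uniformly in $[0,1]$ and consider the random candidate $x^{>t}$. It suffices to show
\begin{equation}
\E_t\bigl[(x^{>t})^\top Bx^{>t}\bigr]\leq C\sqrt{\lambda_1}\;\E_t\bigl[(x^{>t})^\top Dx^{>t}\bigr],
\qquad C=\kappa_E^{3/2}\sqrt{10}.
\end{equation}
Some threshold then achieves the ratio, and it is among the $n$ thresholds tried by the algorithm.

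\textbf{Step 1 (reduction to scalar weights).} We sandwich the anisotropic quadratic forms between scalar ones. On each edge $w^-_{vu}I\preceq M_{vu}\preceq w^+_{vu}I$. By Lemma~\ref{lem:energy_id}, the numerator $\E[(x^{>t})^\top Bx^{>t}]$ is at most $\frac12\sum w^+_{vu}\,\E\norm{x^{>t}_v-W^\top_{vu}x^{>t}_u}^2$. The denominator is $\E[(x^{>t})^\top Dx^{>t}]=\sum_v\norm{x_v}^2\,\hat x_v^\top D_v\hat x_v$, where $\hat x_v=x_v/\norm{x_v}$; it is at least $\sum_v d^-_v\norm{x_v}^2$. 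In the other direction, $x^\top Bx\geq\frac12\sum w^-_{vu}\norm{x_v-W^\top_{vu}x_u}^2$ and $x^\top Dx\leq\sum_v d^+_v\norm{x_v}^2$. This reduces everything to isotropic expressions with weights $w^\pm$, at the cost of a factor $w^+_e/w^-_e\leq\kappa_E$ on each edge, and of the same factor between $d^+_v$ and $d^-_v$.

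\textbf{Step 2 (per-edge expectation).} This is the BSS computation. Fix an edge and assume $\norm{x_v}\geq\norm{x_u}$. With probability $\norm{x_v}^2-\norm{x_u}^2$ only $v$ survives, and the edge contributes $1$. With probability $\norm{x_u}^2$ both survive, and the edge contributes $\norm{\hat x_v-W^\top_{vu}\hat x_u}^2$. The first probability is bounded by $(\norm{x_v}-\norm{x_u})(\norm{x_v}+\norm{x_u})\leq\norm{x_v-W^\top_{vu}x_u}(\norm{x_v}+\norm{x_u})$, using that $W_{vu}$ is orthogonal. For the second, the standard inequality $\norm{x_u}^2\norm{\hat x_v-W^\top\hat x_u}^2\leq c\,\norm{x_v-W^\top x_u}\,(\norm{x_v}+\norm{x_u})$ holds, after also using $\norm{x_v-W^\top x_u}\leq\norm{x_v}+\norm{x_u}$. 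Summing with weights $w^+$, applying Cauchy--Schwarz, and using $\sum w^+(\norm{x_v}+\norm{x_u})^2\leq 2\sum_v d^+_v\norm{x_v}^2$, we get
\begin{equation}
\E[\text{num}]\lesssim\sqrt{\textstyle\sum w^+\norm{x_v-W^\top x_u}^2}\,\sqrt{\textstyle\sum_v d^+_v\norm{x_v}^2}.
\end{equation}
The constants from the BSS computation combine to the factor $\sqrt{10}$.

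\textbf{Step 3 (assembly and main obstacle).} In Step 2 we replace $w^+$ by $\kappa_E w^-$ and bound $\sum w^-\norm{x_v-W^\top x_u}^2\leq 2x^\top Bx=2\lambda_1 x^\top Dx\leq 2\lambda_1\sum_v d^+_v\norm{x_v}^2$. Finally we compare $\sum_v d^+_v\norm{x_v}^2$ with the denominator $\sum_v d^-_v\norm{x_v}^2$. The delicate step is the bookkeeping of the anisotropy factors, which must total exactly $\kappa_E^{3/2}$. The numerator picks up $\kappa_E$ from $w^+\leq\kappa_E w^-$ inside the square root, which contributes $\kappa_E^{1/2}$. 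It picks up a further $\kappa_E^{1/2}$ from the $d^+$ factor under the other square root, and the final division by the $d^-$-denominator contributes one more $\kappa_E^{1/2}$ relative to $d^+$. If the count instead yields $\kappa_E^2$, I would try to avoid one of the comparisons. One option is to keep the exact $M$-weighted denominator $\sum_v\norm{x_v}^2\hat x_v^\top D_v\hat x_v$, which equals $x^\top Dx$ with no loss. The other is to keep the factor $x^\top Dx$ on the $\lambda_1$ side exactly, so that only the $\sqrt{\sum d^+\norm{x_v}^2}$ term is converted. That conversion costs $\kappa_E^{1/2}$ against $x^\top Dx$, because $\hat x_v^\top D_v\hat x_v\geq d^-_v\geq d^+_v/\kappa_E$.
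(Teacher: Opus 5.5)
Your argument is correct, but it pays for the anisotropy in a different place than the paper, and more cheaply.

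The paper keeps the $M_{vu}$-norms inside the rounding, in three steps:
\begin{itemize}
\item an $M$-weighted normalization inequality (Lemma~\ref{lem:weighted_rounding}), costing $\sqrt{\kappa(M_{vu})}$ per edge;
\item an $\ell^1$ sweep on the ratio $\Phi$, with $\sqrt{w^-_{vu}}$ weights and denominator $\sum_v d^-_v\norm{y_v}$ (Lemma~\ref{lem:global_l1}), costing $\sqrt{\kappa_E}$ when $x^\top Dx$ is compared with $\norm{n_x}^2_{D^-_0}$;
\item a final return to $\eta$ for unit-or-zero assignments (Lemma~\ref{lem:l1_to_l2}), costing a third $\sqrt{\kappa_E}$.
\end{itemize}
You instead bound $\norm{\cdot}^2_{M_{vu}}\leq w^+_{vu}\norm{\cdot}^2$ once, at the start, and run the Euclidean rounding directly on the squared energy.

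Your per-edge step is sound. Besides your bound on the first term, Lemma~\ref{lem:normalize_two} together with $2\norm{x_u}\leq\norm{x_v}+\norm{x_u}$ gives $\norm{x_u}^2\norm{\hat x_v-W_{vu}^\top\hat x_u}^2\leq\norm{x_v-W_{vu}^\top x_u}(\norm{x_v}+\norm{x_u})$. Hence $\E_t\norm{x^{>t}_v-W_{vu}^\top x^{>t}_u}^2\leq 2\,\norm{x_v-W_{vu}^\top x_u}(\norm{x_v}+\norm{x_u})$, with constant $2\leq\sqrt5$. Take all sums over ordered pairs, as in Lemma~\ref{lem:energy_id}; your Step~2 uses the unordered-edge factor $2$ while Step~3 uses the ordered-pair factor $2$. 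With that fixed, your first bookkeeping gives exactly $\kappa_E^{3/2}$ with constant at most $\sqrt{10}$, so the $\kappa_E^2$ fallback is unnecessary.

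Your second option is strictly better. Since $\E_t[(x^{>t})^\top Dx^{>t}]=\sum_v\norm{x_v}^2\hat x_v^\top D_v\hat x_v=x^\top Dx$ exactly, you can keep $x^\top Dx$ in the first Cauchy--Schwarz factor. Converting only $\sum_v d^+_v\norm{x_v}^2\leq\kappa_E\,x^\top Dx$ in the second factor then yields
\[
  \E_t\bigl[(x^{>t})^\top Bx^{>t}\bigr]\leq\kappa_E\sqrt{8\,\lambda_1(\Bn)}\;\E_t\bigl[(x^{>t})^\top Dx^{>t}\bigr],
\]
hence $\eta^*_G\leq\kappa_E\sqrt{8\lambda_1(\Bn)}$. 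This is half a power of $\kappa_E$ below the theorem.

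The paper's route mirrors Bandeira, Singer and Spielman lemma by lemma, and isolates a per-edge inequality whose $\sqrt{\kappa(M)}$ it shows cannot be removed. Your route shows that, for the theorem itself, the extra half-power comes from converting between $M_{vu}$-norms and Euclidean norms twice (Lemmas~\ref{lem:weighted_rounding} and~\ref{lem:l1_to_l2}) instead of once.

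One small point you share with the paper concerns the algorithm. With the strict test $\norm{x_v}^2>t$, the $n$ values $t=\norm{x_v}^2$ include $t=1$, where $x^{>1}=0$. They also miss the interval $[0,\min_v\norm{x_v}^2)$, so the threshold produced by averaging need not be among those tried. Adding $t=0$ repairs the claim about Algorithm~\ref{alg:partial}.
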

Proof in Appendix~\ref{sec:proofs}.

\noindent\textbf{Full synchronization.}
Moving from partial to full synchronization additionally requires
controlling the connectivity of the companion graph, quantified by its
spectral gap $\lambda_2(\Lmn)$, exactly as when Bandeira, Singer and
Spielman move from partial to full
synchronization~\cite{bandeira2013cheegerinequalitygraphconnection}.

\begin{alg}[Full synchronization]
\label{alg:full}
\emph{Input:} a Bochner matrix $B$ satisfying (H1)--(H3).
\begin{enumerate}
  \item Form $\Bn=D^{-1/2}BD^{-1/2}$ and compute a unit eigenvector $z$ for
$\lambda_1(\Bn)$.
  \item Set $x:=D^{-1/2}z$.
  \item \emph{Output:} the normalized assignment $\tilde x$, given by
$\tilde x_v:=x_v/\norm{x_v}$ where $x_v\neq0$, and by an arbitrary unit
vector where $x_v=0$.
\end{enumerate}
\end{alg}

\begin{theorem}[Full synchronization]
\label{thm:full}
Under (H1)--(H3):
\begin{equation}
  \lambda_1(\Bn) \;\leq\; \eta_G \;\leq\; \frac{54\,\kappa_E^2}{\lambda_2(\Lmn)}\,\lambda_1(\Bn),
\end{equation}
and Algorithm~\ref{alg:full} returns a full assignment attaining the upper
bound.
\end{theorem}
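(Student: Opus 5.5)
The lower bound $\lambda_1(\Bn)\leq\eta_G$ is Proposition~\ref{prop:relaxation}(a), so only the upper bound needs work. I would follow the full-synchronization argument of Bandeira, Singer and Spielman, with the radial factors $M_{vu}$ controlled by $\kappa_E$.

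\emph{Setup.} Let $z$ be the unit eigenvector of Step 1 and $x=D^{-1/2}z$, so that $x^\top Bx=\lambda_1 x^\top Dx$ with $\lambda_1:=\lambda_1(\Bn)$. Let $\tilde x$ be the output of Algorithm~\ref{alg:full}.

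\emph{Step 1: control the local norm vector by the companion graph.} Fix an edge and write $a=x_v$, $b=W_{vu}^\top x_u$, so $\norm{b}=\norm{x_u}$. Since $M_{vu}\succeq w^-_{vu}I$,
\begin{equation}
  w^-_{vu}\bigl(\norm{x_v}-\norm{x_u}\bigr)^2\leq w^-_{vu}\norm{a-b}^2\leq\norm{a-b}^2_{M_{vu}} .
\end{equation}
Summing and using Lemma~\ref{lem:energy_id} gives $n_x^\top\Lm n_x\leq x^\top Bx=\lambda_1 x^\top Dx\leq\lambda_1\sum_v d^+_v\norm{x_v}^2\leq\kappa_E\lambda_1\, n_x^\top D^-_0 n_x$. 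Let $c$ be the $D^-_0$-weighted mean of $n_x$. The variational characterization of $\lambda_2(\Lmn)$ then yields
\begin{equation}
  \sum_v d^-_v(\norm{x_v}-c)^2\leq\frac{\kappa_E\lambda_1}{\lambda_2(\Lmn)}\, n_x^\top D^-_0 n_x .
\end{equation}
So when $\kappa_E\lambda_1/\lambda_2(\Lmn)$ is small, the norms concentrate around $c$ in the $D^-_0$-weighted sense.

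\emph{Step 2: bound the numerator of $\eta(\tilde x)$.} For an edge, write $\tilde a=a/\norm{a}$ and $\tilde b=b/\norm{b}$. I would use the standard inequality
\begin{equation}
  \norm{\tilde a-\tilde b}\leq\frac{\norm{a-b}+|\norm{a}-c|+|\norm{b}-c|}{c}\cdot\text{const},
\end{equation}
or more cleanly $\norm{\tilde a-\tilde b}\leq\bigl(\norm{a-b}+|\norm a-c|+|\norm b-c|\bigr)\cdot 2/c$, which follows from $\norm{\tilde a-a/c}=|\norm a-c|/c$. This version also handles the vertices with $x_v=0$, where $\tilde x_v$ is arbitrary, since then $\norm{\tilde a-a/c}=1=|0-c|/c$. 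Squaring with $(p+q+r)^2\leq3(p^2+q^2+r^2)$ and bounding $M_{vu}\preceq w^+_{vu}I\preceq\kappa_E w^-_{vu}I$ only on the two norm-deviation terms gives
\begin{equation}
  \tilde x^\top B\tilde x\leq\frac{C}{c^2}\Bigl(x^\top Bx+\kappa_E\sum_v d^-_v(\norm{x_v}-c)^2\Bigr).
\end{equation}
The denominator is $\tilde x^\top D\tilde x\geq\sum_v\lambda_{\min}(D_v)\geq\sum_v d^-_v$.

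\emph{Step 3: combine.} Using Step 1 and $x^\top Dx\leq\kappa_E\, n_x^\top D^-_0n_x$, one gets
\begin{equation}
  \eta(\tilde x)\lesssim\frac{\kappa_E^2\lambda_1}{\lambda_2(\Lmn)}\cdot\frac{n_x^\top D^-_0 n_x}{c^2\sum_v d^-_v} .
\end{equation}
The remaining step is to show this last ratio is $O(1)$. It equals $1+\mathrm{Var}/c^2$, where $\mathrm{Var}$ is the $D^-_0$-weighted variance of $n_x$, and by Step 1 the variance is at most $\frac{\kappa_E\lambda_1}{\lambda_2}$ times the second moment.

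\emph{Case split.} If $\kappa_E\lambda_1/\lambda_2(\Lmn)\leq1/2$, the second moment is at most twice $c^2\sum_v d^-_v$, and the bound follows. Otherwise $54\kappa_E^2\lambda_1/\lambda_2>27\kappa_E\geq 2$, and the inequality is trivial because $\eta\leq2$ by the twist-matrix bound $x^\top Bx\leq2x^\top Dx$. Tracking the constants ($3$ from the squared sum, $4$ from the $2/c$ factor, and the factor $2$ from the case threshold) should land at or below $54$.

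\emph{Main obstacle.} The delicate part is the bookkeeping in Step 2. I need to apply the anisotropy loss $\kappa_E$ only where necessary, so that the final exponent is $\kappa_E^2$ rather than $\kappa_E^3$. One factor comes from comparing $D$ with $D^-_0$, and one from the deviation terms weighted by $w^+$. If the $x^\top Bx$ term also picks up a factor, I would instead bound it directly via the energy identity, which needs no comparison.
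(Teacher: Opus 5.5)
Your proof is correct and takes a genuinely different route from the paper's. The paper rescales $x$ so that $\norm{n_x}^2_{D^-_0}=\vol^-$. It then splits the vertices into the ill-balanced set $\Ib_x(\gamma)=\{v:|\norm{x_v}-1|\geq\gamma\}$ and its complement. On pairs avoiding $\Ib_x(\gamma)$ it uses the Lipschitz bound $\norm{a/\norm a-b/\norm b}\leq\norm{a-b}/\min(\norm a,\norm b)$, and on pairs meeting it the crude bound $4w^+_{vu}$. A Chebyshev-type estimate from the spectral gap controls $\vol^-(\Ib_x(\gamma))$, and optimizing at $\gamma=2/3$ gives $54$. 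You instead compare $\tilde x$ with the single rescaled vector $x/c$, via the triangle inequality in the $M_{vu}$-norm through $x_v/c$ and $W_{vu}^\top x_u/c$:
\[
\norm{\tilde x_v-W_{vu}^\top\tilde x_u}_{M_{vu}}\leq\frac{\norm{x_v-W_{vu}^\top x_u}_{M_{vu}}+\sqrt{w^+_{vu}}\,\bigl|\norm{x_v}-c\bigr|+\sqrt{w^+_{vu}}\,\bigl|\norm{x_u}-c\bigr|}{c}.
\]
This holds on every edge, zero vertices included, so you need no partition, no volume estimate and no optimization. The spectral gap enters once, through the $D^-_0$-weighted variance of $n_x$. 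The regime $\epsilon:=\kappa_E\lambda_1(\Bn)/\lambda_2(\Lmn)>1/2$ is absorbed by $\eta\leq2$. Your route is shorter and gives a better constant. The paper's decomposition follows Bandeira--Singer--Spielman and is the template reused for Theorem~\ref{thm:od}, where the polar step is only controlled on vertices whose frames are nearly orthonormal.

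Your constant bookkeeping needs one correction. The factor is $1/c$, not $2/c$: the identity $\norm{\tilde a-a/c}=|\norm a-c|/c$ plus the triangle inequality give it directly. With $1/c$, and using $\norm{n_x}^2_{D^-_0}=c^2\vol^-+\sum_vd^-_v(\norm{x_v}-c)^2\leq c^2\vol^-/(1-\epsilon)$, your argument yields $\eta(\tilde x)\leq\frac{3}{1-\epsilon}\,\kappa_E\lambda_1(\Bn)\bigl(1+\kappa_E/\lambda_2(\Lmn)\bigr)$. To write $1+\kappa_E/\lambda_2(\Lmn)$ as a multiple of $\kappa_E/\lambda_2(\Lmn)$ you need $\lambda_2(\Lmn)\leq2$. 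That costs a factor $3$, which your tally omits. For $\epsilon\leq1/2$ this gives $3\cdot2\cdot3=18$, so $\eta(\tilde x)\leq18\,\kappa_E^2\lambda_1(\Bn)/\lambda_2(\Lmn)$, below $54$. With your $2/c$ and threshold $1/2$ the same count gives $72$, which overshoots.

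The exponent $\kappa_E^2$ is also not delicate. Apply the triangle inequality in the Euclidean norm and convert every term with $w^+_{vu}\leq\kappa_Ew^-_{vu}$. The numerator is then at most
\[
\frac{3\kappa_E}{c^2}\Bigl(x^\top Bx+\sum_vd^-_v(\norm{x_v}-c)^2\Bigr)\leq\frac{3\kappa_E}{c^2}\bigl(1+\lambda_2(\Lmn)^{-1}\bigr)x^\top Bx,
\]
and $x^\top Bx\leq\kappa_E\lambda_1(\Bn)\norm{n_x}^2_{D^-_0}$ supplies the only other factor of $\kappa_E$.
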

Proof in Appendix~\ref{sec:proofs}.

\noindent\textbf{Synchronization in $O(d)$.}
The third problem asks for a full orthonormal frame at every vertex.
Its solution uses the polar projection: for nonsingular $X=P\Sigma
Q^\top$, $\mathrm{Pol}(X):=PQ^\top$ is the nearest orthogonal matrix in
Frobenius norm.
Alongside $D$ we need the isotropized degree:
\begin{equation}
  \bar d_v := \tfrac1d\tr(D_v), \quad
  \Db := \operatorname{blockdiag}(\bar d_vI_d), \quad
  \Bb := \Db^{-1/2}B\,\Db^{-1/2}, \quad
  \volb := \sum_v\bar d_v,
\end{equation}
whose relevant property is that, $\Db$ being a scalar multiple of the
identity in each block, $x^\top\Db x=\sum_v\bar
d_v\norm{x_v}^2=\norm{n_x}^2_{\bar D_0}$ exactly, for every $x$ (with $\bar
D_0:=\operatorname{diag}(\bar d_v)$).
The $O(d)$ argument needs the $d$ vectors at a vertex to be nearly
orthonormal; under $\Db$ this holds exactly, while under the ordinary degree
$D$ it holds only up to an additive error, where the argument needs a
multiplicative one. $\Db$ occurs only inside the construction: the statement of
Theorem~\ref{thm:od} involves $\Bn$, $D$ and $\lambda_2(\Lmn)$ only.

\begin{alg}[$O(d)$ synchronization]
\label{alg:od}
\emph{Input:} a Bochner matrix $B$ satisfying (H1)--(H3).
\begin{enumerate}
  \item Form $\bar d_v=\tfrac1d\tr(D_v)$ and $\Bb=\Db^{-1/2}B\Db^{-1/2}$.
  \item Compute orthonormal eigenvectors $z^1,\dots,z^d$ of $\Bb$ for its
$d$ smallest eigenvalues $\lambda_1(\Bb)\leq\cdots\leq\lambda_d(\Bb)$.
  \item Set $x^k:=\sqrt{\volb}\,\Db^{-1/2}z^k$ for $k=1,\dots,d$, and let
$X_v:=[x^1_v\ \cdots\ x^d_v]\in\R^{d\times d}$.
  \item \emph{Output:} the group potential $g_v:=\mathrm{Pol}(X_v)$,
computed from the SVD of $X_v$, setting $g_v:=I_d$ where $X_v$ is singular.
\end{enumerate}
\end{alg}

\begin{theorem}[$O(d)$ synchronization]
\label{thm:od}
Under (H1)--(H3):
\begin{equation}
  \frac{1}{d\,\kappa_D}\sum_{k=1}^{d}\lambda_k(\Bn)
  \;\leq\; \nu_G \;\leq\;
  \frac{2^{11}\,\kappa_E^3\,d^3}{\lambda_2(\Lmn)}\sum_{k=1}^{d}\lambda_k(\Bn),
\end{equation}
with $\kappa_D\leq\kappa_E$, and Algorithm~\ref{alg:od} returns a group
potential attaining the upper bound.
No hypothesis beyond (H1)--(H3) is required.
\end{theorem}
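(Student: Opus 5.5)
The lower bound is Proposition~\ref{prop:relaxation}(c), so two things remain: the inequality $\kappa_D\leq\kappa_E$ and the upper bound on $\nu_G$. For $\kappa_D\leq\kappa_E$, use $D_v=\sum_{u\sim v}M_{vu}$ with $w^-_{vu}I\preceq M_{vu}\preceq w^+_{vu}I$. This gives $\lambda_{\min}(D_v)\geq d^-_v$ and $\lambda_{\max}(D_v)\leq d^+_v\leq\kappa_E d^-_v$, hence $\kappa_D\leq\kappa_E$. The same sandwich $d^-_vI\preceq D_v\preceq\kappa_E d^-_vI$ and $d^-_v\leq\bar d_v\leq\kappa_E d^-_v$ lets us pass between $D$, $\Db$ and $D^-_0$, each time at a cost of one factor $\kappa_E$. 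In particular, Courant--Fischer gives $\lambda_k(\Bb)\leq\kappa_E\lambda_k(\Bn)$, so it suffices to bound $\nu(g)$ for the output of Algorithm~\ref{alg:od} by $C\kappa_E^2d^3\lambda_2(\Lmn)^{-1}\sum_k\lambda_k(\Bb)$.

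The plan follows the Bandeira--Singer--Spielman $O(d)$ argument. \textbf{Step 1 (energy).} With $X_v$ from Algorithm~\ref{alg:od}, normalized so that $\sum_v\bar d_v\norm{X_v}_F^2=d\,\volb$ exactly (this is where $\Db$ being isotropic matters), the columnwise form of Lemma~\ref{lem:energy_id} gives
\[
\tfrac12\sum_{v\neq u}\norm{X_v-W_{vu}^\top X_u}^2_{M_{vu},F}=\volb\sum_k\lambda_k(\Bb).
\]
\textbf{Step 2 (near-orthonormality).} Show that $\sum_v\bar d_v\norm{X_v^\top X_v-I}_F^2$ is small. Consider the scalar signals $f^{kl}_v:=\langle x^k_v,x^l_v\rangle$. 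Since $W_{vu}$ is orthogonal, $|f^{kl}_v-f^{kl}_u|$ is controlled by $\norm{x^k_v-W_{vu}^\top x^k_u}$ times $\norm{x^l}$, plus the symmetric term. Weighting by $w^-_{vu}$ and using $w^-_{vu}\norm{y}^2\leq\norm{y}^2_{M_{vu}}$ bounds the companion-graph energy $(f^{kl})^\top\Lm f^{kl}$ by the Bochner energy. The bound has a quartic cross term, which is handled by Cauchy--Schwarz with a local norm bound. \textbf{Step 3.} Apply the spectral gap of $\Lmn$ to $f^{kl}$ minus its $D^-_0$-weighted mean. The $\Db$-orthonormality of the $z^k$ fixes the $\bar D_0$-weighted mean of $f^{kl}$ to be $\delta_{kl}$, and $\bar d_v\asymp d^-_v$ up to $\kappa_E$ transfers this to $D^-_0$. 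The result is that $\sum_v\bar d_v\norm{X_v^\top X_v-I}_F^2\lesssim\kappa_E^2 d^2\lambda_2(\Lmn)^{-1}\volb\sum_k\lambda_k(\Bb)$.

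\textbf{Step 4 (rounding).} For $g_v=\mathrm{Pol}(X_v)$, use the standard estimate $\norm{X-\mathrm{Pol}(X)}_F\leq\norm{X^\top X-I}_F$, which holds since singular values satisfy $|\sigma-1|\leq|\sigma^2-1|$. Singular $X_v$ (where $g_v=I$) also satisfy it up to a constant, because $\norm{X_v^\top X_v-I}_F\geq1$ there. Then use the triangle inequality $\norm{g_v-W_{vu}^\top g_u}\leq\norm{g_v-X_v}+\norm{X_v-W_{vu}^\top X_u}+\norm{X_u-g_u}$, square it with a factor $3$, and use $\norm{\cdot}_{M_{vu},F}^2\leq w^+_{vu}\norm{\cdot}_F^2\leq\kappa_E w^-_{vu}\norm{\cdot}_F^2$. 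The rounding errors are then summed against $d^+_v\leq\kappa_E\bar d_v$. Dividing by $\vol(G)=d\,\volb$ yields the bound.

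The main obstacle is Step~2. The map $x\mapsto\langle x^k_v,x^l_v\rangle$ is quadratic, so its Lipschitz control needs a bound on $\norm{X_v}$ that the eigenvectors do not give pointwise. I would try the BSS trick of working with $\norm{X_v}_F$-weighted Cauchy--Schwarz globally: bound $\sum_{vu}w^-_{vu}(\norm{X_v}+\norm{X_u})^2$ by $2\sum_v d^-_v\norm{X_v}^2\leq2d\,\volb$. The anisotropy enters only through the radial comparisons, which should account for the $\kappa_E^3$, and the $d^3$ arises from summing over the $d^2$ pairs $(k,l)$ together with column-versus-Frobenius comparisons. The constant $2^{11}$ is loose bookkeeping of the factors $2$ and $3$.
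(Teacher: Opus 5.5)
Your reduction is correct: the lower bound from Proposition~\ref{prop:relaxation}(c), $\kappa_D\le\kappa_E$, and $\lambda_k(\Bb)\le\kappa_E\lambda_k(\Bn)$. Step~1 is also correct. Steps~2--4 do not close, however, because every estimate there is run on the Gram matrix $X_v^\top X_v$, which is quadratic in $X$. Your transport estimate $|f^{kl}_v-f^{kl}_u|\le\norm{x^k_v-W_{vu}^\top x^k_u}\,\norm{x^l_v}+\norm{x^k_u}\,\norm{x^l_v-W_{vu}^\top x^l_u}$ bounds $(f^{kl})^\top\Lm f^{kl}$ by the Bochner energy only up to a factor $\max_v\norm{X_v}_F^2$. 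No local norm bound is available to remove it: the normalization gives only $\norm{X_v}_F^2\le d\,\volb/\bar d_v$. The global Cauchy--Schwarz you fall back on controls the $\ell^1$ variation $\sum_{v\neq u}w^-_{vu}|f^{kl}_v-f^{kl}_u|$, which is of order $\volb\,(\sum_k\lambda_k(\Bb))^{1/2}$. That cannot be fed into the $\ell^2$ spectral-gap inequality of Step~3. An $\ell^1$ (Cheeger-type) Poincar\'e inequality would leave a square root, losing the linear dependence on $\sum_k\lambda_k(\Bn)$ that the theorem asserts. Step~4 inherits the defect. The bound $\norm{X-\mathrm{Pol}(X)}_F\le\norm{X^\top X-I}_F$ is true, but summing its square against $d^+_v$ needs the quartic quantity $\sum_v\bar d_v\norm{X_v^\top X_v-I}_F^2$, which nothing in your argument controls at vertices where $\norm{X_v}$ is large.

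The missing idea is to propagate only quantities that are $1$-homogeneous and invariant under transport. The paper does this as follows.
\begin{itemize}
\item It uses the scalar lengths $\norm{x_v}$, for which $n_x^\top\Lm n_x\le x^\top Bx$ holds with no extra factor.
\item It recovers $\langle x^k_v,x^m_v\rangle$ by polarization from the lengths of $x^k$, $x^m$ and $(x^k+x^m)/\sqrt2$; this is where the exact $\Db$-normalization is needed.
\item It bounds by Chebyshev the volume of the set where one of these lengths deviates from $1$ by at least $1/(8d)$.
\item It charges every pair meeting that set the crude amount $4d\,w^+_{vu}$. This works because $g_v,g_u\in O(d)$, so the size of $X_v$ never enters.
\item On the complement it combines $\sigma_{\min}(X_v)\ge1/\sqrt2$ with Li's perturbation bound for the polar factor.
\end{itemize}
Alternatively, your global scheme can be repaired by replacing $X_v^\top X_v$ with $P_v:=(X_v^\top X_v)^{1/2}$, the right modulus rather than the paper's $\sqrt{AA^\top}$. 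Since $(W^\top X)^\top(W^\top X)=X^\top X$, the Araki--Yamagami inequality gives $\norm{P_v-P_u}_F\le\sqrt2\,\norm{X_v-W_{vu}^\top X_u}_F$. The identity $\sum_v\bar d_vP_v^2=\volb\,I$ forces the $\bar D_0$-weighted mean $\bar P$ to satisfy $\norm{I-\bar P}_F^2\le\tr(I-\bar P^2)=\volb^{-1}\sum_v\bar d_v\norm{P_v-\bar P}_F^2$. Finally, $\norm{X_v-\mathrm{Pol}(X_v)}_F=\norm{P_v-I}_F$ exactly. With these, Steps~3--4 involve only quadratic quantities and go through, with a milder dependence on $d$ than $d^3$.
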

Proof in Appendix~\ref{sec:proofs}.

In the isotropic case every $M_{vu}=\omega_{vu}I_d$, so
$\kappa_E=\kappa_D=1$, $D_v=d_vI_d$, $\Db=D$, and the companion graph $G_-$
is the underlying weighted graph itself, so
$\lambda_2(\Lmn)=\lambda_2(\mathcal L_0)$, the second smallest eigenvalue of
its normalized Laplacian, simple since the graph is connected by (H3).
The three bounds then reduce to those of Bandeira, Singer and
Spielman~\cite{bandeira2013cheegerinequalitygraphconnection}, up to the
numerical constants: in place of our $54$ in Theorem~\ref{thm:full} and
$2^{11}$ in Theorem~\ref{thm:od}, they obtain $44$ and $2^{10}$, by sharper
bookkeeping at two steps we have not attempted to optimize.

The $\kappa_E^{3/2}$ of Theorem~\ref{thm:partial} is three half-powers of
the anisotropy: one from the estimate on a single edge, one from the
Cauchy--Schwarz comparison of $x^\top Bx$ against the companion graph, and
one from converting the resulting $\ell^1$ bound back into a bound on
$\eta$.
In Theorem~\ref{thm:full} the two comparisons appear outside any square root
and give $\kappa_E^2$, while Theorem~\ref{thm:od} carries one such
comparison together with two further powers from passing to the isotropized
degree, hence $\kappa_E^3$.
Our technique stems directly from the approach of Bandeira, Singer and
Spielman, with the anisotropy entering as a factor that rescales their
estimates; we leave open finding other techniques that could improve these
bounds.

\textbf{Related Cheeger inequalities.}
Liu, M\"unch and Peyerimhoff~\cite{Liu_2019_curvature_buser} prove
Buser-type inequalities for connection Laplacians with orthogonal or unitary
matrices, bounding eigenvalues from above under nonnegative Ricci curvature.
Their Cheeger constants are mixtures of the expansion rate of the underlying
graph and the frustration index of the signature.
One open question would be to generalize Ricci curvature for Connection
Laplacian to Bochner matrices, and, subsequently, apply analogous Cheeger
inequalities to Bochner matrices.

For general sheaves, First and
Kaufman~\cite{first2025cheegerinequalitycoboundaryexpansion} prove
Cheeger-type inequalities bounding the coboundary expansion of sheaves
obtained as quotients of a constant sheaf by a subsheaf.
Their technique applies only to certain graphs, which they characterize,
whereas the inequalities above hold for every Bochner matrix on an arbitrary
graph.
To our knowledge, the Bochner matrices defined in this work are therefore
the largest class of cellular sheaves for which a Cheeger inequality is
known on arbitrary graphs.

\subsection{The anisotropic random walk}
\label{sec:random_walk}

Attaching a random walk to a general sheaf Laplacian is nontrivial, and is
left open by Hansen and Ghrist~\cite{hansen2019toward}.
The obstruction is already visible for $d=1$: the diagonal entry at $v$ is
$\sum_e\F_{v\trianglelefteq e}^2$, the off-diagonal entry on $e=(v,u)$ is
the product $\F_{v\trianglelefteq e}\F_{u\trianglelefteq e}$, and nothing
forces the first to be the sum of the absolute values of the second, so no
transition probability falls out naturally.
It does for connection Laplacians, where $p_{vu}=w_{vu}/\sum_\rho w_{v\rho}$
induces the vector-valued walk of Bandeira and
Singer~\cite{bandeira2013cheegerinequalitygraphconnection,singer2011vectordiffusionmapsconnection};
Bochner matrices repair the general case, since by construction their
diagonal block is exactly the sum of the matrix absolute values of the
off-diagonal ones.

\begin{definition}[Anisotropic random walk]
\label{def:random_walk}
Let $B$ satisfy (H1)--(H3).
For each neighbour $u$ of $v$ set:
\begin{equation}
  P_{vu} := D_v^{-1/2}M_{vu}D_v^{-1/2} \succeq 0,
  \qquad
  \sum_{u\neq v}P_{vu} = D_v^{-1/2}\Bigl(\sum_{u\neq v}M_{vu}\Bigr)D_v^{-1/2} = I,
\end{equation}
a \emph{Positive Operator-Valued Measure} (POVM) at $v$.
Given a unit vector $\hat x_v\in\mathbb{S}^{d-1}$, the walker jumps to $u$
with probability $p_{vu}(\hat x_v) := \hat x_v^\top P_{vu}\hat x_v$ and
updates its vector deterministically as $\hat x\mapsto W_{vu}\hat x$, giving
a Markov chain on $\mathcal S = V\times\mathbb{S}^{d-1}$ that moves from
$(v,\hat x)$ to $(u,W_{vu}\hat x)$ with probability $p_{vu}(\hat x)$.
\end{definition}

The jump law is state-dependent: which neighbour the walker visits
next depends on the direction $\hat x$ it currently carries, large when
$\hat x$ aligns with the large eigenspace of $M_{vu}$ and small against its
small eigenspace.
This is the ingredient that the anisotropy adds to the walk of a
connection Laplacian.
When $M_{vu}=w_{vu}I$, $P_{vu}$ collapses to the scalar $w_{vu}/\sum_\rho
w_{v\rho}$ times the identity, and the walk reduces to the classical
vector-valued walk of the connection
Laplacian~\cite{bandeira2013cheegerinequalitygraphconnection,singer2011vectordiffusionmapsconnection}.
We also note that an effective conductance and resistance have recently been
developed out of that notion of random
walk~\cite{cloninger2023randomwalksconductanceresistance}.

For $d=1$, $O(1)=\{-1,+1\}$ and the walk is exactly the Gremban expansion of
the signed graphs of Section~\ref{sec:signed_graphs}: each node is lifted to
two copies, one per sign, and $P_{vu}$ collapses to the ordinary edge
weight.
The construction above lifts each node instead to a whole sphere
$\mathbb{S}^{d-1}$, one point per orientation available to a stalk vector,
with the POVM $\{P_{vu}\}$ playing the role the unsigned Gremban weights
play for $d=1$ (Figure~\ref{fig:gremban}).
We call it the \emph{generalized Gremban expansion}: it carries the scalar
construction from $\mathbb{S}^0=\{-1,+1\}$, where a step either flips a sign
or does not, to $O(d)$-valued transport, where a step rotates an entire
stalk vector.

\begin{figure}[htbp]
    \centering
    \includegraphics[width=\linewidth]{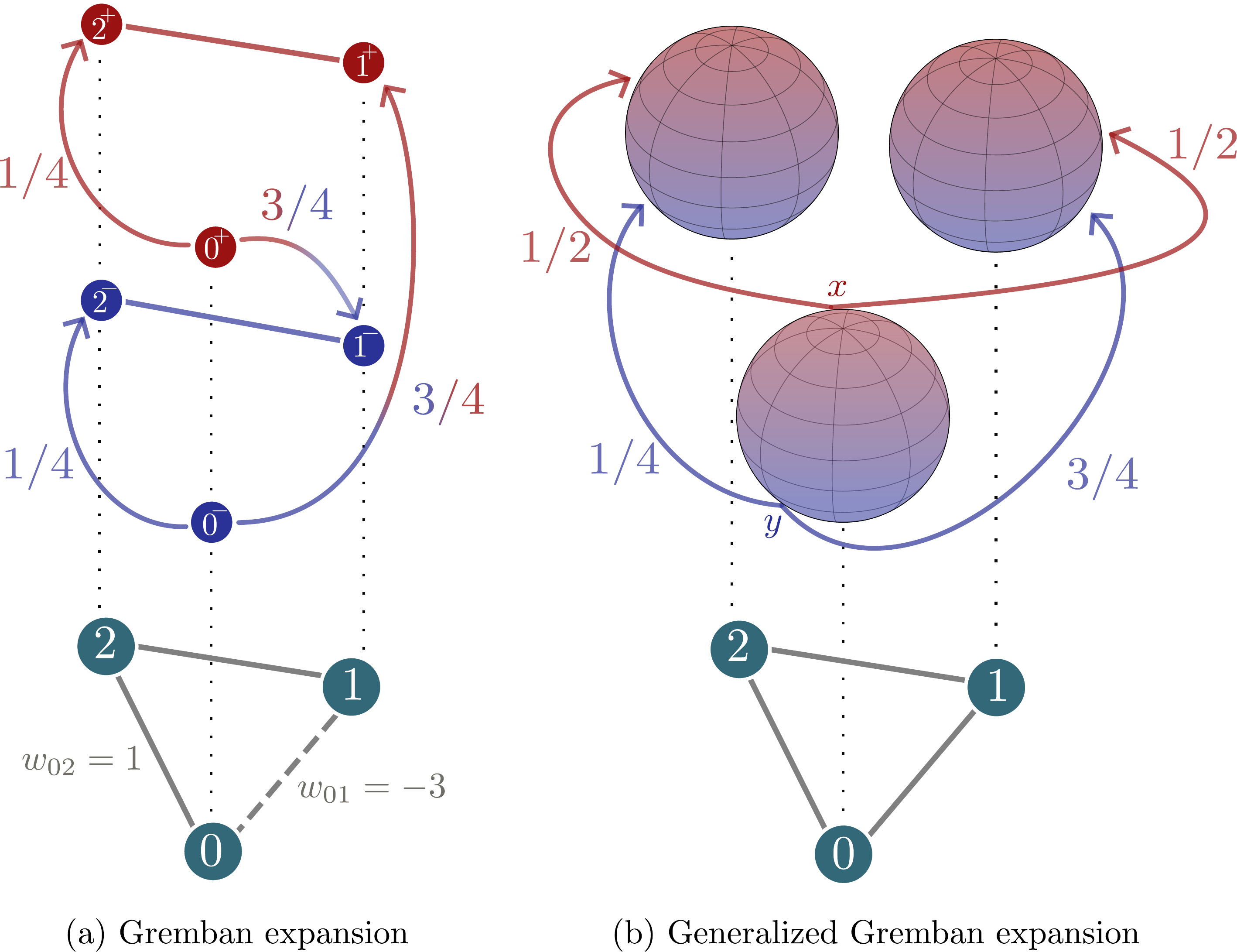}
    \caption{The random walk of a Bochner matrix as a generalized Gremban
expansion.
(a) For the signed triangle with $w_{01}=-3$ (dashed) and $w_{02}=1$, the
Gremban expansion doubles each node into $v^+$ and $v^-$. From node $0$ the
walker moves to node $1$ with probability $3/4$ and to node $2$ with
probability $1/4$, in proportion to $|w_{0u}|$; crossing the negative edge
flips its sign, $0^\pm\to1^\mp$, and crossing the positive one keeps it,
$0^\pm\to2^\pm$.
(b) For $d>1$ each node is lifted to a sphere $\mathbb{S}^{d-1}$, and the
jump probability $p_{0u}(\hat x)=\hat x^\top P_{0u}\hat x$ depends on the
direction the walker carries: in this example, from $y$ it reaches nodes $1$
and $2$ with probabilities $3/4$ and $1/4$, from $x$ with $1/2$ each
(Definition~\ref{def:random_walk}).}
    \label{fig:gremban}
\end{figure}

\noindent\textbf{Trajectories and conserved quantities.}
The fibre carries no randomness of its own.
Each step applies $W_{v_iv_{i+1}}$ to the current vector, so a walker
following $\gamma:v_0\to\cdots\to v_t$ ends in the state $(v_t,T_\gamma\hat
x_0)$: the path determines the fibre, the transport maps of
Proposition~\ref{prop:holonomy} reappear unchanged, and the randomness lies
entirely in which path is taken.
This makes the kernel visible along trajectories.
The observables $f_y(v,\hat x):=\langle\hat x,y_v\rangle$ that stay constant
along \emph{every} trajectory, and not merely in expectation, are exactly
those with $y\in\ker(B)$: such a $y$ is transported edge by edge by the same
orthogonal map as the walker's own vector, by Proposition~\ref{prop:kernel},
and orthogonal maps preserve inner products.
Both statements are proved in Appendix~\ref{sec:proofs}.
Kernel vectors therefore confine the walker to a subset of each fibre.
The walk is not reversible in general, while a variant with the same fibre
update and a modified jump law is; both facts are in
Appendix~\ref{sec:further}.

\section{Forman Sheaf Curvature}
\label{sec:forman_term}

So far we have presented the Bochner--Weitzenb\"ock decomposition for each
order $k$ of the Hodge Laplacian of a simplicial complex, highlighting that
even the node-graph case $k=0$ is non-trivial.
We then considered Bochner matrices, the first term arising from the Bochner
decomposition, and studied them on their own, setting aside the
decomposition they come from.
In this section we turn back to the simplicial complex setting, and to the
$k=1$ case.
Throughout this section we write $H_1^\F = B_1^\F + F_1^\F$ for the Bochner
decomposition of the $1$-dimensional \emph{Hodge} sheaf Laplacian, to keep
it apart from the graph sheaf Laplacian $L_\F$ of the previous sections.
We show that the sheaf Forman term, which we call \emph{sheaf Forman Ricci
curvature} $\mathrm{Ric}_\F := F_1^\F$, controls the first sheaf cohomology
dimension, $\dim\mathcal{H}^1(M;\F)$, when it is positive semidefinite
everywhere.
This is the analogue of the Bochner theorem for cellular sheaves, and we
obtain it by generalizing Forman's original techniques to our setting.
Throughout, $M$ is a compact, connected, quasiconvex regular CW complex and
$\F$ has invertible restriction maps, everywhere of dimension $d$.

\subsection{A Bochner theorem for cellular sheaves}

The two ingredients of the theorem are assembled here: nonnegative curvature
confines a harmonic cochain to the kernel of the Bochner term, and the
kernel characterization of Section~\ref{sec:bochner_matrices} then
propagates a vanishing from one edge to its neighbours.

\begin{corollary}
\label{cor:kernel_split_ricci}
Under $\mathrm{Ric}_\F\succeq0$,
$\ker(H_1^\F)=\ker(B_1^\F)\cap\ker(\mathrm{Ric}_\F)$.
\end{corollary}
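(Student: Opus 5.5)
The argument is the standard one for a sum of two positive semidefinite operators, and I would carry it out in three short steps.

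First, collect the positivity facts. By construction $B_1^\F$ is the Bochner term of the symmetric matrix $H_1^\F$. Its diagonal blocks are exactly $\sum_{\sigma'\neq\sigma}|(B_1^\F)_{\sigma\sigma'}|$, so it is block diagonally dominant with equality, and Proposition~\ref{prop:psd} gives $B_1^\F\succeq0$. The hypothesis gives $\mathrm{Ric}_\F\succeq0$ directly: it is block diagonal with each block positive semidefinite, hence positive semidefinite as a whole. We also have the identity $H_1^\F=B_1^\F+\mathrm{Ric}_\F$ from Section~\ref{sec:bochner_decomposition}.

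Second, prove the inclusion $\ker(H_1^\F)\subseteq\ker(B_1^\F)\cap\ker(\mathrm{Ric}_\F)$. Take $x\in\ker(H_1^\F)$. Then
\[
0=x^\top H_1^\F x=x^\top B_1^\F x+x^\top\mathrm{Ric}_\F x ,
\]
and both summands are nonnegative, so each vanishes. For any positive semidefinite matrix $P$, the condition $x^\top Px=0$ means $\|P^{1/2}x\|^2=0$. This gives $P^{1/2}x=0$ and hence $Px=0$. Applying this to $P=B_1^\F$ and to $P=\mathrm{Ric}_\F$ puts $x$ in both kernels.

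Third, the reverse inclusion is immediate. If $B_1^\F x=0$ and $\mathrm{Ric}_\F x=0$, then $H_1^\F x=B_1^\F x+\mathrm{Ric}_\F x=0$.

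The only step needing any care is the passage from a vanishing quadratic form to a vanishing vector, which is exactly where positive semidefiniteness is used. I expect no real obstacle. The one point worth stating explicitly is that $B_1^\F\succeq0$ holds for every sheaf via Proposition~\ref{prop:psd}, not only under the curvature hypothesis.
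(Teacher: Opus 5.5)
Your proof is correct and follows the same route as the paper's. Both arguments use $B_1^\F\succeq0$ from Proposition~\ref{prop:psd} together with the hypothesis $\mathrm{Ric}_\F\succeq0$, so that $0=\langle H_1^\F c,c\rangle$ forces both nonnegative quadratic forms to vanish and hence $c$ into both kernels; the reverse inclusion is immediate.
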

\begin{proof}
$B_1^\F\succeq0$ always (Proposition~\ref{prop:psd}) and
$\mathrm{Ric}_\F\succeq0$ by hypothesis, so for $c\in\ker(H_1^\F)$,
$0=\langle H_1^\F c,c\rangle=\langle B_1^\F
c,c\rangle+\langle\mathrm{Ric}_\F c,c\rangle$ is a sum of two nonnegative
terms, forcing both to vanish; since each is a positive semidefinite
quadratic form, this forces $c\in\ker(B_1^\F)\cap\ker(\mathrm{Ric}_\F)$.
The reverse inclusion is immediate.
\end{proof}

\begin{lemma}
\label{lem:unique_cont_local}
Let $c=\sum_ec_ee\in\ker\partial^*\cap\ker B_1^\F$, let $f$ be a $2$-cell,
and let $e_1,e_2\trianglelefteq f$ be edges with $\bar e_1\cap\bar
e_2\neq\emptyset$ and $c_{e_1}=c_{e_2}=0$.
Then $c_e=0$ for every $e\trianglelefteq f$.
\end{lemma}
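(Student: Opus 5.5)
The plan is to combine two pieces of information about $c$. The first is the cocycle condition $\partial^*c=\delta_1c=0$ evaluated on the $2$-cell $f$. The second is the edge-by-edge transport relations coming from $c\in\ker B_1^\F$, which we get by repeating the argument of Lemma~\ref{lem:energy_id} and Proposition~\ref{prop:kernel}. Since (H2) may fail for $B_1^\F$ globally, we argue pair by pair. By the proof of Proposition~\ref{prop:psd}, $\langle B_1^\F c,c\rangle$ is a sum of nonnegative contributions, one per unordered pair of adjacent edges. For a pair $e,\tilde e$ whose off-diagonal block $(H_1^\F)_{e\tilde e}$ is invertible, the contribution has the form of Lemma~\ref{lem:energy_id}, namely $\|c_e-W^\top c_{\tilde e}\|^2_{M}$ with $M\succ0$ and $W$ orthogonal. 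Hence $c\in\ker B_1^\F$ forces $c_{\tilde e}=W c_e$ for every such pair. In particular, $c_e=0$ implies $c_{\tilde e}=0$.

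The next step is to identify which pairs of edges of $f$ have invertible blocks, using quasiconvexity. The boundary of $f$ is a polygon with edges $e_1,\dots,e_k$ in cyclic order, where $k\geq3$ and we may take $e_1,e_2$ consecutive, meeting at a vertex. Two edges $e,\tilde e\trianglelefteq f$ have exactly one common coface. Indeed, if $e,\tilde e$ both lay in $f$ and in another $2$-cell $f'$, then $f$ and $f'$ would share $e$, so by quasiconvexity they would meet exactly in $\bar e$, and $\tilde e$ could not lie in both. If moreover $e$ and $\tilde e$ are not consecutive in $\partial f$, they share no vertex. Then in \eqref{eq:hodge_sheaf_forman} only the upper term survives, and the block equals $\pm\F_{e\trianglelefteq f}^\top\F_{\tilde e\trianglelefteq f}$. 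This block is invertible because the restriction maps are invertible. Consecutive edges are transverse, so their block may degenerate, and we do not use it.

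The case analysis then runs as follows. Every edge of $f$ not consecutive to $e_1$ inherits $c=0$ from $e_1$. Every edge not consecutive to $e_2$ inherits $c=0$ from $e_2$. For $k\geq4$, every edge other than $e_1,e_2$ is non-consecutive to at least one of them: the two neighbours of the pair are $e_k$ (adjacent to $e_1$ only) and $e_3$ (adjacent to $e_2$ only). So all $c_e$ on $f$ vanish. The triangle case $k=3$ is the one place where transport gives nothing, since the third edge $e_3$ is consecutive to both $e_1$ and $e_2$. There we use the cocycle condition instead:
\[
0=(\delta_1c)_f=\sum_{e\trianglelefteq f}\epsilon_{ef}\F_{e\trianglelefteq f}c_e=\epsilon_{e_3f}\F_{e_3\trianglelefteq f}c_{e_3},
\]
and invertibility of $\F_{e_3\trianglelefteq f}$ gives $c_{e_3}=0$. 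The same identity can also close any single leftover edge in the general case.

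The main obstacle is the bookkeeping of the second step. We must make sure that non-consecutive edges of a $2$-cell really have a single common coface and no common vertex, so that their block is a nonvanishing product of invertible maps and cannot cancel. We must also justify applying the kernel relation only to the invertible blocks, without assuming (H2) for all of $B_1^\F$. Both points rest on the quasiconvexity hypothesis and on the pairwise positivity from the proof of Proposition~\ref{prop:psd}.
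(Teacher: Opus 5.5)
Your proposal is correct and follows essentially the same route as the paper's proof. Edges of $f$ that miss $e_1$ or $e_2$ inherit the vanishing through the invertible up-only block $\pm\F_{e\trianglelefteq f}^\top\F_{\tilde e\trianglelefteq f}$, and the remaining triangle case is closed by the cocycle condition at $f$ together with invertibility of $\F_{e_3\trianglelefteq f}$. Two points the paper leaves implicit are made explicit in your version: you use pairwise positivity rather than citing Proposition~\ref{prop:kernel}, whose hypothesis (H2) can fail for $B_1^\F$, and you use quasiconvexity to show that non-consecutive edges of $f$ share no vertex and only the coface $f$.
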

\begin{proof}
Let $e\neq e_1,e_2$ be the remaining edge of $f$.
If $e$ shares no vertex with $e_1$ or $e_2$, the off-diagonal block of
$B_1^\F$ between $e$ and $e_1$ reduces to its up contribution alone,
$\F_{e\trianglelefteq f}^\top\F_{e_1\trianglelefteq f}$, invertible as a
product of invertible maps; the Bochner kernel characterization
(Proposition~\ref{prop:kernel}) then gives $\|c_e\|=\|c_{e_1}\|=0$.

Otherwise $e$ shares a vertex with each of $e_1,e_2$, and since $M$ is a
quasiconvex regular CW complex these are the only three edges of $f$.
For $v_f\in\F(f)$, extended by $0$ elsewhere, $\partial v_f$ has one term
per edge of $f$, each the transpose restriction map applied to $v_f$; since
$c_{e_1}=c_{e_2}=0$, we get
$\langle\partial v_f,c\rangle=\pm\,v_f^\top\F_{e\trianglelefteq f}\,c_e$.
But $\langle\partial v_f,c\rangle=\langle v_f,\partial^*c\rangle=0$ by
hypothesis, for every $v_f$, so $\F_{e\trianglelefteq f}c_e=0$; since
$\F_{e\trianglelefteq f}$ is invertible, $c_e=0$.
\end{proof}
Invertibility is what makes the last step work: without it, the argument
only forces $c_e\in\ker(\F_{e\trianglelefteq f})$, not $c_e=0$.

\begin{theorem}[Unique continuation]
\label{thm:unique_cont}
Let $M$ be connected and $c\in\ker\partial^*\cap\ker B_1^\F$.
If $c_e=0$ for every edge $e$ incident to some fixed vertex, then $c=0$.
\end{theorem}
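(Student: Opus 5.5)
The plan is to propagate the vanishing of $c$ from the star of the fixed vertex $v_0$ to the whole complex, using Lemma~\ref{lem:unique_cont_local} as the elementary step and the connectivity of $M$ to reach every edge. Let $Z:=\{e\in X_1: c_e=0\}$ and let $V_Z$ be the set of vertices $v$ such that every edge incident to $v$ lies in $Z$. By hypothesis $v_0\in V_Z$. It suffices to show that $V_Z$ is closed under adjacency: if $v\in V_Z$ and $w$ is joined to $v$ by an edge, then $w\in V_Z$. Connectivity of $M$, hence of its $1$-skeleton, then gives $V_Z=X_0$, and every edge has an endpoint in $V_Z$, so $c=0$.

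For the adjacency step, fix $v\in V_Z$, an edge $e=(v,w)$, and another edge $e'$ incident to $w$; we must show $c_{e'}=0$. Here the manifold hypothesis enters. The link of $w$ is a triangulated $(n-1)$-sphere, hence connected for $n\geq2$. In it, the edges at $w$ correspond to vertices and the $2$-cells at $w$ correspond to edges. So there is a chain of edges $e=e^{(0)},e^{(1)},\dots,e^{(m)}=e'$ at $w$, consecutive ones being two sides of a common $2$-cell $f_i\ni w$. I will argue by induction along this chain, with the inductive claim that every edge of $f_i$ lies in $Z$, and in particular $c_{e^{(i+1)}}=0$.

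The main obstacle is the base case, since a priori only $c_e=0$ is known at $w$. Lemma~\ref{lem:unique_cont_local} needs two vanishing edges of $f_0$ that meet. The face $f_0$ contains $e$ and therefore $v$, so $f_0$ has a second edge $\tilde e\ni v$. Then $\tilde e\in Z$ because $v\in V_Z$, and $\bar e\cap\bar{\tilde e}\ni v$. The lemma therefore kills every edge of $f_0$, including $e^{(1)}$.

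For $i\geq1$, the face $f_i$ contains $e^{(i)}$, which vanishes by the previous step, but a second meeting vanishing edge must be found. I would first try to strengthen the induction so that it also records a vertex $v_i\neq w$ of $f_{i-1}\cap f_i$. This vertex lies on $e^{(i)}$, and I want all edges at $v_i$ to vanish. That second requirement does not follow directly, so I would instead rely on the two facts that $B_1^\F$ carries an invertible off-diagonal block between any two edges of $f_i$ sharing a vertex or the coface $f_i$, and that Proposition~\ref{prop:kernel} makes $\norm{c}$ constant along such adjacencies. Applied to the pair $e^{(i)}$ and any other edge of $f_i$, this would give vanishing directly, exactly as in the first case of the proof of Lemma~\ref{lem:unique_cont_local}.

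The caveat is transverse cancellation. In the simplicial case, two edges of a triangle share both a vertex and the coface, and the two contributions to the off-diagonal block can cancel. This is precisely why the lemma uses $\ker\partial^*$. Hence the fallback plan is always to feed the lemma a pair of adjacent vanishing edges. In a triangle $f_i=\{w,v_i,x\}$ the edges $e^{(i)}=(w,v_i)$ and $(v_i,x)$ meet, so I need $(v_i,x)\in Z$. I would obtain this by running the argument simultaneously on the link of $v_i$, that is, by proving the stronger statement that the set of vanishing edges is a union of full vertex stars, and inducting on the number of $2$-cells processed in a shelling-like order of the star of $w$.

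The case $n=1$ is trivial: $M$ is a cycle, and the Bochner kernel characterization propagates vanishing along adjacent edges.
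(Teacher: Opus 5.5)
Your reduction is sound. It suffices to show that if every edge at $v$ vanishes and $e=(v,w)$ is an edge, then every edge at $w$ vanishes. Your base case is also correct: any $2$-cell through $e$ contains a second edge $\tilde e$ at $v$, which vanishes and meets $e$, so Lemma~\ref{lem:unique_cont_local} kills that cell.

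The gap is the inductive step $i\geq1$ along your chain in the link of $w$. For a $2$-cell $f_i\ni w$ not containing $v$, you only know $c_{e^{(i)}}=0$. As you note yourself, the block between two edges of a triangle can vanish by transverse cancellation. So the lemma needs a second vanishing edge of $f_i$ meeting $e^{(i)}$, and nothing you establish gives $c_{(v_i,x)}=0$. Your proposed remedy (vanishing edges forming a union of full vertex stars, a shelling-like order of the star of $w$, running the argument on the link of $v_i$) restates what is to be proved rather than proving it. The chain also uses that the link of $w$ is a connected sphere, which the theorem does not assume. Here $M$ is only a connected quasiconvex complex, and the result is used in that generality in Proposition~\ref{prop:ric_pos_vanishing}.

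The missing observation is that no chain through the link is needed. Let $e'\neq e$ be any edge at $w$; there are two cases.
\begin{itemize}
\item If $e$ and $e'$ lie in no common $2$-cell, they are parallel. The block $(B_1^\F)_{ee'}=\pm\F_{w\trianglelefteq e}\F_{w\trianglelefteq e'}^\top$ then has no up contribution and is invertible. The $(e,e')$ summand of the energy identity (Lemma~\ref{lem:energy_id}) must vanish, which gives $\norm{c_{e'}}=\norm{c_e}=0$.
\item If they lie in a common $2$-cell $f$, then $f\ni e\ni v$. Your base-case argument applies to $f$ verbatim and kills $e'$.
\end{itemize}
With this dichotomy your vertex-star propagation closes for any connected complex, with no manifold hypothesis. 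It is then essentially the paper's proof, repackaged. The paper inducts on the line-graph distance of an edge from the star of the fixed vertex. It treats parallel neighbours by Proposition~\ref{prop:kernel} and transverse ones by Lemma~\ref{lem:unique_cont_local}. For the latter, it takes the second vanishing edge of the shared $2$-cell at the endpoint of $e_1$ through which $e_1$ is reached from the fixed vertex.
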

By induction on graph distance from that vertex: at distance $0$ the
hypothesis gives $c_e=0$ directly, and the inductive step propagates this to
every edge at distance $k+1$ via Lemma~\ref{lem:unique_cont_local}, using
either parallel propagation (Proposition~\ref{prop:kernel} applied to the
pair) or the vertex relation, depending on whether the two edges at that
step share a face.
Full proof in Appendix~\ref{sec:proofs}.

Propagation turns a local hypothesis into a global bound on $\mathcal H^1$:
it suffices to exhibit, at a single vertex, a small set of edges whose
vanishing forces the rest.

\begin{proposition}
\label{prop:ric_pos_vanishing}
Let $M$ be connected with $\mathrm{Ric}_\F\succeq0$, and suppose
$\mathrm{Ric}_\F(e)\succ0$ for every edge $e$ at some fixed vertex.
Then $\mathcal H^1(M;\F)=0$.
\end{proposition}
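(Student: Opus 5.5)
The plan is to identify $\mathcal H^1(M;\F)$ with $\ker H_1^\F$ via the Hodge isomorphism recalled in Section~\ref{sec:sheaves}, and to show that every harmonic $1$-cochain $c$ vanishes. We start from $c\in\ker H_1^\F$. Since $\mathrm{Ric}_\F\succeq0$, Corollary~\ref{cor:kernel_split_ricci} gives $c\in\ker(B_1^\F)\cap\ker(\mathrm{Ric}_\F)$. In addition, $\ker H_1^\F=\ker\delta_0^*\cap\ker\delta_1$. In the paper's notation, $\partial=\delta^*$ and $\partial^*=\delta$, so $\ker\delta_1=\ker\partial^*$ in degree $1$. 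Hence $c\in\ker\partial^*\cap\ker B_1^\F$, which is exactly the hypothesis of Theorem~\ref{thm:unique_cont}.

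Next we use the strict positivity at the fixed vertex $v_0$. The curvature $\mathrm{Ric}_\F$ is block diagonal, so
\[
0=\langle \mathrm{Ric}_\F c,c\rangle=\sum_e c_e^\top\,\mathrm{Ric}_\F(e)\,c_e .
\]
Each summand is nonnegative because every block is positive semidefinite, so each summand vanishes. For each edge $e$ incident to $v_0$ the block $\mathrm{Ric}_\F(e)$ is positive definite, and therefore $c_e=0$. Thus $c$ vanishes on every edge at $v_0$.

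Since $M$ is connected, Theorem~\ref{thm:unique_cont} now yields $c=0$. Hence $\ker H_1^\F=0$, and so $\mathcal H^1(M;\F)=0$.

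The only step that needs care is the bookkeeping of which of $\delta_1$, $\delta_0^*$ is written $\partial^*$. The degree convention has to match the hypothesis $\ker\partial^*$ used in Lemma~\ref{lem:unique_cont_local}. There that hypothesis meant orthogonality to $\partial v_f$ for $2$-cells $f$, which is the condition $\delta_1c=0$. A harmonic cochain satisfies both $\delta_1c=0$ and $\delta_0^*c=0$, so it meets that condition whichever reading is taken. Everything else is a direct combination of the earlier results.
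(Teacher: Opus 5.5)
Your proof is correct and follows the paper's own argument step for step: Corollary~\ref{cor:kernel_split_ricci} puts a harmonic cochain in $\ker B_1^\F\cap\ker\mathrm{Ric}_\F$, strict positivity of $\mathrm{Ric}_\F$ on the edges at the fixed vertex forces $c$ to vanish there, and Theorem~\ref{thm:unique_cont} then gives $c=0$. Your remark that a harmonic cochain satisfies both $\delta_1c=0$ and $\delta_0^*c=0$ checks what the paper leaves implicit, namely that the $\ker\partial^*$ hypothesis of the unique continuation theorem holds.
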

\begin{proof}
By Corollary~\ref{cor:kernel_split_ricci}, a harmonic $c$ lies in
$\ker(\mathrm{Ric}_\F)$, hence vanishes on every edge at that vertex by
strict positivity there, and Theorem~\ref{thm:unique_cont} gives $c=0$.
\end{proof}

\begin{theorem}[Bochner theorem for combinatorial manifolds]
\label{thm:bochner_manifold}
Let $M$ be a compact connected combinatorial $n$-manifold, $n\leq3$, with
$\mathrm{Ric}_\F\succeq0$.
Then $\dim\mathcal H^1(M;\F)\leq nd$.
\end{theorem}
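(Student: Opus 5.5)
The approach is Forman's: a linear-algebra reduction to a single vertex, with all the geometric content placed in a local combinatorial lemma. Fix a vertex $v$. By the Hodge isomorphism $\ker H_1^\F\cong\mathcal H^1(M;\F)$, it suffices to show that the harmonic space $\mathcal K:=\ker H_1^\F$ has dimension at most $nd$. Corollary~\ref{cor:kernel_split_ricci} gives $\mathcal K\subseteq\ker B_1^\F$. Moreover every harmonic $c$ satisfies $\partial^*c=\delta_1 c=0$, since $\langle H_1^\F c,c\rangle=\|\delta_0^*c\|^2+\|\delta_1c\|^2$. Thus $\mathcal K\subseteq\ker\partial^*\cap\ker B_1^\F$, and Theorem~\ref{thm:unique_cont} applies to every element of $\mathcal K$.

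Next I would use the restriction map
\[
\rho_v\colon\mathcal K\to\bigoplus_{e\ni v}\F(e),\qquad c\mapsto(c_e)_{e\ni v}.
\]
Theorem~\ref{thm:unique_cont} says exactly that $\rho_v$ is injective. This alone only gives $\dim\mathcal K\leq d\cdot\deg(v)$, so the real task is sharper. I want a set $S$ of $n$ edges at $v$ such that vanishing of $c$ on $S$ forces vanishing on all edges at $v$. Then the further restriction $c\mapsto(c_e)_{e\in S}$ is injective on $\mathcal K$, and $\dim\mathcal K\leq |S|\,d=nd$.

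I would construct $S$ through the link of $v$, a triangulated $(n-1)$-sphere, whose vertices correspond to the edges at $v$. Two edges at $v$ that lie in a common $2$-cell $f$ correspond to adjacent link vertices, and they share the vertex $v$, so $\bar e_1\cap\bar e_2\neq\emptyset$. Lemma~\ref{lem:unique_cont_local} then propagates vanishing: if $c$ vanishes on two edges at $v$ spanning a triangle $f$, it vanishes on all of $\partial f$. I would then analyse $n\le3$ case by case. For $n=1$ the link is two points, and there are two edges at $v$ with no $2$-cell, so vanishing on one must be pushed to the other. This would be done either by the parallel-propagation clause of Proposition~\ref{prop:kernel}, using the block of $B_1^\F$ between the two edges (invertible by the invertibility of the restriction maps), or directly, since there are only $2$ edges. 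The crude bound $2d$ must be improved to $d$ by the kernel relation $c_{e_2}=W c_{e_1}$ on the pair. For $n=2$ the link is a cycle; choosing $S$ to be two consecutive link vertices, I would use Lemma~\ref{lem:unique_cont_local} to walk around the cycle. For $n=3$ the link is a triangulated $2$-sphere; taking $S$ to be the three vertices of a link triangle, propagation spreads across adjacent link triangles through shared link edges, and connectivity of the sphere's dual graph gives vanishing on every edge at $v$.

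The main obstacle is the propagation step in the link. Lemma~\ref{lem:unique_cont_local} is stated in terms of edges of a $2$-cell $f$: vanishing on two edges at $v$ kills the third edge of $f$, which is opposite $v$, not another edge at $v$. Since the third edge is not incident to $v$, I would have to transfer vanishing back to the edges at $v$. The plan is to combine the cocycle condition $\partial^* c=0$ on the $3$-cells containing $v$, or on neighbouring $2$-cells, with the Bochner kernel relation along parallel pairs, using that edges transverse in $M$ are exactly those sharing both a face and a coface. I expect this bookkeeping, essentially Forman's argument that a harmonic form vanishing on $n$ edges spanning a top cell vanishes locally, to be where the restriction $n\le3$ enters. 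It is also where the invertibility of all restriction maps is indispensable. If the direct link propagation fails, I would instead argue by Theorem~\ref{thm:unique_cont}'s induction applied inside the star of $v$ to recover vanishing on the remaining edges at $v$.
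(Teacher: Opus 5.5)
Your global reduction is the right one and matches the paper's. Harmonic cochains lie in $\ker\partial^*\cap\ker B_1^\F$, Theorem~\ref{thm:unique_cont} makes evaluation on the edges at $v$ injective, and it remains to find $n$ edges at $v$ whose vanishing forces vanishing on all edges at $v$. That local step is the whole content of the theorem, and you leave it open with tools that cannot close it:
\begin{itemize}
\item Lemma~\ref{lem:unique_cont_local} only reaches the edge of $f$ opposite $v$, as you note yourself.
\item $\partial^*c=\delta_1c$ is indexed by $2$-cells, so there is no cocycle condition ``on the $3$-cells containing $v$''.
\item Falling back on the induction of Theorem~\ref{thm:unique_cont} is circular, since its base case is vanishing on \emph{every} edge at some vertex.
\end{itemize}
In fact the cocycle condition together with membership in $\ker B_1^\F$ cannot suffice. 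Take the boundary of a tetrahedron $[v,w_1,w_2,w_3]$ with the constant sheaf $\R$; it is a combinatorial $2$-sphere with $\mathrm{Ric}_\F\equiv4$. Any two edges sharing a vertex are transverse, so their entries of $H_1^\F$ cancel and $B_1^\F=0$. The coboundary of the indicator $0$-cochain of $w_3$ is then a cocycle in $\ker B_1^\F$ that vanishes on $[v,w_1]$ and $[v,w_2]$ but not on $[v,w_3]$.

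What is missing are the two facts the paper's proof runs on.
\begin{itemize}
\item \emph{Parallel propagation} drives the cases $n=2,3$, not only $n=1$. Two edges $e,\tilde e$ at $v$ that are non-adjacent in the link share $v$ but no $2$-cell. Their block of $H_1^\F$, hence of $B_1^\F$, is then the single invertible product $\pm\F_{v\trianglelefteq e}\F_{v\trianglelefteq\tilde e}^\top$, and Proposition~\ref{prop:kernel} transfers vanishing from one to the other directly.
\item The \emph{vertex relation}: a harmonic $c$ also satisfies $\partial c=\delta_0^*c=0$. You wrote $\|\delta_0^*c\|^2$ down but only used $\delta_1c=0$. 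Explicitly, $\sum_{e\ni v}\epsilon_{ve}\F_{v\trianglelefteq e}^\top c_e=0$, so if $c$ vanishes on all edges at $v$ but one, invertibility kills that one too. This is what disposes of the tetrahedron above.
\end{itemize}
With these two facts your choices of $S$ do work, by a different mechanism than walking across the link.
\begin{itemize}
\item For $n=2$, with link a $k$-cycle: if $k\ge4$, every other edge at $v$ is parallel to $e_1$ or $e_2$; if $k=3$, the vertex relation kills $e_3$. This is exactly the paper's argument.
\item For $n=3$, take $S=\{a,b,c\}$ to be a $2$-simplex of the link, not merely a $3$-cycle. At most one other link vertex is adjacent to all of $a,b,c$: a second one, together with a point placed in the empty face $abc$, would embed $K_{3,3}$ in $\mathbb S^2$. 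So every other edge at $v$ is parallel to a member of $S$ except possibly one, and the vertex relation kills that one.
\end{itemize}
The paper instead takes a maximal complete subgraph of the link, bounds its size using that $K_5$ does not embed in $\mathbb S^2$, and treats the $K_4$ case by dropping one of its vertices and invoking the vertex relation. Your face choice is a mild simplification of that, but only once link-walking is replaced by these two facts.
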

We follow Forman's proof~\cite[Theorem 4.4]{Forman2003Bochner}.
By Corollary~\ref{cor:kernel_split_ricci}, a harmonic cochain $c$ lies in
$\ker B_1^\F$, and two facts about such a $c$ at a vertex $v$ carry the
argument.
The first is \emph{parallel propagation}: if two edges at $v$ are parallel
(Section~\ref{sec:forman_background}) and $c$ vanishes on one, it vanishes on the other
(Proposition~\ref{prop:kernel}).
The second is the \emph{vertex relation}: $\partial c=0$ at $v$, so if $c$
vanishes on every edge at $v$ but one, it vanishes on that one too, the
restriction maps being invertible.
Forman's proof chooses $n$ edges at $v$ such that, once $c$ vanishes on
them, these two facts force $c$ to vanish on every edge at $v$;
Theorem~\ref{thm:unique_cont} then gives $c=0$, so the evaluation
$c\mapsto(c_{e_1},\dots,c_{e_n})$ is injective on $\mathcal H^1(M;\F)$,
whose dimension is therefore at most $nd$.
For $n=1$ one edge suffices, since the two edges at a vertex are parallel.
For $n=2$ take two edges at $v$ sharing a face: every other edge at $v$ is
parallel to one of them, unless $v$ has only three edges, and then the
vertex relation kills the third.
For $n=3$ the choice of three edges uses the link of $v$, the cell
structure on $\mathbb S^2$ whose vertices are the edges at $v$, two being
adjacent when transverse, together with the fact that $K_5$ does not embed
in $\mathbb S^2$.
The choice depends only on the geometry of $M$; the sheaf enters only
through the two facts above, and Appendix~\ref{sec:proofs} checks them and
recalls the choice.

Beyond $n=3$ an explicit case analysis is out of reach, as Forman already
notes in his original work, but a coarser, purely local quantity still
bounds the cohomology.

\begin{definition}[Local homological dimension~\cite{Forman2003Bochner}]
\label{def:local_hom_dim}
At a vertex $v$, let $\sim$ be the equivalence relation on incident edges
generated by \emph{not} sharing a face, let $N(v)$ be its number of classes,
and set $D(v):=N(v)$ if every class has at least two elements,
$D(v):=N(v)-1$ otherwise.
The local homological dimension of $M$ is $D(M):=\inf_vD(v)$.
\end{definition}

\begin{theorem}
\label{thm:local_hom_dim}
Let $M$ be a compact connected combinatorial manifold with
$\mathrm{Ric}_\F\succeq0$.
Then $\dim\mathcal H^1(M;\F)\leq d\cdot D(M)$.
\end{theorem}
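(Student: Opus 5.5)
The plan is to follow the same template as Theorem~\ref{thm:bochner_manifold}, with the explicit case analysis replaced by the combinatorics of the relation $\sim$. By Corollary~\ref{cor:kernel_split_ricci}, every harmonic $1$-cochain $c$ lies in $\ker\partial^*\cap\ker B_1^\F$. Fix a vertex $v$ attaining $D(M)=D(v)$; the infimum is a minimum since $M$ is finite. It suffices to exhibit $D(v)$ edges $e_1,\dots,e_{D(v)}$ at $v$ such that $c_{e_i}=0$ for all $i$ forces $c_e=0$ for every edge $e$ at $v$. Theorem~\ref{thm:unique_cont} then gives $c=0$, so the evaluation map
\[
c\mapsto(c_{e_1},\dots,c_{e_{D(v)}})\in\textstyle\bigoplus_i\F(e_i)
\]
is injective on $\ker H_1^\F\cong\mathcal H^1(M;\F)$. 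This yields $\dim\mathcal H^1(M;\F)\leq d\cdot D(v)$.

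The two local facts are the ones recalled after Theorem~\ref{thm:bochner_manifold}. The first is \emph{parallel propagation}: two edges at $v$ that share no face are parallel, since they are lower adjacent through $v$ but not upper adjacent. Hence the off-diagonal block of $B_1^\F$ between them is the down contribution $\pm\F_{v\trianglelefteq e}\F_{v\trianglelefteq\tilde e}^\top$, which is invertible. Proposition~\ref{prop:kernel}, applied to this edge of the order-$1$ adjacency graph, gives $\|c_e\|=\|c_{\tilde e}\|$. In particular, vanishing on one of the two edges forces vanishing on the other. The second is the \emph{vertex relation}: $(\partial c)_v=\sum_{e\ni v}\pm\F_{v\trianglelefteq e}^\top c_e=0$, so if $c$ vanishes on all edges at $v$ but one, invertibility of that restriction map kills the remaining one. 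One caveat needs checking. Proposition~\ref{prop:kernel} was stated under (H2) for all blocks, whereas here some blocks of $B_1^\F$ may vanish (transverse pairs) or be singular. I will therefore re-derive the edgewise statement directly from Lemma~\ref{lem:energy_id} applied to a single invertible block $M_{e\tilde e}\succ0$, whose term in the energy must vanish.

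For the combinatorics, let $C_1,\dots,C_{N(v)}$ be the classes of $\sim$. Since $\sim$ is generated by ``not sharing a face'', any two edges in the same class are joined by a chain of pairwise parallel edges. Iterating parallel propagation along the chain, vanishing on one edge of a class forces vanishing on the whole class. The choice of edges splits into two cases.
\begin{itemize}
\item If every class has at least two elements, pick one edge from each class, giving $D(v)=N(v)$ edges.
\item Otherwise some class $C_{j_0}$ is a singleton $\{e_0\}$. Pick one edge from each of the other $N(v)-1$ classes. Then $c$ vanishes on every edge at $v$ except possibly $e_0$, and the vertex relation kills $c_{e_0}$. This gives $D(v)=N(v)-1$ edges.
\end{itemize}

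The main obstacle is the propagation step, namely making sure the parallel relation really yields an invertible block. Two edges at $v$ sharing no face cannot also be upper adjacent, so no cancellation with an up contribution can occur; the block is therefore a single product of invertible restriction maps. The rest of the argument is mechanical, and dimension $n$ of the manifold enters only through Theorem~\ref{thm:unique_cont}.
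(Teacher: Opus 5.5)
Your proposal is correct and follows the paper's proof essentially verbatim. Both take one representative edge per class of $\sim$ at a vertex, propagate vanishing within each class through parallel pairs, kill a singleton class with the vertex relation, and conclude with Theorem~\ref{thm:unique_cont} to make the evaluation map injective. The paper cites Proposition~\ref{prop:kernel} directly, although $B_1^\F$ need not satisfy (H2) on transverse pairs; your re-derivation of the propagation step from Lemma~\ref{lem:energy_id} for a single invertible block closes that gap.
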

Full proof in Appendix~\ref{sec:proofs}.
Since $D(v)$ depends only on the neighbourhood of $v$, the bound is
computable from purely local data, which could make it usable in inductive
settings such as trajectory prediction, where only the edges near the region
of interest are known.

\subsection{Comparison with combinatorial curvature}
\label{sec:curvature_comparison}

The same complex also carries the scalar Forman curvature
$\mathrm{Ric}(e)=\deg_u(e)+\deg_l(e)-\#\{\text{parallel neighbours of }e\}$
of the unweighted case~\cite{Forman2003Bochner}.
The two curvatures are independent: neither $\mathrm{Ric}\geq0$ nor
$\mathrm{Ric}_\F\succeq0$ implies the other, even on manifolds, so neither
cohomology bound is a special case of the other.
Throughout this part $d=1$, already enough for every phenomenon involved.
The three examples below are drawn in Figure~\ref{fig:forman_examples}.

\begin{remark}[Curvature depends on the presentation]
\label{rem:gauge}
$\mathrm{Ric}_\F$ is not invariant under sheaf isomorphism: it depends on the chosen
restriction maps.
The hypothesis $\mathrm{Ric}_\F\succeq0$ therefore reads as in the classical
Bochner theorem, which does not ask every metric on a manifold to have
nonnegative Ricci curvature, but only that one such metric be exhibited.
\end{remark}

\begin{example}[Nonnegative $\mathrm{Ric}$ everywhere, negative $\mathrm{Ric}_\F$ on some edge]
\label{ex:triangle}
Let $M$ be the empty triangle, with vertices $0,1,2$ and no $2$-cell.
Every edge has two parallel neighbours, so $\mathrm{Ric}=0+2-2=0$.
Take all restriction maps equal to $1$ except
$\F_{0\trianglelefteq[0,2]}=\F_{1\trianglelefteq[1,2]}=2$.
The sheaf is invertible, and $\mathrm{Ric}_\F([0,1])=2-2-2=-2<0$.
\end{example}

\begin{example}[Negative $\mathrm{Ric}$ everywhere, presentations of both signs everywhere]
\label{ex:star}
Let $M$ be the star $S_4$, with center $0$ and leaves $1,\dots,4$.
Every spoke $e_i=[0,i]$ has three parallel neighbours, so
$\mathrm{Ric}(e_i)=0+2-3=-1$.
The sheaf with $\F_{0\trianglelefteq e_i}=1$ and $\F_{i\trianglelefteq e_i}=A$
has $\mathrm{Ric}_\F(e_i)=1+A^2-3\geq0$ for $A\geq\sqrt2$.
It is isomorphic, via the gauge $P_0=1$, $P_i=A$, to the constant sheaf
$\mathbb R$, whose curvature is $-1$ everywhere, so the same isomorphism
class admits presentations of both signs (Remark~\ref{rem:gauge}).
\end{example}

The following example instead considers a combinatorial $2$-manifold.

\begin{example}[Negative $\mathrm{Ric}$ on some edges, positive $\mathrm{Ric}_\F$ everywhere, on a manifold]
\label{ex:bipyramid}
We exhibit a combinatorial $2$-manifold $M$ and a cellular sheaf $\F$ with
invertible restriction maps such that $\mathrm{Ric}<0$ on some edges of $M$
while $\mathrm{Ric}_\F\succ0$ on \emph{every} edge of $M$.

Let $M$ be the bipyramid over a heptagon: equator $v_0,\dots,v_6$ a
$7$-cycle, apexes $A,B$, and the $14$ triangles $[v_i,v_{i+1},A]$,
$[v_i,v_{i+1},B]$ with indices mod $7$, so $9$ vertices, $21$ edges and $14$
triangles.
Every edge lies in exactly two triangles, so
$\mathrm{Ric}([u,v])=10-\deg(u)-\deg(v)$, which equator degrees $4$ and apex
degrees $7$ make $+2$ on the equator and $-1$ on each of the $14$ spokes.

For $\F$, fix a positive function $q$ on the vertices and set
$\F_{v\trianglelefteq e}:=q_w$ with $w$ the other endpoint of $e$, and
$\F_{e\trianglelefteq t}:=q_z$ with $z$ the vertex of $t$ opposite $e$; the
restriction maps are invertible.
On transverse neighbours the up and down contributions to $\mathrm{Ric}_\F$
cancel as in the unweighted case, leaving only parallel ones:
\begin{equation}
  \mathrm{Ric}_\F(e) = q_u^2+q_v^2+q_{z_1}^2+q_{z_2}^2
    - \sum_{\tilde e\,\text{parallel to}\,e}q_xq_y ,
\end{equation}
with $e=[w,x]$, each parallel neighbour $\tilde e=[w,y]$, and $z_1,z_2$ the
triangle apexes of $e$; so $q$ tunes the parallel neighbours against the
diagonal.
A spoke $[v_i,A]$ has parallel neighbours $[v_i,B]$ at $v_i$ and the four
spokes $[v_j,A]$, $j\neq i,i\pm1$, at $A$; an equator edge $[v_i,v_{i+1}]$
has $[v_{i-1},v_i]$ and $[v_{i+1},v_{i+2}]$.
Taking $(q_{v_0},\dots,q_{v_6})=(1,6,11,8,11,6,1)$ and $q_A=q_B=3$ gives:
\begin{center}
\renewcommand{\arraystretch}{1.25}
\begin{tabular}{lccc}
\hline
edges & $(H_1^\F)_{ee}$ & $\sum_{\tilde e\neq e}|(H_1^\F)_{e\tilde e}|$
      & $\mathrm{Ric}_\F$ \\
\hline
$[v_0,A],[v_0,B],[v_6,A],[v_6,B]$ & $47$  & $45$  & $+2$   \\
$[v_1,A],[v_1,B],[v_5,A],[v_5,B]$ & $167$ & $165$ & $+2$   \\
$[v_2,A],[v_2,B],[v_4,A],[v_4,B]$ & $230$ & $218$ & $+12$  \\
$[v_3,A],[v_3,B]$                 & $315$ & $121$ & $+194$ \\
$[v_6,v_0]$                       & $20$  & $12$  & $+8$   \\
$[v_0,v_1],[v_5,v_6]$             & $55$  & $17$  & $+38$  \\
$[v_2,v_3],[v_3,v_4]$             & $203$ & $169$ & $+34$  \\
$[v_1,v_2],[v_4,v_5]$             & $175$ & $59$  & $+116$ \\
\hline
\end{tabular}
\end{center}
Every entry is strictly positive, so $\mathrm{Ric}_\F\succ0$ on all $21$
edges, while the first four rows are exactly the $14$ spokes on which
$\mathrm{Ric}=-1$.
\end{example}

\begin{figure}[htbp]
    \centering
    \includegraphics[width=\linewidth]{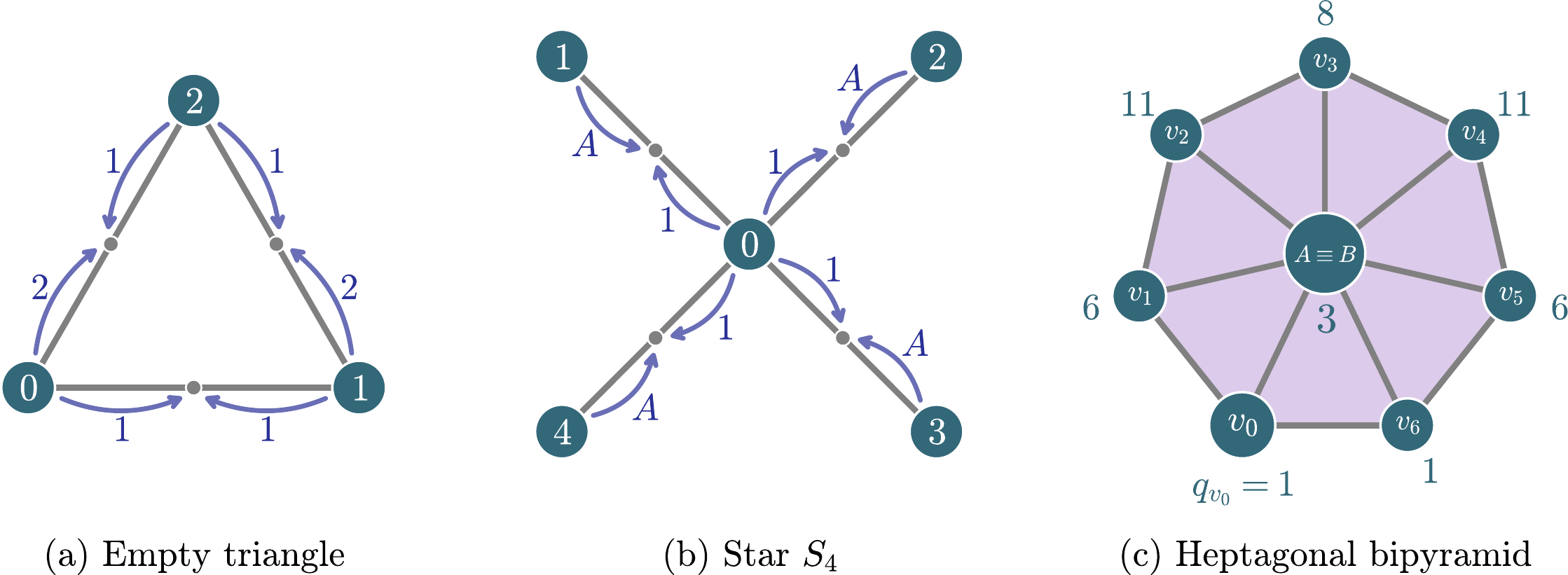}
    \caption{The three examples of this section, with $d=1$.
(a) The empty triangle of Example~\ref{ex:triangle}: each arrow carries the
restriction map from a vertex to an edge, all equal to $1$ except
$\F_{0\trianglelefteq[0,2]}=\F_{1\trianglelefteq[1,2]}=2$.
(b) The star $S_4$ of Example~\ref{ex:star}, with maps $1$ from the center
and $A$ from the leaves.
(c) The heptagonal bipyramid of Example~\ref{ex:bipyramid} seen from above:
the vertices $A$ and $B$, one on each side of the equator, project onto the
same point at the center. Next to each vertex is its value of $q$, which
sets the restriction maps of the construction.}
    \label{fig:forman_examples}
\end{figure}

\section{Dynamical Interpretation and Applications to Sheaf Neural Networks}
\label{sec:applications}

With both terms characterized, the decomposition admits a dynamical reading,
in which sheaf diffusion can be seen as a reaction-diffusion process, and
each term yields a computable quantity at every node: the conviction and the
relative volume.
We then apply this machinery to Sheaf Neural
Networks~\cite{hansen2020sheaf,bodnar2022neural}, which learn a sheaf
diffusion on a graph parametrically, and we characterize their behaviour
using these quantities.

\subsection{A dynamical reading}
\label{sec:dynamical_reading}

The decomposition $H_k^\F=B_k^\F+F_k^\F$ could be read as a
multi-dimensional reaction-diffusion system, following the analogous reading
of the Hodge Laplacian~\cite{Nurisso2025}: the Bochner term is diffusive,
driving the random walk of Section~\ref{sec:random_walk} on the
$\mathbb{S}^{d-1}$ fibers, while the Forman term is reactive, an
endomorphism forcing each stalk toward itself.

\begin{remark}
A random walk can also be built directly out of a Hodge Laplacian, without
passing through its Bochner term.
Schaub et al.~\cite{Schaub_2020} normalize the Hodge $1$-Laplacian and lift
the resulting operator, so that a walker carrying a quantity on a simplex
may hold it, pass it along, or reverse its orientation, the probability of
reversal being in general nonzero.
That construction does not extend to general sheaves: already for $d=1$ it
does not produce stochastic matrices.
Both dynamical readings are therefore available on a Hodge Laplacian, the
lifted random walk and the reaction-diffusion one, whereas on a sheaf Hodge
Laplacian only the reaction-diffusion interpretation carried by the Bochner
and Forman split is available up to now.
\end{remark}

\textbf{Opinion dynamics.}
At $k=0$, the decomposition $L_\F=B_\F+F_\F$ can also be read through
opinion dynamics~\cite{Discourse_sheaves}.
The Bochner term drives agents toward equilibria of equal stalk norm, the
shared value we call \textbf{conviction}: how strongly an opinion is held.
The Forman block $(F_\F)_{vv}$ measures how much more $v$'s opinion is
amplified on an edge than its neighbour's, the \textbf{relative volume}: how
loudly $v$ speaks relative to whom it speaks to.

If $F_\F\succ0$ everywhere, every agent is too confident for consensus to be
possible: $\ker(L_\F)=\{0\}$.
If $F_\F\succeq0$, consensus requires both equal conviction and compatible
volume: since $B_\F$ and $F_\F$ are both positive semidefinite,
$\ker(L_\F)=\ker(B_\F)\cap\ker(F_\F)$, as in
Corollary~\ref{cor:kernel_split_ricci}.
At equilibrium ($\F_{v\trianglelefteq e}x_v=\F_{u\trianglelefteq e}x_u$),
the two are complementary: a more convinced agent must express itself more
softly so as not to dominate a less convinced one, and vice versa.

Each term therefore yields a quantity measurable on a trained Sheaf Neural
Network.
Through the Bochner term we observe the \emph{signal}: how the conviction of
each node varies along the diffusion, and how far from equal it remains at
equilibrium.
Through the Forman term we observe the \emph{operator} that propagates it:
the reactive block at each node, its relative volume, read from the
restriction maps.

Sections~\ref{sec:conviction} and~\ref{sec:forman_profile} develop each in
turn on Sheaf Neural Networks, where the sheaf is learnt from data: the
conviction identifies the nodes a trained network relies on, and the Forman
profile tracks how the learning mechanism navigates the sheaf space.

\subsection{Sheaf Neural Networks}
\label{sec:snn}

A \emph{Sheaf Neural Network} learns the sheaf from data and unrolls a discretized sheaf diffusion as a network
layer~\cite{hansen2020sheaf,bodnar2022neural}.
Writing $X_t\in\R^{nd\times f_t}$ for the stacked node features and
$\Delta_\F = D_\F^{-1/2}L_\F D_\F^{-1/2}$ for the normalized sheaf Laplacian,
with $D_\F$ the block diagonal of $L_\F$, a layer takes the form:
\begin{equation}
  X_{t+1} = X_t - \varsigma\bigl(\Delta_\F(t)\,(I_n\otimes W_1)\,X_t\,W_2\bigr),
\end{equation}
with $W_1$ mixing the stalk dimensions, $W_2$ the feature channels, and
$\varsigma$ a pointwise nonlinearity.
The sheaf evolves with the features: the restriction maps at layer $t$ are
produced by a learnt map of the two endpoint representations,
$\F^{(t)}_{v\trianglelefteq e} = \varsigma(\Theta[x_v^{(t)}\|x_u^{(t)}])$,
with $\Theta$ a learnt weight matrix.

On heterophilic graphs, where adjacent nodes tend to carry different labels,
the harmonic space of the graph Laplacian can provably fail to separate the
classes in the infinite-depth limit; asymmetric and sign-changing
restriction maps restore separation power, and $d>1$ extends this to the
multi-class case~\cite{bodnar2022neural}.
The available theory consists of convergence results identifying the
equilibria of sheaf diffusion and of a universal approximation result, by
which a network can learn any sheaf on a graph.
Both describe what the model \emph{can} represent.
Neither describes what the learning mechanism \emph{does}: which region of
the sheaf space a training pipeline reaches, and whether different pipelines
are biased toward different regions.
The conviction and the Forman profile are computable on a trained network,
and Sections~\ref{sec:conviction} and~\ref{sec:forman_profile} test whether
what they measure is informative.

\textbf{Experimental setting.}
We train on ten node classification benchmarks: Amazon Ratings, Minesweeper,
Questions, Roman Empire and Tolokers~\cite{platonov2023critical}, Cora,
Citeseer and Pubmed~\cite{sen2008collective}, and Cornell and
Wisconsin~\cite{pei2020geomgcn,tang2009social}.
We report test accuracy, and ROC-AUC on the three binary tasks,
Minesweeper, Tolokers and Questions, under
$10$-fold cross-validation, with stalk dimension and depth chosen per
dataset.
Cornell and Wisconsin are left out of the aggregation of
Section~\ref{sec:conviction}, their training pools being too small for a
percentage budget, so the numbers there are over the remaining eight
datasets.
In the body we report the \emph{GeneralSheaf} learner, which produces
unconstrained $d\times d$ restriction maps, under the normalized Laplacian.
The remaining hyperparameters are in Appendix~\ref{sec:supp_forman_setup},
and the other learners and regimes, the full metric battery and the
per-dataset breakdowns in Appendices~\ref{sec:supp_forman}
and~\ref{sec:supp_conv}.

\subsection{Conviction analysis for SNNs}
\label{sec:conviction}

\begin{definition}[Conviction]
\label{def:conviction}
At diffusion layer $t$ the hidden state of node $u$ is a matrix
$X_t[u]\in\R^{d\times f_t}$, with $d$ the stalk dimension and $f_t$ the
number of channels.
Its \textbf{conviction} is $c_t(u):=\norm{X_t[u]}_F$, the Frobenius norm
of the stalk, that is, the Euclidean norm of its vectorization.
\end{definition}

The theory of Section~\ref{sec:bochner_matrices} is stated for a single
vector in $\R^d$, while a trained network carries $f_t$ channels at once, so
here it serves as a guide.
We use the conviction at the last diffusion layer, taken at the
best-validation checkpoint; a z-scored average across layers is reported in
Appendix~\ref{sec:supp_robust}.

Comparisons with other node statistics are in
Appendix~\ref{sec:supp_conv_observational}.
Here we test whether the conviction is an \emph{importance score}, that is,
whether the nodes it ranks highest are those a trained network relies on.
We train the network, record the conviction at the best-validation epoch,
freeze the weights, perturb a selected set of training nodes, and
re-evaluate.
Nothing is retrained, so any change is attributable to the perturbation and
not to a different optimization trajectory.
We use three perturbation modes: masking sets the input features of the
selected nodes to zero, gating multiplies their outgoing messages by zero and
leaves their self-term intact, and pruning deletes them with their incident
edges.

We compare nine selections of nodes: high and low conviction, high and low
degree, high and low confidence, two degree-matched controls, which are
random sets reproducing the degree distribution of the high- and
low-conviction groups, and a uniformly random set.
For each dataset we measure the drop in performance relative to the
baseline's headroom above chance (Appendix~\ref{sec:supp_impl}), and we then
average over datasets (Table~\ref{tab:conv_fwd_masking-zero_last}, for
masking).
The full protocol is in Appendix~\ref{sec:supp_impl}, and the per-dataset
results are in Appendix~\ref{sec:supp_perdataset}.

\begin{table}[htbp]
\definecolor{convworst}{RGB}{213,94,0}
\definecolor{convbest}{RGB}{0,114,178}
\caption{Relative drop in performance (\%) under zero masking, by selection
strategy and budget.
Mean$_{\pm\text{s.e.}}$ across 8 datasets, each averaged over its folds. In each column, the
\textcolor{convworst}{\textbf{most damaging}} and the
\textcolor{convbest}{\textbf{least damaging}} selection are highlighted.}
\label{tab:conv_fwd_masking-zero_last}
\centering
\small
\begin{tabular*}{\textwidth}{@{\extracolsep{\fill}}lccccc}
\toprule
\textbf{Zero masking} & \multicolumn{5}{c}{Budget (fraction of the training pool)} \\
\cmidrule(l){2-6}
Selection & 1\% & 5\% & 10\% & 20\% & 30\% \\
\midrule
High conviction & \textcolor{convworst}{$\mathbf{0.51}_{\pm0.17}$} & \textcolor{convworst}{$\mathbf{1.34}_{\pm0.39}$} & \textcolor{convworst}{$\mathbf{2.18}_{\pm0.70}$} & \textcolor{convworst}{$\mathbf{3.53}_{\pm1.26}$} & \textcolor{convworst}{$\mathbf{4.55}_{\pm1.83}$} \\
Low conviction & \textcolor{convbest}{$\mathbf{0.02}_{\pm0.01}$} & \textcolor{convbest}{$\mathbf{0.04}_{\pm0.04}$} & \textcolor{convbest}{$\mathbf{0.07}_{\pm0.07}$} & \textcolor{convbest}{$\mathbf{0.20}_{\pm0.14}$} & \textcolor{convbest}{$\mathbf{0.39}_{\pm0.32}$} \\
High degree & $0.37_{\pm0.16}$ & $0.75_{\pm0.28}$ & $1.06_{\pm0.40}$ & $1.76_{\pm0.68}$ & $2.46_{\pm0.98}$ \\
Low degree & $0.03_{\pm0.02}$ & $0.19_{\pm0.10}$ & $0.38_{\pm0.21}$ & $0.85_{\pm0.44}$ & $1.35_{\pm0.74}$ \\
Degree-matched (high) & $0.06_{\pm0.07}$ & $0.20_{\pm0.17}$ & $0.38_{\pm0.32}$ & $0.57_{\pm0.51}$ & $0.93_{\pm0.62}$ \\
Degree-matched (low) & $0.05_{\pm0.03}$ & $0.26_{\pm0.15}$ & $0.47_{\pm0.30}$ & $1.26_{\pm0.67}$ & $2.19_{\pm1.13}$ \\
High confidence & $0.15_{\pm0.07}$ & $0.55_{\pm0.16}$ & $0.92_{\pm0.28}$ & $1.66_{\pm0.54}$ & $2.50_{\pm0.89}$ \\
Low confidence & $0.04_{\pm0.03}$ & $0.17_{\pm0.12}$ & $0.35_{\pm0.21}$ & $0.80_{\pm0.43}$ & $1.32_{\pm0.69}$ \\
Random & $0.07_{\pm0.04}$ & $0.21_{\pm0.15}$ & $0.43_{\pm0.29}$ & $1.04_{\pm0.55}$ & $1.69_{\pm0.89}$ \\
\bottomrule
\end{tabular*}
\end{table}

\textbf{Conviction measures node importance with respect to performance.}
Removing the highest-conviction nodes is the most damaging intervention at
every budget, and removing the lowest-conviction ones the least.
At the $10\%$ budget of Table~\ref{tab:conv_fwd_masking-zero_last} the top
costs $2.18\pm0.70\%$, the bottom $0.07\pm0.07\%$, a random set
$0.43\pm0.29\%$, and a selection by degree $1.06\pm0.40\%$.
Low-conviction nodes are thus more dispensable than an arbitrary node of the
same graph, which is the comparison of interest for applications.
We obtain the same pattern for the other perturbation modes, as shown in
Appendix~\ref{sec:supp_forward}.

\textbf{The conviction is a graded score.}
In a decile experiment (Appendix~\ref{sec:supp_forward}) the training pool
is split into ten equal bins by conviction, and one bin is perturbed at a
time.
The damage stays close to zero over the bottom four bins and rises steadily
from the fifth onward.
From the bottom to the top decile, the damage goes from $0.07\%$ to $2.18\%$
for conviction, against $0.35\%$ to $0.92\%$ for confidence and $0.38\%$ to
$1.06\%$ for degree.
The conviction therefore grades how much each node matters, beyond
separating the nodes to remove from the others.
After this analysis of node importance, we believe the conviction could
serve future and more extensive work on Sheaf Neural Networks, on their
interpretability and explainability, and on their compression.

\subsection{The Forman profile}
\label{sec:forman_profile}

The Forman term is block diagonal (Section~\ref{sec:bochner_decomposition}),
each block computed from the restriction maps at one node, so its spectrum
describes the learnt operator node by node.
We use it to track how a Sheaf Neural Network moves through the space of
sheaves on a graph: any sheaf can be learnt~\cite{bodnar2022neural}, but
which sheaves training actually reaches, and whether it favours some regions
of the sheaf space, is open.
Moreover, a second motivation comes from the implementation: the sheaf is
re-estimated at every layer, so the network can be seen as running a
geometric flow~\cite{Hamilton1982ThreemanifoldsWP} on the sheaf, and the
descriptor should follow it across epochs and across layers.

\begin{definition}[Forman profile]
\label{def:forman_profile}
Let $\F$ have $d$-dimensional stalks on a graph with $n$ nodes, and for each
node $u$ let $\lambda_1(u)\leq\dots\leq\lambda_d(u)$ be the eigenvalues of
$(F_\F)_{uu}$.
The \emph{Forman profile} of $\F$ is the matrix
$f(\F)=\bigl(\lambda_k(u)\bigr)_{u\leq n,\,k\leq d}\in\R^{n\times d}$,
whose $k$-th column is the distribution of the $k$-th eigenvalue of the
Forman term across nodes.
\end{definition}

Compared with the spectrum of the whole Laplacian, the Forman profile is
local and cheap: it takes $n$ eigendecompositions of $d\times d$ matrices
instead of one of size $nd$.
It is also interpretable, as the reactive half of the decomposition
(Section~\ref{sec:dynamical_reading}) and as the term controlling the
harmonic space under nonnegativity (Section~\ref{sec:forman_term}).
We are interested in comparing the learnt profiles with two references.
The topological reference is the profile of the trivial sheaf under the
normalized graph Laplacian, and depends on the graph alone; the zero
reference is the profile of the unnormalized graph Laplacian, which vanishes
identically (Section~\ref{sec:bochner_decomposition}).
We call closeness to the topological reference \emph{local alignment}.

We report the $1$-Wasserstein distance $W_1$ between profile and reference,
averaged over the eigenvalues of the Forman term and over folds, and give
the other metrics in
Appendix~\ref{sec:supp_metrics}.
Spearman, Pearson and NRMSE agree with $W_1$, while RMSE does not, so the
alignment is in the shape of the distribution, up to an offset.
Each dataset is treated separately and then averaged, with layers binned by
relative depth as first, middle or last, since $L$ varies across datasets.
For readability, we report the spread across datasets: folds and layers
within one dataset are correlated re-measurements of a single quantity, and
pooling them would understate the uncertainty.
We also show individual results, for one representative dataset per regime,
in Tables~\ref{tab:perds_rm_norm}--\ref{tab:perds_questions_joint_unnorm}
(Appendix~\ref{sec:supp_forman_figures}).
The untrained checkpoint, where the sheaf is randomly initialized, is the
control.

\begin{table}[htbp]
\definecolor{fpa0}{RGB}{255,251,195}
\definecolor{fpa1}{RGB}{255,244,176}
\definecolor{fpa2}{RGB}{255,236,158}
\definecolor{fpa3}{RGB}{255,228,141}
\definecolor{fpa4}{RGB}{254,220,123}
\definecolor{fpa5}{RGB}{254,206,106}
\definecolor{fpa6}{RGB}{254,189,88}
\definecolor{fpa7}{RGB}{254,173,74}
\definecolor{fpa8}{RGB}{253,158,67}
\definecolor{fpa9}{RGB}{253,143,61}
\definecolor{fpa10}{RGB}{253,117,53}
\definecolor{fpa11}{RGB}{252,91,46}
\definecolor{fpa12}{RGB}{252,94,47}
\definecolor{fpa13}{RGB}{252,84,44}
\definecolor{fpa14}{RGB}{252,80,43}
\definecolor{fpa15}{RGB}{252,88,45}
\definecolor{fpa16}{RGB}{252,99,48}
\definecolor{fpa17}{RGB}{252,101,49}
\definecolor{fpa18}{RGB}{252,106,50}
\definecolor{fpa19}{RGB}{253,134,58}
\definecolor{fpa20}{RGB}{253,135,58}
\definecolor{fpa21}{RGB}{253,137,59}
\definecolor{fpa22}{RGB}{253,156,66}
\definecolor{fpa23}{RGB}{253,152,65}
\definecolor{fpa24}{RGB}{254,178,76}
\definecolor{fpa25}{RGB}{253,145,62}
\definecolor{fpa26}{RGB}{253,133,58}
\definecolor{fpa27}{RGB}{254,180,78}
\caption{$1$-Wasserstein distance between the Forman profile of the learnt
normalized GeneralSheaf and the topological reference, by training epoch
(rows; $0$ is the untrained sheaf) and relative depth of the layer
(columns). Darker cells are farther from the reference: the distance
shrinks with training and with depth. Mean$_{\pm\text{s.e.}}$ across 10
datasets, each averaged over its folds; 8 for the middle layer, which
networks with $L=2$ lack.}
\label{tab:global_general_norm}
\centering
\small
\renewcommand{\arraystretch}{1.2}%
\begin{tabular}{l>{\centering\arraybackslash}p{2.7cm}>{\centering\arraybackslash}p{2.7cm}>{\centering\arraybackslash}p{2.7cm}l}
\toprule
\multicolumn{5}{@{}l}{\textbf{GeneralSheaf, normalized}} \\
 & \multicolumn{3}{c}{Layer} & \\
\cmidrule(l){2-4}
Epoch & First & Middle & Last & \\
\midrule
0 & \cellcolor{fpa12}$0.408_{\pm0.047}$ & \cellcolor{fpa13}$0.422_{\pm0.057}$ & \cellcolor{fpa12}$0.408_{\pm0.047}$ & \smash{\begin{tikzpicture}[baseline={([yshift=-9pt]current bounding box.north)}]
\fill[fpa0] (0,0.0000) rectangle (0.28,0.2292);
\fill[fpa1] (0,0.2292) rectangle (0.28,0.4583);
\fill[fpa2] (0,0.4583) rectangle (0.28,0.6875);
\fill[fpa3] (0,0.6875) rectangle (0.28,0.9167);
\fill[fpa4] (0,0.9167) rectangle (0.28,1.1458);
\fill[fpa5] (0,1.1458) rectangle (0.28,1.3750);
\fill[fpa6] (0,1.3750) rectangle (0.28,1.6042);
\fill[fpa7] (0,1.6042) rectangle (0.28,1.8333);
\fill[fpa8] (0,1.8333) rectangle (0.28,2.0625);
\fill[fpa9] (0,2.0625) rectangle (0.28,2.2917);
\fill[fpa10] (0,2.2917) rectangle (0.28,2.5208);
\fill[fpa11] (0,2.5208) rectangle (0.28,2.7500);
\draw[line width=0.3pt] (0,0) rectangle (0.28,2.75);
\node[right,font=\scriptsize] at (0.28,2.75) {0.43};
\node[right,font=\scriptsize] at (0.28,1.375) {0.21};
\node[right,font=\scriptsize] at (0.28,0) {0.00};
\end{tikzpicture}} \\
1 & \cellcolor{fpa11}$0.412_{\pm0.046}$ & \cellcolor{fpa14}$0.427_{\pm0.058}$ & \cellcolor{fpa15}$0.417_{\pm0.046}$ &  \\
5 & \cellcolor{fpa16}$0.402_{\pm0.059}$ & \cellcolor{fpa17}$0.398_{\pm0.074}$ & \cellcolor{fpa18}$0.392_{\pm0.060}$ &  \\
15 & \cellcolor{fpa19}$0.353_{\pm0.061}$ & \cellcolor{fpa20}$0.352_{\pm0.075}$ & \cellcolor{fpa21}$0.350_{\pm0.060}$ &  \\
50 & \cellcolor{fpa22}$0.310_{\pm0.058}$ & \cellcolor{fpa23}$0.319_{\pm0.069}$ & \cellcolor{fpa24}$0.258_{\pm0.055}$ &  \\
200 & \cellcolor{fpa25}$0.334_{\pm0.058}$ & \cellcolor{fpa21}$0.350_{\pm0.070}$ & \cellcolor{fpa24}$0.259_{\pm0.055}$ &  \\
final & \cellcolor{fpa9}$0.339_{\pm0.059}$ & \cellcolor{fpa26}$0.355_{\pm0.071}$ & \cellcolor{fpa27}$0.253_{\pm0.056}$ &  \\
\bottomrule
\end{tabular}
\end{table}

\textbf{General (normalized) sheaf learning aligns with the normalized graph
Laplacian.} The learnt profile moves toward the topological reference along
both axes of Table~\ref{tab:global_general_norm}, with epochs and with
depth.
At initialization it sits at $W_1\approx0.41$ uniformly across depth; by the
final checkpoint the deepest layer is at $0.253\pm0.056$ against
$0.339\pm0.059$ for the first, with most of the movement in the first fifty
epochs.
The effect is absent at the untrained checkpoint, so normalization alone
does not produce it: it is an implicit bias of the \emph{learning}.
Node by node, the learnt sheaf and the normalized Laplacian have similar
Forman profiles, so at the end of training the relative volume of each node
resembles that of the normalized Laplacian.
This holds on most datasets, with the exceptions discussed below.

\textbf{Each learning mechanism has its own bias.}
We repeat the analysis for the unnormalized sheaf Laplacian and for Joint
Sheaf Diffusion~\cite{caralt2024jointdiffusionprocessesinductive},
normalized and unnormalized (Appendix~\ref{sec:supp_forman_tables}).
The unnormalized Laplacian matches neither the zero nor the topological
reference; unnormalized joint diffusion diverges in the first epochs and
middle layers, and recovers towards the end; normalized joint diffusion
diverges from both references as layers and epochs increase.
The Forman profile therefore distinguishes how each mechanism moves through
the sheaf space.

\textbf{The learnt maps stay away from the identity.}
To see whether the learnt maps are identity-like, we evaluate the Forman
profile of their unnormalized Laplacian, the maps having been trained with
the normalized one, and, conversely, the normalized profile of maps trained
without normalization (Appendix~\ref{sec:supp_other}).
In the first case the profile stays far from the zero reference
(Tables~\ref{tab:global_general_other_norm}
and~\ref{tab:global_general_other_unnorm} in
Appendix~\ref{sec:supp_forman_tables}).
Locally the learnt maps act as the normalized Laplacian does, yet they can
still differ from the identity sheaf, retaining features such as signs and
transport.

\textbf{Relation to representation degeneracy.}
The representation degeneracy reading of oversmoothing in Sheaf Neural
Networks~\cite{donmez2026oversmoothingrepresentationdegeneracyneural}
argues, from performance analysis, that the optimizer is drawn towards a
trivial subrepresentation of the learnt sheaf, without measuring the
operator itself.
Measuring it directly, we see local alignment with the topological reference
while the maps stay away from the identity, in networks with few layers, far
from oversmoothing; we interpret this as an inductive bias towards one
region of the sheaf space.
This may also explain why identity sheaves perform similarly to general
ones~\cite{caralt2026necessitylearnablesheaflaplacians}: general sheaves
align locally with the normalized Laplacian, and so resemble normalized
identity sheaves.
Global features of the maps, which our lens does not see, may be what
separates the two.

\textbf{Exceptions.}
Minesweeper, Tolokers and Questions do not follow this pattern.
None of the three loses task performance, with ROC-AUC $0.955\pm0.006$,
$0.854\pm0.007$ and $0.767\pm0.014$, so alignment is a tendency of the
learning dynamics and not a condition for the sheaf to be predictive.
Node homophily does not separate the two groups: Roman Empire and Amazon
Ratings are strongly heterophilic, and behave as the homophilic datasets do.
We hypothesize that the relevant factor is the feature-structure
relationship of these three datasets, the only ones with two classes of
nodes.

\textbf{Limits.}
Our aim was to test whether the Forman profile is a relevant
characteristic of sheaf diffusion in real systems, and Sheaf Neural
Networks are the setting in which we tested it.
The analysis above already relates the profile to the learning mechanism,
to normalization and to training; how the structure of the graph and that
of its features shape it, which the exceptions above point to, will require
a further analysis, for instance on synthetic benchmarks adapted to
sheaves, which we expect to flow into a machine-learning study of the
interpretability and explainability of Sheaf Neural Networks.
The descriptor is also local: it sees only the relative magnitude, or
volume, of the restriction maps, and misses their global features.
At $d=1$ it is insensitive to the sign of the maps, which is crucial for
non-aggregative behaviour on heterophilic graphs.
Comparing the distribution of the profile over the whole graph with a
reference partly compensates for this locality.
A spectral descriptor, such as the entropic
susceptibility~\cite{Villegas2023}, could complement it once spectral sheaf
theory is further developed.

\section{Conclusion and Open Problems}
\label{sec:conclusion}

In this work we introduced a Bochner decomposition for cellular sheaves,
splitting a sheaf Laplacian into a positive semidefinite term, the Bochner
sheaf Laplacian, and a block diagonal term, the Forman curvature, and we used
it to answer a spectral, a probabilistic and a combinatorial question.
The Bochner term belongs to the class of Bochner matrices, which contains
connection Laplacians, matrix-weighted Laplacians and signed graphs.
For this class we characterized the kernel by holonomy, proved Cheeger
inequalities extending those of Bandeira, Singer and Spielman, with
constants measuring the anisotropy, and attached a random walk on
$V\times\mathbb{S}^{d-1}$, the generalized Gremban expansion.
The Forman term gives a discrete Bochner theorem for cellular sheaves,
bounding the first sheaf cohomology under nonnegative curvature.
Together, the two terms allow sheaf diffusion to be seen as a
reaction-diffusion model in which, in an opinion dynamics interpretation,
the Bochner term drives the agents towards a common conviction and the
Forman term trades conviction off against volume.

Both terms also yield a quantity that can be measured on a trained model,
the conviction and the Forman profile, which we derived from the theory and
then tested on a real system, Sheaf Neural Networks.
The conviction is a score of node importance: removing high-conviction nodes
is the most damaging intervention in every perturbation mode, removing
low-conviction nodes costs less than removing arbitrary ones, a
degree-matched control rules out connectivity as the explanation, and the
damage grows steadily with the conviction rank.
The Forman profile tracks how a learning mechanism moves through the sheaf
space, and each mechanism leaves its own signature: normalized general sheaf
learning aligns locally with the normalized graph Laplacian while its maps
stay away from the identity, which we interpret as an inductive bias towards
one region of the sheaf space.
A systematic investigation of both, from the standpoint of machine-learning
interpretability and explainability, is left for future work: of the role
of the relationship between the structure of the graph and its features,
of whether these mechanisms vary across more recent sheaf
architectures~\cite{borgi2026polynomialneuralsheafdiffusion,
vanlooy2026oscillatoryneuraldynamicssheaves}, and of the extension of our
metrics, which assume a constant stalk dimension $d$, to heterogeneous
sheaves~\cite{braithwaite2026heterogeneoussheafneuralnetworks}.

Several questions remain open.
On the theoretical side, other techniques could sharpen the constants of
the Cheeger inequalities, and relating
the mixing time of the anisotropic random walk to the frustration constants
of Section~\ref{sec:cheeger} is left for future work.
Since Bochner matrices carry a random walk, we believe a Bakry--\'Emery
curvature can be built for them, as it has been for connection
Laplacians~\cite{Liu_2019_curvature_buser}.
Our framework also opens the possibility of carrying the dynamical processes
recently studied on connection Laplacians, such as synchronization and
Turing patterns~\cite{gallo2025global,gallo2026turing,gallo2026synchronization},
to the anisotropic setting, which we regard as one of the most promising
directions.

On the empirical side, we believe the conviction and the Forman profile
could become tools for the interpretability of Sheaf Neural Networks, and
for compressing or sparsifying the graphs they operate on.
The Forman profile could also be complemented by a global, spectral
descriptor, once spectral sheaf theory supplies one, such as the entropic
susceptibility~\cite{Villegas2023}, which detects the characteristic scales
of diffusion on a network.
Finally, the fact that every sheaf Laplacian is a Bochner Laplacian plus a
block diagonal term may itself be put to work, inside Sheaf Neural Networks
or to build new classes of sheaves.

\section*{Acknowledgements}
F.M.G.\ carried out this work as his Master's thesis at EPFL, while visiting
the University of Cambridge, and acknowledges the support of the ELSA
Mobility Program.
The authors are grateful to Kathryn Hess Bellwald for supervising F.M.G.'s Master's
thesis at EPFL, from which this work originates.
The authors also thank Ferran Hernandez Caralt for useful discussions and insights on
joint sheaf diffusion.

\section*{Code availability}
Our experiments build on the original implementation of Neural Sheaf
Diffusion by Bodnar et al.~\cite{bodnar2022neural},
\url{https://github.com/twitter-research/neural-sheaf-diffusion}, which we
modified to extract the conviction and the Forman profile.
The full code will be made available upon acceptance of the manuscript.

\clearpage
\bibliographystyle{sn-mathphys-num}
\bibliography{biblio}

@book{grady2010discrete,
  author    = {Grady, Leo J. and Polimeni, Jonathan R.},
  title     = {Discrete Calculus: Applied Analysis on Graphs for Computational Science},
  publisher = {Springer London},
  address   = {London},
  year      = {2010},
  edition   = {1},
  pages     = {XVI, 366},
  isbn      = {978-1-84996-289-6},
  doi       = {10.1007/978-1-84996-290-2},
}

@book{edelsbrunner2010computational,
  author    = {Edelsbrunner, Herbert and Harer, John L.},
  title     = {Computational Topology: An Introduction},
  publisher = {American Mathematical Society},
  address   = {Providence, RI},
  year      = {2010},
  series    = {Miscellaneous Books},
  volume    = {69},
  pages     = {xii+241},
  isbn      = {978-0-8218-4925-5},
  doi       = {10.1090/mbk/069},
}

@Article{Edelsbrunner2002,
author={{Edelsbrunner}
and {Letscher}
and {Zomorodian}},
title={Topological Persistence and Simplification},
journal={Discrete {\&} Computational Geometry},
year={2002},
month={Nov},
day={01},
volume={28},
number={4},
pages={511-533},
issn={1432-0444},
doi={10.1007/s00454-002-2885-2},
url={https://doi.org/10.1007/s00454-002-2885-2}
}

@misc{bronstein2021geometricdeeplearninggrids,
      title={Geometric Deep Learning: Grids, Groups, Graphs, Geodesics, and Gauges}, 
      author={Michael M. Bronstein and Joan Bruna and Taco Cohen and Petar Veličković},
      year={2021},
      eprint={2104.13478},
      archivePrefix={arXiv},
      primaryClass={cs.LG},
      url={https://arxiv.org/abs/2104.13478}, 
}

@book{pontryagin1952foundations,
  author      = {Pontryagin, L. S.},
  title       = {Foundations of Combinatorial Topology},
  publisher   = {Graylock Press},
  address     = {Rochester, NY},
  year        = {1952},
  pages       = {xii+99},
  note        = {Translated from the first (1947) Russian edition by F. Bagemihl, H. Komm, and W. Seidel},
  url         = {https://archive.org/details/pontryagin-foundations-of-combinatorial-topology}
}

@book{bretto2013hypergraph,
  author    = {Bretto, Alain},
  title     = {Hypergraph Theory: An Introduction},
  publisher = {Springer Cham},
  address   = {Cham},
  year      = {2013},
  series    = {Mathematical Engineering},
  edition   = {1},
  pages     = {XIII, 119},
  isbn      = {978-3-319-00079-4},
  issn      = {2192-4732},
  doi       = {10.1007/978-3-319-00080-0},
}

@book{chung1997spectral,
  author    = {Chung, Fan R. K.},
  title     = {Spectral Graph Theory},
  publisher = {American Mathematical Society},
  address   = {Providence, RI},
  year      = {1997},
  series    = {CBMS Regional Conference Series in Mathematics},
  number    = {92},
  pages     = {207},
  isbn      = {978-0-8218-0315-8},
  doi       = {10.1090/cbms/092},
}

@inproceedings{muhammad2006control_kyoto,
  author    = {Muhammad, Abubakr and Egerstedt, Magnus},
  title     = {Control Using Higher Order {L}aplacians in Network Topologies},
  booktitle = {Proceedings of the 17th International Symposium on Mathematical
               Theory of Networks and Systems (MTNS)},
  address   = {Kyoto, Japan},
  pages     = {1024--1038},
  year      = {2006}
}

@article{Forman2003Bochner,
  author    = {Forman, Robin},
  title     = {Bochner's Method for Cell Complexes and Combinatorial {Ricci} Curvature},
  journal   = {Discrete \& Computational Geometry},
  year      = {2003},
  volume    = {29},
  number    = {3},
  pages     = {323--374},
  doi       = {10.1007/s00454-002-0743-x},
  url       = {https://doi.org/10.1007/s00454-002-0743-x},
  issn      = {1432-0444}
}

@Article{Nurisso2025,
author={Nurisso, Marco
and Morandini, Marta
and Lucas, Maxime
and Vaccarino, Francesco
and Gili, Tommaso
and Petri, Giovanni},
title={Higher-order {Laplacian} renormalization},
journal={Nature Physics},
year={2025},
month={Apr},
day={01},
volume={21},
number={4},
pages={661-668},
issn={1745-2481},
doi={10.1038/s41567-025-02784-1},
url={https://doi.org/10.1038/s41567-025-02784-1}
}

@book{bredon1997sheaf,
  author    = {Bredon, Glen E.},
  title     = {Sheaf Theory},
  publisher = {Springer New York},
  address   = {New York, NY},
  year      = {1997},
  series    = {Graduate Texts in Mathematics},
  volume    = {170},
  edition   = {2},
  pages     = {XI, 504},
  isbn      = {978-0-387-94905-5},
  issn      = {0072-5285},
  doi       = {10.1007/978-1-4612-0647-7},
}

@book{rosiak2022sheaf,
  author    = {Rosiak, Daniel},
  title     = {Sheaf Theory through Examples},
  publisher = {MIT Press},
  address   = {Cambridge, MA},
  year      = {2022},
  pages     = {454},
  isbn      = {978-0-262-54215-9},
  url       = {https://direct.mit.edu/books/oa-monograph/5460/Sheaf-Theory-through-Examples}
}

@article{bandeira2013cheegerinequalitygraphconnection,
  author  = {Bandeira, Afonso S. and Singer, Amit and Spielman, Daniel A.},
  title   = {A {Cheeger} Inequality for the Graph Connection {Laplacian}},
  journal = {SIAM Journal on Matrix Analysis and Applications},
  volume  = {34},
  number  = {4},
  pages   = {1611--1630},
  year    = {2013},
  doi     = {10.1137/120875338}
}

@article{singer2011vectordiffusionmapsconnection,
  author  = {Singer, Amit and Wu, Hau-tieng},
  title   = {Vector Diffusion Maps and the Connection {Laplacian}},
  journal = {Communications on Pure and Applied Mathematics},
  volume  = {65},
  number  = {8},
  pages   = {1067--1144},
  year    = {2012},
  doi     = {10.1002/cpa.21395}
}

@misc{curry2014sheavescosheavesapplications,
      title={Sheaves, Cosheaves and Applications}, 
      author={Justin Curry},
      year={2014},
      eprint={1303.3255},
      archivePrefix={arXiv},
      primaryClass={math.AT},
      url={https://arxiv.org/abs/1303.3255}, 
}

@article{Discourse_sheaves,
author = {Hansen, Jakob and Ghrist, Robert},
title = {Opinion Dynamics on Discourse Sheaves},
journal = {SIAM Journal on Applied Mathematics},
volume = {81},
number = {5},
pages = {2033-2060},
year = {2021},
doi = {10.1137/20M1341088},
}

@article{hansen2019toward,
  author   = {Hansen, Jakob and Ghrist, Robert},
  title    = {Toward a Spectral Theory of Cellular Sheaves},
  journal  = {Journal of Applied and Computational Topology},
  volume   = {3},
  number   = {4},
  pages    = {315--358},
  year     = {2019},
  publisher = {Springer},
  issn     = {2367-1734},
  doi      = {10.1007/s41468-019-00038-7},
  url      = {https://doi.org/10.1007/s41468-019-00038-7}
}

@article{Tian_2024,
  author    = {Tian, Yu and Lambiotte, Renaud},
  title     = {Spreading and Structural Balance on Signed Networks},
  journal   = {SIAM Journal on Applied Dynamical Systems},
  volume    = {23},
  number    = {1},
  pages     = {50--80},
  year      = {2024},
  month     = jan,
  doi       = {10.1137/22M1542325},
  url       = {http://dx.doi.org/10.1137/22M1542325},
  issn      = {1536-0040}
}

@article{hansen2020sheaf,
  author  = {Hansen, Jakob and Gebhart, Thomas},
  title   = {Sheaf Neural Networks},
  journal = {NeurIPS Workshop: Topological Data Analysis and Beyond},
  year    = {2020}
}

@article{bodnar2022neural,
  author  = {Bodnar, Cristian and Di Giovanni, Francesco and Chamberlain, Benjamin Paul
             and Li\`{o}, Pietro and Bronstein, Michael M.},
  title   = {Neural Sheaf Diffusion: A Topological Perspective on Heterophily
             and Oversmoothing in {GNN}s},
  journal = {Advances in Neural Information Processing Systems (NeurIPS)},
  year    = {2022}
}

@InProceedings{caralt2024jointdiffusionprocessesinductive,
  title = 	 {Joint Diffusion Processes as an Inductive Bias in Sheaf Neural Networks},
  author =       {Hernandez Caralt, Ferran and Bern{\'a}rdez Gil, Guillermo and Duta, Iulia and Li{\`o}, Pietro and Alarc{\'o}n Cot, Eduard},
  booktitle = 	 {Proceedings of the Geometry-grounded Representation Learning and Generative Modeling Workshop (GRaM)},
  pages = 	 {249--263},
  year = 	 {2024},
  editor = 	 {Vadgama, Sharvaree and Bekkers, Erik and Pouplin, Alison and Kaba, Sekou-Oumar and Walters, Robin and Lawrence, Hannah and Emerson, Tegan and Kvinge, Henry and Tomczak, Jakub and Jegelka, Stephanie},
  volume = 	 {251},
  series = 	 {Proceedings of Machine Learning Research},
  month = 	 {29 Jul},
  url = 	 {https://proceedings.mlr.press/v251/hernandez-caralt24a.html}
}

@article{Hamilton1982ThreemanifoldsWP,
  title={Three-manifolds with positive {Ricci} curvature},
  author={Richard S. Hamilton},
  journal={Journal of Differential Geometry},
  year={1982},
  volume={17},
  pages={255-306},
  url={https://api.semanticscholar.org/CorpusID:117938273}
}

@article{Harary1953OnTN,
  title   = {On the notion of balance of a signed graph.},
  author  = {Frank Harary},
  journal = {Michigan Mathematical Journal},
  year    = {1953},
  volume  = {2},
  pages   = {143-146},
  doi     = {10.1307/mmj/1028989917},
  url     = {https://api.semanticscholar.org/CorpusID:120368252}
}

@article{structural_duality,
author = {Harary, Frank},
title = {Structural duality},
journal = {Behavioral Science},
volume = {2},
number = {4},
pages = {255-265},
doi = {https://doi.org/10.1002/bs.3830020403},
url = {https://onlinelibrary.wiley.com/doi/abs/10.1002/bs.3830020403},
eprint = {https://onlinelibrary.wiley.com/doi/pdf/10.1002/bs.3830020403},
year = {1957}
}

@article{ZASLAVSKY198247,
title = {Signed graphs},
journal = {Discrete Applied Mathematics},
volume = {4},
number = {1},
pages = {47-74},
year = {1982},
issn = {0166-218X},
doi = {https://doi.org/10.1016/0166-218X(82)90033-6},
url = {https://www.sciencedirect.com/science/article/pii/0166218X82900336},
author = {Thomas Zaslavsky}
}

@article{zaslavsky1987balanced,
  author  = {Zaslavsky, Thomas},
  title   = {Balanced Decompositions of a Signed Graph},
  journal = {Journal of Combinatorial Theory, Series B},
  volume  = {43},
  number  = {1},
  pages   = {1--13},
  year    = {1987}
}

@article{harary1959measurement,
  author  = {Harary, Frank},
  title   = {On the Measurement of Structural Balance},
  journal = {Behavioral Science},
  volume  = {4},
  number  = {4},
  pages   = {316--323},
  year    = {1959},
  doi     = {10.1002/bs.3830040405},
  url     = {https://doi.org/10.1002/bs.3830040405}
}

@INPROCEEDINGS{Investigation_spectral_signed_matrix,
  author={Fox, Alyson and Sanders, Geoffrey and Knyazev, Andrew},
  booktitle={2018 IEEE High Performance extreme Computing Conference (HPEC)}, 
  title={Investigation of Spectral Clustering for Signed Graph Matrix Representations}, 
  year={2018},
  volume={},
  number={},
  pages={1-7},
  doi={10.1109/HPEC.2018.8547575}}

@article{Numerical_methods_Gremban_expansion,
  title = {Numerical Methods for {Gremban}'s Expansion of Signed Graphs},
  author = {Alyson Fox and Thomas A. Manteuffel and Geoffrey Sanders},
  year = {2017},
  doi = {10.1137/16M1082433},
  url = {https://doi.org/10.1137/16M1082433},
  researchr = {https://researchr.org/publication/FoxMS17},
  cites = {0},
  citedby = {0},
  journal = {SIAM J. Scientific Computing},
  volume = {39},
  number = {5},
}

@phdthesis{GrembanPHD,
  author  = {Keith D. Gremban},
  title   = {Combinatorial Preconditioners for Sparse, Symmetric, Diagonally Dominant Linear Systems},
  school  = {Carnegie Mellon University},
  year    = {1996},
  address = {Pittsburgh},
  month   = {October},
  note    = {CMU CS Tech Report CMU-CS-96-123},
  url     = {https://www.cs.cmu.edu/%7Eglmiller/Publications/Papers/GrembanPHD.pdf}
}

@book{iversen1986cohomology,
  author    = {Iversen, Birger},
  title     = {Cohomology of Sheaves},
  publisher = {Springer},
  address   = {Berlin, Heidelberg},
  year      = {1986},
  series    = {Universitext},
  pages     = {XII, 464},
  isbn      = {978-3-540-16389-3},
  issn      = {0172-5939},
  doi       = {10.1007/978-3-642-82783-9},
}

@book{kashiwara1990sheaves,
  author    = {Kashiwara, Masaki and Schapira, Pierre},
  title     = {Sheaves on Manifolds},
  subtitle  = {With a Short History. «Les d{\'e}buts de la th{\'e}orie des faisceaux» by Christian Houzel},
  publisher = {Springer},
  address   = {Berlin, Heidelberg},
  year      = {1990},
  series    = {Grundlehren der mathematischen Wissenschaften},
  volume    = {292},
  pages     = {X, 512},
  isbn      = {978-3-540-51861-7},
  issn      = {0072-7830},
  doi       = {10.1007/978-3-662-02661-8},
}

@incollection{houzel1990short,
  author    = {Houzel, Christian},
  title     = {A Short History: Les D{\'e}buts de la Th{\'e}orie des Faisceaux},
  booktitle = {Sheaves on Manifolds},
  editor    = {Kashiwara, Masaki and Schapira, Pierre},
  publisher = {Springer},
  address   = {Berlin, Heidelberg},
  year      = {1990},
  pages     = {7--22},
  doi       = {10.1007/978-3-662-02661-8_2}
}

@article{expansion_matrix_weighted_graphs,
title = {Expansion in matrix-weighted graphs},
journal = {Linear Algebra and its Applications},
volume = {630},
pages = {252-273},
year = {2021},
issn = {0024-3795},
doi = {https://doi.org/10.1016/j.laa.2021.08.009},
url = {https://www.sciencedirect.com/science/article/pii/S0024379521003062},
author = {Jakob Hansen}
}

@book{trinh2026matrix,
  author    = {Trinh, Minh Hoang and Ahn, Hyo-Sung},
  title     = {Matrix-Weighted Graphs: Theory and Applications},
  publisher = {Springer Cham},
  address   = {Cham},
  year      = {2026},
  series    = {Lecture Notes in Control and Information Sciences},
  volume    = {498},
  edition   = {1},
  pages     = {XXIV, 361},
  isbn      = {978-3-032-03078-8},
  issn      = {0170-8643},
  doi       = {10.1007/978-3-032-03079-5},
}

@misc{fioresi2026sheavesgraphsdifferentialcalculi,
      title={Sheaves on Graphs and their Differential Calculi}, 
      author={Rita Fioresi and Angelica Simonetti and Ferdinando Zanchetta},
      year={2026},
      eprint={2602.21176},
      archivePrefix={arXiv},
      primaryClass={math.DG},
      url={https://arxiv.org/abs/2602.21176}, 
}

@misc{caralt2026necessitylearnablesheaflaplacians,
      title={On the Necessity of Learnable Sheaf {Laplacians}}, 
      author={{Hernandez Caralt}, Ferran and {Gonzàlez i Català}, Mar and Bazaga, Adrián and Liò, Pietro},
      year={2026},
      eprint={2603.05395},
      archivePrefix={arXiv},
      primaryClass={cs.LG},
      url={https://arxiv.org/abs/2603.05395}, 
}

@misc{donmez2026oversmoothingrepresentationdegeneracyneural,
      title={Oversmoothing as Representation Degeneracy in Neural Sheaf Diffusion}, 
      author={Arif Dönmez and Axel Mosig and Ellen Fritsche and Katharina Koch},
      year={2026},
      eprint={2605.11178},
      archivePrefix={arXiv},
      primaryClass={cs.LG},
      url={https://arxiv.org/abs/2605.11178}, 
}

@article{sen2008collective,
  author  = {Sen, Prithviraj and Namata, Galileo and Bilgic, Mustafa and Getoor, Lise and Galligher, Brian and Eliassi-Rad, Tina},
  title   = {Collective Classification in Network Data},
  journal = {AI Magazine},
  volume  = {29},
  number  = {3},
  pages   = {93--106},
  year    = {2008},
  doi     = {10.1609/aimag.v29i3.2157},
  url     = {https://doi.org/10.1609/aimag.v29i3.2157}
}

@inproceedings{pei2020geomgcn,
  author    = {Pei, Hongbin and Wei, Bingzhe and Chang, Kevin Chen-Chuan and Lei, Yu and Yang, Bo},
  title     = {Geom-GCN: Geometric Graph Convolutional Networks},
  booktitle = {International Conference on Learning Representations (ICLR)},
  year      = {2020},
  note      = {arXiv:2002.05287}
}

@inproceedings{tang2009social,
  author    = {Tang, Jie and Sun, Jimeng and Wang, Chi and Yang, Zi},
  title     = {Social Influence Analysis in Large-Scale Networks},
  booktitle = {Proceedings of the 15th ACM SIGKDD International Conference on Knowledge Discovery and Data Mining},
  pages     = {807--816},
  year      = {2009},
  publisher = {ACM},
  address   = {New York, NY, USA},
  doi       = {10.1145/1557019.1557108},
}

@inproceedings{platonov2023critical,
  author    = {Platonov, Oleg and Kuznedelev, Denis and Diskin, Michael and Babenko, Artem and Prokhorenkova, Liudmila},
  title     = {A Critical Look at the Evaluation of GNNs under Heterophily: Are We Really Making Progress?},
  booktitle = {The Eleventh International Conference on Learning Representations},
  year      = {2023}
}

@article{Li_perturbation_bounds,
author = {Li, Ren-Cang},
title = {New Perturbation Bounds for the Unitary Polar Factor},
journal = {SIAM Journal on Matrix Analysis and Applications},
volume = {16},
number = {1},
pages = {327-332},
year = {1995},
doi = {10.1137/S0895479893256359},
}

@article{gallo2025global,
  author  = {Gallo, Anna and Tian, Yu and Lambiotte, Renaud and Carletti, Timoteo},
  title   = {Global Synchronization in Matrix-Weighted Networks},
  journal = {Communications Physics},
  volume  = {8},
  number  = {1},
  pages   = {512},
  year    = {2025},
  doi     = {10.1038/s42005-025-02419-0},
  url     = {https://doi.org/10.1038/s42005-025-02419-0}
}

@article{gallo2026turing,
  author  = {Gallo, Anna and Segnou, William and Carletti, Timoteo},
  title   = {Turing Patterns in Matrix-Weighted Networks},
  journal = {arXiv preprint arXiv:2602.13080},
  year    = {2026}
}

@article{gallo2026synchronization,
  author  = {Gallo, Anna and Lambiotte, Renaud and Carletti, Timoteo},
  title   = {Synchronization of Higher-Dimensional {Kuramoto} Oscillators on Networks: From Scalar to Matrix-Weighted Couplings},
  journal = {Journal of Physics: Complexity},
  year    = {2026},
  doi     = {10.1088/2632-072X/ae76ad},
  url     = {https://doi.org/10.1088/2632-072X/ae76ad},
  note    = {arXiv:2603.08352}
}

@article{lambiotte_matrix_weighted,
  title = {Matrix-Weighted Networks for Modeling Multidimensional Dynamics: Theoretical Foundations and Applications to Network Coherence},
  author = {Tian, Yu and Kojaku, Sadamori and Sayama, Hiroki and Lambiotte, Renaud},
  journal = {Phys. Rev. Lett.},
  volume = {134},
  issue = {23},
  pages = {237401},
  numpages = {7},
  year = {2025},
  month = {Jun},
  publisher = {American Physical Society},
  doi = {10.1103/jw1k-6s7w},
  url = {https://link.aps.org/doi/10.1103/jw1k-6s7w}
}

@misc{dacunto2026sheaftheoreticsignalprocessinggraphs,
      title={Sheaf-theoretic Signal Processing on Graphs: Spectral Theory, Filtering, and Sampling}, 
      author={Gabriele D'Acunto and Leonardo Di Nino and Paolo Di Lorenzo and Sergio Barbarossa},
      year={2026},
      eprint={2608.01318},
      archivePrefix={arXiv},
      primaryClass={eess.SP},
      url={https://arxiv.org/abs/2608.01318}, 
}

@misc{wang2026curvatureguidedsheafdiffusionunsupervised,
      title={Curvature-Guided Sheaf Diffusion for Unsupervised Community Detection on Heterophilic Graphs}, 
      author={Feifan Wang},
      year={2026},
      eprint={2606.30249},
      archivePrefix={arXiv},
      primaryClass={cs.LG},
      url={https://arxiv.org/abs/2606.30249}, 
}

@article{Schaub_2020,
  author    = {Schaub, Michael T. and Benson, Austin R. and Horn, Paul
               and Lippner, Gabor and Jadbabaie, Ali},
  title     = {Random Walks on Simplicial Complexes and the Normalized
               {H}odge 1-{L}aplacian},
  journal   = {SIAM Review},
  volume    = {62},
  number    = {2},
  pages     = {353--391},
  year      = {2020},
  month     = jan,
  doi       = {10.1137/18M1201019},
  url       = {http://dx.doi.org/10.1137/18M1201019},
  issn      = {1095-7200}
}

@book{Petersen2016RiemannianGeometry,
  author    = {Petersen, Peter},
  title     = {Riemannian Geometry},
  edition   = {3rd},
  series    = {Graduate Texts in Mathematics},
  volume    = {171},
  year      = {2016},
  publisher = {Springer International Publishing},
  address   = {Cham},
  doi       = {10.1007/978-3-319-26654-1},
  isbn      = {978-3-319-26652-7}
}

@misc{gallier2016spectraltheoryunsignedsigned,
  author        = {Gallier, Jean},
  title         = {Spectral Theory of Unsigned and Signed Graphs.
                   Applications to Graph Clustering: a Survey},
  year          = {2016},
  eprint        = {1601.04692},
  archivePrefix = {arXiv},
  primaryClass  = {cs.LG},
  url           = {https://arxiv.org/abs/1601.04692}
}

@article{Atay_2020_cheeger,
   title={{Cheeger} constants, structural balance, and spectral clustering analysis for signed graphs},
   volume={343},
   ISSN={0012-365X},
   url={http://dx.doi.org/10.1016/j.disc.2019.111616},
   DOI={10.1016/j.disc.2019.111616},
   number={1},
   journal={Discrete Mathematics},
   publisher={Elsevier BV},
   author={Atay, Fatihcan M. and Liu, Shiping},
   year={2020},
   month=Jan, pages={111616} }

@article{Liu_2019_curvature_buser,
   title={Curvature and Higher Order {Buser} Inequalities for the Graph Connection {Laplacian}},
   volume={33},
   ISSN={1095-7146},
   url={http://dx.doi.org/10.1137/16M1056353},
   DOI={10.1137/16m1056353},
   number={1},
   journal={SIAM Journal on Discrete Mathematics},
   publisher={Society for Industrial & Applied Mathematics (SIAM)},
   author={Liu, Shiping and Münch, Florentin and Peyerimhoff, Norbert},
   year={2019},
   month=Jan, pages={257–305} }

@article{Lange_2015_frustration,
   title={Frustration index and {Cheeger} inequalities for discrete and continuous magnetic {Laplacians}},
   volume={54},
   ISSN={1432-0835},
   url={http://dx.doi.org/10.1007/s00526-015-0935-x},
   DOI={10.1007/s00526-015-0935-x},
   number={4},
   journal={Calculus of Variations and Partial Differential Equations},
   publisher={Springer Science and Business Media LLC},
   author={Lange, Carsten and Liu, Shiping and Peyerimhoff, Norbert and Post, Olaf},
   year={2015},
   month=Nov, pages={4165–4196} }

@article{cloninger2023randomwalksconductanceresistance,
  author  = {Cloninger, Alexander and Mishne, Gal and Oslandsbotn, Andreas and Robertson, Sawyer J. and Wan, Zhengchao and Wang, Yusu},
  title   = {Random Walks, Conductance, and Resistance for the Connection Graph {Laplacian}},
  journal = {SIAM Journal on Matrix Analysis and Applications},
  volume  = {45},
  number  = {3},
  pages   = {1541--1572},
  year    = {2024},
  doi     = {10.1137/23M1595400}
}

@article{kannan2022normalizedlaplaciansgaingraphs,
  author  = {Kannan, M. Rajesh and Kumar, Navish and Pragada, Shivaramakrishna},
  title   = {Normalized {Laplacians} for Gain Graphs},
  journal = {The American Journal of Combinatorics},
  volume  = {1},
  pages   = {20--39},
  year    = {2022},
  doi     = {10.63151/amjc.v1i.3}
}

@Article{Gao2021,
author={Gao, Tingran
and Brodzki, Jacek
and Mukherjee, Sayan},
title={The Geometry of Synchronization Problems and Learning Group Actions},
journal={Discrete {\&} Computational Geometry},
year={2021},
month={Jan},
day={01},
volume={65},
number={1},
pages={150-211},
issn={1432-0444},
doi={10.1007/s00454-019-00100-2},
url={https://doi.org/10.1007/s00454-019-00100-2}
}

@article{block_diagonally_dominant_matrices,
author = {Feingold, David and Varga, Richard},
year = {1962},
month = {12},
pages = {},
title = {Block diagonally dominant matrices and generalizations of the Gershgorin Theorem},
volume = {12},
journal = {Pac. J. Math.},
doi = {10.2140/pjm.1962.12.1241}
}

@Book{PSD_Bhatia2007,
author={Bhatia, Rajendra},
title={Positive Definite Matrices},
year={2007},
publisher={Princeton University Press},
address={Princeton, NJ},
isbn={9781400827787},
url={http://ieeexplore.ieee.org/document/9452503}
}

@misc{first2025cheegerinequalitycoboundaryexpansion,
      title={The {Cheeger} Inequality and Coboundary Expansion: Beyond Constant Coefficients}, 
      author={Uriya A. First and Tali Kaufman},
      year={2025},
      eprint={2208.01776},
      archivePrefix={arXiv},
      primaryClass={math.CO},
      url={https://arxiv.org/abs/2208.01776}, 
}

@inproceedings{Topping2021UnderstandingOA,
  author    = {Topping, Jake and Di Giovanni, Francesco and Chamberlain, Benjamin Paul and Dong, Xiaowen and Bronstein, Michael M.},
  title     = {Understanding over-squashing and bottlenecks on graphs via curvature},
  booktitle = {International Conference on Learning Representations},
  year      = {2022}
}

@article{antibalance_signed_graph,
author = {Li, Hong-Hai and Li, Jiong-Sheng},
year = {2009},
month = {06},
pages = {},
title = {Note on the normalized {Laplacian} eigenvalues of signed graphs},
volume = {44},
journal = {The Australasian Journal of Combinatorics [electronic only]}
}

@Article{Villegas2023,
author={Villegas, Pablo
and Gili, Tommaso
and Caldarelli, Guido
and Gabrielli, Andrea},
title={{Laplacian} renormalization group for heterogeneous networks},
journal={Nature Physics},
year={2023},
month={Mar},
day={01},
volume={19},
number={3},
pages={445-450},
issn={1745-2481},
doi={10.1038/s41567-022-01866-8},
url={https://doi.org/10.1038/s41567-022-01866-8}
}

@Article{Curry2015,
author={Curry, Justin Michael},
title={Topological data analysis and cosheaves},
journal={Japan Journal of Industrial and Applied Mathematics},
year={2015},
month={Jul},
day={01},
volume={32},
number={2},
pages={333-371},
issn={1868-937X},
doi={10.1007/s13160-015-0173-9},
url={https://doi.org/10.1007/s13160-015-0173-9}
}

@misc{hanks2025distributedmultiagentcoordinationcellular,
      title={Distributed Multi-agent Coordination over Cellular Sheaves}, 
      author={Tyler Hanks and Hans Riess and Samuel Cohen and Trevor Gross and Matthew Hale and James Fairbanks},
      year={2025},
      eprint={2504.02049},
      archivePrefix={arXiv},
      primaryClass={math.OC},
      url={https://arxiv.org/abs/2504.02049}, 
}

@misc{robinson2016sheavescanonicaldatastructuresensor,
      title={Sheaves are the canonical datastructure for sensor integration}, 
      author={Michael Robinson},
      year={2016},
      eprint={1603.01446},
      archivePrefix={arXiv},
      primaryClass={math.AT},
      url={https://arxiv.org/abs/1603.01446}, 
}

@INPROCEEDINGS{distributed_optimization,
  author={Hansen, Jakob and Ghrist, Robert},
  booktitle={2019 57th Annual Allerton Conference on Communication, Control, and Computing (Allerton)}, 
  title={Distributed Optimization with Sheaf Homological Constraints}, 
  year={2019},
  volume={},
  number={},
  pages={565-571},
  doi={10.1109/ALLERTON.2019.8919796}}

@misc{hu2026discretebakryemerycurvaturetensors,
      title={Discrete Bakry-\'Emery curvature tensors and matrices of connection graphs}, 
      author={Chunyang Hu and Shiping Liu},
      year={2026},
      eprint={2209.10762},
      archivePrefix={arXiv},
      primaryClass={math.CO},
      url={https://arxiv.org/abs/2209.10762}, 
}

@misc{borgi2026equivariantsheafneuralnetworks,
      title={Equivariant Sheaf Neural Networks: Learning Geometric Transport on Graphs}, 
      author={Alessio Borgi and Mario Severino and Fabrizio Silvestri and Pietro Liò},
      year={2026},
      eprint={2608.28853},
      archivePrefix={arXiv},
      primaryClass={cs.LG},
      url={https://arxiv.org/abs/2608.28853}, 
}

@InProceedings{pmlr-v196-barbero22a,
  title = 	 {Sheaf Neural Networks with Connection {Laplacians}},
  author =       {Barbero, Federico and Bodnar, Cristian and S\'aez de Oc\'ariz Borde, Haitz and Bronstein, Michael and Veli\v{c}kovi\'c, Petar and Li\`o, Pietro},
  booktitle = 	 {Proceedings of Topological, Algebraic, and Geometric Learning Workshops 2022},
  pages = 	 {28--36},
  year = 	 {2022},
  editor = 	 {Cloninger, Alexander and Doster, Timothy and Emerson, Tegan and Kaul, Manohar and Ktena, Ira and Kvinge, Henry and Miolane, Nina and Rieck, Bastian and Tymochko, Sarah and Wolf, Guy},
  volume = 	 {196},
  series = 	 {Proceedings of Machine Learning Research},
  month = 	 {25 Feb--22 Jul},
  url = 	 {https://proceedings.mlr.press/v196/barbero22a.html}
}

@InProceedings{peng2026sheafneuralnetworksspd,
  title = 	 {Sheaf Neural Networks on {SPD} Manifolds: Second-Order Geometric Representation Learning},
  author =       {Peng, Yuhan and Dong, Junwen and Zeng, Yuzhi and Li, Hao and Ju, Ce and Feng, Huitao and Taha, Diaaeldin and Wienhard, Anna and Xia, Kelin},
  booktitle = 	 {Proceedings of the 43rd International Conference on Machine Learning},
  pages = 	 {97981--98013},
  year = 	 {2026},
  editor = 	 {Zhang, Tong and Dudik, Miroslav and Jaggi, Martin and Agarwal, Alekh and Li, Sharon and Schuurmans, Dale and Zhu, Jerry and Berkenkamp, Felix and Dong, Hanze and Bietti, Alberto},
  volume = 	 {306},
  series = 	 {Proceedings of Machine Learning Research},
  month = 	 {06--11 Jul},
  url = 	 {https://proceedings.mlr.press/v306/peng26e.html}
}

@inproceedings{bamberger2025bundleneuralnetworksmessage,
  author    = {Bamberger, Jacob and Barbero, Federico and Dong, Xiaowen and Bronstein, Michael},
  title     = {Bundle Neural Network for Message Diffusion on Graphs},
  booktitle = {International Conference on Learning Representations},
  editor    = {Yue, Y. and Garg, A. and Peng, N. and Sha, F. and Yu, R.},
  pages     = {16788--16813},
  year      = {2025}
}

@misc{borgi2026polynomialneuralsheafdiffusion,
      title={Polynomial Neural Sheaf Diffusion: A Spectral Filtering Approach on Cellular Sheaves}, 
      author={Alessio Borgi and Fabrizio Silvestri and Pietro Liò},
      year={2026},
      eprint={2512.00242},
      archivePrefix={arXiv},
      primaryClass={cs.LG},
      url={https://arxiv.org/abs/2512.00242}, 
}

@misc{vanlooy2026oscillatoryneuraldynamicssheaves,
      title={Oscillatory Neural Dynamics over Sheaves}, 
      author={Jan-Willem Van Looy and Alessandro Trenta and Alessio Gravina and Alessio Borgi and Ferdinando Zanchetta and Pietro Liò and Davide Bacciu and Rita Fioresi},
      year={2026},
      eprint={2610.10018},
      archivePrefix={arXiv},
      primaryClass={cs.LG},
      url={https://arxiv.org/abs/2610.10018}, 
}

@misc{braithwaite2026heterogeneoussheafneuralnetworks,
      title={Heterogeneous Sheaf Neural Networks}, 
      author={Luke Braithwaite and Alessio Borgi and Gabriele Onorato and Kristjan Tarantelli and Francesco Restuccia and Fabrizio Silvestri and Pietro Liò},
      year={2026},
      eprint={2409.08036},
      archivePrefix={arXiv},
      primaryClass={cs.LG},
      url={https://arxiv.org/abs/2409.08036}, 
}

@misc{ali2026letheadstalkdiagonal,
      title={Let the Heads Talk: Beyond Diagonal Graph Attention}, 
      author={Riccardo Ali and Alessio Borgi and Mario Severino and Alessio Gravina and Davide Bacciu and Pietro Liò and Christopher Irwin},
      year={2026},
      eprint={2610.01494},
      archivePrefix={arXiv},
      primaryClass={cs.LG},
      url={https://arxiv.org/abs/2610.01494}, 
}

\clearpage
\appendix
\section{Additional Proofs}
\label{sec:proofs}

This appendix collects the proofs that do not sit inline with their
statement: either because they are longer, or because the technique carries
over unchanged from prior work and reads better collected in one place,
without interrupting the results above.
Each result is restated before its proof.

\medskip
\noindent\textbf{Proposition~\ref{prop:holonomy} (Kernel and holonomy).} Let
$B$ satisfy (H1)--(H3), and for a walk $\gamma:v_0\to\cdots\to v_k$ let
$T_\gamma := W_{v_{k-1}v_k}\cdots W_{v_0v_1}\in O(d)$.
Then $\dim\ker(B)\leq d$, with equality iff $T_\gamma = I$ for every closed
walk $\gamma$.

\begin{proof}
The evaluation map $\phi_v:\ker(B)\to\R^d$, $x\mapsto x_v$, is injective by
Proposition~\ref{prop:kernel} and connectedness (H3): $x_v=0$ forces $x_u=0$
at every neighbour, hence $x=0$ by induction along the graph.
So $\dim\ker(B)\leq d$.

If the transport maps are path independent, fix $v$ and $\alpha\in\R^d$;
setting $x_u:=T_\gamma\alpha$ for any path $\gamma$ from $v$ to $u$ is well
defined, and for an edge $(w_1,w_2)$, extending a path to $w_1$ by that edge
gives $x_{w_2}=W_{w_1w_2}x_{w_1}$, so $x\in\ker(B)$ by
Proposition~\ref{prop:kernel}; thus $\phi_v$ is surjective and $\dim\ker(B)=d$.
Conversely if $\phi_v$ is surjective, every $\alpha$ extends to some $x\in\ker(B)$
with $x_w=T_\gamma\alpha$ for \emph{every} path $\gamma$ from $v$ to $w$, by
Proposition~\ref{prop:kernel} again; two paths with the same endpoints then
give $T_\gamma\alpha=T_{\gamma'}\alpha$ for all $\alpha$, hence
$T_\gamma=T_{\gamma'}$, which is path independence, equivalent to
$T_\gamma=I$ on closed walks by concatenating a walk with the reverse of a
reference one.
\end{proof}

\medskip
\noindent The proofs of Section~\ref{sec:cheeger} use the following notation.
\begin{center}
\small
\renewcommand{\arraystretch}{1.25}
\begin{tabular}{@{}l>{\raggedright\arraybackslash}p{0.7\textwidth}@{}}
\toprule
Symbol & Meaning \\
\midrule
$D_v$, $D$ & $D_v=\sum_{u\sim v}M_{vu}$, the diagonal blocks of $B$, and $D=\operatorname{blockdiag}(D_v)$ \\
$\eta(x)$ & $x^\top Bx/x^\top Dx$ for $x\neq0$ \\
$\eta^*_G$ & $\min\eta(x)$ over nonzero $x:V\to\mathbb S^{d-1}\cup\{0\}$ \\
$n_x$ & the local norm vector, $(n_x)_v=\norm{x_v}$ \\
$w^\pm_{vu}$ & $\lambda_{\min}(M_{vu})$ and $\lambda_{\max}(M_{vu})$ \\
$\kappa_E$, $\kappa(M)$ & $\max_e w^+_e/w^-_e$, and $\lambda_{\max}(M)/\lambda_{\min}(M)$ for $M\succ0$ \\
$d^-_v$, $D^-_0$ & $\sum_{u\sim v}w^-_{vu}$, and $\operatorname{diag}(d^-_v)$ \\
$x^{>t}$ & $x^{>t}_v=x_v/\norm{x_v}$ if $\norm{x_v}^2>t$, and $0$ otherwise \\
$\Phi(y)$ & $\sum_{v\neq u}\sqrt{w^-_{vu}}\,\norm{y_v-W_{vu}^\top y_u}_{M_{vu}}\big/\sum_v d^-_v\norm{y_v}$ (Lemma~\ref{lem:global_l1}) \\
\bottomrule
\end{tabular}
\end{center}

\noindent\textbf{Auxiliary lemmas for Section~\ref{sec:cheeger}.} We record
three facts used throughout the Cheeger proofs, all resting on the
Courant--Fischer characterization:
\begin{equation}
\label{eq:rayleigh}
  \lambda_k(\Bn) \;=\; \min_{\dim S=k}\ \max_{0\neq x\in S}\ \eta(x),
  \qquad\text{in particular}\qquad
  \lambda_1(\Bn) = \min_{x\neq0}\eta(x),
\end{equation}
which follows from the substitution $y=D^{1/2}x$ turning $\eta$ into the
Rayleigh quotient of $\Bn$.
First, a \emph{compression bound}: for symmetric $A\in\R^{N\times N}$ and
$Q\in\R^{N\times k}$ with orthonormal columns,
$\lambda_i(Q^\top AQ)\geq\lambda_i(A)$ for $i\leq k$, hence
$\tr(Q^\top AQ)\geq\sum_{i\leq k}\lambda_i(A)$.
Indeed, by~\eqref{eq:rayleigh} $\lambda_i(Q^\top AQ)$ is a min-max over the
$i$-dimensional subspaces $S\subseteq\R^k$, and $Q$, being an isometry,
sends each of them to the $i$-dimensional subspace $QS\subseteq\R^N$ with
the same Rayleigh quotients; so the min-max for $\lambda_i(Q^\top AQ)$ runs
over a subfamily of the one for $\lambda_i(A)$, and its value can only be
larger.
Second, a \emph{comparison toolkit}.
We write $D^-_0:=\operatorname{diag}(d^-_v)$ for the degree matrix of the
companion graph $G_-$ of Definition~\ref{def:anisotropy} and
$\vol^-:=\sum_vd^-_v$ for its volume.
Then for all $v,u\in V$, $z\in\R^d$ and $x\in\R^{nd}$,
\begin{enumerate}
  \item[(i)] $d^-_v\leq\lambda_{\min}(D_v)\leq\lambda_{\max}(D_v)\leq d^+_v\leq\kappa_Ed^-_v$,
in particular $\kappa_D\leq\kappa_E$;
  \item[(ii)] $\norm{n_x}^2_{D^-_0}\leq x^\top Dx\leq\kappa_E\norm{n_x}^2_{D^-_0}$;
  \item[(iii)] $w^-_{vu}\norm z^2\leq\norm z^2_{M_{vu}}\leq w^+_{vu}\norm z^2$;
  \item[(iv)] $n_x^\top\Lm n_x\leq x^\top Bx$,
\end{enumerate}
where (i) follows from super/subadditivity of
$\lambda_{\min},\lambda_{\max}$ on $D_v=\sum_uM_{vu}$, (ii) and (iii) are
Rayleigh bounds, and (iv) follows from the reverse triangle inequality
applied to $\norm{x_v-W_{vu}^\top x_u}^2_{M_{vu}}\geq
w^-_{vu}(\norm{x_v}-\norm{x_u})^2$ together with Lemma~\ref{lem:energy_id}.
Item (iv) is the bridge from the block problem to the scalar companion
graph, and the source of most of the anisotropy constants.
Third, a \emph{columnwise dictionary}: for $X:V\to\R^{d\times d}$ and
$k=1,\dots,d$, let $x^k_v:=X_ve_k\in\R^d$ be the $k$-th column of $X_v$,
with $e_k$ the $k$-th standard basis vector of $\R^d$, and let
$x^k\in\R^{nd}$ collect these columns over all vertices:
\begin{equation}
  \tfrac12\sum_{v\neq u}\norm{X_v-W_{vu}^\top X_u}^2_{M_{vu},F}
  = \sum_{k\leq d}(x^k)^\top Bx^k,
\end{equation}
and if $X$ is orthogonal-valued, $\sum_k(x^k)^\top Dx^k=\vol(G)$; both
follow from $\norm A^2_{M,F}=\sum_k\norm{Ae_k}^2_M$ applied edgewise and
vertexwise, the second using $g_vg_v^\top=I$.

\medskip
\noindent\textbf{Proposition~\ref{prop:relaxation} (Relaxation).} (a)
$\lambda_1(\Bn)\leq\eta^*_G\leq\eta_G$; (b) $\lambda_1(\Bn)\leq\nu_G$; (c)
$\frac{1}{d\kappa_D}\sum_{k\leq d}\lambda_k(\Bn)\leq\nu_G$.

\begin{proof}
(a) By~\eqref{eq:rayleigh}, $\lambda_1(\Bn)=\min_{x\neq0}\eta(x)$, and both
frustration constants minimize $\eta$ over subsets of
$\R^{nd}\setminus\{0\}$; the second inequality holds since every $\mathbb
S^{d-1}$-valued assignment is admissible for $\eta^*_G$.

For (b) and (c), fix $g:V\to O(d)$, let $x^k$ be its columns, set
$y^k:=D^{1/2}x^k$ and $Y:=[y^1\cdots y^d]$.
Since $(x^k)^\top Bx^k=(y^k)^\top\Bn y^k$ and $(x^k)^\top
Dx^k=\norm{y^k}^2$, the columnwise dictionary gives $\nu(g)=\tr(Y^\top\Bn
Y)/\tr(Y^\top Y)$.

$Y$ has rank $d$: if $Yc=0$ then $\sum_kc_kx^k=0$, and evaluating at any
vertex gives $g_vc=0$, so $c=0$.
Let $Y=QR$ be a QR factorization, $R$ invertible, and set $S:=Q^\top\Bn
Q\succeq0$, $P:=RR^\top\succ0$; by cyclicity of the trace,
$\nu(g)=\tr(SP)/\tr(P)$.

(b) Diagonalizing $P=\sum_jp_jq_jq_j^\top$ with $p_j>0$ and $q_j$
orthonormal, the Rayleigh bound gives $q_j^\top Sq_j\geq\lambda_1(S)$, so
$\tr(SP)\geq\lambda_1(S)\tr(P)$; the compression bound gives
$\lambda_1(S)\geq\lambda_1(\Bn)$, and minimizing over $g$ proves (b).

(c) Since $R$ is square, $P=RR^\top$ and $R^\top R=Y^\top Y$ share
eigenvalues.
Moreover $Y^\top Y=\sum_vg_v^\top D_vg_v$, and conjugation by the
orthogonal $g_v$ preserves eigenvalues, so
$\lambda_{\min}(D_v)I\preceq g_v^\top D_vg_v\preceq\lambda_{\max}(D_v)I$;
summing over $v$, with $a:=\sum_v\lambda_{\min}(D_v)$ and
$b:=\sum_v\lambda_{\max}(D_v)$:
\begin{equation}
  aI\preceq Y^\top Y\preceq bI,
  \qquad\text{hence}\qquad
  \lambda_{\min}(P)\geq a,\quad \lambda_{\max}(P)\leq b .
\end{equation}
The trace of $P$ is the sum of its $d$ eigenvalues, each at most
$\lambda_{\max}(P)$, so $\tr(P)\leq d\lambda_{\max}(P)\leq db$; and
$\lambda_{\max}(D_v)\leq\kappa_D\lambda_{\min}(D_v)$ for every $v$ gives
$b\leq\kappa_Da$.
Together, $\lambda_{\min}(P)\geq a\geq b/\kappa_D\geq\tr(P)/(d\kappa_D)$.
Splitting $P=\lambda_{\min}(P)I+P'$ with $P'\succeq0$ and using
$\tr(SP')=\tr(S^{1/2}P'S^{1/2})\geq0$:
\begin{equation}
  \tr(SP) \geq \lambda_{\min}(P)\tr(S)
  \geq \lambda_{\min}(P)\sum_{k\leq d}\lambda_k(\Bn)
  \geq \frac{\tr(P)}{d\kappa_D}\sum_{k\leq d}\lambda_k(\Bn),
\end{equation}
using the compression bound for the middle step.
Divide by $\tr(P)$ and minimize over $g$.
\end{proof}

\medskip
\noindent\textbf{Proposition~\ref{prop:zero_sets} (Zero sets).}
$\lambda_1(\Bn)=0\iff\ker(B)\neq\{0\}\iff\eta_G=0\iff\eta^*_G=0$, and
$\nu_G=0$ iff the transport maps are path independent.

\begin{proof}
$\Bn=D^{-1/2}BD^{-1/2}$ with $D\succ0$, so $\Bn$ and $B$ have the same rank,
giving the first equivalence.
If $\ker(B)\neq\{0\}$, take $0\neq x\in\ker(B)$: by
Proposition~\ref{prop:kernel} the norm $\norm{x_v}$ is a constant $c>0$, so
$x/c$ maps $V$ into $\mathbb S^{d-1}$ with $\eta(x/c)=0$, giving $\eta_G=0$
and a fortiori $\eta^*_G=0$.
Conversely if $\eta^*_G=0$, witnessed by $x\neq0$ valued in $\mathbb
S^{d-1}\cup\{0\}$, then Lemma~\ref{lem:energy_id} and (H2) give
$x_v=W_{vu}^\top x_u$ on every edge; the support $S:=\{v:x_v\neq0\}$ is
nonempty, and no edge can leave it, since $v\in S$, $u\notin S$ would force
$x_v=W_{vu}^\top\cdot0=0$.
By (H3), $S=V$, so $x\in\ker(B)$ is nonzero.

For the last claim, the columnwise dictionary gives $\nu(g)=0$ iff every
column of $g$ lies in $\ker(B)$, that is, by Proposition~\ref{prop:kernel},
iff $g_u=W_{vu}g_v$ on every edge.
An orthogonal-valued $g$ with this property exists exactly when the
transport maps are path independent, which by
Proposition~\ref{prop:holonomy} is equivalent to $\dim\ker(B)=d$.
\end{proof}

\medskip
\noindent\textbf{Theorem~\ref{thm:partial} (Partial synchronization).}
$\lambda_1(\Bn)\leq\eta^*_G\leq\kappa_E^{3/2}\sqrt{10\lambda_1(\Bn)}$, and
Algorithm~\ref{alg:partial} attains the upper bound.

The lower bound is Proposition~\ref{prop:relaxation}(a).
For the upper bound we take an eigenvector for $\lambda_1(\Bn)$, turn it into a partial assignment by a random threshold, and control the
resulting quotient in four steps: a normalization inequality for a single $M$-norm
(Lemma~\ref{lem:weighted_rounding}), its consequence on one edge
(Lemma~\ref{lem:per_edge_rounding}), the sum over the graph in an $\ell^1$
form (Lemma~\ref{lem:global_l1}), and the passage back to the quotient
$\eta$ (Lemma~\ref{lem:l1_to_l2}).
Since $W_{vu}^\top$ is a Euclidean isometry, the threshold test is
unaffected by it, and $W_{vu}^\top(x^{>t}_u)=(W_{vu}^\top x_u)^{>t}$.

\medskip
\auxlemma{Weighted normalization inequality}\label{lem:weighted_rounding}
\emph{Let $M\succ0$ have condition number $\kappa(M):=\lambda_{\max}(M)/\lambda_{\min}(M)$.
For all Euclidean unit vectors $y,z\in\R^d$ and all $\alpha\geq1$:}
\begin{equation}
  \norm{z-y}_M + (\alpha^2-1)\norm{z}_M
  \;\leq\; \frac{\sqrt5}{2}\sqrt{\kappa(M)}\;\norm{\alpha z - y}_M\,(1+\alpha).
\end{equation}

\begin{proof}
The three inequalities below use, in turn, $\norm z=1$, the Euclidean
inequality
$\norm{z-y}+\alpha^2-1\leq\tfrac{\sqrt5}{2}\norm{\alpha z-y}(1+\alpha)$,
which is Proposition~A.1 of~\cite{bandeira2013cheegerinequalitygraphconnection},
and the lower Rayleigh bound
$\norm{\alpha z-y}\leq\lambda_{\min}(M)^{-1/2}\norm{\alpha z-y}_M$:
\begin{equation}
\begin{aligned}
  \norm{z-y}_M + (\alpha^2-1)\norm{z}_M
  &\leq \sqrt{\lambda_{\max}(M)}\,\bigl(\norm{z-y}+\alpha^2-1\bigr)\\
  &\leq \sqrt{\lambda_{\max}(M)}\;\frac{\sqrt5}{2}\norm{\alpha z-y}(1+\alpha)\\
  &\leq \frac{\sqrt5}{2}\sqrt{\kappa(M)}\;\norm{\alpha z-y}_M(1+\alpha).
\end{aligned}
\end{equation}
\end{proof}

The factor $\sqrt{\kappa(M)}$ cannot be dispensed with.
Take $M=\operatorname{diag}(1,\varepsilon)$, $\alpha=2$,
$z=(\tfrac12,\tfrac{\sqrt3}{2})$ and $y=(1,0)$.
Then $\alpha z-y=(0,\sqrt3)$, so $\norm{\alpha z-y}_M=\sqrt{3\varepsilon}$,
while $\norm{z-y}_M\geq\tfrac12$ and $\norm z_M\geq\tfrac12$ make the
left-hand side at least $2$.
Before normalization the two vectors differ only in the direction $M$ does
not see; normalizing both to unit length moves the discrepancy into the direction
$M$ does see, and this requires $\alpha\neq1$, so the effect comes from the
rescaling and not from the rotation.

\medskip
\auxlemma{Per-edge estimate}\label{lem:per_edge_rounding}
\emph{Let $x\in\R^{nd}$, with blocks $x_v\in\R^d$, satisfy
$\max_v\norm{x_v}=1$, and let $t$ be uniform on $[0,1]$.
Then for every ordered pair $(v,u)$:}
\begin{equation}
  \E_t\,\norm{x^{>t}_v - W_{vu}^\top x^{>t}_u}_{M_{vu}}
  \;\leq\; \frac{\sqrt5}{2}\sqrt{\kappa_E}\;
     \norm{x_v - W_{vu}^\top x_u}_{M_{vu}}
     \bigl(\norm{x_v}+\norm{x_u}\bigr).
\end{equation}

\begin{proof}
Write $M:=M_{vu}$, $a:=x_v$ and $b:=W_{vu}^\top x_u$, so that
$\norm b=\norm{x_u}$ and $b^{>t}=W_{vu}^\top x^{>t}_u$.
The claim is symmetric in $a$ and $b$, so we may assume $\norm b\leq\norm a$.
If $b=0$ it holds directly: both sides are multiples of $\norm a\norm a_M$,
the left with constant $1$ and the right with
$\tfrac{\sqrt5}{2}\sqrt{\kappa_E}\geq1$.
Otherwise $0<\norm b\leq\norm a\leq1$, and we set:
\begin{equation}
  z := \frac{a}{\norm a},\qquad y := \frac{b}{\norm b},\qquad
  \alpha := \frac{\norm a}{\norm b}\geq1 .
\end{equation}
As $t$ ranges over $[0,1]$ there are three regimes: with probability
$\norm b^2$ both vectors survive and the normalized difference is $z-y$; with
probability $\norm a^2-\norm b^2$ only $a$ survives, leaving $z$; otherwise
both vanish.
Hence:
\begin{equation}
  \E_t\norm{a^{>t}-b^{>t}}_M
  = \norm b^2\norm{z-y}_M + \bigl(\norm a^2-\norm b^2\bigr)\norm z_M
  = \norm b^2\Bigl[\norm{z-y}_M + (\alpha^2-1)\norm z_M\Bigr].
\end{equation}
Lemma~\ref{lem:weighted_rounding} bounds the bracket.
Since $\alpha z-y=(a-b)/\norm b$ and $1+\alpha=(\norm a+\norm b)/\norm b$,
the two factors of $\norm b$ cancel, and $\kappa(M_{vu})\leq\kappa_E$
concludes.
\end{proof}

\medskip
\auxlemma{Global $\ell^1$ estimate}\label{lem:global_l1}
\emph{Define:}
\begin{equation}
  \Phi(y) := \frac{\sum_{v\neq u}\sqrt{w^-_{vu}}\;
     \norm{y_v - W_{vu}^\top y_u}_{M_{vu}}}{\sum_v d^-_v\norm{y_v}} .
\end{equation}
\emph{If $\max_v\norm{x_v}=1$, there is $t\in[0,1)$ with $x^{>t}\neq0$ and
$\Phi(x^{>t})\leq\kappa_E\sqrt{10\,\eta(x)}$.}

\begin{proof}
Multiply Lemma~\ref{lem:per_edge_rounding} by $\sqrt{w^-_{vu}}$, sum over
ordered pairs, and apply Cauchy--Schwarz to the product of the two factors:
\begin{equation}
  (\mathrm A) := \norm{x_v - W_{vu}^\top x_u}_{M_{vu}}
  \qquad\text{and}\qquad
  (\mathrm B) := \sqrt{w^-_{vu}}\,\bigl(\norm{x_v}+\norm{x_u}\bigr).
\end{equation}
For the first factor, Lemma~\ref{lem:energy_id} and item (ii) of the
toolkit give:
\begin{equation}
  \sum_{v\neq u}(\mathrm A)^2 = 2\,x^\top Bx = 2\,\eta(x)\,x^\top Dx
  \leq 2\kappa_E\,\eta(x)\,\norm{n_x}^2_{D^-_0} .
\end{equation}
For the second, $(p+q)^2\leq2(p^2+q^2)$ and $w^-_{vu}=w^-_{uv}$ give:
\begin{equation}
  \sum_{v\neq u}(\mathrm B)^2
  \leq 2\sum_{v\neq u}w^-_{vu}\bigl(\norm{x_v}^2+\norm{x_u}^2\bigr)
  = 4\sum_v d^-_v\norm{x_v}^2 = 4\norm{n_x}^2_{D^-_0} .
\end{equation}
Multiplying the two square roots by the constant of
Lemma~\ref{lem:per_edge_rounding}:
\begin{equation}
\begin{aligned}
  \E_t[\text{numerator of }\Phi(x^{>t})]
  &\leq \frac{\sqrt5}{2}\sqrt{\kappa_E}\cdot\sqrt{2\kappa_E\,\eta(x)}\,
       \norm{n_x}_{D^-_0}\cdot 2\norm{n_x}_{D^-_0}\\
  &= \kappa_E\sqrt{10\,\eta(x)}\;\norm{n_x}^2_{D^-_0} .
\end{aligned}
\end{equation}
For the denominator, $\norm{x^{>t}_v}\in\{0,1\}$, since $x^{>t}_v$ is either
a unit vector or zero, and $\norm{x_v}^2\leq1$,
so $\E_t\norm{x^{>t}_v}=\Pr[t<\norm{x_v}^2]=\norm{x_v}^2$, and therefore the
expected denominator of $\Phi(x^{>t})$ is exactly
$\sum_vd^-_v\norm{x_v}^2=\norm{n_x}^2_{D^-_0}$.
With $c:=\kappa_E\sqrt{10\,\eta(x)}$ the two estimates give
$\E_t[\mathrm{num}(t)-c\,\mathrm{den}(t)]\leq0$, so some realization of $t$
has $\mathrm{num}(t)\leq c\,\mathrm{den}(t)$.
For every $t<1$ the vertex attaining $\norm{x_v}=1$ survives, so
$x^{>t}\neq0$ and the denominator is positive; dividing gives the claim.
\end{proof}

\medskip
\auxlemma{From $\ell^1$ to $\ell^2$}\label{lem:l1_to_l2}
\emph{For every nonzero $y:V\to\mathbb S^{d-1}\cup\{0\}$,
$\eta(y)\leq\sqrt{\kappa_E}\,\Phi(y)$.}

\begin{proof}
Each $y_v$ is a unit vector or zero, so $\norm{y_v}^2=\norm{y_v}$ and
$\norm{y_v-W_{vu}^\top y_u}\leq2$.
Writing the square as a product and bounding one factor by item (iii) of
the toolkit:
\begin{equation}
  \tfrac12\norm{y_v - W_{vu}^\top y_u}^2_{M_{vu}}
  \leq \tfrac12\sqrt{w^+_{vu}}\cdot 2\cdot
     \norm{y_v - W_{vu}^\top y_u}_{M_{vu}}
  \leq \sqrt{\kappa_E\,w^-_{vu}}\;\norm{y_v - W_{vu}^\top y_u}_{M_{vu}},
\end{equation}
using $w^+_{vu}\leq\kappa_Ew^-_{vu}$.
Summing over ordered pairs bounds the numerator of $\eta$ by
$\sqrt{\kappa_E}$ times that of $\Phi$.
For the denominator, item (i) of the toolkit gives:
\begin{equation}
  y^\top Dy = \sum_v y_v^\top D_vy_v
  \geq \sum_v\lambda_{\min}(D_v)\norm{y_v}^2
  \geq \sum_v d^-_v\norm{y_v},
\end{equation}
which is the denominator of $\Phi$.
\end{proof}

\begin{proof}[Proof of Theorem~\ref{thm:partial}]
Let $z$ be a unit eigenvector of $\Bn$ for $\lambda_1(\Bn)$ set
$x:=D^{-1/2}z$ and rescale it so that $\max_v\norm{x_v}=1$; since $\eta$ is
invariant under scaling, $\eta(x)=\lambda_1(\Bn)$.
By Lemma~\ref{lem:global_l1} there is $t$ with $x^{>t}\neq0$ and
$\Phi(x^{>t})\leq\kappa_E\sqrt{10\,\eta(x)}$.
The assignment $x^{>t}$ takes values in $\mathbb S^{d-1}\cup\{0\}$ and is
nonzero, so it is admissible in the definition of $\eta^*_G$, and
Lemma~\ref{lem:l1_to_l2} gives:
\begin{equation}
  \eta^*_G \leq \eta(x^{>t}) \leq \sqrt{\kappa_E}\,\Phi(x^{>t})
  \leq \kappa_E^{3/2}\sqrt{10\,\lambda_1(\Bn)} .
\end{equation}
Algorithm~\ref{alg:partial} enumerates all the breakpoints of
$t\mapsto x^{>t}$, hence finds a threshold at least as good.
\end{proof}

\medskip
\noindent\textbf{Theorem~\ref{thm:full} (Full synchronization).}
$\lambda_1(\Bn)\leq\eta_G\leq54\kappa_E^2\lambda_2(\Lmn)^{-1}\lambda_1(\Bn)$,
and Algorithm~\ref{alg:full} attains the upper bound.

The lower bound is Proposition~\ref{prop:relaxation}(a).
For the upper bound we again start from $x:=D^{-1/2}z$, with $z$ a unit
eigenvector of $\Bn$ for $\lambda_1(\Bn)$, so that $\eta(x)=\lambda_1(\Bn)$.
Its energy is already as small as it can be; the only obstruction to being
a full assignment is that the norms $\norm{x_v}$ are not all equal.
So we normalize it vertex by vertex, $\tilde x_v:=x_v/\norm{x_v}$, and show
that normalizing costs at most a bounded factor.
On an edge whose endpoints both have norm close to $1$, normalizing is
Lipschitz (Lemma~\ref{lem:normalize_two}).
The vertices of norm far from $1$ are few: the norm profile $n_x$ has small
energy on the companion graph, hence is nearly constant by its spectral gap
(Lemma~\ref{lem:norm_profile}), and a nearly constant profile leaves little
volume far from its mean (Lemma~\ref{lem:ill_balanced}).
On the edges touching those vertices we pay a crude bound, weighted by
their small volume, and adding the two contributions gives the constant
(Lemma~\ref{lem:cost_normalizing}).
Besides the notation fixed above, the proof uses the following.
\begin{center}
\small
\renewcommand{\arraystretch}{1.25}
\begin{tabular}{@{}l>{\raggedright\arraybackslash}p{0.7\textwidth}@{}}
\toprule
Symbol & Meaning \\
\midrule
$\eta_G$ & $\min\eta(x)$ over $x:V\to\mathbb S^{d-1}$ \\
$\tilde x$ & $\tilde x_v=x_v/\norm{x_v}$, and an arbitrary unit vector where $x_v=0$ \\
$\Lm$, $\Lmn$ & the Laplacian of the companion graph $G_-$, with $(\Lm)_{vv}=d^-_v$ and $(\Lm)_{vu}=-w^-_{vu}$, and $\Lmn=(D^-_0)^{-1/2}\Lm(D^-_0)^{-1/2}$ \\
$\vol^-$, $\vol^-(S)$ & $\sum_v d^-_v$, and $\sum_{v\in S}d^-_v$ for $S\subseteq V$ \\
$\Ib_x(\delta)$ & $\{v:|\,\norm{x_v}-1\,|\geq\delta\}$, the ill-balanced vertices \\
\bottomrule
\end{tabular}
\end{center}

\medskip
\auxlemma{Normalization of two vectors}\label{lem:normalize_two}
\emph{For nonzero $a,b\in\R^d$:}
\begin{equation}
  \Bigl\|\frac{a}{\norm a}-\frac{b}{\norm b}\Bigr\|
  \;\leq\; \frac{2\norm{a-b}}{\norm a+\norm b}
  \;\leq\; \frac{\norm{a-b}}{\min(\norm a,\norm b)} .
\end{equation}

\begin{proof}
Write $p:=\norm a$, $q:=\norm b$ and
$c:=\langle a/p,\,b/q\rangle\in[-1,1]$, so that:
\begin{equation}
  \Bigl\|\frac ap-\frac bq\Bigr\|^2 = 2-2c,
  \qquad
  \norm{a-b}^2 = p^2+q^2-2pqc .
\end{equation}
The first inequality, squared, reads $(2-2c)(p+q)^2\leq4(p^2+q^2-2pqc)$,
and indeed
$4(p^2+q^2-2pqc)-2(1-c)(p+q)^2=2(p-q)^2(1+c)\geq0$
since $c\geq-1$.
The second inequality follows from $p+q\geq2\min(p,q)$.
\end{proof}

\medskip
\auxlemma{Norm profile}\label{lem:norm_profile}
\emph{For every $x\neq0$ there is $\alpha_x\geq0$ such that
$u_x:=n_x-\alpha_x\mathbf 1$ satisfies $\mathbf 1^\top D^-_0u_x=0$ and the bound:}
\begin{equation}
  \norm{u_x}^2_{D^-_0} \;\leq\;
  \frac{\kappa_E\,\eta(x)}{\lambda_2(\Lmn)}\,\norm{n_x}^2_{D^-_0} .
\end{equation}

\begin{proof}
Let $\alpha_x:=(\mathbf 1^\top D^-_0n_x)/(\mathbf 1^\top D^-_0\mathbf 1)$,
the minimizer of $\alpha\mapsto\norm{n_x-\alpha\mathbf 1}_{D^-_0}$.
It is nonnegative since $n_x\geq0$ entrywise, and its stationarity
condition is exactly $\mathbf 1^\top D^-_0u_x=0$.
Since $\Lm\mathbf 1=0$, the variational characterization of the second
eigenvalue of the normalized Laplacian of $G_-$ reads:
\begin{equation}
  \lambda_2(\Lmn) = \min_{\mathbf 1^\top D^-_0u=0}
     \frac{u^\top\Lm u}{\norm u^2_{D^-_0}},
\end{equation}
which gives $u_x^\top\Lm u_x\geq\lambda_2(\Lmn)\norm{u_x}^2_{D^-_0}$.
On the other hand $\Lm\mathbf 1=0$ makes the quadratic form invariant under
adding multiples of $\mathbf 1$, so $u_x^\top\Lm u_x=n_x^\top\Lm n_x$, and
items (iv) and (ii) of the toolkit give:
\begin{equation}
  n_x^\top\Lm n_x \leq x^\top Bx = \eta(x)\,x^\top Dx
  \leq \kappa_E\,\eta(x)\,\norm{n_x}^2_{D^-_0} .
\end{equation}
Combining the two and dividing by $\lambda_2(\Lmn)>0$ gives the claim.
\end{proof}

\medskip
\auxlemma{Ill-balanced vertices are few}\label{lem:ill_balanced}
\emph{Normalize $x$ so that $\norm{n_x}^2_{D^-_0}=\vol^-$, and set
$\Ib_x(\delta):=\{v:|\,\norm{x_v}-1\,|\geq\delta\}$.
Then:}
\begin{equation}
  \frac{\vol^-\bigl(\Ib_x(\delta)\bigr)}{\vol^-}
  \;\leq\; \frac{4}{\delta^2}\cdot\frac{\kappa_E\,\eta(x)}{\lambda_2(\Lmn)} .
\end{equation}

\begin{proof}
Write $\varepsilon:=\kappa_E\eta(x)/\lambda_2(\Lmn)$, so that
Lemma~\ref{lem:norm_profile} reads $\norm{u_x}^2_{D^-_0}\leq\varepsilon\vol^-$.
First, $\alpha_x$ is close to $1$: since
$\norm{\mathbf 1}^2_{D^-_0}=\vol^-=\norm{n_x}^2_{D^-_0}$, the reverse
triangle inequality in $\norm\cdot_{D^-_0}$ gives:
\begin{equation}
  |1-\alpha_x|\sqrt{\vol^-}
  = \bigl|\,\norm{n_x}_{D^-_0} - \alpha_x\norm{\mathbf 1}_{D^-_0}\,\bigr|
  \leq \norm{u_x}_{D^-_0} \leq \sqrt{\varepsilon\vol^-},
\end{equation}
hence $|1-\alpha_x|\leq\sqrt\varepsilon$.
Second, ill-balanced vertices have a large fluctuation: for
$v\in\Ib_x(\delta)$:
\begin{equation}
  |(u_x)_v| = \bigl|\,\norm{x_v}-\alpha_x\,\bigr|
  \geq \bigl|\,\norm{x_v}-1\,\bigr| - |1-\alpha_x|
  \geq \delta-\sqrt\varepsilon .
\end{equation}
Suppose $\sqrt\varepsilon<\delta/2$, so that $\delta-\sqrt\varepsilon>\delta/2>0$.
Restricting the sum defining $\norm{u_x}^2_{D^-_0}$ to $\Ib_x(\delta)$, all
its terms being nonnegative:
\begin{equation}
  \varepsilon\vol^- \;\geq\; \norm{u_x}^2_{D^-_0}
  \;\geq\; \sum_{v\in\Ib_x(\delta)}d^-_v\,(u_x)_v^2
  \;\geq\; \Bigl(\frac{\delta}{2}\Bigr)^2\vol^-\bigl(\Ib_x(\delta)\bigr),
\end{equation}
which rearranges to the claim.
If instead $\sqrt\varepsilon\geq\delta/2$, then $4\varepsilon/\delta^2\geq1$,
while the left-hand side of the claim is at most $1$.
\end{proof}

\medskip
\auxlemma{Cost of normalizing}\label{lem:cost_normalizing}
\emph{For every $x\neq0$ and every $\gamma\in(0,1)$:}
\begin{equation}
  \eta(\tilde x) \;\leq\;
  \kappa_E^2\left[\frac{16}{\gamma^2\lambda_2(\Lmn)}
     + \frac{1}{(1-\gamma)^2}\right]\eta(x),
\end{equation}
\emph{and in particular, at $\gamma=2/3$,
$\eta(\tilde x)\leq54\,\kappa_E^2\,\lambda_2(\Lmn)^{-1}\eta(x)$.}

\begin{proof}
Both $\eta(x)$ and $\tilde x$ are invariant under $x\mapsto cx$ for $c>0$,
and $n_{cx}=c\,n_x$; so replacing $x$ by $cx$ with
$c:=\sqrt{\vol^-}/\norm{n_x}_{D^-_0}$ we may assume
$\norm{n_x}^2_{D^-_0}=\vol^-$; put $\Ib:=\Ib_x(\gamma)$.
By Lemma~\ref{lem:energy_id} the numerator of $\eta(\tilde x)$ is
$\frac12\sum_{v\neq u}\norm{\tilde x_v-W_{vu}^\top\tilde x_u}^2_{M_{vu}}$,
and we split it into the ordered pairs meeting $\Ib$ and those avoiding it.

\emph{Denominator.}
Since $\norm{\tilde x_v}=1$ for every $v$, item (i) of the toolkit gives:
\begin{equation}
  \tilde x^\top D\tilde x = \sum_v\tilde x_v^\top D_v\tilde x_v
  \geq \sum_v\lambda_{\min}(D_v) \geq \sum_v d^-_v = \vol^- .
\end{equation}

\emph{Pairs meeting $\Ib$.}
Here we use only $\norm{\tilde x_v-W_{vu}^\top\tilde x_u}\leq2$, so by item
(iii) of the toolkit each summand is at most $4w^+_{vu}$.
These pairs lie in $\{(v,u):v\in\Ib\}\cup\{(v,u):u\in\Ib\}$, and since
$w^+_{vu}=w^+_{uv}$ the two families contribute equally, so, using
$d^+_v\leq\kappa_Ed^-_v$ from item (i) and then
Lemma~\ref{lem:ill_balanced} with $\delta=\gamma$:
\begin{equation}
  \frac12\sum_{(v,u)\text{ meeting }\Ib}
     \norm{\tilde x_v-W_{vu}^\top\tilde x_u}^2_{M_{vu}}
  \;\leq\; 4\sum_{v\in\Ib}d^+_v
  \;\leq\; 4\kappa_E\vol^-(\Ib)
  \;\leq\; \frac{16\kappa_E^2}{\gamma^2}\cdot
     \frac{\eta(x)}{\lambda_2(\Lmn)}\,\vol^- .
\end{equation}

\emph{Pairs avoiding $\Ib$.}
For $v\notin\Ib$ we have $\norm{x_v}>1-\gamma>0$, so $x_v\neq0$; the
arbitrary choice in the definition of $\tilde x$ therefore occurs only
inside $\Ib$, which the previous case has absorbed.
Set $a:=x_v$ and $b:=W_{vu}^\top x_u$, both of norm greater than $1-\gamma$.
Since $W_{vu}^\top$ is an isometry, $b/\norm b=W_{vu}^\top\tilde x_u$, and
Lemma~\ref{lem:normalize_two} gives:
\begin{equation}
  \norm{\tilde x_v-W_{vu}^\top\tilde x_u}
  \leq \frac{\norm{x_v-W_{vu}^\top x_u}}{1-\gamma} .
\end{equation}
Passing to the edge metric on both sides by item (iii) of the toolkit costs
$w^+_{vu}/w^-_{vu}\leq\kappa_E$, and summing over all ordered pairs, every
term being nonnegative, then using item (ii):
\begin{equation}
\begin{aligned}
  \frac12\sum_{\text{avoiding }\Ib}
    \norm{\tilde x_v-W_{vu}^\top\tilde x_u}^2_{M_{vu}}
  &\leq \frac{\kappa_E}{(1-\gamma)^2}\cdot\frac12\sum_{v\neq u}
    \norm{x_v-W_{vu}^\top x_u}^2_{M_{vu}}\\
  &= \frac{\kappa_E}{(1-\gamma)^2}\,\eta(x)\,x^\top Dx
  \leq \frac{\kappa_E^2}{(1-\gamma)^2}\,\eta(x)\,\vol^- .
\end{aligned}
\end{equation}

\emph{Conclusion.}
Adding the two contributions and dividing by the denominator:
\begin{equation}
\begin{aligned}
  \eta(\tilde x)
  &\leq \frac{1}{\vol^-}\left[\frac{16\kappa_E^2}{\gamma^2}\cdot
     \frac{\eta(x)}{\lambda_2(\Lmn)}\,\vol^-
     + \frac{\kappa_E^2}{(1-\gamma)^2}\,\eta(x)\,\vol^-\right]\\
  &= \kappa_E^2\left[\frac{16}{\gamma^2\lambda_2(\Lmn)}
     + \frac{1}{(1-\gamma)^2}\right]\eta(x).
\end{aligned}
\end{equation}
For the constant, $\lambda_2(\Lmn)\leq2$ gives:
\begin{equation}
  \frac{16}{\gamma^2\lambda_2(\Lmn)} + \frac{1}{(1-\gamma)^2}
  = \frac{1}{\lambda_2(\Lmn)}\left[\frac{16}{\gamma^2}
     + \frac{\lambda_2(\Lmn)}{(1-\gamma)^2}\right]
  \leq \frac{1}{\lambda_2(\Lmn)}\left[\frac{16}{\gamma^2}
     + \frac{2}{(1-\gamma)^2}\right].
\end{equation}
The derivative of the bracket, $-32/\gamma^3+4/(1-\gamma)^3$, vanishes when
$\gamma^3=8(1-\gamma)^3$, that is $\gamma=2/3$, where the bracket equals
$36+18=54$.
\end{proof}

\begin{proof}[Proof of Theorem~\ref{thm:full}]
With $x:=D^{-1/2}z$ as above, $\eta(x)=\lambda_1(\Bn)$.
The normalized assignment $\tilde x$ maps $V$ into $\mathbb S^{d-1}$, so it
is admissible in the definition of $\eta_G$, and
Lemma~\ref{lem:cost_normalizing} gives:
\begin{equation}
  \eta_G \leq \eta(\tilde x)
  \leq 54\,\kappa_E^2\,\lambda_2(\Lmn)^{-1}\,\lambda_1(\Bn).
\end{equation}
This $\tilde x$ is exactly the output of Algorithm~\ref{alg:full}.
\end{proof}

\medskip
\noindent\textbf{Theorem~\ref{thm:od} ($O(d)$ synchronization).}
\begin{equation}
  \frac{1}{d\,\kappa_D}\sum_{k\leq d}\lambda_k(\Bn)
  \;\leq\; \nu_G \;\leq\;
  \frac{2^{11}\kappa_E^3d^3}{\lambda_2(\Lmn)}\sum_{k\leq d}\lambda_k(\Bn),
\end{equation}
and Algorithm~\ref{alg:od} attains the upper bound.

The lower bound is Proposition~\ref{prop:relaxation}(c).
For the upper bound we take the eigenvectors of the $d$ smallest eigenvalues,
one per column, and at each vertex we replace the $d\times d$ matrix they form by its nearest
orthogonal matrix, its polar factor.
This polar projection is stable where the matrix is already close to orthogonal,
that is, where its $d$ columns are nearly orthonormal; we show that this
fails only on a set of vertices of small volume, and pay a crude bound
there.

The lengths of the columns are controlled as in the full synchronization
proof, by the spectral gap of the companion graph.
Their inner products are not controlled directly, and we recover them from
lengths through the polarization identity: for
$u:=\tfrac{1}{\sqrt2}(x^k+x^m)$, expanding
$\norm{u_v}^2=\tfrac12\norm{x^k_v+x^m_v}^2$ gives:
\begin{equation}
  \langle x^k_v,\,x^m_v\rangle
  \;=\; \norm{u_v}^2 - \frac{\norm{x^k_v}^2+\norm{x^m_v}^2}{2},
\end{equation}
so that if the three lengths at $v$ lie in $[1-\delta,1+\delta]$, then
$|\langle x^k_v,x^m_v\rangle|\leq(1+\delta)^2-(1-\delta)^2=4\delta$.
This requires $u$ to carry the same normalization as $x^k$ and $x^m$, so
that the same length estimates apply to all three, and this is why the
argument runs in the isotropized normalization $\Db$ of
Algorithm~\ref{alg:od}.
Since $\Db$ is a scalar multiple of the identity in each block:
\begin{equation}
\label{eq:barD_exact}
  x^\top\Db x = \sum_v\bar d_v\norm{x_v}^2 = \norm{n_x}^2_{\bar D_0}
  \qquad\text{for every }x\in\R^{nd},
\end{equation}
with equality, and for $\Db$-orthogonal columns this makes the
normalization of $u$ exact (Lemma~\ref{lem:exact_polarization}).
If the algorithm used $D$ directly in place of $\Db$, the identity
\eqref{eq:barD_exact} would hold only up to item (ii) of the toolkit,
$\norm{n_x}^2_{D^-_0}\leq x^\top Dx\leq\kappa_E\norm{n_x}^2_{D^-_0}$.
Then $u$ would come out with squared norm $(1+\tau)\vol^-$, where $|\tau|$
is bounded by $\kappa_E-1$ and in general by nothing better, and the
polarization bound would become $4\delta+|\tau|(1+\delta)^2$.
The additional term $|\tau|(1+\delta)^2$ does not shrink with $\delta$;
since the inner products must be of order $1/d$, it could be controlled
only by assuming $\kappa_E-1$ of order $1/d$, that is, near isotropy.

The steps are then the following: convert between $\Db$ and $D$
(Lemma~\ref{lem:conversion}), control lengths in the isotropized
normalization (Lemma~\ref{lem:profile_isotropized}), make polarization exact
(Lemma~\ref{lem:exact_polarization}), bound the volume of the vertices where
the columns are not nearly orthonormal (Lemma~\ref{lem:bad_volume}), show
that the matrix is well conditioned everywhere else
(Lemma~\ref{lem:sigma_min}), show that the polar step is stable there
(Lemma~\ref{lem:polar_stability}), and assemble
(Lemma~\ref{lem:main_od}).
Besides the notation of the two previous cases, the proof uses the following.
\begin{center}
\small
\renewcommand{\arraystretch}{1.25}
\begin{tabular}{@{}l>{\raggedright\arraybackslash}p{0.7\textwidth}@{}}
\toprule
Symbol & Meaning \\
\midrule
$\norm{A}_{M,F}$ & $\tr(A^\top MA)^{1/2}$ for $A\in\R^{d\times d}$ and $M\succeq0$ \\
$\nu(g)$, $\nu_G$ & $\tfrac12\sum_{v\neq u}\norm{g_v-W_{vu}^\top g_u}^2_{M_{vu},F}/\vol(G)$, with $\vol(G)=\sum_v\tr(D_v)$, and its minimum over $g:V\to O(d)$ \\
$\kappa_D$ & $\max_v\lambda_{\max}(D_v)/\lambda_{\min}(D_v)$ \\
$\bar d_v$, $\Db$, $\bar D_0$ & $\tfrac1d\tr(D_v)$, $\operatorname{blockdiag}(\bar d_vI_d)$, and $\operatorname{diag}(\bar d_v)$ \\
$\Bb$, $\bar\eta(x)$ & $\Db^{-1/2}B\Db^{-1/2}$, and $x^\top Bx/x^\top\Db x$ \\
$\volb$, $\volb(S)$ & $\sum_v\bar d_v$, and $\sum_{v\in S}\bar d_v$ \\
$\bar\lambda_2$ & the second smallest eigenvalue of $\bar D_0^{-1/2}\Lm\bar D_0^{-1/2}$ \\
$\mathrm{Pol}(X)$ & $PQ^\top$, for an SVD $X=P\Sigma Q^\top$ of a nonsingular $X$ \\
$X_v$ & $[\,x^1_v\ \cdots\ x^d_v\,]\in\R^{d\times d}$ \\
\bottomrule
\end{tabular}
\end{center}

\medskip
\auxlemma{Conversion}\label{lem:conversion}
\emph{Let $\bar\lambda_2$ be the second smallest eigenvalue of
$\bar D_0^{-1/2}\Lm\bar D_0^{-1/2}$, the analogue of $\lambda_2(\Lmn)$ with
$\bar D_0$ in place of $D^-_0$.
Then}
\begin{enumerate}
  \item[(i)] \emph{$d^-_v\leq\lambda_{\min}(D_v)\leq\bar d_v\leq
  \lambda_{\max}(D_v)\leq d^+_v\leq\kappa_Ed^-_v$, hence
  $D^-_0\preceq\bar D_0\preceq\kappa_ED^-_0$ and
  $\kappa_D^{-1}\Db\preceq D\preceq\kappa_D\Db$;}
  \item[(ii)] \emph{$\bar\lambda_2\geq\lambda_2(\Lmn)/\kappa_E$;}
  \item[(iii)] \emph{$\bar\lambda_2\leq\frac{n}{n-1}\leq2$ for $n\geq2$;}
  \item[(iv)] \emph{$\lambda_k(\Bb)\leq\kappa_D\lambda_k(\Bn)$ for every
  $k$.}
\end{enumerate}

\begin{proof}
(i) Since $\bar d_v=\tfrac1d\tr(D_v)$ is the mean of the eigenvalues of
$D_v$, it lies between the extreme ones, and the outer bounds are item (i)
of the toolkit.
For the block form:
\begin{equation}
  D_v \preceq \lambda_{\max}(D_v)\,I
  \preceq \kappa_D\,\lambda_{\min}(D_v)\,I
  \preceq \kappa_D\,\bar d_v\,I,
\end{equation}
and symmetrically $D_v\succeq\kappa_D^{-1}\bar d_vI$.

(ii) For a positive diagonal $\Lambda$ and $\Lm\mathbf 1=0$, the second smallest
eigenvalue of $\Lambda^{-1/2}\Lm\Lambda^{-1/2}$ can be written as:
\begin{equation}
  \min_{u\text{ nonconstant}}\;
     \frac{u^\top\Lm u}{\min_\alpha\norm{u-\alpha\mathbf 1}^2_\Lambda},
\end{equation}
since subtracting from $u$ its $\Lambda$-projection on $\mathbf 1$ changes neither
$u^\top\Lm u$ nor the denominator, and makes $u$ orthogonal to $\mathbf 1$
in the $\Lambda$-inner product.
In this form the set over which one minimizes does not depend on $\Lambda$.
Let $\alpha^\ast$ minimize $\norm{u-\alpha\mathbf 1}_{D^-_0}$; then by (i):
\begin{equation}
  \min_\alpha\norm{u-\alpha\mathbf 1}^2_{\bar D_0}
  \leq \norm{u-\alpha^\ast\mathbf 1}^2_{\bar D_0}
  \leq \kappa_E\,\norm{u-\alpha^\ast\mathbf 1}^2_{D^-_0}
  = \kappa_E\min_\alpha\norm{u-\alpha\mathbf 1}^2_{D^-_0},
\end{equation}
so each quotient for $\bar D_0$ is at least $\kappa_E^{-1}$ times the
corresponding one for $D^-_0$, and minimizing over $u$ gives the claim.

(iii) The trace of $\bar D_0^{-1/2}\Lm\bar D_0^{-1/2}$ is
$\sum_vd^-_v/\bar d_v\leq n$ by (i).
Its smallest eigenvalue is $0$, so the second is at most the mean of the
remaining $n-1$, that is at most $n/(n-1)$.

(iv) From $D\preceq\kappa_D\Db$ and $B\succeq0$:
\begin{equation}
  \frac{x^\top Bx}{x^\top\Db x} \leq \kappa_D\,\frac{x^\top Bx}{x^\top Dx}
  \qquad\text{for every }x\neq0 .
\end{equation}
Both $\lambda_k(\Bb)$ and $\lambda_k(\Bn)$ are the min-max of these
quotients over the same collection of $k$-dimensional subspaces,
by~\eqref{eq:rayleigh} and its analogue for $\Db$, and a pointwise
inequality is preserved by taking maxima and then minima.
\end{proof}

\medskip
\auxlemma{Lengths in the isotropized normalization}\label{lem:profile_isotropized}
\emph{For $x\neq0$ there is $\alpha_x\geq0$ such that
$u_x:=n_x-\alpha_x\mathbf 1$ satisfies $\mathbf 1^\top\bar D_0u_x=0$ and the bound:}
\begin{equation}
  \norm{u_x}^2_{\bar D_0} \leq \frac{\bar\eta(x)}{\bar\lambda_2}\,
  \norm{n_x}^2_{\bar D_0} .
\end{equation}
\emph{Consequently, if $\norm{n_x}^2_{\bar D_0}=\volb$, then:}
\begin{equation}
  \frac{\volb\bigl(\Ib_x(\delta)\bigr)}{\volb}
  \;\leq\; \frac{4}{\delta^2}\cdot\frac{\bar\eta(x)}{\bar\lambda_2} .
\end{equation}

\begin{proof}
The proofs of Lemmas~\ref{lem:norm_profile} and~\ref{lem:ill_balanced}
carry over with $\bar D_0$ in place of $D^-_0$, $\volb$ in place of $\vol^-$
and $\alpha_x:=(\mathbf 1^\top\bar D_0n_x)/(\mathbf 1^\top\bar D_0\mathbf 1)$.
The only change is in the chain of Lemma~\ref{lem:norm_profile}, which now
reads:
\begin{equation}
  \bar\lambda_2\,\norm{u_x}^2_{\bar D_0} \leq u_x^\top\Lm u_x
  = n_x^\top\Lm n_x \leq x^\top Bx = \bar\eta(x)\,x^\top\Db x
  = \bar\eta(x)\,\norm{n_x}^2_{\bar D_0},
\end{equation}
the last step being~\eqref{eq:barD_exact}.
\end{proof}

In the remainder we fix $x^1,\dots,x^d$ pairwise $\Db$-orthogonal with
$\norm{n_{x^k}}^2_{\bar D_0}=\volb$ for each $k$, and we set
$X_v:=[\,x^1_v\ \cdots\ x^d_v\,]\in\R^{d\times d}$.

\medskip
\auxlemma{Exact polarization}\label{lem:exact_polarization}
\emph{For $k\neq m$ put $u:=\tfrac{1}{\sqrt2}(x^k+x^m)$.
Then $\norm{n_u}^2_{\bar D_0}=\volb$ exactly, and
$\bar\eta(u)\leq\bar\eta(x^k)+\bar\eta(x^m)$.}

\begin{proof}
By~\eqref{eq:barD_exact} and $\Db$-orthogonality:
\begin{equation}
  \norm{n_u}^2_{\bar D_0} = u^\top\Db u
  = \tfrac12\bigl[(x^k)^\top\Db x^k + (x^m)^\top\Db x^m\bigr]
    + (x^k)^\top\Db x^m
  = \tfrac12\,(2\volb) + 0 = \volb .
\end{equation}
Expanding both quadratic forms:
\begin{equation}
  u^\top Bu = \tfrac12(x^k+x^m)^\top B(x^k+x^m)
  = (x^k)^\top Bx^k + (x^m)^\top Bx^m - \tfrac12(x^k-x^m)^\top B(x^k-x^m)
  \leq (x^k)^\top Bx^k + (x^m)^\top Bx^m ,
\end{equation}
the last step since $B\succeq0$, and all three quotients $\bar\eta$ have
the same denominator $\volb$.
\end{proof}

\medskip
\auxlemma{Bad volume}\label{lem:bad_volume}
\emph{Put $\Ib_{xy}(\delta):=\{v:|\langle x_v,y_v\rangle|>\delta\}$ and define:}
\begin{equation}
  \mathcal G := V\setminus\Bigl[\,\bigcup_k\Ib_{x^k}\bigl(\tfrac{1}{8d}\bigr)
     \;\cup\; \bigcup_{k\neq m}\Ib_{x^kx^m}\bigl(\tfrac{1}{2d}\bigr)\Bigr].
\end{equation}
\emph{Then:}
\begin{equation}
  \frac{\volb(\mathcal G^c)}{\volb}
  \;\leq\; \frac{2^8d^3}{\bar\lambda_2}\sum_{k\leq d}\bar\eta(x^k) .
\end{equation}

\begin{proof}
Fix $k\neq m$, put $u^{km}:=\tfrac{1}{\sqrt2}(x^k+x^m)$, and suppose $v$
lies in none of $\Ib_{x^k}(\tfrac1{8d})$, $\Ib_{x^m}(\tfrac1{8d})$,
$\Ib_{u^{km}}(\tfrac1{8d})$.
The three norms at $v$ then lie in $[1-\tfrac1{8d},1+\tfrac1{8d}]$, and
since $u^{km}$ carries the same normalization by
Lemma~\ref{lem:exact_polarization}, the polarization identity gives:
\begin{equation}
  |\langle x^k_v,x^m_v\rangle|
  = \Bigl|\,\norm{u^{km}_v}^2 - \tfrac{\norm{x^k_v}^2+\norm{x^m_v}^2}{2}\,\Bigr|
  \leq \bigl(1+\tfrac1{8d}\bigr)^2 - \bigl(1-\tfrac1{8d}\bigr)^2
  = \tfrac{1}{2d},
\end{equation}
so $v\notin\Ib_{x^kx^m}(\tfrac1{2d})$.
Hence $\mathcal G^c$ is covered by the $d$ sets $\Ib_{x^k}(\tfrac1{8d})$
together with the $\binom d2$ sets $\Ib_{u^{km}}(\tfrac1{8d})$.
Lemma~\ref{lem:profile_isotropized} at $\delta=\tfrac1{8d}$, where
$4/\delta^2=2^8d^2$, applies to each of them, and with
$\bar\eta(u^{km})\leq\bar\eta(x^k)+\bar\eta(x^m)$:
\begin{equation}
  \frac{\volb(\mathcal G^c)}{\volb}
  \leq \frac{2^8d^2}{\bar\lambda_2}\Bigl[\sum_k\bar\eta(x^k)
     + \sum_{\{k,m\}}\bigl(\bar\eta(x^k)+\bar\eta(x^m)\bigr)\Bigr].
\end{equation}
Each index $k$ occurs in exactly $d-1$ unordered pairs, so the bracket is
$[1+(d-1)]\sum_k\bar\eta(x^k)=d\sum_k\bar\eta(x^k)$, which gives $2^8d^3$.
\end{proof}

\medskip
\auxlemma{Smallest singular value on $\mathcal G$}\label{lem:sigma_min}
\emph{For $v\in\mathcal G$, $\sigma_{\min}(X_v)\geq1/\sqrt2$; in particular
$X_v$ is nonsingular.}

\begin{proof}
Let $\Gamma:=X_v^\top X_v$, whose entries are
$\Gamma_{km}=\langle x^k_v,x^m_v\rangle$.
From an SVD $X_v=P\Sigma Q^\top$ we get $\Gamma=Q\Sigma^2Q^\top$, so
$\lambda_{\min}(\Gamma)=\sigma_{\min}(X_v)^2$.
On the diagonal, $v\notin\Ib_{x^k}(\tfrac1{8d})$ gives
$\Gamma_{kk}=\norm{x^k_v}^2\geq(1-\tfrac1{8d})^2\geq1-\tfrac1{4d}$; off the
diagonal, $v\notin\Ib_{x^kx^m}(\tfrac1{2d})$ gives
$|\Gamma_{km}|\leq\tfrac1{2d}$.

To bound $\lambda_{\min}(\Gamma)$ from below we use the Gershgorin circle
theorem: every eigenvalue $\lambda$ of $\Gamma$ satisfies
$|\lambda-\Gamma_{kk}|\leq\sum_{m\neq k}|\Gamma_{km}|$ for some $k$.
We recall the short argument: let $\Gamma c=\lambda c$ with $c\neq0$, and
choose $k$ with $|c_k|=\max_m|c_m|>0$.
The $k$-th row gives $(\lambda-\Gamma_{kk})c_k=\sum_{m\neq k}\Gamma_{km}c_m$,
whence:
\begin{equation}
  |\lambda-\Gamma_{kk}|\,|c_k|
  \leq \sum_{m\neq k}|\Gamma_{km}|\,|c_m|
  \leq |c_k|\sum_{m\neq k}|\Gamma_{km}|,
\end{equation}
so $\lambda\geq\Gamma_{kk}-\sum_{m\neq k}|\Gamma_{km}|$.
Each row has off-diagonal sum at most $(d-1)\tfrac1{2d}$, therefore:
\begin{equation}
  \lambda_{\min}(\Gamma) \;\geq\; 1-\frac1{4d}-\frac{d-1}{2d}
  \;=\; \frac12+\frac1{4d} \;\geq\; \frac12 .
\end{equation}
\end{proof}

\medskip
\auxlemma{Stability of the polar step}\label{lem:polar_stability}
\emph{For $v,u\in\mathcal G$ and $g_v:=\mathrm{Pol}(X_v)$:}
\begin{equation}
  \norm{g_v-W_{vu}^\top g_u}^2_{M_{vu},F}
  \;\leq\; 2\kappa_E\,\norm{X_v-W_{vu}^\top X_u}^2_{M_{vu},F} .
\end{equation}

\begin{proof}
First, the polar factor is equivariant: if $X=P\Sigma Q^\top$ is an SVD and
$W\in O(d)$, then $W^\top X=(W^\top P)\Sigma Q^\top$ is again one, so
$\mathrm{Pol}(W^\top X)=W^\top\mathrm{Pol}(X)$ and
$\sigma_{\min}(W^\top X)=\sigma_{\min}(X)$.
Put $A:=X_v$ and $A':=W_{vu}^\top X_u$.
Then $W_{vu}^\top g_u=\mathrm{Pol}(A')$, and both $\sigma_{\min}(A)$ and
$\sigma_{\min}(A')=\sigma_{\min}(X_u)$ are at least $1/\sqrt2$ by
Lemma~\ref{lem:sigma_min}.
The perturbation bound of~\cite{Li_perturbation_bounds}:
\begin{equation}
  \norm{\mathrm{Pol}(A)-\mathrm{Pol}(A')}_F
  \leq \frac{2}{\sigma_{\min}(A)+\sigma_{\min}(A')}\,\norm{A-A'}_F ,
\end{equation}
therefore has prefactor at most $2/\sqrt2=\sqrt2$, and squaring:
\begin{equation}
  \norm{g_v-W_{vu}^\top g_u}^2_F \leq 2\,\norm{X_v-W_{vu}^\top X_u}^2_F .
\end{equation}
Finally, item (iii) of the toolkit applied to each column gives
$w^-_{vu}\norm C^2_F\leq\norm C^2_{M_{vu},F}\leq w^+_{vu}\norm C^2_F$ for
every $C\in\R^{d\times d}$; converting the left-hand side with $w^+_{vu}$
and the right-hand side with $1/w^-_{vu}$ costs
$w^+_{vu}/w^-_{vu}\leq\kappa_E$.
\end{proof}

\medskip
\auxlemma{Main $O(d)$ estimate}\label{lem:main_od}
\emph{Let $g_v:=\mathrm{Pol}(X_v)$, and $g_v:=I_d$ where $X_v$ is singular.
Then:}
\begin{equation}
  \nu(g) \;\leq\; \frac{2^{11}\kappa_Ed^3}{\bar\lambda_2}
     \sum_{k\leq d}\bar\eta(x^k) .
\end{equation}

\begin{proof}
By Lemma~\ref{lem:sigma_min} the fallback $g_v=I_d$ occurs only on
$\mathcal G^c$, which the second case below covers.
We split the ordered pairs in the numerator of $\nu(g)$ accordingly.

\emph{Both endpoints in $\mathcal G$.}
By Lemma~\ref{lem:polar_stability}, extending the sum to all ordered pairs
since every term is nonnegative, and then applying the columnwise
dictionary to the matrix-valued map $X$:
\begin{equation}
  \frac12\sum_{v,u\in\mathcal G}
     \norm{g_v-W_{vu}^\top g_u}^2_{M_{vu},F}
  \leq 2\kappa_E\sum_k(x^k)^\top Bx^k
  = 2\kappa_E\,\volb\sum_k\bar\eta(x^k),
\end{equation}
using $(x^k)^\top Bx^k=\bar\eta(x^k)\,(x^k)^\top\Db x^k=\bar\eta(x^k)\volb$.

\emph{At least one endpoint in $\mathcal G^c$.}
Here we use only $g_v,g_u\in O(d)$: then $\norm{g_v}_F=\sqrt d$ and
$\norm{g_v-W_{vu}^\top g_u}_F^2\leq(\sqrt d+\sqrt d)^2=4d$, so by item
(iii) of the toolkit each summand is at most $4d\,w^+_{vu}$.
The two symmetric families of pairs contribute equally, and
$d^+_v\leq\kappa_E\bar d_v$ by Lemma~\ref{lem:conversion}(i), so with
Lemma~\ref{lem:bad_volume}:
\begin{equation}
  \frac12\sum_{(v,u)\text{ meeting }\mathcal G^c}
     \norm{g_v-W_{vu}^\top g_u}^2_{M_{vu},F}
  \;\leq\; 4d\sum_{v\in\mathcal G^c}d^+_v
  \;\leq\; 4d\,\kappa_E\,\volb(\mathcal G^c)
  \;\leq\; \frac{2^{10}\kappa_Ed^4}{\bar\lambda_2}\,\volb\sum_k\bar\eta(x^k).
\end{equation}

\emph{Conclusion.}
The denominator of $\nu$ is $\vol(G)=\sum_v\tr(D_v)=d\,\volb$, so:
\begin{equation}
  \nu(g) \;\leq\; \Bigl(\frac{2\kappa_E}{d}
     + \frac{2^{10}\kappa_Ed^3}{\bar\lambda_2}\Bigr)\sum_k\bar\eta(x^k).
\end{equation}
Since $\bar\lambda_2\leq2$ by Lemma~\ref{lem:conversion}(iii),
$2^{10}\kappa_Ed^3/\bar\lambda_2\geq2^9\kappa_Ed^3\geq2\kappa_E/d$, so the
first term is dominated by the second and the sum is at most twice the
second.
\end{proof}

\begin{proof}[Proof of Theorem~\ref{thm:od}]
Run Algorithm~\ref{alg:od}.
Since the $z^k$ are orthonormal, the vectors
$x^k=\sqrt{\volb}\,\Db^{-1/2}z^k$ satisfy
$(x^k)^\top\Db x^l=\volb\,(z^k)^\top z^l=\volb\,\delta_{kl}$,
so the family is $\Db$-orthogonal, $\norm{n_{x^k}}^2_{\bar D_0}=\volb$
by~\eqref{eq:barD_exact}, and
$\bar\eta(x^k)=(z^k)^\top\Bb z^k=\lambda_k(\Bb)$.
The output $g$ is one particular group potential, so $\nu_G\leq\nu(g)$, and
Lemma~\ref{lem:main_od} gives:
\begin{equation}
  \nu_G \;\leq\; \frac{2^{11}\kappa_Ed^3}{\bar\lambda_2}
     \sum_{k\leq d}\lambda_k(\Bb) .
\end{equation}
It remains to return to $\Bn$ and $\lambda_2(\Lmn)$.
By Lemma~\ref{lem:conversion}(ii),
$\bar\lambda_2^{-1}\leq\kappa_E\,\lambda_2(\Lmn)^{-1}$, and by
Lemma~\ref{lem:conversion}(iv) and item (i) of the toolkit,
$\lambda_k(\Bb)\leq\kappa_D\lambda_k(\Bn)\leq\kappa_E\lambda_k(\Bn)$.
These two factors, with the one already present, give:
\begin{equation}
  \nu_G \;\leq\; \frac{2^{11}\kappa_E^3d^3}{\lambda_2(\Lmn)}
     \sum_{k\leq d}\lambda_k(\Bn) .
\end{equation}
\end{proof}

\medskip
\noindent\textbf{Path determinacy (Section~\ref{sec:random_walk}).} If the
walker of Definition~\ref{def:random_walk} starts at $(v_0,\hat x_0)$ and
follows $\gamma:v_0\to\cdots\to v_t$, its state at time $t$ is
$(v_t,T_\gamma\hat x_0)$, with $T_\gamma$ the transport map of
Proposition~\ref{prop:holonomy}.

\begin{proof}
Each step applies $W_{v_iv_{i+1}}$ to the current vector, so composing along
$\gamma$ gives exactly $T_\gamma\hat x_0$ by definition of $T_\gamma$.
\end{proof}

\medskip
\noindent\textbf{Conserved observables (Section~\ref{sec:random_walk}).} For
$y\in\R^{nd}$ let $f_y(v,\hat x):=\langle\hat x,y_v\rangle$.
Then $f_y$ is constant along every trajectory, not merely in expectation, if
and only if $y\in\ker(B)$.

\begin{proof}
If $y\in\ker(B)$, Proposition~\ref{prop:kernel} gives $y_u=W_{vu}y_v$ on
every edge, so by path determinacy $\hat x_t$ and $y_{v_t}$ are the images
of $\hat x_0$ and $y_{v_0}$ under the same orthogonal $T_\gamma$, and
orthogonal maps preserve inner products.

Conversely, the increment of $f_y$ along a step $v\to u$ from state $\hat x$
is $\langle W_{vu}\hat x,y_u\rangle-\langle\hat x,y_v\rangle =\langle\hat
x,\,W_{vu}^\top y_u-y_v\rangle$.
By (H2) every edge is traversed with positive probability from every state,
so if $f_y$ is conserved this vanishes for every edge and every $\hat
x\in\mathbb S^{d-1}$, giving $y_u=W_{vu}y_v$ on every edge, that is
$y\in\ker(B)$ by Proposition~\ref{prop:kernel}.
\end{proof}

\medskip
\noindent\textbf{Theorem~\ref{thm:unique_cont} (Unique continuation).} Let
$M$ be connected and $c\in\ker\partial^*\cap\ker B_1^\F$.
If $c_e=0$ for every edge $e$ incident to some fixed vertex, then $c=0$.

\begin{proof}
Let $D(e)$ be the graph distance from $e$ to the fixed vertex's incident
edges, well defined and finite since $M$ is connected.
We show $c_e=0$ for every $e$ by induction on $D(e)$; the base case $D(e)=0$
is the hypothesis.

Assume $c_e=0$ for every $e$ with $D(e)\leq k$, and let $e$ have $D(e)=k+1$,
with neighbour $e_1$ at distance $k$.
If $e,e_1$ are parallel, Proposition~\ref{prop:kernel} gives $c_e=0$
directly.
Otherwise they share a face $f$; let $v_1\trianglelefteq
e,e_1\trianglelefteq f$ and let $e_2$ be the other edge of $f$ lower
adjacent to $e_1$, at the vertex $v_2\trianglelefteq e_1,e_2$ opposite $v_1$
(this edge exists since $M$ is regular).
It suffices to show $c_{e_2}=0$, since Lemma~\ref{lem:unique_cont_local}
then gives $c_e=0$.

If $k=0$, the fixed vertex is $v_1$ or $v_2$: the first would give $D(e)=0$,
a contradiction, so it is $v_2$, giving $D(e_2)=0$ directly.
If $k>0$, take $e_3$ with $D(e_3)=k-1$ sharing a vertex with $e_1$; if that
vertex were $v_1$ then $D(e)=k$, a contradiction, so it is $v_2$, giving
$D(e_2)=k$ and $c_{e_2}=0$ by the inductive hypothesis.
\end{proof}

\medskip
\noindent\textbf{Theorem~\ref{thm:bochner_manifold} (Bochner theorem for
combinatorial manifolds).} Let $M$ be a compact connected combinatorial
$n$-manifold, $n\leq3$, with $\mathrm{Ric}_\F\succeq0$.
Then $\dim\mathcal H^1(M;\F)\leq nd$.

\begin{proof}
In each case we exhibit $n$ edges at a vertex $v$ whose vanishing forces
every edge incident to $v$ to vanish; Theorem~\ref{thm:unique_cont} then
makes the evaluation map $\ker(H_1^\F)\to\F(e_1)\oplus\cdots\oplus\F(e_n)$
injective, giving $\dim\mathcal H^1(M;\F)=\dim\ker(H_1^\F)\leq nd$.

\emph{$n\leq1$.} For $n=0$, $M$ is a single vertex and the claim is trivial.
For $n=1$, $M$ is a circle: there are no $2$-cells, every off-diagonal block
of $B_1^\F$ is a single invertible restriction map, and
Proposition~\ref{prop:kernel} makes $\phi_e:\ker(H_1^\F)\to\F(e)$ injective
for any single edge $e$, so $n=1$ edge suffices.

\emph{$n=2$.} Let $e_1,e_2$ be transverse (sharing both a vertex $v$ and a
face $f$).
Order the edges at $v$ as $e_1,e_2,\dots,e_k$.
If $k\geq4$, every $2$-manifold forces each of $e_3,\dots,e_k$ to be
parallel to $e_1$ or $e_2$, so Proposition~\ref{prop:kernel} kills each of
them directly once $c_{e_1}=c_{e_2}=0$, and unique continuation gives $c=0$.
Otherwise $e_1,e_2,e_3$ are the only edges at $v$, pairwise sharing a
triangle; for $x_v\in\F(v)$ extended by $0$, harmonicity gives
$0=\langle\partial c,x_v\rangle=\pm\,c_{e_3}^\top\F_{v\trianglelefteq
e_3}x_v$ for every $x_v$, so $c_{e_3}=0$, and unique continuation concludes.

\emph{$n=3$.} We follow Forman's proof~\cite[Theorem 4.4]{Forman2003Bochner}.
As for $n\leq2$, only two of its steps involve the coefficients of $c$, and
both hold for sheaves with the same conclusion.
The first is \emph{parallel propagation}: if $e,\tilde e$ are parallel,
$c_e=0$ forces $c_{\tilde e}=0$ (Proposition~\ref{prop:kernel}).
The second is the \emph{vertex relation}: if $c$ vanishes on every edge at
$v$ except $e^*$, harmonicity gives $0=\langle\partial
c,x_v\rangle=\pm\,c_{e^*}^\top\F_{v\trianglelefteq e^*}x_v$ for every
$x_v\in\F(v)$; a bilinear form vanishing for every $x_v$ forces
$\F_{v\trianglelefteq e^*}^\top c_{e^*}=0$, hence $c_{e^*}=0$ by
invertibility, where Forman's scalar version reads $\pm\omega_vc_{e^*}=0$.
Everything else depends only on the geometry of $M$, not on the sheaf: the
choice of three edges at $v$ whose vanishing forces, through these two
steps, every edge at $v$ to vanish, after which
Theorem~\ref{thm:unique_cont} gives $c=0$.
We recall this choice briefly.
The edges at $v$ are the vertices of the link of $v$, a cell structure on
$\mathbb S^2$ in which two of them are adjacent exactly when transverse.
If $v$ has four edges, any three will do, and the vertex relation kills the
fourth.
Otherwise let $K$ be a maximal complete subgraph of the link; it has at
most four vertices, since $K_5$ does not embed in $\mathbb S^2$.
If $K$ has at most three vertices, take the corresponding edges: by
maximality, every other edge at $v$ is parallel to one of them.
If $K$ has four, with edges $e_1,\dots,e_4$, the link vertex of a fifth edge
$e_5$ lies in one of the four triangles cut out by $K$, say the one opposite
$e_4$, so $e_5$ is parallel to $e_4$.
Take $e_2,e_3,e_4$: propagation kills $e_5$, and then every further edge,
which cannot be transverse to both $e_4$ and $e_5$; the vertex relation
kills $e_1$.
\end{proof}

\medskip
\noindent\textbf{Theorem~\ref{thm:local_hom_dim}.} Let $M$ be a compact
connected combinatorial manifold with $\mathrm{Ric}_\F\succeq0$.
Then $\dim\mathcal H^1(M;\F)\leq d\cdot D(M)$.

\begin{proof}
Fix $v$; we show $\dim\ker(H_1^\F)\leq d\cdot D(v)$.
If every equivalence class (Definition~\ref{def:local_hom_dim}) has at least
two elements, $D(v)=N(v)$: choose one representative edge per class, and
note every other edge at $v$ shares its class with some representative, so
Proposition~\ref{prop:kernel} propagates a vanishing representative to the
whole class; once every representative vanishes, so does every edge at $v$,
and Theorem~\ref{thm:unique_cont} gives $c=0$.

If instead some class $\{e^*\}$ is a singleton, $D(v)=N(v)-1$: choose
representatives for the remaining classes, and as above every edge at $v$
other than $e^*$ vanishes once its representative does.
Harmonicity then gives $0=\langle\partial
c,x_v\rangle=c_{e^*}^\top\F_{v\trianglelefteq e^*}x_v$ for every $x_v$, so
$c_{e^*}=0$ as well, and unique continuation concludes.
\end{proof}

\section{Further Properties of Bochner Matrices}
\label{sec:further}

This appendix collects results on Bochner matrices that the main text does
not use: a comparison with a scalar Laplacian, the interplay of balance and
antibalance, and two further properties of the random walk.

\subsection{Trace comparison}

Following Hansen, we extend to Bochner matrices the comparison he develops
between the block spectrum and a scalar one for matrix-weighted
Laplacians~\cite{expansion_matrix_weighted_graphs}.

\begin{definition}[Trace shadow]
\label{def:trace_shadow}
Set $\hat w_{vu} := \tr(B_{vu})$ and define the trace shadow $\hat
L\in\R^{n\times n}$ by $\hat L_{vu} := -\hat w_{vu}$ ($v\neq u$), $\hat
L_{vv} := \tr(D_v)$.
\end{definition}

\begin{theorem}[Trace comparison]
\label{thm:trace_comparison}
$\hat L$ is symmetric and positive semidefinite.
For every $m$:
\begin{equation}
  \sum_{i\leq md}\lambda_i(B) \;\leq\; \sum_{j\leq m}\lambda_j(\hat L),
  \qquad
  \sum_{i>nd-md}\lambda_i(B) \;\geq\; \sum_{j>n-m}\lambda_j(\hat L),
\end{equation}
with equality at $m=n$.
\end{theorem}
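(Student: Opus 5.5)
The plan is to recognise $\hat L$ as the \emph{partial trace} of $B$ over the stalk factor, and then to read both inequalities as Ky Fan bounds for $B$ evaluated on a well-chosen orthonormal family. Identify $\R^{nd}=\R^n\otimes\R^d$, so that the $(v,u)$ block of $B$ is $(e_v\otimes\cdot)^\top B(e_u\otimes\cdot)$. Then set $E_k:=I_n\otimes e_k\in\R^{nd\times n}$, with $e_k$ the standard basis of $\R^d$, and consider $\sum_{k\leq d}E_k^\top BE_k$. Its $(v,u)$ entry is the trace of the $(v,u)$ block of $B$. On the diagonal this trace is $\tr(D_v)$. Off the diagonal it is the trace of the off-diagonal block, which in the sign convention of the displayed Bochner matrix is $-\tr(B_{vu})=-\hat w_{vu}$. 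The first step is therefore the identity
\begin{equation}
  \hat L=\sum_{k\leq d}E_k^\top BE_k .
\end{equation}
The only point to watch here is the bookkeeping of the sign convention for the off-diagonal blocks, so that the minus sign in Definition~\ref{def:trace_shadow} matches the one in $B$.

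Symmetry and positive semidefiniteness then follow at once. Symmetry is $\tr(B_{uv})=\tr(B_{vu}^\top)=\tr(B_{vu})$ by (H1). For positivity, write $y^\top\hat Ly=\sum_k(y\otimes e_k)^\top B(y\otimes e_k)\geq0$, using $B\succeq0$ from Proposition~\ref{prop:psd}, or equivalently Lemma~\ref{lem:energy_id}.

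For the eigenvalue sums, let $y^1,\dots,y^n$ be orthonormal eigenvectors of $\hat L$, ordered by increasing eigenvalue. The vectors $y^j\otimes e_k$, for $j\leq m$ and $k\leq d$, are $md$ orthonormal vectors in $\R^{nd}$, since $\langle y^j\otimes e_k,y^{j'}\otimes e_{k'}\rangle=\delta_{jj'}\delta_{kk'}$. Collect them as the columns of $Q$. The compression bound recorded among the auxiliary lemmas of Section~\ref{sec:cheeger} (Ky Fan's minimum principle) gives $\sum_{i\leq md}\lambda_i(B)\leq\tr(Q^\top BQ)$. By the partial-trace identity,
\begin{equation}
  \tr(Q^\top BQ)=\sum_{j\leq m}\sum_{k\leq d}(y^j\otimes e_k)^\top B(y^j\otimes e_k)=\sum_{j\leq m}(y^j)^\top\hat Ly^j=\sum_{j\leq m}\lambda_j(\hat L).
\end{equation}
The upper inequality is proved the same way. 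Take the top $m$ eigenvectors $y^{n-m+1},\dots,y^n$ and apply the maximum form of Ky Fan, namely $\tr(Q^\top BQ)\leq\sum_{i>nd-md}\lambda_i(B)$ for $Q$ with $md$ orthonormal columns. This is obtained by applying the compression bound to $-B$.

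Equality at $m=n$ is the statement $\tr(B)=\tr(\hat L)$. Both sides equal $\sum_v\tr(D_v)$, since the partial trace preserves the full trace.

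I expect no genuine obstacle. The step needing care is the partial-trace identity itself, in particular that the off-diagonal entries carry the correct sign. The tensor ordering also has to match the block indexing, so that $y\otimes e_k$ really places $y_v e_k$ in stalk $v$.
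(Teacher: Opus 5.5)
Your proposal is correct and follows essentially the same route as the paper. Your partial-trace identity $\hat L=\sum_k E_k^\top BE_k$ is the paper's split-vector identity; the paper derives it from Lemma~\ref{lem:energy_id} and $\tr(WMW^\top)=\tr(M)$ rather than by reading off block traces, and the remaining steps match yours: the compression bound on the columns $y^j\otimes e_k$, the reversed inequality via $-B$, and $\tr(B)=\tr(\hat L)$ at $m=n$. Your sign bookkeeping is also right: the $(v,u)$ block of the matrix $B$ is $-B_{vu}$, so its trace is $-\hat w_{vu}=\hat L_{vu}$.
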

\begin{proof}
\emph{Symmetry and the split-vector identity.} Symmetry is (H1), since $\hat
w_{uv}=\tr(B_{uv})=\tr(B_{vu}^\top)=\hat w_{vu}$.
For $\xi\in\R^n$ and the standard basis $e_1,\dots,e_d$, define $\xi\otimes
e_\alpha$ by $(\xi\otimes e_\alpha)_v:=\xi_ve_\alpha$.
By Lemma~\ref{lem:energy_id} and $\sum_\alpha(Ae_\alpha)^\top
M(Ae_\alpha)=\tr(A^\top MA)$:
\begin{equation}
  \sum_{\alpha=1}^d(\xi\otimes e_\alpha)^\top B(\xi\otimes e_\alpha)
  = \tfrac12\sum_{v\neq u}\tr\bigl[(\xi_vI-\xi_uW_{vu}^\top)^\top M_{vu}(\xi_vI-\xi_uW_{vu}^\top)\bigr],
\end{equation}
and expanding, using $\tr(WMW^\top)=\tr(M)$, this equals
$\sum_v\xi_v^2\tr(D_v)-\sum_{v\neq u}\xi_v\xi_u\hat w_{vu}=\xi^\top\hat
L\xi$.
Taking $\xi$ a unit vector shows $\hat L\succeq0$, since the left side is a
sum of nonnegative terms.

\emph{Conclusion.} Let $\xi^1,\dots,\xi^m$ be orthonormal eigenvectors of
$\hat L$ for its $m$ smallest eigenvalues and let $Q$ have the $md$ columns
$\xi^j\otimes e_\alpha$, orthonormal since $(\xi^j\otimes
e_\alpha)^\top(\xi^l\otimes
e_\beta)=((\xi^j)^\top\xi^l)\delta_{\alpha\beta}$.
The compression bound of Appendix~\ref{sec:proofs} applied to $A=B$ gives:
\begin{equation}
  \sum_{i\leq md}\lambda_i(B) \leq \tr(Q^\top BQ)
  = \sum_{j\leq m}\sum_\alpha(\xi^j\otimes e_\alpha)^\top B(\xi^j\otimes e_\alpha)
  = \sum_{j\leq m}\lambda_j(\hat L).
\end{equation}
The second inequality follows by the same argument applied to $-B$ and
$-\hat L$, whose eigenvalues are those of $B$ and $\hat L$ with sign and
order reversed.
At $m=n$, $Q$ is square orthogonal and $\tr(B)=\sum_v\tr(D_v)=\tr(\hat L)$,
giving equality.
\end{proof}

\subsection{Balance and antibalance}

\begin{proposition}[The two extremes]
\label{prop:extremes}
Let $x\in\ker(B)$, $z\in\ker(N)$, and $G$ connected.
If $G$ is \emph{not} bipartite, then $\langle x_v,z_v\rangle=0$ at every
vertex, hence $\dim\ker(B)+\dim E_2(\Bn)\leq d$.
If $G$ \emph{is} bipartite, then $\dim\ker(B)=\dim E_2(\Bn)$.
\end{proposition}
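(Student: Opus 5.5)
The plan is to reduce everything to edge-by-edge transport, using Proposition~\ref{prop:kernel} for $B$ and its analogue for $N$ from Theorem~\ref{thm:antibalance}. For $x\in\ker(B)$ we have $x_u=W_{vu}x_v$ on every edge, and for $z\in\ker(N)$ we have $z_u=-W_{vu}z_v$. The key quantity is the scalar function $\phi(v):=\langle x_v,z_v\rangle$. Since $W_{vu}$ is orthogonal, on every edge
\begin{equation}
  \phi(u)=\langle W_{vu}x_v,\,-W_{vu}z_v\rangle=-\phi(v),
\end{equation}
so $\phi$ flips sign across each edge.

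\emph{Non-bipartite case.} Take an odd closed walk $v_0\to\cdots\to v_{2m+1}=v_0$. Iterating the sign flip gives $\phi(v_0)=(-1)^{2m+1}\phi(v_0)$, hence $\phi(v_0)=0$. Connectivity (H3) lets us join any vertex to this odd cycle and propagate the flip along the path, so $\phi\equiv0$.

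For the dimension bound, fix a vertex $v$ and use the evaluation maps $\ker(B)\to\R^d$ and $\ker(N)\to\R^d$ at $v$. Both are injective by the argument in the proof of Proposition~\ref{prop:holonomy}, applied to $B$ and to $N$ (which is again a Bochner matrix). By what we just proved, their images are orthogonal subspaces of $\R^d$, so $\dim\ker(B)+\dim\ker(N)\leq d$. Finally, $E_2(\Bn)=D^{1/2}\ker(N)$ by Theorem~\ref{thm:antibalance}, and $D^{1/2}$ is invertible, so $\dim E_2(\Bn)=\dim\ker(N)$.

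\emph{Bipartite case.} Let $V=V_1\sqcup V_2$ be the bipartition, and let $S$ be the block-diagonal signature with blocks $+I$ on $V_1$ and $-I$ on $V_2$. For $x\in\ker(B)$, set $z:=Sx$. Every edge joins $V_1$ to $V_2$, so on an edge $(v,u)$
\begin{equation}
  z_u=\pm x_u=\pm W_{vu}x_v=-W_{vu}(\mp x_v)=-W_{vu}z_v .
\end{equation}
Hence $z\in\ker(N)$, and the same computation shows $S$ maps $\ker(N)$ back into $\ker(B)$. Since $S$ is an involution, it is an isomorphism $\ker(B)\cong\ker(N)$, and therefore $\dim\ker(B)=\dim E_2(\Bn)$. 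Equivalently, $SBS=N$.

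I do not expect a real obstacle. The only point needing care is that injectivity of evaluation requires (H2)--(H3) to hold for $N$ as well. This is immediate, because $N_{vu}=-B_{vu}$ is invertible and $N$ has the same support as $B$.
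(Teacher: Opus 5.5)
Your proof is correct and follows the paper's argument step for step: $\langle x_v,z_v\rangle$ alternates sign across edges and so vanishes by an odd cycle and connectedness, the two evaluation images at a fixed vertex are then orthogonal, and in the bipartite case the sign flip on one part gives an isomorphism $\ker(B)\to\ker(N)$. Your closing remark $SBS=N$ is a small bonus the paper does not state: since $S$ commutes with $D^{-1/2}$, it even gives $S\Bn S=2I-\Bn$, so in the bipartite case the whole spectrum of $\Bn$ is symmetric about $1$.
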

\begin{proof}
Put $f(v):=\langle x_v,z_v\rangle$.
Using $x_u=W_{vu}x_v$, $z_u=-W_{vu}z_v$ and orthogonality of $W_{vu}$,
$f(u)=-f(v)$ across every edge, so $f$ alternates sign along every walk.

If $G$ is not bipartite it has an odd cycle, and alternating sign around it
forces $f=0$ there, hence $f\equiv 0$ by connectedness: the evaluation
images $K=\{x_{v_0}\}$, $K'=\{z_{v_0}\}$ at a fixed vertex $v_0$ are then
orthogonal subspaces of $\R^d$, of dimension $\dim\ker(B)$ and $\dim
E_2(\Bn)$ by Proposition~\ref{prop:holonomy} and
Theorem~\ref{thm:antibalance}, so their dimensions sum to at most $d$.

If $G$ is bipartite with parts $V_+\sqcup V_-$, then $z_v:=x_v$ on $V_+$ and
$z_v:=-x_v$ on $V_-$ satisfies $z_u=-W_{vu}z_v$ on every edge (exactly one
endpoint changes sign), so $x\mapsto z$ is a linear isomorphism
$\ker(B)\to\ker(N)$, whence the two dimensions agree.
\end{proof}

Similarly to our treatment, balancedness, bipartiteness and their joint
effect on the spectrum of a normalized Laplacian have been investigated on
gain graphs by Kannan et
al.~\cite{kannan2022normalizedlaplaciansgaingraphs}.
A Bochner matrix is not a gain graph: its edge label
$B_{vu}=M_{vu}W_{vu}^\top$ carries the group element $W_{vu}\in O(d)$, but
the radial factor $M_{vu}$ obstructs the group structure.

The non-bipartite case has no scalar analogue: the two kernels are
orthogonal \emph{pointwise at every vertex}, yet can both be nonzero,
whereas for $d=1$ a connected non-bipartite graph cannot be simultaneously
balanced and antibalanced.
The triangle with $W_{12}=W_{23}=I$, $W_{31}=\operatorname{diag}(1,-1)$
attains this: $\dim\ker(B)=\dim E_2(\Bn)=1$, along the orthogonal directions
$e_1,e_2$.

\subsection{Two further properties of the random walk}

\noindent\textbf{Confinement.}
Since a nonzero kernel vector never vanishes, choosing
$x^1,\dots,x^k\in\ker(B)$, $k:=\dim\ker(B)$, with orthonormal values at a
fixed $v_0$ gives $k$ conserved directions at every node, and this confines
the walker.

\begin{corollary}[Confinement]
\label{cor:confinement}
With $x^1,\dots,x^k$ as above, write $c_i:=\langle\hat x_0,x^i_{v_0}\rangle$
and $r:=(1-\sum_ic_i^2)^{1/2}$.
Then at every time $t$, $\langle\hat x_t,x^i_{v_t}\rangle=c_i$ for every
$i$, and $\|\hat x_t-\sum_ic_ix^i_{v_t}\|=r$: the walker is confined to a
set of dimension $d-k-1$ inside the fibre.
In particular the fibre carries no freedom at all, $\hat x_t$ determined by
$v_t$ alone, when $k=d$ or when $\hat x_0=x_{v_0}$ for some unit
$x\in\ker(B)$.
\end{corollary}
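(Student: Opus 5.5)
The plan is to derive the corollary from the conserved observables result (Appendix~\ref{sec:proofs}) together with path determinacy, with no further input. Since $x^1,\dots,x^k\in\ker(B)$, each observable $f_{x^i}(v,\hat x)=\langle\hat x,x^i_v\rangle$ is constant along every trajectory. Hence $\langle\hat x_t,x^i_{v_t}\rangle=\langle\hat x_0,x^i_{v_0}\rangle=c_i$ for every $t$ and every realization of the walk.

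Next I show that the values $x^1_{v},\dots,x^k_{v}$ remain orthonormal at every vertex, not only at $v_0$. By Proposition~\ref{prop:kernel}, $x^i_u=W_{vu}x^i_v$ on every edge, with the same orthogonal $W_{vu}$ for all $i$. Along any path $\gamma$ from $v_0$ to $v$ this gives $x^i_v=T_\gamma x^i_{v_0}$, and $T_\gamma$ preserves inner products, so $\langle x^i_v,x^j_v\rangle=\delta_{ij}$. By (H3) every vertex is reached, so the claim holds everywhere.

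With orthonormality in hand, $\sum_ic_ix^i_{v_t}$ is the orthogonal projection of the unit vector $\hat x_t$ onto $K_{v_t}:=\operatorname{span}\{x^i_{v_t}\}$. Pythagoras then gives $\|\hat x_t-\sum_ic_ix^i_{v_t}\|^2=1-\sum_ic_i^2=r^2$. The walker's vector therefore lies in the set
\[
\Bigl\{\sum_ic_ix^i_{v_t}+w:\ w\perp K_{v_t},\ \|w\|=r\Bigr\}.
\]
This is a sphere of radius $r$ in the $(d-k)$-dimensional space $K_{v_t}^\perp$, hence of dimension $d-k-1$ when $r>0$, and a single point when $r=0$.

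For the ``in particular'' clause there are two cases.
\begin{itemize}
\item If $k=d$, then $K_{v_t}=\R^d$, so $r=0$ and $\hat x_t=\sum_ic_ix^i_{v_t}$ is determined by $v_t$.
\item If $\hat x_0=x_{v_0}$ for some unit $x\in\ker(B)$, then $x_{v_0}\in K_{v_0}$ because the evaluation map $\phi_{v_0}$ is injective onto $K_{v_0}$ (Proposition~\ref{prop:holonomy}). Hence $\sum_ic_i^2=\|\hat x_0\|^2=1$ and $r=0$ again. Alternatively and more directly, $\hat x_t=T_\gamma x_{v_0}=x_{v_t}$.
\end{itemize}

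No step is a real obstacle. The only point needing care is the interpretation of ``dimension $d-k-1$'': it holds when $r>0$, and it degenerates to a point when $r=0$, which is the case the last sentence of the corollary isolates.
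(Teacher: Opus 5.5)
Your proof is correct and follows the paper's argument essentially step for step. Conservation of $f_{x^i}$ gives the $c_i$, transport by the orthogonal $T_\gamma$ keeps the $x^i_v$ orthonormal, and Pythagoras gives the orthogonal component of norm $r$. The only difference is in the last case: the paper chooses $x^1:=x$ to get $c_1=1$, while you show $x_{v_0}\in\operatorname{span}\{x^i_{v_0}\}$ for any admissible choice, or directly $\hat x_t=T_\gamma x_{v_0}=x_{v_t}$. Both routes are fine, and your remark that the confining set has dimension $d-k-1$ only when $r>0$ is a correct refinement.
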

\begin{proof}
By Proposition~\ref{prop:kernel} each $x^i$ transports as
$x^i_{v_t}=T_\gamma x^i_{v_0}$, and $T_\gamma$ orthogonal preserves inner
products, so the $x^i_v$ remain orthonormal at every node.
Conservation applied to each $x^i$ gives $\langle\hat
x_t,x^i_{v_t}\rangle=c_i$; since $\norm{\hat x_t}=1$, the component of $\hat
x_t$ orthogonal to $\operatorname{span}\{x^i_{v_t}\}$ has norm
$(1-\sum_ic_i^2)^{1/2}=r$.
If $k=d$ the $x^i_{v_t}$ form an orthonormal basis, so the $c_i$ determine
$\hat x_t$ and $r=0$; if $\hat x_0=x_{v_0}$ for a unit $x\in\ker(B)$, taking
$x^1:=x$ gives $c_1=1$ and again $r=0$.
\end{proof}
The bound can be strict: for $d=1$ on an unbalanced signed graph the
reachable set collapses to $\{-1,+1\}$, the ordinary Gremban walk; for $d=2$
on a cycle carrying an irrational rotation it is infinite, and dense in the circle.

\noindent\textbf{Reversibility.}
The walk of Definition~\ref{def:random_walk} is not reversible in general:
detailed balance would force $M_{vu}$, conjugated once by $D_v$ and once by
$D_u$, to give proportional quadratic forms, which fails unless every $D_v$
is a scalar multiple of the identity.

\begin{proposition}[Failure of reversibility]
\label{prop:not_reversible}
There exist Bochner matrices satisfying (H1)--(H3) for which the walk of
Definition~\ref{def:random_walk} admits no reversible measure.
\end{proposition}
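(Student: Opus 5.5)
The proof will be a small explicit counterexample. The plan is to write the detailed balance equations of the walk on $\mathcal S=V\times\mathbb S^{d-1}$ and show that they cannot be solved. A reversible measure $\mu$ on $\mathcal S$ must satisfy
\[
\mu(v,d\hat x)\,p_{vu}(\hat x)=\mu(u,W_{vu}d\hat x)\,p_{uv}(W_{vu}\hat x)
\]
for every edge, since the only transition from $(u,W_{vu}\hat x)$ back to a point of the fibre over $v$ goes to $(v,W_{uv}W_{vu}\hat x)=(v,\hat x)$.

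To avoid measure-theoretic issues, I would pass to the simplest setting: take $W\equiv I$ (a matrix-weighted Laplacian) on a graph that is a path or a single edge plus one extra edge. Then each orbit $\{(v,\hat x):v\in V\}$ with fixed $\hat x$ is invariant, and the walk restricted to an orbit is a scalar Markov chain on $V$ with transition probabilities $p_{vu}(\hat x)=\hat x^\top D_v^{-1/2}M_{vu}D_v^{-1/2}\hat x$. Since the fibre coordinate is constant, a reversible measure on $\mathcal S$ disintegrates into reversible measures for these chains, for almost every $\hat x$ with respect to its marginal.

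The orbit chain must still be reversible for each $\hat x$, so this reduces non-reversibility of $\mathcal S$ to finding a single $\hat x$, and in fact an open set of them, for which the finite chain violates Kolmogorov's cycle criterion. On a tree Kolmogorov's criterion holds trivially, so I would use a triangle $V=\{1,2,3\}$ with $d=2$ and diagonal, non-isotropic $M_{vu}$. I would then check that
\[
p_{12}p_{23}p_{31}\neq p_{13}p_{32}p_{21}
\]
at, say, $\hat x=e_1$. With diagonal matrices everything is scalar in the coordinate: $p_{vu}(e_1)=(M_{vu})_{11}/(D_v)_{11}$. This is the ordinary walk on the triangle with conductances $(M_{vu})_{11}$, which is reversible. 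So I would instead take $\hat x=(e_1+e_2)/\sqrt2$, where
\[
p_{vu}(\hat x)=\tfrac12\left(\frac{(M_{vu})_{11}}{(D_v)_{11}}+\frac{(M_{vu})_{22}}{(D_v)_{22}}\right),
\]
a mixture of two reversible chains with different stationary laws, which is generically non-reversible.

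Concrete numbers still need to be chosen. For example, put $M_{12}=\operatorname{diag}(1,2)$, $M_{23}=\operatorname{diag}(2,1)$ and $M_{13}=\operatorname{diag}(1,1)$, verify the cycle inequality by direct computation, and rely on continuity for nearby $\hat x$ so that the failing set has positive measure in every fibre.

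The main obstacle is the disintegration step, since $\mu$ is a measure on a continuum. To handle it, I would restrict detailed balance to sets of the form $\{v\}\times A$ with $A$ a small cap around $\hat x$. With $W\equiv I$, detailed balance on these sets gives, for each cap, $\mu_v(A)\bar p_{vu}\approx\mu_u(A)\bar p_{uv}$. Multiplying these relations around the triangle cancels the $\mu$'s and contradicts the strict cycle inequality once the cap is small enough. If the $\mu_v(A)$ vanish near $\hat x$, I would choose $\hat x$ in the support of $\mu$ instead. Since the cycle inequality holds on an open dense set of directions, such a choice is possible.

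Finally, (H1)–(H3) hold because all blocks are invertible and the triangle is connected.
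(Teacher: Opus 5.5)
Your route is the paper's own: $d=2$, the triangle, $W\equiv I$, one fixed fibre direction, and a violation of Kolmogorov's criterion. The paper's matrices are simultaneously diagonal in $f_\pm=(1,\pm1)/\sqrt2$, with eigenvalue pairs $(3,1)$, $(4,2)$, $(3/2,7/2)$ for $M_{12},M_{13},M_{23}$. The paper evaluates at $e_1=(f_++f_-)/\sqrt2$, which is exactly your mixture of two conductance walks rotated by $45^\circ$.

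\textbf{First gap: the numbers fail at your chosen point.} With $M_{12}=\operatorname{diag}(1,2)$, $M_{23}=\operatorname{diag}(2,1)$, $M_{13}=I$ you get $D_1=\operatorname{diag}(2,3)$, $D_2=3I$, $D_3=\operatorname{diag}(3,2)$. At $\hat x=(e_1+e_2)/\sqrt2$ the jump probabilities are $p_{12}=p_{32}=\tfrac7{12}$, $p_{13}=p_{31}=\tfrac5{12}$, $p_{21}=p_{23}=\tfrac12$. Both cycle products equal $\tfrac{35}{288}$, and $\pi\propto(6,7,6)$ satisfies detailed balance. Writing $t:=\hat x_2^2$, the forward minus backward cycle product is $-t(1-t)(1-2t)/108$. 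This vanishes exactly at $t\in\{0,\tfrac12,1\}$, so the ``direct computation'' you defer to returns equality. Since the proposition is existential, that computation is the proof. You can repair it by taking, for example, $t=\tfrac14$, where the products are $\tfrac{273}{2304}$ and $\tfrac{275}{2304}$, or by using the paper's numbers.

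\textbf{Second gap: the step meant to handle arbitrary measures.} That the cycle inequality holds on an open dense set of directions does not let you pick $\hat x\in\operatorname{supp}\mu$ where it holds, because an open dense set need not meet the support. In your family it need not. For diagonal $M_{vu}$ and $W\equiv I$, the orbit chain at $\hat x=e_1$ is the conductance walk with conductances $(M_{vu})_{11}$. Hence $\mu=\sum_v(D_v)_{11}\,\delta_{(v,e_1)}$ is a nonzero measure satisfying detailed balance. Your disintegration shows only that every reversible measure is concentrated on the exceptional directions, not that none exists. So if a reversible measure means an arbitrary nonzero measure on $\mathcal S$, every diagonal example admits one, and no choice of numbers saves the argument.

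The paper's example has the same feature, at the common eigendirections $f_\pm$. Its closing sentence, that a reversible measure restricts to one on the closed class $\{1,2,3\}\times\{e_1\}$, tacitly requires the measure to charge that class. Under that reading, one bad direction is the entire proof and your cap argument is unnecessary. If you want a genuinely measure-theoretic statement, you have two options:
\begin{itemize}
\item restrict to measures absolutely continuous with respect to $\sigma$ on each fibre, so the support cannot lie inside finitely many exceptional directions and your cap argument then goes through, with a corrected example; or
\item build an example whose cycle defect vanishes in no direction. With $W\equiv I$ this requires at least that the $M_{vu}$ have no common eigenvector.
\end{itemize}
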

\begin{proof}
Take $d=2$, $G$ the triangle on $\{1,2,3\}$, and every angular factor equal
to the identity, so the fibre never moves and $\{1,2,3\}\times\{e_1\}$ is a
closed class on which the walk is the three-state chain with
$q_{vu}=\omega_v^\top M_{vu}\omega_v$, $\omega_v$ the first column of
$D_v^{-1/2}$.
Take the edge weights:
\begin{equation}
  M_{12}=\begin{pmatrix}2&1\\1&2\end{pmatrix},\quad
  M_{13}=\begin{pmatrix}3&1\\1&3\end{pmatrix},\quad
  M_{23}=\begin{pmatrix}5/2&-1\\-1&5/2\end{pmatrix},
\end{equation}
all positive definite, so (H1)--(H3) hold.
Then $D_1=\begin{psmallmatrix}5&2\\2&5\end{psmallmatrix}$, $D_2=\tfrac92I$,
$D_3=\tfrac{11}2I$. $D_1$ has eigenvalues $7,3$ on
$f_\pm=\tfrac1{\sqrt2}(1,\pm1)$, so the entries of $D_1^{-1/2}$ are
$a=\tfrac{\sqrt7}{14}+\tfrac{\sqrt3}6$,
$b=\tfrac{\sqrt7}{14}-\tfrac{\sqrt3}6$, with $a^2+b^2=\tfrac5{21}$ and
$ab=-\tfrac1{21}$.
Since the quadratic form of $\begin{psmallmatrix}m&1\\1&m\end{psmallmatrix}$
at $(a,b)$ is $m(a^2+b^2)+2ab$, we get $q_{12}=\tfrac8{21}$,
$q_{13}=\tfrac{13}{21}$; at nodes $2,3$ the degree is scalar, so
$q_{21}=\tfrac49$, $q_{23}=\tfrac59$, $q_{31}=\tfrac6{11}$,
$q_{32}=\tfrac5{11}$.
Then $q_{12}q_{23}q_{31}=\tfrac{240}{2079}$ while
$q_{13}q_{32}q_{21}=\tfrac{260}{2079}$, so Kolmogorov's criterion fails and
the chain is not reversible; a reversible measure for the whole walk would
restrict to one on this closed class, so none exists.
\end{proof}

The fix is to normalize outside the quadratic form, dividing by
$\hat x^\top D_v\hat x$.
Since $M_{uv}=W_{vu}M_{vu}W_{vu}^\top$ by (H1), the scalar $\hat x^\top
M_{vu}\hat x$ is read identically from either end of the edge, which is what
reversibility needs.

\begin{proposition}[A reversible variant]
\label{prop:gauged_reversible}
Under (H1)--(H3), the walk with the same fibre update but jump law
\begin{equation}
  p^{\mathrm R}_{vu}(\hat x) := \frac{\hat x^\top M_{vu}\hat x}{\hat x^\top D_v\hat x}
\end{equation}
is reversible with respect to the measure $\mu$ assigning $\{v\}\times S$ the mass
$\frac{d}{\vol(G)}\int_S\hat x^\top D_v\hat x\,\sigma(d\hat x)$, for
$\sigma$ the uniform measure on $\mathbb{S}^{d-1}$, with marginal
$\pi_v=\tr(D_v)/\vol(G)$ on $V$.
\end{proposition}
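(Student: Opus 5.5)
The plan is to verify detailed balance directly on the state space $\mathcal S=V\times\mathbb S^{d-1}$. First I check that $p^{\mathrm R}$ is a genuine transition law: each $p^{\mathrm R}_{vu}(\hat x)\geq0$ because $M_{vu}\succeq0$, the denominator $\hat x^\top D_v\hat x$ is positive because $D_v\succeq d^-_vI\succ0$ by item (i) of the toolkit, and summing over $u\sim v$ gives $\hat x^\top D_v\hat x/\hat x^\top D_v\hat x=1$. I then write $\mu$ through its density $\rho(v,\hat x):=\frac{d}{\vol(G)}\hat x^\top D_v\hat x$ with respect to the product of counting measure on $V$ and $\sigma$.

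Detailed balance for a kernel whose fibre update is a deterministic map asks for the identity of measures
\[
  \mu(dv,d\hat x)\,K\bigl((v,\hat x),(du,dy)\bigr)=\mu(du,dy)\,K\bigl((u,y),(dv,d\hat x)\bigr).
\]
It suffices to test it against bounded functions $\phi(v,\hat x)\psi(u,y)$. The left side becomes $\sum_{v}\sum_{u\sim v}\int\rho(v,\hat x)\,p^{\mathrm R}_{vu}(\hat x)\,\phi(v,\hat x)\psi(u,W_{vu}\hat x)\,\sigma(d\hat x)$. Here $\rho\, p^{\mathrm R}_{vu}=\frac{d}{\vol(G)}\hat x^\top M_{vu}\hat x$: the normalization cancels against the density, and this cancellation is the point of the construction.

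The key step treats each edge $\{v,u\}$ separately. In the term for the ordered pair $(u,v)$ on the right side, I substitute $y=W_{vu}\hat x$, equivalently $\hat x=W_{uv}y$ using $W_{uv}=W_{vu}^\top$. Since $\sigma$ is $O(d)$-invariant, this change of variables preserves the measure. By (H1) we have $M_{uv}=W_{vu}M_{vu}W_{vu}^\top$, so $y^\top M_{uv}y=\hat x^\top M_{vu}\hat x$. The reverse jump from $(u,y)$ lands at $(v,W_{uv}y)=(v,\hat x)$. The two edge contributions therefore agree term by term, and summing over edges gives detailed balance. Integrating the balance equation over $\hat x$ then shows that $\mu$ is stationary.

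The last step is the normalization and the marginal. Using $\int_{\mathbb S^{d-1}}\hat x\hat x^\top\,\sigma(d\hat x)=\frac1dI$, which follows from rotation invariance and trace one, I get $\mu(\{v\}\times\mathbb S^{d-1})=\frac{d}{\vol(G)}\cdot\frac1d\tr(D_v)=\tr(D_v)/\vol(G)=\pi_v$. These masses sum to $1$ by the definition of $\vol(G)$, so $\mu$ is a probability measure. I expect the main obstacle to be stating detailed balance rigorously for a kernel that is singular on the fibre, since it is supported on the graph of $\hat x\mapsto W_{vu}\hat x$. I will handle this with the test-function formulation and the invariance of $\sigma$ under $W_{vu}$. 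Once that is set up, everything else is the one-line cancellation above.
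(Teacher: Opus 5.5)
Your proof is correct and follows the paper's argument. It uses the cancellation $\rho\,p^{\mathrm R}_{vu}=\frac{d}{\vol(G)}\hat x^\top M_{vu}\hat x$, the identity $M_{uv}=W_{vu}M_{vu}W_{vu}^\top$ from (H1), the invariance of $\sigma$ under $\hat x\mapsto W_{vu}\hat x$, and $\int\hat x^\top D_v\hat x\,\sigma(d\hat x)=\tr(D_v)/d$ for the marginal; your test-function formulation only makes explicit the step the paper states as ``integrates to detailed balance between measurable sets''.
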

\begin{proof}
That $p^{\mathrm R}_{vu}$ sums to $1$ over the neighbours of $v$ follows
from $\sum_{u\sim v}M_{vu}=D_v$, and $\hat x^\top D_v\hat x>0$ by (H2).
Since $\int_{\mathbb S^{d-1}}\hat x^\top D_v\hat x\,\sigma(d\hat
x)=\tr(D_v)/d$, $\mu$ has total mass $1$ and the stated marginal.
The transition from $(v,\hat x)$ along $(v,u)$ leads to $(u,W_{vu}\hat x)$,
and back along the same edge since $W_{uv}=W_{vu}^\top$.
The forward flow is:
\begin{equation}
  \frac{d}{\vol(G)}(\hat x^\top D_v\hat x)\frac{\hat x^\top M_{vu}\hat x}{\hat x^\top D_v\hat x}
  = \frac{d}{\vol(G)}\hat x^\top M_{vu}\hat x,
\end{equation}
and the backward flow, with $\hat y=W_{vu}\hat x$, is $\frac{d}{\vol(G)}\hat
y^\top M_{uv}\hat y$, equal to it since $W_{vu}M_{vu}W_{vu}^\top=M_{uv}$ by
(H1).
As $\hat x\mapsto W_{vu}\hat x$ preserves $\sigma$, this integrates to
detailed balance between measurable sets, and $\mu$ is stationary.
\end{proof}
The two jump laws coincide whenever $D_v$ is a scalar multiple of the
identity, in particular for connection Laplacians, and since both variants
share the fibre update, path determinacy, the conserved observables and
Corollary~\ref{cor:confinement} hold for either.

\section{Conviction: Protocol and Additional Material}
\label{sec:supp_conv}

\subsection{Protocol}
\label{sec:supp_impl}
\label{sec:supp_agg}

All interventions run on a model whose weights are loaded from the
best-validation-epoch checkpoint and never updated.
The candidate set is always a subset of the training nodes and the
evaluation set always the full test set, so the quantity compared is
identical across strategies, budgets and modes.

\noindent\textbf{Interventions.}
\emph{Masking} replaces the features of the selected nodes, by zero or by
the dataset feature mean, and leaves the graph untouched; the mean is an
unsupervised statistic computed over all nodes.
\emph{Gating} scales every cross-node entry of the sparse diffusion
Laplacian by the gate of the \emph{source} node, so a gated node keeps its
own state and only what it transmits is suppressed.
In the hard variant used in the body the gate is $0$ on the selected nodes
and $1$ elsewhere; in the soft variant of Appendix~\ref{sec:supp_forward}
it varies continuously:
\begin{equation}
    g(u) \;=\; g_{\min} + (1 - g_{\min})\,
    \frac{s(u) - \min_{v} s(v)}{\max_{v} s(v) - \min_{v} s(v)},
    \qquad g_{\min} = 0.1,
\end{equation}
with $s$ the ranking score.
\emph{Pruning} deletes the selected nodes together with their incident
edges and relabels the remaining ones; since the graph size is fixed at
model construction, a new model is built for the reduced graph and the same
frozen parameters are loaded into it, so no parameter is re-estimated.

\noindent\textbf{Selection strategies.}
At each budget $k\in\{1\%,5\%,10\%,20\%,30\%\}$ of the training pool we
compare nine strategies: the top and the bottom $k\%$ by conviction, by
degree and by prediction confidence; $k$ uniformly random nodes; and two
degree-matched controls, random sets reproducing the degree distribution of
the high- and of the low-conviction group, built by greedy nearest-degree
matching without replacement.
Several nodes share the same degree or the same confidence, so the ranking
by these criteria contains ties; they are broken at random with a seeded
key, which keeps the selection reproducible.

\noindent\textbf{Decile experiment.}
It sorts the training pool by one criterion, splits it into ten disjoint
bins of equal size, and perturbs exactly one bin at a time, each against the
same unperturbed baseline.
Instead, the budget sweep itself is cumulative: its selections are nested,
the $30\%$ one containing the $10\%$ one, since each budget takes the top
(or the bottom) $k\%$ of the pool.
The sweep therefore measures what a top-$k$ selection costs, and the deciles
how the damage is distributed along the score range.
The decile experiment is run for conviction, confidence and degree, so the
three orderings can be compared bin by bin.

\noindent\textbf{Aggregation.}
The relative drop is formed fold by fold, each fold against its \emph{own}
unperturbed baseline, and normalized by the headroom above chance:
\begin{equation}
\label{eq:drop}
    \mathrm{drop}\%(s) = 100 \cdot
    \frac{m_{\text{baseline}} - m_{s}}{m_{\text{baseline}} - m_{\text{chance}}},
    \qquad
    m_{\text{chance}} = \begin{cases} 0 & \text{accuracy},\\ 0.5 & \text{ROC-AUC},\end{cases}
\end{equation}
so a cell is the percentage of the learnt signal the intervention destroys,
and means the same on an accuracy dataset and on an AUC one.
Folds are then averaged within a dataset and the datasets averaged
unweighted, so every benchmark counts the same.
The quoted spread is the standard error across the per-dataset drops, and
shows the variation across datasets.

Cornell and Wisconsin are excluded from the aggregation, since their
training pools are too small for a percentage budget to be meaningful;
every number reported is therefore over the remaining eight datasets.

\definecolor{convworst}{RGB}{213,94,0}
\definecolor{convbest}{RGB}{0,114,178}
The tables of this appendix follow the conventions of
Table~\ref{tab:conv_fwd_masking-zero_last}.
Unless the caption says otherwise, a cell is the relative drop of
Eq.~\eqref{eq:drop}, as mean$_{\pm\text{s.e.}}$ across the eight datasets,
each averaged over its folds.
The bold label on top names the perturbation mode, and in the budget sweeps
the \textcolor{convworst}{\textbf{most damaging}} and the
\textcolor{convbest}{\textbf{least damaging}} selection of each column are
highlighted.

\subsection{What the conviction measures}
\label{sec:supp_conv_observational}

Section~\ref{sec:conviction} settles by intervention whether the conviction
identifies the nodes a trained network relies on.
Before intervening, we check what the conviction is by correlating it with node properties that are already interpretable: the
degree and the betweenness centrality, both topological; the model's softmax
confidence; and the input-gradient saliency, the norm of the gradient of the
loss with respect to the raw node features.
We also measure the Cohen's $d$ between the conviction of correctly and of
incorrectly classified nodes, which shows how far the conviction by itself
tells the two groups apart.
All six quantities are computed at the last diffusion layer, per fold,
averaged over the folds of a dataset and then over the eight datasets, at
the untrained and at the best-validation checkpoint.

The results are collected in Table~\ref{tab:conv_observational}.
The conviction is moderately correlated with degree, $0.33\pm0.07$, and with
betweenness, $0.28\pm0.06$, and both correlations are near zero at
initialization, so the ordering they express is produced by training.
The correlation with the model's confidence is higher, $0.66\pm0.05$, but
still moderate.
Its correlation with saliency is weak and negative, $-0.18\pm0.09$, the only
negative entry of the six.
The Cohen's $d$ is $0.54\pm0.09$ on the training split and $0.37\pm0.08$ on
the test split, a small to moderate effect: the conviction does not by
itself separate correct from incorrect nodes sharply.

The conviction is therefore not reducible to the topology or to the
confidence: the topological
correlations are moderate and appear only with training, and the agreement
with the model's confidence is partial.

\begin{table}[htbp]
\caption{Spearman correlation between the conviction and four node properties, and absolute Cohen's $d$ between the conviction of correctly and incorrectly classified nodes, at the last diffusion layer. \emph{Untrained} is the epoch-$0$ checkpoint, a control.}
\label{tab:conv_observational}
\centering
\small
\begin{tabular*}{\textwidth}{@{\extracolsep{\fill}}lcc}
\toprule
 & \multicolumn{2}{c}{Checkpoint} \\
\cmidrule(l){2-3}
Observable & Untrained & Best epoch \\
\midrule
Spearman $\rho$ with degree & $0.07_{\pm0.10}$ & $0.33_{\pm0.07}$ \\
Spearman $\rho$ with betweenness & $0.05_{\pm0.08}$ & $0.28_{\pm0.06}$ \\
Spearman $\rho$ with confidence & $0.17_{\pm0.07}$ & $0.66_{\pm0.05}$ \\
Spearman $\rho$ with saliency & $0.07_{\pm0.10}$ & $-0.18_{\pm0.09}$ \\
$|$Cohen's $d|$, train split & $0.21_{\pm0.05}$ & $0.54_{\pm0.09}$ \\
$|$Cohen's $d|$, test split & $0.24_{\pm0.05}$ & $0.37_{\pm0.08}$ \\
\bottomrule
\end{tabular*}
\end{table}

\FloatBarrier

\subsection{Robustness of the intervention}
\label{sec:supp_forward}
\label{sec:supp_robust}
\label{sec:supp_soft}

\noindent\textbf{The other perturbation modes.}
Tables~\ref{tab:conv_fwd_masking-mean_last},
\ref{tab:conv_fwd_gating-hard_last} and~\ref{tab:conv_fwd_pruning_last}
give the budget sweep for the three modes not shown in the body; zero
masking appears as Table~\ref{tab:conv_fwd_masking-zero_last}, and in all
four the colored entries mark the most and the least damaging selection at
each budget.
High conviction is the most damaging selection in all four tables and at all
five budgets, and low conviction the least; the only exception is the
$1\%$ gating budget, where low degree comes out marginally below it.

\begin{table}[htbp]
\definecolor{convworst}{RGB}{213,94,0}
\definecolor{convbest}{RGB}{0,114,178}
\caption{Budget sweep under mean masking, laid out as Table~\ref{tab:conv_fwd_masking-zero_last}. High conviction is the most damaging selection at every budget, low conviction the least.}
\label{tab:conv_fwd_masking-mean_last}
\centering
\small
\begin{tabular*}{\textwidth}{@{\extracolsep{\fill}}lccccc}
\toprule
\textbf{Mean masking} & \multicolumn{5}{c}{Budget (fraction of the training pool)} \\
\cmidrule(l){2-6}
Selection & 1\% & 5\% & 10\% & 20\% & 30\% \\
\midrule
High conviction & \textcolor{convworst}{$\mathbf{0.50}_{\pm0.16}$} & \textcolor{convworst}{$\mathbf{1.34}_{\pm0.39}$} & \textcolor{convworst}{$\mathbf{2.19}_{\pm0.72}$} & \textcolor{convworst}{$\mathbf{3.61}_{\pm1.30}$} & \textcolor{convworst}{$\mathbf{4.66}_{\pm1.85}$} \\
Low conviction & \textcolor{convbest}{$\mathbf{0.01}_{\pm0.02}$} & \textcolor{convbest}{$\mathbf{0.08}_{\pm0.05}$} & \textcolor{convbest}{$\mathbf{0.14}_{\pm0.10}$} & \textcolor{convbest}{$\mathbf{0.36}_{\pm0.17}$} & \textcolor{convbest}{$\mathbf{0.57}_{\pm0.38}$} \\
High degree & $0.38_{\pm0.16}$ & $0.94_{\pm0.38}$ & $1.35_{\pm0.59}$ & $2.13_{\pm0.94}$ & $2.91_{\pm1.17}$ \\
Low degree & $0.03_{\pm0.02}$ & $0.18_{\pm0.11}$ & $0.41_{\pm0.22}$ & $0.88_{\pm0.48}$ & $1.43_{\pm0.80}$ \\
Degree-matched (high) & $0.09_{\pm0.07}$ & $0.30_{\pm0.21}$ & $0.52_{\pm0.37}$ & $0.77_{\pm0.63}$ & $1.21_{\pm0.71}$ \\
Degree-matched (low) & $0.06_{\pm0.04}$ & $0.26_{\pm0.15}$ & $0.50_{\pm0.30}$ & $1.28_{\pm0.69}$ & $2.23_{\pm1.12}$ \\
High confidence & $0.15_{\pm0.07}$ & $0.51_{\pm0.16}$ & $0.86_{\pm0.24}$ & $1.57_{\pm0.48}$ & $2.25_{\pm0.78}$ \\
Low confidence & $0.06_{\pm0.04}$ & $0.30_{\pm0.15}$ & $0.61_{\pm0.28}$ & $1.29_{\pm0.58}$ & $1.99_{\pm0.94}$ \\
Random & $0.08_{\pm0.04}$ & $0.25_{\pm0.16}$ & $0.46_{\pm0.32}$ & $1.18_{\pm0.60}$ & $1.89_{\pm0.94}$ \\
\bottomrule
\end{tabular*}
\end{table}

\begin{table}[htbp]
\definecolor{convworst}{RGB}{213,94,0}
\definecolor{convbest}{RGB}{0,114,178}
\caption{As Table~\ref{tab:conv_fwd_masking-mean_last}, under hard gating. At the $1\%$ budget low degree is marginally the least damaging.}
\label{tab:conv_fwd_gating-hard_last}
\centering
\small
\begin{tabular*}{\textwidth}{@{\extracolsep{\fill}}lccccc}
\toprule
\textbf{Hard gating} & \multicolumn{5}{c}{Budget (fraction of the training pool)} \\
\cmidrule(l){2-6}
Selection & 1\% & 5\% & 10\% & 20\% & 30\% \\
\midrule
High conviction & \textcolor{convworst}{$\mathbf{0.81}_{\pm0.29}$} & \textcolor{convworst}{$\mathbf{1.80}_{\pm0.41}$} & \textcolor{convworst}{$\mathbf{2.77}_{\pm0.66}$} & \textcolor{convworst}{$\mathbf{4.32}_{\pm1.27}$} & \textcolor{convworst}{$\mathbf{5.67}_{\pm1.89}$} \\
Low conviction & $0.05_{\pm0.03}$ & \textcolor{convbest}{$\mathbf{0.11}_{\pm0.04}$} & \textcolor{convbest}{$\mathbf{0.18}_{\pm0.09}$} & \textcolor{convbest}{$\mathbf{0.43}_{\pm0.22}$} & \textcolor{convbest}{$\mathbf{0.71}_{\pm0.45}$} \\
High degree & $0.65_{\pm0.20}$ & $1.37_{\pm0.37}$ & $1.91_{\pm0.45}$ & $2.83_{\pm0.73}$ & $3.67_{\pm1.04}$ \\
Low degree & \textcolor{convbest}{$\mathbf{0.05}_{\pm0.02}$} & $0.26_{\pm0.12}$ & $0.58_{\pm0.27}$ & $1.20_{\pm0.59}$ & $1.91_{\pm0.98}$ \\
Degree-matched (high) & $0.12_{\pm0.05}$ & $0.37_{\pm0.19}$ & $0.68_{\pm0.35}$ & $1.02_{\pm0.60}$ & $1.44_{\pm0.86}$ \\
Degree-matched (low) & $0.06_{\pm0.03}$ & $0.31_{\pm0.18}$ & $0.62_{\pm0.38}$ & $1.53_{\pm0.84}$ & $2.71_{\pm1.36}$ \\
High confidence & $0.24_{\pm0.08}$ & $0.75_{\pm0.20}$ & $1.22_{\pm0.29}$ & $2.28_{\pm0.59}$ & $3.27_{\pm0.98}$ \\
Low confidence & $0.09_{\pm0.05}$ & $0.35_{\pm0.17}$ & $0.63_{\pm0.29}$ & $1.29_{\pm0.60}$ & $2.03_{\pm0.95}$ \\
Random & $0.09_{\pm0.04}$ & $0.34_{\pm0.17}$ & $0.63_{\pm0.34}$ & $1.46_{\pm0.69}$ & $2.21_{\pm1.03}$ \\
\bottomrule
\end{tabular*}
\end{table}

\begin{table}[htbp]
\definecolor{convworst}{RGB}{213,94,0}
\definecolor{convbest}{RGB}{0,114,178}
\caption{As Table~\ref{tab:conv_fwd_masking-mean_last}, under pruning.}
\label{tab:conv_fwd_pruning_last}
\centering
\small
\begin{tabular*}{\textwidth}{@{\extracolsep{\fill}}lccccc}
\toprule
\textbf{Pruning} & \multicolumn{5}{c}{Budget (fraction of the training pool)} \\
\cmidrule(l){2-6}
Selection & 1\% & 5\% & 10\% & 20\% & 30\% \\
\midrule
High conviction & \textcolor{convworst}{$\mathbf{1.65}_{\pm0.96}$} & \textcolor{convworst}{$\mathbf{2.95}_{\pm1.10}$} & \textcolor{convworst}{$\mathbf{4.02}_{\pm1.16}$} & \textcolor{convworst}{$\mathbf{5.64}_{\pm1.49}$} & \textcolor{convworst}{$\mathbf{7.08}_{\pm2.04}$} \\
Low conviction & \textcolor{convbest}{$\mathbf{0.01}_{\pm0.01}$} & \textcolor{convbest}{$\mathbf{0.07}_{\pm0.05}$} & \textcolor{convbest}{$\mathbf{0.16}_{\pm0.11}$} & \textcolor{convbest}{$\mathbf{0.42}_{\pm0.28}$} & \textcolor{convbest}{$\mathbf{0.81}_{\pm0.57}$} \\
High degree & $1.36_{\pm0.77}$ & $2.68_{\pm1.16}$ & $3.48_{\pm1.24}$ & $4.82_{\pm1.38}$ & $5.77_{\pm1.57}$ \\
Low degree & $0.04_{\pm0.03}$ & $0.24_{\pm0.14}$ & $0.54_{\pm0.30}$ & $1.13_{\pm0.67}$ & $1.88_{\pm1.10}$ \\
Degree-matched (high) & $0.20_{\pm0.05}$ & $0.52_{\pm0.21}$ & $0.89_{\pm0.41}$ & $1.29_{\pm0.76}$ & $1.73_{\pm1.09}$ \\
Degree-matched (low) & $0.07_{\pm0.04}$ & $0.37_{\pm0.20}$ & $0.67_{\pm0.41}$ & $1.72_{\pm0.96}$ & $2.92_{\pm1.52}$ \\
High confidence & $0.33_{\pm0.13}$ & $0.87_{\pm0.22}$ & $1.37_{\pm0.33}$ & $2.49_{\pm0.64}$ & $3.52_{\pm1.04}$ \\
Low confidence & $0.14_{\pm0.09}$ & $0.48_{\pm0.21}$ & $0.90_{\pm0.36}$ & $1.90_{\pm0.75}$ & $2.97_{\pm1.23}$ \\
Random & $0.11_{\pm0.04}$ & $0.43_{\pm0.20}$ & $0.78_{\pm0.38}$ & $1.69_{\pm0.83}$ & $2.79_{\pm1.24}$ \\
\bottomrule
\end{tabular*}
\end{table}

\noindent\textbf{The decile experiment.}
Table~\ref{tab:conv_dec_masking-zero_last} gives the deciles under zero
masking.
Of the three criteria, the conviction has the largest drop at bin $10$ and
usually the smallest at bin $1$; its lower bins are close to zero and most
of its growth occurs in the last two.
Confidence and degree also increase along the bins, but more flatly and more
irregularly.
The experiment was run in all four modes; here we show one of them
specifically.

\begin{table}[htbp]
\caption{Decile experiment under zero masking. The training pool is split into ten equal bins by each ranking criterion (columns), from the lowest-scoring bin $1$ to the highest-scoring bin $10$ (rows), and one bin is perturbed at a time. The conviction has the largest drop at bin $10$, and most of its growth occurs in the last two bins.}
\label{tab:conv_dec_masking-zero_last}
\centering
\small
\begin{tabular*}{\textwidth}{@{\extracolsep{\fill}}lccc}
\toprule
\textbf{Zero masking} & \multicolumn{3}{c}{Ranking criterion} \\
\cmidrule(l){2-4}
Bin & Conviction & Confidence & Degree \\
\midrule
1 & $0.07_{\pm0.07}$ & $0.35_{\pm0.21}$ & $0.38_{\pm0.21}$ \\
2 & $0.09_{\pm0.08}$ & $0.36_{\pm0.21}$ & $0.42_{\pm0.22}$ \\
3 & $0.06_{\pm0.20}$ & $0.39_{\pm0.26}$ & $0.41_{\pm0.24}$ \\
4 & $0.17_{\pm0.25}$ & $0.42_{\pm0.25}$ & $0.37_{\pm0.27}$ \\
5 & $0.37_{\pm0.26}$ & $0.48_{\pm0.38}$ & $0.45_{\pm0.29}$ \\
6 & $0.41_{\pm0.30}$ & $0.58_{\pm0.41}$ & $0.46_{\pm0.31}$ \\
7 & $0.55_{\pm0.33}$ & $0.61_{\pm0.37}$ & $0.55_{\pm0.29}$ \\
8 & $0.63_{\pm0.41}$ & $0.58_{\pm0.37}$ & $0.44_{\pm0.31}$ \\
9 & $0.93_{\pm0.48}$ & $0.61_{\pm0.29}$ & $0.56_{\pm0.34}$ \\
10 & $2.18_{\pm0.70}$ & $0.92_{\pm0.28}$ & $1.06_{\pm0.40}$ \\
\bottomrule
\end{tabular*}
\end{table}

\noindent\textbf{A different conviction score.}
The body ranks nodes by \texttt{last}, the conviction at the final diffusion
layer, since that is the representation the classifier reads, but this rests
on a single layer whose scale is arbitrary.
We therefore repeat the analysis with \texttt{avg\_z}, in which the
conviction of each post-diffusion layer is z-scored separately and the
results averaged, so no single layer dominates the ranking
(Table~\ref{tab:conv_fwd_masking-zero_avg_z}).
Every conclusion is unchanged and the effect is slightly sharper: high
conviction remains the most damaging selection and low conviction the least
at every budget.

\begin{table}[htbp]
\definecolor{convworst}{RGB}{213,94,0}
\definecolor{convbest}{RGB}{0,114,178}
\caption{As Table~\ref{tab:conv_fwd_masking-zero_last}, with nodes ranked by the layer-averaged score \texttt{avg\_z}. The effect is unchanged, and slightly sharper.}
\label{tab:conv_fwd_masking-zero_avg_z}
\centering
\small
\begin{tabular*}{\textwidth}{@{\extracolsep{\fill}}lccccc}
\toprule
\textbf{Zero masking, \texttt{avg\_z}} & \multicolumn{5}{c}{Budget (fraction of the training pool)} \\
\cmidrule(l){2-6}
Selection & 1\% & 5\% & 10\% & 20\% & 30\% \\
\midrule
High conviction & \textcolor{convworst}{$\mathbf{0.56}_{\pm0.17}$} & \textcolor{convworst}{$\mathbf{1.68}_{\pm0.57}$} & \textcolor{convworst}{$\mathbf{2.65}_{\pm0.98}$} & \textcolor{convworst}{$\mathbf{4.10}_{\pm1.61}$} & \textcolor{convworst}{$\mathbf{5.18}_{\pm2.21}$} \\
Low conviction & \textcolor{convbest}{$\mathbf{0.01}_{\pm0.01}$} & \textcolor{convbest}{$\mathbf{0.02}_{\pm0.03}$} & \textcolor{convbest}{$\mathbf{0.02}_{\pm0.06}$} & \textcolor{convbest}{$\mathbf{0.13}_{\pm0.18}$} & \textcolor{convbest}{$\mathbf{0.34}_{\pm0.29}$} \\
High degree & $0.37_{\pm0.16}$ & $0.75_{\pm0.28}$ & $1.06_{\pm0.40}$ & $1.76_{\pm0.69}$ & $2.46_{\pm0.98}$ \\
Low degree & $0.03_{\pm0.02}$ & $0.19_{\pm0.10}$ & $0.38_{\pm0.21}$ & $0.85_{\pm0.44}$ & $1.35_{\pm0.74}$ \\
Degree-matched (high) & $0.07_{\pm0.04}$ & $0.19_{\pm0.15}$ & $0.27_{\pm0.26}$ & $0.52_{\pm0.49}$ & $0.77_{\pm0.55}$ \\
Degree-matched (low) & $0.06_{\pm0.03}$ & $0.26_{\pm0.12}$ & $0.49_{\pm0.33}$ & $1.17_{\pm0.66}$ & $2.22_{\pm1.23}$ \\
High confidence & $0.15_{\pm0.07}$ & $0.55_{\pm0.16}$ & $0.92_{\pm0.28}$ & $1.66_{\pm0.54}$ & $2.50_{\pm0.89}$ \\
Low confidence & $0.04_{\pm0.03}$ & $0.17_{\pm0.12}$ & $0.35_{\pm0.21}$ & $0.80_{\pm0.43}$ & $1.32_{\pm0.69}$ \\
Random & $0.07_{\pm0.04}$ & $0.21_{\pm0.15}$ & $0.43_{\pm0.29}$ & $1.04_{\pm0.55}$ & $1.69_{\pm0.89}$ \\
\bottomrule
\end{tabular*}
\end{table}

\noindent\textbf{Soft gating.}
The soft variant attenuates the whole training pool at once: each node's
outgoing message is attenuated in proportion to its own score, so there is
neither a budget nor a selection axis.
Attenuating by conviction costs $3.09\pm0.95\%$, by degree $4.19\pm1.95\%$
and by confidence $0.67\pm0.37\%$ (Table~\ref{tab:conv_soft_last}).
Unlike in the budget sweeps, conviction is therefore not the most damaging
score here.
A plausible reading is that the soft variant suppresses the low-conviction
nodes and softens the high-conviction ones at the same time, so the two
effects the budget sweep isolates are superimposed and partly cancel.

\begin{table}[htbp]
\caption{Relative drop under soft gating, which attenuates the outgoing message of every training node in proportion to its score (gate floor $g_{\min}=0.1$); with no budget, each ranking score gives one number.}
\label{tab:conv_soft_last}
\centering
\small
\begin{tabular*}{\textwidth}{@{\extracolsep{\fill}}lc}
\toprule
\textbf{Soft gating} & \\
Ranking score & Relative drop (\%) \\
\midrule
Confidence & $0.67_{\pm0.37}$ \\
Conviction & $3.09_{\pm0.95}$ \\
Degree & $4.19_{\pm1.95}$ \\
\bottomrule
\end{tabular*}
\end{table}

\FloatBarrier

\subsection{Per-dataset results}
\label{sec:supp_perdataset}

\noindent\textbf{Where the aggregate comes from.}
Tables~\ref{tab:conv_perds_high} and~\ref{tab:conv_perds_low} give the sweep
dataset by dataset for the high and the low selections, under zero masking
and conviction score \texttt{last}.
Here a cell is the mean over the ten folds of that dataset, and the quoted
spread is the standard error across folds, which shows how stable the effect
is within a dataset; the aggregated tables instead show how it varies across
the datasets.
The high-conviction effect is large on Minesweeper and Roman Empire, which
at the $10\%$ budget lose $5.40\pm0.61\%$ and $4.02\pm0.17\%$ of the learnt
signal, moderate on Amazon Ratings, Questions and Cora, and almost absent on
Pubmed, Citeseer and Tolokers.
That dispersion is what produces the wide between-dataset error bars of the
aggregated tables.
The ordering between strategies survives the disaggregation wherever there
is an effect at all: at the $10\%$ budget high conviction is more damaging
than high degree on seven of the eight datasets; the exception is
Tolokers, where neither strategy has an appreciable effect.
Low conviction is the least damaging of the four strategies on the two
datasets where the effect is large: it costs $0.00\pm0.02\%$ on Minesweeper
and $0.33\pm0.05\%$ on Roman Empire, against the $5.40\%$ and $4.02\%$ above.
A few cells are slightly negative, on Citeseer and Pubmed in particular,
which means that the perturbed model scores marginally above its own
baseline; these
are the datasets on which no selection has an appreciable effect, so we read
those values as noise around zero.

\begin{table}[htbp]
\caption{Budget sweep under zero masking, dataset by dataset, for the high-degree and high-conviction selections. Unlike the other tables, the spread is the standard error across the ten folds of each dataset.}
\label{tab:conv_perds_high}
\centering
\footnotesize
\begin{tabular*}{\textwidth}{@{\extracolsep{\fill}}llccccc}
\toprule
\multicolumn{2}{@{}l}{\textbf{Zero masking, high}} & \multicolumn{5}{c}{Budget (fraction of the training pool)} \\
\cmidrule(l){3-7}
Dataset & Selection & 1\% & 5\% & 10\% & 20\% & 30\% \\
\midrule
Amazon Ratings & Degree & $1.25_{\pm0.17}$ & $1.75_{\pm0.28}$ & $2.14_{\pm0.32}$ & $3.14_{\pm0.32}$ & $3.90_{\pm0.27}$ \\
 & Conviction & $1.16_{\pm0.17}$ & $2.31_{\pm0.26}$ & $2.93_{\pm0.29}$ & $3.81_{\pm0.28}$ & $4.48_{\pm0.28}$ \\
\addlinespace
Citeseer & Degree & $-0.04_{\pm0.16}$ & $-0.39_{\pm0.21}$ & $-0.65_{\pm0.35}$ & $-1.25_{\pm0.82}$ & $-1.74_{\pm1.05}$ \\
 & Conviction & $0.06_{\pm0.17}$ & $0.41_{\pm0.21}$ & $0.62_{\pm0.22}$ & $1.20_{\pm0.31}$ & $0.67_{\pm0.46}$ \\
\addlinespace
Cora & Degree & $0.46_{\pm0.17}$ & $0.87_{\pm0.20}$ & $0.99_{\pm0.37}$ & $1.53_{\pm0.44}$ & $2.21_{\pm0.45}$ \\
 & Conviction & $0.46_{\pm0.16}$ & $0.87_{\pm0.28}$ & $1.23_{\pm0.32}$ & $1.78_{\pm0.37}$ & $2.29_{\pm0.49}$ \\
\addlinespace
Minesweeper & Degree & $0.14_{\pm0.03}$ & $0.67_{\pm0.09}$ & $1.46_{\pm0.10}$ & $3.26_{\pm0.24}$ & $5.31_{\pm0.29}$ \\
 & Conviction & $0.54_{\pm0.12}$ & $2.59_{\pm0.39}$ & $5.40_{\pm0.61}$ & $10.27_{\pm0.74}$ & $14.70_{\pm0.77}$ \\
\addlinespace
Pubmed & Degree & $-0.00_{\pm0.02}$ & $0.03_{\pm0.05}$ & $0.01_{\pm0.08}$ & $0.17_{\pm0.09}$ & $0.29_{\pm0.12}$ \\
 & Conviction & $0.03_{\pm0.02}$ & $0.09_{\pm0.07}$ & $0.11_{\pm0.10}$ & $0.18_{\pm0.08}$ & $0.27_{\pm0.11}$ \\
\addlinespace
Questions & Degree & $0.79_{\pm0.34}$ & $1.64_{\pm0.36}$ & $1.85_{\pm0.42}$ & $1.98_{\pm0.52}$ & $2.12_{\pm0.55}$ \\
 & Conviction & $1.26_{\pm0.34}$ & $2.19_{\pm0.31}$ & $3.10_{\pm0.41}$ & $3.40_{\pm0.34}$ & $3.32_{\pm0.32}$ \\
\addlinespace
Roman Empire & Degree & $0.37_{\pm0.04}$ & $1.37_{\pm0.08}$ & $2.48_{\pm0.11}$ & $4.74_{\pm0.13}$ & $6.73_{\pm0.11}$ \\
 & Conviction & $0.56_{\pm0.04}$ & $2.30_{\pm0.11}$ & $4.02_{\pm0.17}$ & $7.22_{\pm0.22}$ & $9.95_{\pm0.20}$ \\
\addlinespace
Tolokers & Degree & $-0.00_{\pm0.09}$ & $0.10_{\pm0.15}$ & $0.18_{\pm0.19}$ & $0.52_{\pm0.31}$ & $0.83_{\pm0.38}$ \\
 & Conviction & $0.01_{\pm0.09}$ & $-0.05_{\pm0.12}$ & $0.02_{\pm0.13}$ & $0.34_{\pm0.18}$ & $0.72_{\pm0.30}$ \\
\bottomrule
\end{tabular*}
\end{table}

\begin{table}[htbp]
\caption{As Table~\ref{tab:conv_perds_high}, for the low-degree and low-conviction selections.}
\label{tab:conv_perds_low}
\centering
\footnotesize
\begin{tabular*}{\textwidth}{@{\extracolsep{\fill}}llccccc}
\toprule
\multicolumn{2}{@{}l}{\textbf{Zero masking, low}} & \multicolumn{5}{c}{Budget (fraction of the training pool)} \\
\cmidrule(l){3-7}
Dataset & Selection & 1\% & 5\% & 10\% & 20\% & 30\% \\
\midrule
Amazon Ratings & Degree & $0.02_{\pm0.05}$ & $-0.02_{\pm0.08}$ & $0.05_{\pm0.12}$ & $0.48_{\pm0.18}$ & $0.73_{\pm0.20}$ \\
 & Conviction & $-0.02_{\pm0.03}$ & $-0.02_{\pm0.08}$ & $0.07_{\pm0.09}$ & $0.13_{\pm0.10}$ & $0.56_{\pm0.22}$ \\
\addlinespace
Citeseer & Degree & $-0.04_{\pm0.04}$ & $-0.05_{\pm0.11}$ & $-0.12_{\pm0.16}$ & $-0.15_{\pm0.22}$ & $-0.29_{\pm0.35}$ \\
 & Conviction & $0.08_{\pm0.09}$ & $-0.12_{\pm0.18}$ & $-0.30_{\pm0.20}$ & $-0.62_{\pm0.27}$ & $-1.60_{\pm0.67}$ \\
\addlinespace
Cora & Degree & $0.00_{\pm0.00}$ & $0.09_{\pm0.06}$ & $0.09_{\pm0.14}$ & $0.11_{\pm0.19}$ & $0.14_{\pm0.18}$ \\
 & Conviction & $-0.02_{\pm0.02}$ & $0.05_{\pm0.06}$ & $0.04_{\pm0.12}$ & $0.29_{\pm0.15}$ & $0.50_{\pm0.17}$ \\
\addlinespace
Minesweeper & Degree & $0.10_{\pm0.03}$ & $0.62_{\pm0.07}$ & $1.36_{\pm0.10}$ & $2.98_{\pm0.16}$ & $5.21_{\pm0.26}$ \\
 & Conviction & $0.00_{\pm0.00}$ & $-0.01_{\pm0.01}$ & $0.00_{\pm0.02}$ & $0.29_{\pm0.11}$ & $0.86_{\pm0.16}$ \\
\addlinespace
Pubmed & Degree & $-0.01_{\pm0.01}$ & $-0.02_{\pm0.02}$ & $-0.03_{\pm0.03}$ & $-0.05_{\pm0.04}$ & $-0.07_{\pm0.05}$ \\
 & Conviction & $-0.00_{\pm0.00}$ & $0.01_{\pm0.02}$ & $0.01_{\pm0.03}$ & $-0.05_{\pm0.03}$ & $-0.01_{\pm0.05}$ \\
\addlinespace
Questions & Degree & $0.03_{\pm0.02}$ & $0.10_{\pm0.06}$ & $0.21_{\pm0.08}$ & $0.45_{\pm0.14}$ & $0.57_{\pm0.14}$ \\
 & Conviction & $0.08_{\pm0.05}$ & $0.22_{\pm0.06}$ & $0.26_{\pm0.13}$ & $0.33_{\pm0.20}$ & $0.50_{\pm0.25}$ \\
\addlinespace
Roman Empire & Degree & $0.14_{\pm0.03}$ & $0.63_{\pm0.04}$ & $1.27_{\pm0.08}$ & $2.71_{\pm0.14}$ & $4.16_{\pm0.17}$ \\
 & Conviction & $0.02_{\pm0.01}$ & $0.14_{\pm0.03}$ & $0.33_{\pm0.05}$ & $0.74_{\pm0.08}$ & $1.39_{\pm0.16}$ \\
\addlinespace
Tolokers & Degree & $0.02_{\pm0.03}$ & $0.16_{\pm0.05}$ & $0.23_{\pm0.06}$ & $0.26_{\pm0.10}$ & $0.36_{\pm0.10}$ \\
 & Conviction & $0.03_{\pm0.01}$ & $0.09_{\pm0.05}$ & $0.15_{\pm0.09}$ & $0.49_{\pm0.17}$ & $0.95_{\pm0.29}$ \\
\bottomrule
\end{tabular*}
\end{table}

\subsection{The cost of pruning}
\label{sec:supp_prune_diag}

Pruning is the only intervention that changes the graph, so for every
(strategy, budget) pair we record the fraction of nodes and edges remaining,
the number of connected components and the size of the giant component.
These diagnostics show that under pruning degree and conviction are not on
the same footing.
Removing the top $k\%$ by degree deletes by construction the largest
possible number of edges, $51.6\pm4.3\%$ of the graph at the $30\%$ budget
against $39.4\pm2.8\%$ for the top $k\%$ by conviction
(Table~\ref{tab:conv_prune_edges_last}).
Even so, its average cost at that budget, $5.77\pm1.57\%$, is not higher than
that of conviction, $7.08\pm2.04\%$ (Table~\ref{tab:conv_fwd_pruning_last}).

The same diagnostics qualify the degree-matched control.
At the $1\%$ budget it removes $5.0\pm1.3\%$ of the edges against
$7.7\pm3.1\%$ for the high-conviction group, five times what a random
selection of the same size removes, and it costs $0.20\pm0.05\%$ against
$1.65\pm0.96\%$.
At the largest budgets the matching is diluted, since the pool of unused
high-degree candidates is exhausted, and by the $30\%$ budget the control
deletes $28.3\pm1.7\%$ of the edges, indistinguishable from a random
selection.
The control is therefore most informative at the small budgets, and should
be read with this in mind at the large ones.

\begin{table}[htbp]
\caption{Percentage of the graph's edges deleted under pruning, by selection strategy and budget; the performance cost of the same selections is in Table~\ref{tab:conv_fwd_pruning_last}. Selecting by degree deletes the most edges.}
\label{tab:conv_prune_edges_last}
\centering
\small
\begin{tabular*}{\textwidth}{@{\extracolsep{\fill}}lccccc}
\toprule
\textbf{Pruning} & \multicolumn{5}{c}{Budget (fraction of the training pool)} \\
\cmidrule(l){2-6}
Selection & 1\% & 5\% & 10\% & 20\% & 30\% \\
\midrule
High conviction & $7.7_{\pm3.1}$ & $15.9_{\pm3.5}$ & $22.0_{\pm3.4}$ & $31.3_{\pm3.1}$ & $39.4_{\pm2.8}$ \\
Low conviction & $0.5_{\pm0.1}$ & $2.7_{\pm0.5}$ & $5.6_{\pm1.0}$ & $11.6_{\pm1.8}$ & $18.0_{\pm2.5}$ \\
High degree & $10.2_{\pm3.4}$ & $23.5_{\pm5.1}$ & $32.7_{\pm5.5}$ & $44.2_{\pm5.1}$ & $51.6_{\pm4.3}$ \\
Low degree & $0.4_{\pm0.1}$ & $1.9_{\pm0.6}$ & $4.2_{\pm1.2}$ & $8.7_{\pm2.5}$ & $13.3_{\pm3.7}$ \\
Degree-matched (high) & $5.0_{\pm1.3}$ & $11.5_{\pm1.4}$ & $16.6_{\pm1.3}$ & $23.6_{\pm1.1}$ & $28.3_{\pm1.7}$ \\
Degree-matched (low) & $0.5_{\pm0.1}$ & $2.8_{\pm0.6}$ & $5.9_{\pm1.1}$ & $12.4_{\pm1.9}$ & $19.3_{\pm2.6}$ \\
High confidence & $2.0_{\pm0.4}$ & $6.9_{\pm1.1}$ & $11.7_{\pm1.7}$ & $20.3_{\pm2.6}$ & $28.1_{\pm3.1}$ \\
Low confidence & $1.7_{\pm0.9}$ & $7.1_{\pm2.6}$ & $12.2_{\pm3.5}$ & $20.7_{\pm4.2}$ & $28.2_{\pm4.5}$ \\
Random & $1.0_{\pm0.1}$ & $5.0_{\pm0.2}$ & $10.0_{\pm0.4}$ & $19.5_{\pm0.7}$ & $28.3_{\pm1.0}$ \\
\bottomrule
\end{tabular*}
\end{table}

\FloatBarrier

\section{The Forman Profile: Protocol and Additional Material}
\label{sec:supp_forman}

\subsection{Protocol}
\label{sec:supp_forman_setup}
\label{sec:supp_metrics}
\label{sec:supp_forman_agg}

\noindent\textbf{Models and training.}
Every configuration follows the same loop: the network is trained on the
node classification task, and at selected epochs the Forman profile of the
sheaf Laplacian currently produced by the restriction maps is extracted,
layer by layer, and compared to the reference profiles.
To extract it, we take the sheaf Laplacian produced at layer $\ell$, apply
Definition~\ref{def:block_bochner} to isolate the Forman term, and
diagonalize each of its $n$ diagonal blocks.

Two model families are studied.
In \emph{GeneralSheaf}, the learner reported in the body, a neural map
sends each edge to an unconstrained $d\times d$ restriction map,
$\F^{(\ell)}_{v\trianglelefteq e} =
\varsigma(\Theta[x_v^{(\ell)}\|x_u^{(\ell)}])$.
In \emph{JointSheafParams}~\cite{caralt2024jointdiffusionprocessesinductive}
the node signals and the restriction maps are diffused jointly at each
layer $\ell$:
\begin{equation}
\begin{cases}
X^{(\ell+1)} = X^{(\ell)} - \varsigma\bigl((\mathbf{I}_{nd} - \alpha\,\Delta_{\F^{(\ell)}})(I_n \otimes \Omega_1^{(\ell)})X^{(\ell)}\Omega_2^{(\ell)}\bigr),\\[2pt]
\F^{*(\ell+1)} = \F^{*(\ell)} - \varsigma\bigl((\mathbf{I}_{2md} - \beta\,\Delta_{X^{(\ell)}})(I_{2m} \otimes \Omega_1^{*(\ell)})\F^{*(\ell)}\Omega_2^{*(\ell)}\bigr),
\end{cases}
\end{equation}
where $\F^*\in\R^{2md\times d}$ stacks the $2m$ transposed restriction maps of
the $m$ edges, $\Delta_{X^{(\ell)}}$ is the sheaf Laplacian built with the node
signals in place of the restriction maps, $\Omega_1,\Omega_2$ and their starred
counterparts are learnt weight matrices, and $\alpha,\beta$ are fixed
coefficients weighting the two diffusions.
The maps are initialized at the identity sheaf, $\F^{(0)}=I_d$, so that for
this family the profile measures drift from the trivial sheaf.

Each model is trained independently under both normalization regimes, with $\Delta_\F = D_\F^{-1/2}L_\F D_\F^{-1/2}$ and with the raw $L_\F$.
Training uses $10$-fold cross-validation, $500$ epochs and early stopping
with patience $200$.
Stalk dimension and depth are chosen per dataset, $d\in\{2,3,5\}$ and
$L\in\{2,\dots,6\}$: Cornell and Wisconsin use $d=2$ and $L=2$, the larger
benchmarks up to $d=5$ and $L=6$.
Profiles are saved at epochs $\{0,1,5,15,50,200\}$ and at the last epoch,
for every fold and every layer.
Epoch $0$ is the control: the sheaf is randomly initialized for
GeneralSheaf, and is the identity sheaf for JointSheafParams.

\noindent\textbf{Alignment metrics.}
All metrics are computed per eigenvalue of the Forman term, per layer, per
checkpoint and per fold, then averaged over eigenvalues and folds, so a configuration
yields one scalar per (checkpoint, layer) cell, with checkpoints on the rows and
layers on the columns.

For normalized runs each eigenvalue channel $f_k$ is compared to the
topological reference $f^{\mathrm{top}}$ by Spearman $\rho$ and Pearson $r$,
by the $1$-Wasserstein distance $W_1$, by the RMSE between raw values, and
by the NRMSE, the same $\ell_2$ distance after z-scoring both profiles.
RMSE and NRMSE differ only by the z-scoring, so the first sees location and
scale and the second does not, and when they disagree, the profiles match in shape and differ in mean
or scale.
On the normalized runs the two do disagree, which is why the alignment of
Section~\ref{sec:forman_profile} is agreement in shape up to an offset, and
why $W_1$ is the statistic reported in the body.

For unnormalized runs the baseline is ambiguous, since the unnormalized
graph Laplacian has an identically zero profile.
We therefore use both references:
$W_1(f_k,f_k^{\mathrm{top}})$ and $W_1(f_k,\mathbf 0)$ for distributional
distance to the topological and the flat reference, Pearson $r$ and NRMSE
against $f_k^{\mathrm{top}}$ for shape alignment independent of scale, and
RMSE against $\mathbf 0$, the root-mean-square magnitude of the learnt
profile itself.

\noindent\textbf{Aggregation.}
The global tables report $W_1$ as a single statistic per checkpoint: for
each dataset we average over folds and eigenvalues, then across datasets.
Since $L$ differs across datasets, layers are binned by relative depth: first, middle or last.
When a bin spans more than one layer, as the middle bin of a six-layer
network does, those layers are averaged together with the folds into one
number per dataset, exactly as the eigenvalue channels already are; only
then do we average across datasets.
The quoted spread is therefore across datasets, and shows how much a
pattern varies from one dataset to another.
The global tables of Appendix~\ref{sec:supp_forman_tables} report $W_1$
only; the per-dataset tables of Appendix~\ref{sec:supp_forman_figures} report
all the metrics above.

\subsection{Separating normalization from training}
\label{sec:supp_other}

To separate the effect of the learning from that of the normalization we
recompute, for GeneralSheaf and at every checkpoint, a second Laplacian
built from the \emph{same learnt restriction maps} under the opposite
normalization, the \emph{other Laplacian}.
This gives four profiles per dataset.

\begin{center}
\renewcommand{\arraystretch}{1.3}
\small
\setlength{\tabcolsep}{4pt}%
\begin{tabular}{@{}lcc@{}}
\toprule
 & \textbf{Evaluated normalized} $\Delta_{\F}$
 & \textbf{Evaluated unnormalized} $L_{\F}$ \\
\midrule
Maps trained \texttt{norm=True}  & primary run     & other Laplacian \\
Maps trained \texttt{norm=False} & other Laplacian & primary run     \\
\bottomrule
\end{tabular}
\end{center}

Each column fixes the normalization used at evaluation and varies only the
regime the maps were trained under, so comparing the two entries of one
column isolates the role of the training regime.
The \emph{ablation} compares each other Laplacian against the reference
profiles, with the metrics above.
The \textbf{cross-comparison} instead compares the two Laplacians of one
column with each other: the Forman profile of the Laplacian built from the
maps trained with \texttt{norm=True} against that of the one built from the
maps trained with \texttt{norm=False}, both evaluated with the same
normalization, so that no reference profile is involved.
It is done for GeneralSheaf, dataset by dataset, at every checkpoint and
layer, and aggregated as in Appendix~\ref{sec:supp_forman_setup}; it
measures how far apart the profiles produced by the two training regimes
are.

\FloatBarrier

\subsection{The Forman fingerprint of each learning mechanism}
\label{sec:supp_forman_tables}

Section~\ref{sec:forman_profile} reports the normalized GeneralSheaf runs,
which converge to local alignment with the normalized graph Laplacian.
The remaining mechanisms each leave a different signature: we read the
difference in Forman-profile behaviour between these sheaf learning
mechanisms as a fingerprint of the learning process.

All tables of this section share the layout of Table~\ref{tab:global_general_norm}.
Each cell is the $1$-Wasserstein distance between two Forman profiles, with
training epochs on the rows, epoch $0$ being the initial sheaf, and the
relative depth of the layer on the columns.
Values are mean$_{\pm\text{s.e.}}$ across 10 datasets, each averaged over its
folds, 8 for the middle layer, and darker cells are farther apart, on the
scale of the colorbar beside each panel.
The bold label names the mechanism; the caption gives the reference and the
pattern.

\noindent\textbf{Unnormalized general sheaf learning aligns with neither
reference.}
The metrics against the topological reference stay almost constant in layer
and epoch, while the distance from the zero reference is large and generally
grows: on the large heterophilic datasets it peaks at intermediate depth and
epoch, then recedes slightly but stays substantial
(Table~\ref{tab:global_general_unnorm}).
We have no explanation for the non-monotonicity and read no mechanism into
it.

\begin{table}[htbp]
\definecolor{fpb0}{RGB}{255,251,195}
\definecolor{fpb1}{RGB}{255,244,176}
\definecolor{fpb2}{RGB}{255,236,158}
\definecolor{fpb3}{RGB}{255,228,141}
\definecolor{fpb4}{RGB}{254,220,123}
\definecolor{fpb5}{RGB}{254,206,106}
\definecolor{fpb6}{RGB}{254,189,88}
\definecolor{fpb7}{RGB}{254,173,74}
\definecolor{fpb8}{RGB}{253,158,67}
\definecolor{fpb9}{RGB}{253,143,61}
\definecolor{fpb10}{RGB}{253,117,53}
\definecolor{fpb11}{RGB}{252,91,46}
\definecolor{fpb12}{RGB}{253,145,62}
\definecolor{fpb13}{RGB}{253,128,56}
\definecolor{fpb14}{RGB}{253,133,58}
\definecolor{fpb15}{RGB}{252,109,51}
\definecolor{fpb16}{RGB}{253,138,59}
\definecolor{fpb17}{RGB}{252,81,43}
\definecolor{fpb18}{RGB}{252,93,46}
\definecolor{fpb19}{RGB}{253,127,56}
\definecolor{fpb20}{RGB}{252,78,42}
\definecolor{fpb21}{RGB}{252,97,47}
\definecolor{fpb22}{RGB}{253,132,57}
\definecolor{fpb23}{RGB}{252,94,46}
\definecolor{fpb24}{RGB}{253,111,51}
\definecolor{fpb25}{RGB}{253,134,58}
\definecolor{fpb26}{RGB}{252,108,51}
\definecolor{fpb27}{RGB}{253,129,57}
\definecolor{fpb28}{RGB}{255,254,201}
\definecolor{fpb29}{RGB}{254,161,69}
\definecolor{fpb30}{RGB}{255,253,200}
\definecolor{fpb31}{RGB}{253,155,66}
\definecolor{fpb32}{RGB}{253,157,67}
\definecolor{fpb33}{RGB}{255,252,197}
\definecolor{fpb34}{RGB}{253,156,66}
\definecolor{fpb35}{RGB}{254,165,71}
\definecolor{fpb36}{RGB}{255,251,193}
\definecolor{fpb37}{RGB}{253,150,64}
\definecolor{fpb38}{RGB}{254,183,81}
\definecolor{fpb39}{RGB}{255,247,185}
\definecolor{fpb40}{RGB}{254,198,98}
\definecolor{fpb41}{RGB}{255,246,181}
\definecolor{fpb42}{RGB}{252,98,48}
\definecolor{fpb43}{RGB}{254,204,104}
\definecolor{fpb44}{RGB}{255,245,180}
\caption{$W_1$ distance of the unnormalized GeneralSheaf from the topological reference (top) and the zero reference (bottom). The profile aligns with neither.}
\label{tab:global_general_unnorm}
\centering
\small
\renewcommand{\arraystretch}{1.2}%
\begin{tabular}{l>{\centering\arraybackslash}p{2.7cm}>{\centering\arraybackslash}p{2.7cm}>{\centering\arraybackslash}p{2.7cm}l}
\toprule
\multicolumn{5}{@{}l}{\textbf{GeneralSheaf, unnormalized}} \\
 & \multicolumn{3}{c}{Layer} & \\
\cmidrule(l){2-4}
Epoch & First & Middle & Last & \\
\midrule
\multicolumn{5}{@{}l}{\emph{against the topological reference}} \\
0 & \cellcolor{fpb12}$0.437_{\pm0.041}$ & \cellcolor{fpb13}$0.472_{\pm0.047}$ & \cellcolor{fpb14}$0.463_{\pm0.040}$ & \smash{\begin{tikzpicture}[baseline={([yshift=-9pt]current bounding box.north)}]
\fill[fpb0] (0,0.0000) rectangle (0.28,0.2292);
\fill[fpb1] (0,0.2292) rectangle (0.28,0.4583);
\fill[fpb2] (0,0.4583) rectangle (0.28,0.6875);
\fill[fpb3] (0,0.6875) rectangle (0.28,0.9167);
\fill[fpb4] (0,0.9167) rectangle (0.28,1.1458);
\fill[fpb5] (0,1.1458) rectangle (0.28,1.3750);
\fill[fpb6] (0,1.3750) rectangle (0.28,1.6042);
\fill[fpb7] (0,1.6042) rectangle (0.28,1.8333);
\fill[fpb8] (0,1.8333) rectangle (0.28,2.0625);
\fill[fpb9] (0,2.0625) rectangle (0.28,2.2917);
\fill[fpb10] (0,2.2917) rectangle (0.28,2.5208);
\fill[fpb11] (0,2.5208) rectangle (0.28,2.7500);
\draw[line width=0.3pt] (0,0) rectangle (0.28,2.75);
\node[right,font=\scriptsize] at (0.28,2.75) {0.56};
\node[right,font=\scriptsize] at (0.28,1.375) {0.28};
\node[right,font=\scriptsize] at (0.28,0) {0.00};
\end{tikzpicture}} \\
1 & \cellcolor{fpb9}$0.442_{\pm0.042}$ & \cellcolor{fpb15}$0.505_{\pm0.047}$ & \cellcolor{fpb10}$0.491_{\pm0.045}$ &  \\
5 & \cellcolor{fpb16}$0.453_{\pm0.040}$ & \cellcolor{fpb17}$0.554_{\pm0.051}$ & \cellcolor{fpb18}$0.534_{\pm0.045}$ &  \\
15 & \cellcolor{fpb19}$0.473_{\pm0.040}$ & \cellcolor{fpb20}$0.560_{\pm0.037}$ & \cellcolor{fpb21}$0.526_{\pm0.037}$ &  \\
50 & \cellcolor{fpb22}$0.464_{\pm0.038}$ & \cellcolor{fpb23}$0.532_{\pm0.048}$ & \cellcolor{fpb24}$0.502_{\pm0.040}$ &  \\
200 & \cellcolor{fpb25}$0.460_{\pm0.039}$ & \cellcolor{fpb26}$0.506_{\pm0.047}$ & \cellcolor{fpb27}$0.469_{\pm0.032}$ &  \\
final & \cellcolor{fpb25}$0.460_{\pm0.039}$ & \cellcolor{fpb15}$0.505_{\pm0.047}$ & \cellcolor{fpb27}$0.469_{\pm0.030}$ &  \\
\midrule
\multicolumn{5}{@{}l}{\emph{against the zero reference}} \\
0 & \cellcolor{fpb28}$0.172_{\pm0.143}$ & \cellcolor{fpb7}$7.619_{\pm6.955}$ & \cellcolor{fpb29}$8.361_{\pm7.008}$ & \smash{\begin{tikzpicture}[baseline={([yshift=-9pt]current bounding box.north)}]
\fill[fpb0] (0,0.0000) rectangle (0.28,0.2292);
\fill[fpb1] (0,0.2292) rectangle (0.28,0.4583);
\fill[fpb2] (0,0.4583) rectangle (0.28,0.6875);
\fill[fpb3] (0,0.6875) rectangle (0.28,0.9167);
\fill[fpb4] (0,0.9167) rectangle (0.28,1.1458);
\fill[fpb5] (0,1.1458) rectangle (0.28,1.3750);
\fill[fpb6] (0,1.3750) rectangle (0.28,1.6042);
\fill[fpb7] (0,1.6042) rectangle (0.28,1.8333);
\fill[fpb8] (0,1.8333) rectangle (0.28,2.0625);
\fill[fpb9] (0,2.0625) rectangle (0.28,2.2917);
\fill[fpb10] (0,2.2917) rectangle (0.28,2.5208);
\fill[fpb11] (0,2.5208) rectangle (0.28,2.7500);
\draw[line width=0.3pt] (0,0) rectangle (0.28,2.75);
\node[right,font=\scriptsize] at (0.28,2.75) {12.12};
\node[right,font=\scriptsize] at (0.28,1.375) {6.06};
\node[right,font=\scriptsize] at (0.28,0) {0.00};
\end{tikzpicture}} \\
1 & \cellcolor{fpb30}$0.229_{\pm0.180}$ & \cellcolor{fpb31}$8.775_{\pm7.724}$ & \cellcolor{fpb32}$8.671_{\pm6.717}$ &  \\
5 & \cellcolor{fpb33}$0.399_{\pm0.287}$ & \cellcolor{fpb34}$8.718_{\pm7.220}$ & \cellcolor{fpb35}$8.101_{\pm6.057}$ &  \\
15 & \cellcolor{fpb36}$0.603_{\pm0.409}$ & \cellcolor{fpb37}$9.075_{\pm6.932}$ & \cellcolor{fpb38}$6.955_{\pm4.448}$ &  \\
50 & \cellcolor{fpb39}$1.064_{\pm0.573}$ & \cellcolor{fpb19}$10.240_{\pm6.554}$ & \cellcolor{fpb40}$6.004_{\pm3.448}$ &  \\
200 & \cellcolor{fpb41}$1.277_{\pm0.530}$ & \cellcolor{fpb42}$11.338_{\pm6.261}$ & \cellcolor{fpb43}$5.680_{\pm2.724}$ &  \\
final & \cellcolor{fpb44}$1.347_{\pm0.591}$ & \cellcolor{fpb20}$12.125_{\pm7.013}$ & \cellcolor{fpb5}$5.536_{\pm2.636}$ &  \\
\bottomrule
\end{tabular}
\end{table}

\noindent\textbf{The alignment comes from normalization, and the maps stay away
from the identity.}
Maps trained normalized but evaluated unnormalized lose their alignment to
both references.
Were they identity-like, evaluating them unnormalized would reproduce the
unnormalized graph Laplacian, whose profile vanishes identically, and the
profile would sit at the zero reference.
In the other direction, maps trained unnormalized but evaluated normalized
do align with the topological reference, in $W_1$ as in the other metrics:
they
are not the identity, yet they build a Laplacian which, once normalized,
sits close to the reference.
The effect is absent at initialization
(Tables~\ref{tab:global_general_other_norm}
and~\ref{tab:global_general_other_unnorm}).

\begin{table}[htbp]
\definecolor{fpc0}{RGB}{255,251,195}
\definecolor{fpc1}{RGB}{255,244,176}
\definecolor{fpc2}{RGB}{255,236,158}
\definecolor{fpc3}{RGB}{255,228,141}
\definecolor{fpc4}{RGB}{254,220,123}
\definecolor{fpc5}{RGB}{254,206,106}
\definecolor{fpc6}{RGB}{254,189,88}
\definecolor{fpc7}{RGB}{254,173,74}
\definecolor{fpc8}{RGB}{253,158,67}
\definecolor{fpc9}{RGB}{253,143,61}
\definecolor{fpc10}{RGB}{253,117,53}
\definecolor{fpc11}{RGB}{252,91,46}
\definecolor{fpc12}{RGB}{253,136,59}
\definecolor{fpc13}{RGB}{253,129,57}
\definecolor{fpc14}{RGB}{253,128,56}
\definecolor{fpc15}{RGB}{253,131,57}
\definecolor{fpc16}{RGB}{252,108,51}
\definecolor{fpc17}{RGB}{253,125,56}
\definecolor{fpc18}{RGB}{252,110,51}
\definecolor{fpc19}{RGB}{252,79,42}
\definecolor{fpc20}{RGB}{253,111,51}
\definecolor{fpc21}{RGB}{253,126,56}
\definecolor{fpc22}{RGB}{253,142,60}
\definecolor{fpc23}{RGB}{253,159,68}
\definecolor{fpc24}{RGB}{254,160,68}
\definecolor{fpc25}{RGB}{254,183,81}
\definecolor{fpc26}{RGB}{254,214,115}
\definecolor{fpc27}{RGB}{254,216,117}
\definecolor{fpc28}{RGB}{254,182,80}
\definecolor{fpc29}{RGB}{254,215,116}
\definecolor{fpc30}{RGB}{254,213,114}
\caption{$W_1$ distance of the other Laplacian, with maps trained unnormalized and evaluated normalized (Appendix~\ref{sec:supp_other}), from the topological reference. Once normalized, the maps align with it as training proceeds.}
\label{tab:global_general_other_norm}
\centering
\small
\renewcommand{\arraystretch}{1.2}%
\begin{tabular}{l>{\centering\arraybackslash}p{2.7cm}>{\centering\arraybackslash}p{2.7cm}>{\centering\arraybackslash}p{2.7cm}l}
\toprule
\multicolumn{5}{@{}l}{\textbf{Other Laplacian: trained unnormalized, evaluated normalized}} \\
 & \multicolumn{3}{c}{Layer} & \\
\cmidrule(l){2-4}
Epoch & First & Middle & Last & \\
\midrule
0 & \cellcolor{fpc12}$0.408_{\pm0.047}$ & \cellcolor{fpc13}$0.419_{\pm0.062}$ & \cellcolor{fpc14}$0.420_{\pm0.052}$ & \smash{\begin{tikzpicture}[baseline={([yshift=-9pt]current bounding box.north)}]
\fill[fpc0] (0,0.0000) rectangle (0.28,0.2292);
\fill[fpc1] (0,0.2292) rectangle (0.28,0.4583);
\fill[fpc2] (0,0.4583) rectangle (0.28,0.6875);
\fill[fpc3] (0,0.6875) rectangle (0.28,0.9167);
\fill[fpc4] (0,0.9167) rectangle (0.28,1.1458);
\fill[fpc5] (0,1.1458) rectangle (0.28,1.3750);
\fill[fpc6] (0,1.3750) rectangle (0.28,1.6042);
\fill[fpc7] (0,1.6042) rectangle (0.28,1.8333);
\fill[fpc8] (0,1.8333) rectangle (0.28,2.0625);
\fill[fpc9] (0,2.0625) rectangle (0.28,2.2917);
\fill[fpc10] (0,2.2917) rectangle (0.28,2.5208);
\fill[fpc11] (0,2.5208) rectangle (0.28,2.7500);
\draw[line width=0.3pt] (0,0) rectangle (0.28,2.75);
\node[right,font=\scriptsize] at (0.28,2.75) {0.50};
\node[right,font=\scriptsize] at (0.28,1.375) {0.25};
\node[right,font=\scriptsize] at (0.28,0) {0.00};
\end{tikzpicture}} \\
1 & \cellcolor{fpc15}$0.416_{\pm0.051}$ & \cellcolor{fpc16}$0.452_{\pm0.066}$ & \cellcolor{fpc17}$0.425_{\pm0.057}$ &  \\
5 & \cellcolor{fpc18}$0.450_{\pm0.050}$ & \cellcolor{fpc19}$0.499_{\pm0.068}$ & \cellcolor{fpc20}$0.448_{\pm0.063}$ &  \\
15 & \cellcolor{fpc21}$0.424_{\pm0.053}$ & \cellcolor{fpc14}$0.420_{\pm0.070}$ & \cellcolor{fpc22}$0.398_{\pm0.075}$ &  \\
50 & \cellcolor{fpc23}$0.352_{\pm0.060}$ & \cellcolor{fpc24}$0.348_{\pm0.098}$ & \cellcolor{fpc23}$0.351_{\pm0.073}$ &  \\
200 & \cellcolor{fpc25}$0.287_{\pm0.058}$ & \cellcolor{fpc26}$0.207_{\pm0.070}$ & \cellcolor{fpc27}$0.203_{\pm0.034}$ &  \\
final & \cellcolor{fpc28}$0.290_{\pm0.059}$ & \cellcolor{fpc29}$0.205_{\pm0.069}$ & \cellcolor{fpc30}$0.209_{\pm0.032}$ &  \\
\bottomrule
\end{tabular}
\end{table}

\begin{table}[htbp]
\definecolor{fpd0}{RGB}{255,251,195}
\definecolor{fpd1}{RGB}{255,244,176}
\definecolor{fpd2}{RGB}{255,236,158}
\definecolor{fpd3}{RGB}{255,228,141}
\definecolor{fpd4}{RGB}{254,220,123}
\definecolor{fpd5}{RGB}{254,206,106}
\definecolor{fpd6}{RGB}{254,189,88}
\definecolor{fpd7}{RGB}{254,173,74}
\definecolor{fpd8}{RGB}{253,158,67}
\definecolor{fpd9}{RGB}{253,143,61}
\definecolor{fpd10}{RGB}{253,117,53}
\definecolor{fpd11}{RGB}{252,91,46}
\definecolor{fpd12}{RGB}{253,128,56}
\definecolor{fpd13}{RGB}{252,110,51}
\definecolor{fpd14}{RGB}{253,123,55}
\definecolor{fpd15}{RGB}{253,125,55}
\definecolor{fpd16}{RGB}{252,107,50}
\definecolor{fpd17}{RGB}{253,121,54}
\definecolor{fpd18}{RGB}{252,100,48}
\definecolor{fpd19}{RGB}{252,109,51}
\definecolor{fpd20}{RGB}{253,120,54}
\definecolor{fpd21}{RGB}{252,81,43}
\definecolor{fpd22}{RGB}{252,96,47}
\definecolor{fpd23}{RGB}{253,119,54}
\definecolor{fpd24}{RGB}{252,90,45}
\definecolor{fpd25}{RGB}{252,108,51}
\definecolor{fpd26}{RGB}{253,122,55}
\definecolor{fpd27}{RGB}{252,86,44}
\definecolor{fpd28}{RGB}{253,111,51}
\definecolor{fpd29}{RGB}{252,87,45}
\definecolor{fpd30}{RGB}{252,105,50}
\definecolor{fpd31}{RGB}{255,241,170}
\definecolor{fpd32}{RGB}{255,228,142}
\definecolor{fpd33}{RGB}{255,251,194}
\definecolor{fpd34}{RGB}{255,239,165}
\definecolor{fpd35}{RGB}{255,227,139}
\definecolor{fpd36}{RGB}{254,220,125}
\definecolor{fpd37}{RGB}{254,212,112}
\definecolor{fpd38}{RGB}{255,235,155}
\definecolor{fpd39}{RGB}{254,180,79}
\definecolor{fpd40}{RGB}{253,159,68}
\definecolor{fpd41}{RGB}{255,232,150}
\definecolor{fpd42}{RGB}{254,170,72}
\definecolor{fpd43}{RGB}{252,97,48}
\definecolor{fpd44}{RGB}{254,176,75}
\definecolor{fpd45}{RGB}{252,78,42}
\caption{$W_1$ distance of the other Laplacian, with maps trained normalized and evaluated unnormalized (Appendix~\ref{sec:supp_other}), from the topological reference (top) and the zero reference (bottom). The maps lose their alignment and stay far from the zero reference.}
\label{tab:global_general_other_unnorm}
\centering
\small
\renewcommand{\arraystretch}{1.2}%
\begin{tabular}{l>{\centering\arraybackslash}p{2.7cm}>{\centering\arraybackslash}p{2.7cm}>{\centering\arraybackslash}p{2.7cm}l}
\toprule
\multicolumn{5}{@{}l}{\textbf{Other Laplacian: trained normalized, evaluated unnormalized}} \\
 & \multicolumn{3}{c}{Layer} & \\
\cmidrule(l){2-4}
Epoch & First & Middle & Last & \\
\midrule
\multicolumn{5}{@{}l}{\emph{against the topological reference}} \\
0 & \cellcolor{fpd12}$0.437_{\pm0.041}$ & \cellcolor{fpd13}$0.468_{\pm0.049}$ & \cellcolor{fpd14}$0.446_{\pm0.042}$ & \smash{\begin{tikzpicture}[baseline={([yshift=-9pt]current bounding box.north)}]
\fill[fpd0] (0,0.0000) rectangle (0.28,0.2292);
\fill[fpd1] (0,0.2292) rectangle (0.28,0.4583);
\fill[fpd2] (0,0.4583) rectangle (0.28,0.6875);
\fill[fpd3] (0,0.6875) rectangle (0.28,0.9167);
\fill[fpd4] (0,0.9167) rectangle (0.28,1.1458);
\fill[fpd5] (0,1.1458) rectangle (0.28,1.3750);
\fill[fpd6] (0,1.3750) rectangle (0.28,1.6042);
\fill[fpd7] (0,1.6042) rectangle (0.28,1.8333);
\fill[fpd8] (0,1.8333) rectangle (0.28,2.0625);
\fill[fpd9] (0,2.0625) rectangle (0.28,2.2917);
\fill[fpd10] (0,2.2917) rectangle (0.28,2.5208);
\fill[fpd11] (0,2.5208) rectangle (0.28,2.7500);
\draw[line width=0.3pt] (0,0) rectangle (0.28,2.75);
\node[right,font=\scriptsize] at (0.28,2.75) {0.52};
\node[right,font=\scriptsize] at (0.28,1.375) {0.26};
\node[right,font=\scriptsize] at (0.28,0) {0.00};
\end{tikzpicture}} \\
1 & \cellcolor{fpd15}$0.443_{\pm0.039}$ & \cellcolor{fpd16}$0.472_{\pm0.048}$ & \cellcolor{fpd10}$0.455_{\pm0.039}$ &  \\
5 & \cellcolor{fpd17}$0.449_{\pm0.037}$ & \cellcolor{fpd18}$0.483_{\pm0.043}$ & \cellcolor{fpd19}$0.469_{\pm0.034}$ &  \\
15 & \cellcolor{fpd20}$0.451_{\pm0.038}$ & \cellcolor{fpd21}$0.515_{\pm0.043}$ & \cellcolor{fpd22}$0.491_{\pm0.038}$ &  \\
50 & \cellcolor{fpd23}$0.452_{\pm0.038}$ & \cellcolor{fpd24}$0.500_{\pm0.042}$ & \cellcolor{fpd25}$0.470_{\pm0.040}$ &  \\
200 & \cellcolor{fpd26}$0.447_{\pm0.034}$ & \cellcolor{fpd27}$0.506_{\pm0.042}$ & \cellcolor{fpd28}$0.466_{\pm0.042}$ &  \\
final & \cellcolor{fpd20}$0.451_{\pm0.035}$ & \cellcolor{fpd29}$0.505_{\pm0.042}$ & \cellcolor{fpd30}$0.476_{\pm0.043}$ &  \\
\midrule
\multicolumn{5}{@{}l}{\emph{against the zero reference}} \\
0 & \cellcolor{fpd0}$0.172_{\pm0.143}$ & \cellcolor{fpd31}$0.650_{\pm0.597}$ & \cellcolor{fpd32}$1.203_{\pm1.128}$ & \smash{\begin{tikzpicture}[baseline={([yshift=-9pt]current bounding box.north)}]
\fill[fpd0] (0,0.0000) rectangle (0.28,0.2292);
\fill[fpd1] (0,0.2292) rectangle (0.28,0.4583);
\fill[fpd2] (0,0.4583) rectangle (0.28,0.6875);
\fill[fpd3] (0,0.6875) rectangle (0.28,0.9167);
\fill[fpd4] (0,0.9167) rectangle (0.28,1.1458);
\fill[fpd5] (0,1.1458) rectangle (0.28,1.3750);
\fill[fpd6] (0,1.3750) rectangle (0.28,1.6042);
\fill[fpd7] (0,1.6042) rectangle (0.28,1.8333);
\fill[fpd8] (0,1.8333) rectangle (0.28,2.0625);
\fill[fpd9] (0,2.0625) rectangle (0.28,2.2917);
\fill[fpd10] (0,2.2917) rectangle (0.28,2.5208);
\fill[fpd11] (0,2.5208) rectangle (0.28,2.7500);
\draw[line width=0.3pt] (0,0) rectangle (0.28,2.75);
\node[right,font=\scriptsize] at (0.28,2.75) {4.18};
\node[right,font=\scriptsize] at (0.28,1.375) {2.09};
\node[right,font=\scriptsize] at (0.28,0) {0.00};
\end{tikzpicture}} \\
1 & \cellcolor{fpd33}$0.188_{\pm0.145}$ & \cellcolor{fpd34}$0.735_{\pm0.648}$ & \cellcolor{fpd35}$1.249_{\pm1.128}$ &  \\
5 & \cellcolor{fpd1}$0.530_{\pm0.331}$ & \cellcolor{fpd36}$1.537_{\pm0.908}$ & \cellcolor{fpd37}$1.785_{\pm1.072}$ &  \\
15 & \cellcolor{fpd38}$0.933_{\pm0.540}$ & \cellcolor{fpd39}$2.458_{\pm1.259}$ & \cellcolor{fpd40}$2.943_{\pm1.615}$ &  \\
50 & \cellcolor{fpd41}$1.027_{\pm0.658}$ & \cellcolor{fpd42}$2.697_{\pm1.388}$ & \cellcolor{fpd43}$3.922_{\pm1.849}$ &  \\
200 & \cellcolor{fpd2}$0.875_{\pm0.672}$ & \cellcolor{fpd44}$2.548_{\pm1.593}$ & \cellcolor{fpd45}$4.179_{\pm2.132}$ &  \\
final & \cellcolor{fpd38}$0.938_{\pm0.705}$ & \cellcolor{fpd8}$2.968_{\pm1.895}$ & \cellcolor{fpd11}$4.006_{\pm1.898}$ &  \\
\bottomrule
\end{tabular}
\end{table}

\noindent\textbf{Cross-comparison: the two normalization regimes.}
For GeneralSheaf, dataset by dataset, we compare the Forman profile of the
maps trained normalized with that of the maps trained unnormalized, both
evaluated normalized (Table~\ref{tab:global_cross_a}) or both evaluated
unnormalized (Table~\ref{tab:global_cross_b}); no reference profile is
involved (Appendix~\ref{sec:supp_other}).
In both columns $W_1$ stays sustained across checkpoints, neither vanishing
nor blowing up, with values of the same size in the two (up to about $0.27$
and $0.23$).
The Forman profiles produced by the two training regimes therefore neither
converge to one another nor diverge, even where each aligns locally with the
same reference.

\begin{table}[htbp]
\definecolor{fpe0}{RGB}{255,251,195}
\definecolor{fpe1}{RGB}{255,244,176}
\definecolor{fpe2}{RGB}{255,236,158}
\definecolor{fpe3}{RGB}{255,228,141}
\definecolor{fpe4}{RGB}{254,220,123}
\definecolor{fpe5}{RGB}{254,206,106}
\definecolor{fpe6}{RGB}{254,189,88}
\definecolor{fpe7}{RGB}{254,173,74}
\definecolor{fpe8}{RGB}{253,158,67}
\definecolor{fpe9}{RGB}{253,143,61}
\definecolor{fpe10}{RGB}{253,117,53}
\definecolor{fpe11}{RGB}{252,91,46}
\definecolor{fpe12}{RGB}{254,225,135}
\definecolor{fpe13}{RGB}{254,221,127}
\definecolor{fpe14}{RGB}{254,178,76}
\definecolor{fpe15}{RGB}{254,210,110}
\definecolor{fpe16}{RGB}{254,185,84}
\definecolor{fpe17}{RGB}{253,146,62}
\definecolor{fpe18}{RGB}{254,195,94}
\definecolor{fpe19}{RGB}{253,138,59}
\definecolor{fpe20}{RGB}{253,150,64}
\definecolor{fpe21}{RGB}{253,155,66}
\definecolor{fpe22}{RGB}{253,142,61}
\definecolor{fpe23}{RGB}{252,96,47}
\definecolor{fpe24}{RGB}{254,181,79}
\definecolor{fpe25}{RGB}{253,127,56}
\definecolor{fpe26}{RGB}{252,80,43}
\definecolor{fpe27}{RGB}{253,152,65}
\definecolor{fpe28}{RGB}{253,113,52}
\definecolor{fpe29}{RGB}{253,149,64}
\definecolor{fpe30}{RGB}{253,132,57}
\caption{$W_1$ distance between the two Laplacians evaluated normalized, built from the maps trained normalized and unnormalized (cross-comparison, Appendix~\ref{sec:supp_other}). They stay apart and neither converge nor diverge.}
\label{tab:global_cross_a}
\centering
\small
\renewcommand{\arraystretch}{1.2}%
\begin{tabular}{l>{\centering\arraybackslash}p{2.7cm}>{\centering\arraybackslash}p{2.7cm}>{\centering\arraybackslash}p{2.7cm}l}
\toprule
\multicolumn{5}{@{}l}{\textbf{Cross-comparison, evaluated normalized}} \\
 & \multicolumn{3}{c}{Layer} & \\
\cmidrule(l){2-4}
Epoch & First & Middle & Last & \\
\midrule
0 & \cellcolor{fpe12}$0.086_{\pm0.029}$ & \cellcolor{fpe13}$0.096_{\pm0.014}$ & \cellcolor{fpe14}$0.162_{\pm0.037}$ & \smash{\begin{tikzpicture}[baseline={([yshift=-9pt]current bounding box.north)}]
\fill[fpe0] (0,0.0000) rectangle (0.28,0.2292);
\fill[fpe1] (0,0.2292) rectangle (0.28,0.4583);
\fill[fpe2] (0,0.4583) rectangle (0.28,0.6875);
\fill[fpe3] (0,0.6875) rectangle (0.28,0.9167);
\fill[fpe4] (0,0.9167) rectangle (0.28,1.1458);
\fill[fpe5] (0,1.1458) rectangle (0.28,1.3750);
\fill[fpe6] (0,1.3750) rectangle (0.28,1.6042);
\fill[fpe7] (0,1.6042) rectangle (0.28,1.8333);
\fill[fpe8] (0,1.8333) rectangle (0.28,2.0625);
\fill[fpe9] (0,2.0625) rectangle (0.28,2.2917);
\fill[fpe10] (0,2.2917) rectangle (0.28,2.5208);
\fill[fpe11] (0,2.5208) rectangle (0.28,2.7500);
\draw[line width=0.3pt] (0,0) rectangle (0.28,2.75);
\node[right,font=\scriptsize] at (0.28,2.75) {0.27};
\node[right,font=\scriptsize] at (0.28,1.375) {0.14};
\node[right,font=\scriptsize] at (0.28,0) {0.00};
\end{tikzpicture}} \\
1 & \cellcolor{fpe15}$0.118_{\pm0.025}$ & \cellcolor{fpe16}$0.152_{\pm0.028}$ & \cellcolor{fpe17}$0.209_{\pm0.052}$ &  \\
5 & \cellcolor{fpe18}$0.139_{\pm0.027}$ & \cellcolor{fpe19}$0.219_{\pm0.060}$ & \cellcolor{fpe20}$0.203_{\pm0.050}$ &  \\
15 & \cellcolor{fpe21}$0.195_{\pm0.027}$ & \cellcolor{fpe22}$0.214_{\pm0.030}$ & \cellcolor{fpe23}$0.255_{\pm0.038}$ &  \\
50 & \cellcolor{fpe24}$0.158_{\pm0.023}$ & \cellcolor{fpe25}$0.228_{\pm0.046}$ & \cellcolor{fpe26}$0.268_{\pm0.047}$ &  \\
200 & \cellcolor{fpe27}$0.200_{\pm0.028}$ & \cellcolor{fpe28}$0.240_{\pm0.038}$ & \cellcolor{fpe17}$0.208_{\pm0.035}$ &  \\
final & \cellcolor{fpe29}$0.204_{\pm0.028}$ & \cellcolor{fpe28}$0.240_{\pm0.038}$ & \cellcolor{fpe30}$0.224_{\pm0.043}$ &  \\
\bottomrule
\end{tabular}
\end{table}

\begin{table}[htbp]
\definecolor{fpf0}{RGB}{255,251,195}
\definecolor{fpf1}{RGB}{255,244,176}
\definecolor{fpf2}{RGB}{255,236,158}
\definecolor{fpf3}{RGB}{255,228,141}
\definecolor{fpf4}{RGB}{254,220,123}
\definecolor{fpf5}{RGB}{254,206,106}
\definecolor{fpf6}{RGB}{254,189,88}
\definecolor{fpf7}{RGB}{254,173,74}
\definecolor{fpf8}{RGB}{253,158,67}
\definecolor{fpf9}{RGB}{253,143,61}
\definecolor{fpf10}{RGB}{253,117,53}
\definecolor{fpf11}{RGB}{252,91,46}
\definecolor{fpf12}{RGB}{254,221,127}
\definecolor{fpf13}{RGB}{254,214,115}
\definecolor{fpf14}{RGB}{254,183,81}
\definecolor{fpf15}{RGB}{254,192,92}
\definecolor{fpf16}{RGB}{254,174,74}
\definecolor{fpf17}{RGB}{253,149,63}
\definecolor{fpf18}{RGB}{254,166,71}
\definecolor{fpf19}{RGB}{252,105,50}
\definecolor{fpf20}{RGB}{252,78,42}
\definecolor{fpf21}{RGB}{252,89,45}
\definecolor{fpf22}{RGB}{252,82,43}
\definecolor{fpf23}{RGB}{253,147,63}
\definecolor{fpf24}{RGB}{253,142,60}
\definecolor{fpf25}{RGB}{253,118,53}
\definecolor{fpf26}{RGB}{253,144,61}
\definecolor{fpf27}{RGB}{253,156,67}
\definecolor{fpf28}{RGB}{252,92,46}
\definecolor{fpf29}{RGB}{253,150,64}
\definecolor{fpf30}{RGB}{254,160,68}
\caption{As Table~\ref{tab:global_cross_a}, with both Laplacians evaluated unnormalized. The distances are of the same size.}
\label{tab:global_cross_b}
\centering
\small
\renewcommand{\arraystretch}{1.2}%
\begin{tabular}{l>{\centering\arraybackslash}p{2.7cm}>{\centering\arraybackslash}p{2.7cm}>{\centering\arraybackslash}p{2.7cm}l}
\toprule
\multicolumn{5}{@{}l}{\textbf{Cross-comparison, evaluated unnormalized}} \\
 & \multicolumn{3}{c}{Layer} & \\
\cmidrule(l){2-4}
Epoch & First & Middle & Last & \\
\midrule
0 & \cellcolor{fpf12}$0.082_{\pm0.030}$ & \cellcolor{fpf13}$0.095_{\pm0.015}$ & \cellcolor{fpf14}$0.132_{\pm0.025}$ & \smash{\begin{tikzpicture}[baseline={([yshift=-9pt]current bounding box.north)}]
\fill[fpf0] (0,0.0000) rectangle (0.28,0.2292);
\fill[fpf1] (0,0.2292) rectangle (0.28,0.4583);
\fill[fpf2] (0,0.4583) rectangle (0.28,0.6875);
\fill[fpf3] (0,0.6875) rectangle (0.28,0.9167);
\fill[fpf4] (0,0.9167) rectangle (0.28,1.1458);
\fill[fpf5] (0,1.1458) rectangle (0.28,1.3750);
\fill[fpf6] (0,1.3750) rectangle (0.28,1.6042);
\fill[fpf7] (0,1.6042) rectangle (0.28,1.8333);
\fill[fpf8] (0,1.8333) rectangle (0.28,2.0625);
\fill[fpf9] (0,2.0625) rectangle (0.28,2.2917);
\fill[fpf10] (0,2.2917) rectangle (0.28,2.5208);
\fill[fpf11] (0,2.5208) rectangle (0.28,2.7500);
\draw[line width=0.3pt] (0,0) rectangle (0.28,2.75);
\node[right,font=\scriptsize] at (0.28,2.75) {0.23};
\node[right,font=\scriptsize] at (0.28,1.375) {0.12};
\node[right,font=\scriptsize] at (0.28,0) {0.00};
\end{tikzpicture}} \\
1 & \cellcolor{fpf15}$0.121_{\pm0.026}$ & \cellcolor{fpf16}$0.143_{\pm0.035}$ & \cellcolor{fpf17}$0.174_{\pm0.033}$ &  \\
5 & \cellcolor{fpf18}$0.153_{\pm0.033}$ & \cellcolor{fpf19}$0.210_{\pm0.055}$ & \cellcolor{fpf20}$0.232_{\pm0.051}$ &  \\
15 & \cellcolor{fpf9}$0.181_{\pm0.038}$ & \cellcolor{fpf21}$0.222_{\pm0.036}$ & \cellcolor{fpf22}$0.227_{\pm0.021}$ &  \\
50 & \cellcolor{fpf23}$0.176_{\pm0.027}$ & \cellcolor{fpf24}$0.183_{\pm0.022}$ & \cellcolor{fpf25}$0.201_{\pm0.017}$ &  \\
200 & \cellcolor{fpf26}$0.180_{\pm0.030}$ & \cellcolor{fpf27}$0.165_{\pm0.014}$ & \cellcolor{fpf28}$0.220_{\pm0.029}$ &  \\
final & \cellcolor{fpf29}$0.173_{\pm0.025}$ & \cellcolor{fpf30}$0.160_{\pm0.012}$ & \cellcolor{fpf20}$0.232_{\pm0.037}$ &  \\
\bottomrule
\end{tabular}
\end{table}

\noindent\textbf{Joint diffusion reverses the role of normalization.}
Both joint regimes start at the identity sheaf, so their distances measure
deviation from the identity.
Unnormalized, the standard regime for this architecture, the small datasets
stay very close to the identity sheaf while on the larger ones the maps
deviate through the layers and are drawn back as training proceeds
(Table~\ref{tab:global_joint_unnorm}).
Normalized, the regime is less stable and needs a lower stalk dimension and
fewer layers; there the maps deviate largely from the identity as training
proceeds, which is the reverse of the unnormalized pattern
(Table~\ref{tab:global_joint_norm}).

\begin{table}[htbp]
\definecolor{fpg0}{RGB}{255,251,195}
\definecolor{fpg1}{RGB}{255,244,176}
\definecolor{fpg2}{RGB}{255,236,158}
\definecolor{fpg3}{RGB}{255,228,141}
\definecolor{fpg4}{RGB}{254,220,123}
\definecolor{fpg5}{RGB}{254,206,106}
\definecolor{fpg6}{RGB}{254,189,88}
\definecolor{fpg7}{RGB}{254,173,74}
\definecolor{fpg8}{RGB}{253,158,67}
\definecolor{fpg9}{RGB}{253,143,61}
\definecolor{fpg10}{RGB}{253,117,53}
\definecolor{fpg11}{RGB}{252,91,46}
\definecolor{fpg12}{RGB}{252,78,42}
\definecolor{fpg13}{RGB}{253,147,63}
\definecolor{fpg14}{RGB}{253,146,62}
\definecolor{fpg15}{RGB}{253,148,63}
\definecolor{fpg16}{RGB}{253,145,62}
\definecolor{fpg17}{RGB}{253,147,62}
\definecolor{fpg18}{RGB}{253,139,59}
\definecolor{fpg19}{RGB}{253,140,60}
\definecolor{fpg20}{RGB}{253,135,58}
\definecolor{fpg21}{RGB}{253,136,59}
\definecolor{fpg22}{RGB}{255,255,204}
\definecolor{fpg23}{RGB}{254,162,69}
\definecolor{fpg24}{RGB}{252,108,50}
\definecolor{fpg25}{RGB}{254,163,69}
\definecolor{fpg26}{RGB}{252,102,49}
\definecolor{fpg27}{RGB}{254,164,70}
\definecolor{fpg28}{RGB}{254,167,71}
\definecolor{fpg29}{RGB}{252,81,43}
\definecolor{fpg30}{RGB}{254,203,103}
\definecolor{fpg31}{RGB}{254,172,73}
\definecolor{fpg32}{RGB}{255,243,175}
\definecolor{fpg33}{RGB}{255,242,173}
\definecolor{fpg34}{RGB}{255,243,174}
\definecolor{fpg35}{RGB}{255,242,172}
\caption{$W_1$ distance of the unnormalized JointSheafParams, started at the identity sheaf, from the topological reference (top) and the zero reference (bottom), the latter measuring the deviation from the identity. The maps deviate through the layers and are drawn back with training.}
\label{tab:global_joint_unnorm}
\centering
\small
\renewcommand{\arraystretch}{1.2}%
\begin{tabular}{l>{\centering\arraybackslash}p{2.7cm}>{\centering\arraybackslash}p{2.7cm}>{\centering\arraybackslash}p{2.7cm}l}
\toprule
\multicolumn{5}{@{}l}{\textbf{JointSheafParams, unnormalized}} \\
 & \multicolumn{3}{c}{Layer} & \\
\cmidrule(l){2-4}
Epoch & First & Middle & Last & \\
\midrule
\multicolumn{5}{@{}l}{\emph{against the topological reference}} \\
0 & \cellcolor{fpg12}$0.581_{\pm0.046}$ & \cellcolor{fpg13}$0.444_{\pm0.058}$ & \cellcolor{fpg14}$0.448_{\pm0.053}$ & \smash{\begin{tikzpicture}[baseline={([yshift=-9pt]current bounding box.north)}]
\fill[fpg0] (0,0.0000) rectangle (0.28,0.2292);
\fill[fpg1] (0,0.2292) rectangle (0.28,0.4583);
\fill[fpg2] (0,0.4583) rectangle (0.28,0.6875);
\fill[fpg3] (0,0.6875) rectangle (0.28,0.9167);
\fill[fpg4] (0,0.9167) rectangle (0.28,1.1458);
\fill[fpg5] (0,1.1458) rectangle (0.28,1.3750);
\fill[fpg6] (0,1.3750) rectangle (0.28,1.6042);
\fill[fpg7] (0,1.6042) rectangle (0.28,1.8333);
\fill[fpg8] (0,1.8333) rectangle (0.28,2.0625);
\fill[fpg9] (0,2.0625) rectangle (0.28,2.2917);
\fill[fpg10] (0,2.2917) rectangle (0.28,2.5208);
\fill[fpg11] (0,2.5208) rectangle (0.28,2.7500);
\draw[line width=0.3pt] (0,0) rectangle (0.28,2.75);
\node[right,font=\scriptsize] at (0.28,2.75) {0.58};
\node[right,font=\scriptsize] at (0.28,1.375) {0.29};
\node[right,font=\scriptsize] at (0.28,0) {0.00};
\end{tikzpicture}} \\
1 & \cellcolor{fpg12}$0.581_{\pm0.046}$ & \cellcolor{fpg15}$0.443_{\pm0.056}$ & \cellcolor{fpg15}$0.442_{\pm0.053}$ &  \\
5 & \cellcolor{fpg12}$0.581_{\pm0.046}$ & \cellcolor{fpg14}$0.448_{\pm0.052}$ & \cellcolor{fpg16}$0.451_{\pm0.048}$ &  \\
15 & \cellcolor{fpg12}$0.581_{\pm0.046}$ & \cellcolor{fpg17}$0.446_{\pm0.052}$ & \cellcolor{fpg16}$0.452_{\pm0.042}$ &  \\
50 & \cellcolor{fpg12}$0.581_{\pm0.046}$ & \cellcolor{fpg16}$0.453_{\pm0.048}$ & \cellcolor{fpg13}$0.444_{\pm0.033}$ &  \\
200 & \cellcolor{fpg12}$0.581_{\pm0.046}$ & \cellcolor{fpg18}$0.468_{\pm0.037}$ & \cellcolor{fpg19}$0.465_{\pm0.027}$ &  \\
final & \cellcolor{fpg12}$0.581_{\pm0.046}$ & \cellcolor{fpg20}$0.475_{\pm0.031}$ & \cellcolor{fpg21}$0.473_{\pm0.028}$ &  \\
\midrule
\multicolumn{5}{@{}l}{\emph{against the zero reference}} \\
0 & \cellcolor{fpg22}$0.000_{\pm0.000}$ & \cellcolor{fpg23}$2.760_{\pm1.584}$ & \cellcolor{fpg24}$3.632_{\pm1.811}$ & \smash{\begin{tikzpicture}[baseline={([yshift=-9pt]current bounding box.north)}]
\fill[fpg0] (0,0.0000) rectangle (0.28,0.2292);
\fill[fpg1] (0,0.2292) rectangle (0.28,0.4583);
\fill[fpg2] (0,0.4583) rectangle (0.28,0.6875);
\fill[fpg3] (0,0.6875) rectangle (0.28,0.9167);
\fill[fpg4] (0,0.9167) rectangle (0.28,1.1458);
\fill[fpg5] (0,1.1458) rectangle (0.28,1.3750);
\fill[fpg6] (0,1.3750) rectangle (0.28,1.6042);
\fill[fpg7] (0,1.6042) rectangle (0.28,1.8333);
\fill[fpg8] (0,1.8333) rectangle (0.28,2.0625);
\fill[fpg9] (0,2.0625) rectangle (0.28,2.2917);
\fill[fpg10] (0,2.2917) rectangle (0.28,2.5208);
\fill[fpg11] (0,2.5208) rectangle (0.28,2.7500);
\draw[line width=0.3pt] (0,0) rectangle (0.28,2.75);
\node[right,font=\scriptsize] at (0.28,2.75) {4.01};
\node[right,font=\scriptsize] at (0.28,1.375) {2.00};
\node[right,font=\scriptsize] at (0.28,0) {0.00};
\end{tikzpicture}} \\
1 & \cellcolor{fpg22}$0.000_{\pm0.000}$ & \cellcolor{fpg25}$2.737_{\pm1.553}$ & \cellcolor{fpg26}$3.700_{\pm1.808}$ &  \\
5 & \cellcolor{fpg22}$0.000_{\pm0.000}$ & \cellcolor{fpg27}$2.718_{\pm1.518}$ & \cellcolor{fpg12}$4.011_{\pm1.973}$ &  \\
15 & \cellcolor{fpg22}$0.000_{\pm0.000}$ & \cellcolor{fpg28}$2.655_{\pm1.480}$ & \cellcolor{fpg29}$3.967_{\pm2.068}$ &  \\
50 & \cellcolor{fpg22}$0.000_{\pm0.000}$ & \cellcolor{fpg30}$1.892_{\pm1.027}$ & \cellcolor{fpg31}$2.545_{\pm1.468}$ &  \\
200 & \cellcolor{fpg22}$0.000_{\pm0.000}$ & \cellcolor{fpg32}$0.523_{\pm0.211}$ & \cellcolor{fpg33}$0.574_{\pm0.252}$ &  \\
final & \cellcolor{fpg22}$0.000_{\pm0.000}$ & \cellcolor{fpg34}$0.550_{\pm0.284}$ & \cellcolor{fpg35}$0.585_{\pm0.317}$ &  \\
\bottomrule
\end{tabular}
\end{table}

\begin{table}[htbp]
\definecolor{fph0}{RGB}{255,251,195}
\definecolor{fph1}{RGB}{255,244,176}
\definecolor{fph2}{RGB}{255,236,158}
\definecolor{fph3}{RGB}{255,228,141}
\definecolor{fph4}{RGB}{254,220,123}
\definecolor{fph5}{RGB}{254,206,106}
\definecolor{fph6}{RGB}{254,189,88}
\definecolor{fph7}{RGB}{254,173,74}
\definecolor{fph8}{RGB}{253,158,67}
\definecolor{fph9}{RGB}{253,143,61}
\definecolor{fph10}{RGB}{253,117,53}
\definecolor{fph11}{RGB}{252,91,46}
\definecolor{fph12}{RGB}{255,255,204}
\definecolor{fph13}{RGB}{255,227,140}
\definecolor{fph14}{RGB}{254,202,102}
\definecolor{fph15}{RGB}{254,226,137}
\definecolor{fph16}{RGB}{254,197,96}
\definecolor{fph17}{RGB}{254,222,129}
\definecolor{fph18}{RGB}{254,190,89}
\definecolor{fph19}{RGB}{254,211,111}
\definecolor{fph20}{RGB}{254,171,73}
\definecolor{fph21}{RGB}{254,201,100}
\definecolor{fph22}{RGB}{253,134,58}
\definecolor{fph23}{RGB}{254,199,99}
\definecolor{fph24}{RGB}{252,78,42}
\definecolor{fph25}{RGB}{254,187,86}
\caption{$W_1$ distance of the normalized JointSheafParams, started at the identity sheaf, from the topological reference, which here measures the deviation from the identity. The deviation grows with training. Tolokers is excluded since its training is unstable ($9$ datasets, $7$ for the middle layer). On Cora, PubMed and Questions training was interrupted, and these datasets enter with the folds that completed (between two and six of ten).}
\label{tab:global_joint_norm}
\centering
\small
\renewcommand{\arraystretch}{1.2}%
\begin{tabular}{l>{\centering\arraybackslash}p{2.7cm}>{\centering\arraybackslash}p{2.7cm}>{\centering\arraybackslash}p{2.7cm}l}
\toprule
\multicolumn{5}{@{}l}{\textbf{JointSheafParams, normalized}} \\
 & \multicolumn{3}{c}{Layer} & \\
\cmidrule(l){2-4}
Epoch & First & Middle & Last & \\
\midrule
0 & \cellcolor{fph12}$0.000_{\pm0.000}$ & \cellcolor{fph13}$0.124_{\pm0.098}$ & \cellcolor{fph14}$0.201_{\pm0.078}$ & \smash{\begin{tikzpicture}[baseline={([yshift=-9pt]current bounding box.north)}]
\fill[fph0] (0,0.0000) rectangle (0.28,0.2292);
\fill[fph1] (0,0.2292) rectangle (0.28,0.4583);
\fill[fph2] (0,0.4583) rectangle (0.28,0.6875);
\fill[fph3] (0,0.6875) rectangle (0.28,0.9167);
\fill[fph4] (0,0.9167) rectangle (0.28,1.1458);
\fill[fph5] (0,1.1458) rectangle (0.28,1.3750);
\fill[fph6] (0,1.3750) rectangle (0.28,1.6042);
\fill[fph7] (0,1.6042) rectangle (0.28,1.8333);
\fill[fph8] (0,1.8333) rectangle (0.28,2.0625);
\fill[fph9] (0,2.0625) rectangle (0.28,2.2917);
\fill[fph10] (0,2.2917) rectangle (0.28,2.5208);
\fill[fph11] (0,2.5208) rectangle (0.28,2.7500);
\draw[line width=0.3pt] (0,0) rectangle (0.28,2.75);
\node[right,font=\scriptsize] at (0.28,2.75) {0.42};
\node[right,font=\scriptsize] at (0.28,1.375) {0.21};
\node[right,font=\scriptsize] at (0.28,0) {0.00};
\end{tikzpicture}} \\
1 & \cellcolor{fph12}$0.000_{\pm0.000}$ & \cellcolor{fph15}$0.130_{\pm0.097}$ & \cellcolor{fph16}$0.211_{\pm0.077}$ &  \\
5 & \cellcolor{fph12}$0.000_{\pm0.000}$ & \cellcolor{fph17}$0.146_{\pm0.094}$ & \cellcolor{fph18}$0.226_{\pm0.075}$ &  \\
15 & \cellcolor{fph12}$0.000_{\pm0.000}$ & \cellcolor{fph19}$0.181_{\pm0.090}$ & \cellcolor{fph20}$0.267_{\pm0.068}$ &  \\
50 & \cellcolor{fph12}$0.000_{\pm0.000}$ & \cellcolor{fph21}$0.203_{\pm0.084}$ & \cellcolor{fph22}$0.346_{\pm0.052}$ &  \\
200 & \cellcolor{fph12}$0.000_{\pm0.000}$ & \cellcolor{fph23}$0.207_{\pm0.076}$ & \cellcolor{fph24}$0.421_{\pm0.039}$ &  \\
final & \cellcolor{fph12}$0.000_{\pm0.000}$ & \cellcolor{fph25}$0.232_{\pm0.076}$ & \cellcolor{fph24}$0.422_{\pm0.033}$ &  \\
\bottomrule
\end{tabular}
\end{table}

\FloatBarrier

\subsection{Per-dataset tables}
\label{sec:supp_forman_figures}

The tables above average across datasets and report $W_1$ only.
Tables~\ref{tab:perds_rm_norm}--\ref{tab:perds_questions_joint_unnorm} give the
per-dataset view, one representative dataset per regime, under the metrics
of Appendix~\ref{sec:supp_metrics}.
Each table has one block per metric, with training epochs on the rows and
layers on the columns; each cell is averaged over the eigenvalues of the
Forman term and given as mean$_{\pm\text{s.e.}}$ across the ten folds of that dataset.
Correlations are on a blue--red scale centred at zero, distances on a
yellow--red scale, and a blank cell marks a correlation that is undefined
because the profile is constant.
The remaining datasets show similar patterns, with the exceptions discussed
in Section~\ref{sec:forman_profile}.

Table~\ref{tab:perds_rm_norm} is the per-dataset counterpart of
Table~\ref{tab:global_general_norm}: on Roman Empire, alignment with the
topological reference strengthens with epochs and, at fixed epoch, with
depth.
Table~\ref{tab:perds_citeseer_unnorm} is the counterpart of
Table~\ref{tab:global_general_unnorm}: on Citeseer, the unnormalized profile
matches neither reference.
Table~\ref{tab:perds_questions_joint_unnorm} is the counterpart of
Table~\ref{tab:global_joint_unnorm}: on Questions, the maps deviate from the
identity through the layers, and the deviation is attenuated as training
proceeds.

\begin{table}[p]
\definecolor{pda0}{RGB}{59,136,190}
\definecolor{pda1}{RGB}{95,165,205}
\definecolor{pda2}{RGB}{141,194,220}
\definecolor{pda3}{RGB}{179,214,232}
\definecolor{pda4}{RGB}{213,231,241}
\definecolor{pda5}{RGB}{236,242,245}
\definecolor{pda6}{RGB}{249,239,233}
\definecolor{pda7}{RGB}{252,222,205}
\definecolor{pda8}{RGB}{249,194,167}
\definecolor{pda9}{RGB}{242,161,127}
\definecolor{pda10}{RGB}{225,120,96}
\definecolor{pda11}{RGB}{206,79,69}
\definecolor{pda12}{RGB}{233,240,244}
\definecolor{pda13}{RGB}{230,239,244}
\definecolor{pda14}{RGB}{234,241,245}
\definecolor{pda15}{RGB}{237,242,245}
\definecolor{pda16}{RGB}{242,245,246}
\definecolor{pda17}{RGB}{239,243,245}
\definecolor{pda18}{RGB}{245,246,247}
\definecolor{pda19}{RGB}{251,228,214}
\definecolor{pda20}{RGB}{252,213,191}
\definecolor{pda21}{RGB}{251,206,183}
\definecolor{pda22}{RGB}{250,233,223}
\definecolor{pda23}{RGB}{252,223,207}
\definecolor{pda24}{RGB}{249,196,169}
\definecolor{pda25}{RGB}{246,175,142}
\definecolor{pda26}{RGB}{240,156,123}
\definecolor{pda27}{RGB}{252,226,210}
\definecolor{pda28}{RGB}{248,189,161}
\definecolor{pda29}{RGB}{241,158,125}
\definecolor{pda30}{RGB}{234,142,112}
\definecolor{pda31}{RGB}{228,128,102}
\definecolor{pda32}{RGB}{247,185,156}
\definecolor{pda33}{RGB}{233,139,110}
\definecolor{pda34}{RGB}{229,131,104}
\definecolor{pda35}{RGB}{247,181,150}
\definecolor{pda36}{RGB}{245,172,139}
\definecolor{pda37}{RGB}{240,244,246}
\definecolor{pda38}{RGB}{243,245,246}
\definecolor{pda39}{RGB}{247,246,246}
\definecolor{pda40}{RGB}{250,232,222}
\definecolor{pda41}{RGB}{251,227,212}
\definecolor{pda42}{RGB}{253,219,199}
\definecolor{pda43}{RGB}{252,211,188}
\definecolor{pda44}{RGB}{252,224,208}
\definecolor{pda45}{RGB}{239,153,121}
\definecolor{pda46}{RGB}{248,187,158}
\definecolor{pda47}{RGB}{230,134,106}
\definecolor{pda48}{RGB}{251,204,180}
\definecolor{pda49}{RGB}{246,179,148}
\definecolor{pda50}{RGB}{238,150,119}
\definecolor{pda51}{RGB}{227,126,100}
\definecolor{pda52}{RGB}{244,166,131}
\definecolor{pda53}{RGB}{236,147,116}
\definecolor{pda54}{RGB}{255,251,195}
\definecolor{pda55}{RGB}{255,244,176}
\definecolor{pda56}{RGB}{255,236,158}
\definecolor{pda57}{RGB}{255,228,141}
\definecolor{pda58}{RGB}{254,220,123}
\definecolor{pda59}{RGB}{254,206,106}
\definecolor{pda60}{RGB}{254,189,88}
\definecolor{pda61}{RGB}{254,173,74}
\definecolor{pda62}{RGB}{253,158,67}
\definecolor{pda63}{RGB}{253,143,61}
\definecolor{pda64}{RGB}{253,117,53}
\definecolor{pda65}{RGB}{252,91,46}
\definecolor{pda66}{RGB}{254,214,115}
\definecolor{pda67}{RGB}{254,213,113}
\definecolor{pda68}{RGB}{254,209,109}
\definecolor{pda69}{RGB}{254,212,112}
\definecolor{pda70}{RGB}{254,208,108}
\definecolor{pda71}{RGB}{254,208,109}
\definecolor{pda72}{RGB}{254,207,107}
\definecolor{pda73}{RGB}{254,207,108}
\definecolor{pda74}{RGB}{254,224,133}
\definecolor{pda75}{RGB}{254,226,136}
\definecolor{pda76}{RGB}{255,230,145}
\definecolor{pda77}{RGB}{255,235,155}
\definecolor{pda78}{RGB}{255,237,161}
\definecolor{pda79}{RGB}{255,238,162}
\definecolor{pda80}{RGB}{255,239,166}
\definecolor{pda81}{RGB}{255,243,174}
\definecolor{pda82}{RGB}{255,244,177}
\definecolor{pda83}{RGB}{255,240,167}
\definecolor{pda84}{RGB}{255,243,176}
\definecolor{pda85}{RGB}{255,244,178}
\definecolor{pda86}{RGB}{255,245,179}
\definecolor{pda87}{RGB}{255,237,160}
\definecolor{pda88}{RGB}{255,242,172}
\definecolor{pda89}{RGB}{255,242,173}
\definecolor{pda90}{RGB}{255,237,159}
\definecolor{pda91}{RGB}{255,245,178}
\definecolor{pda92}{RGB}{255,239,164}
\definecolor{pda93}{RGB}{255,238,163}
\definecolor{pda94}{RGB}{255,230,146}
\definecolor{pda95}{RGB}{255,228,142}
\definecolor{pda96}{RGB}{254,226,137}
\definecolor{pda97}{RGB}{254,225,136}
\definecolor{pda98}{RGB}{254,225,135}
\definecolor{pda99}{RGB}{254,222,129}
\definecolor{pda100}{RGB}{254,222,128}
\definecolor{pda101}{RGB}{254,223,130}
\definecolor{pda102}{RGB}{254,221,125}
\definecolor{pda103}{RGB}{254,219,122}
\definecolor{pda104}{RGB}{255,228,140}
\definecolor{pda105}{RGB}{254,225,134}
\definecolor{pda106}{RGB}{254,221,126}
\definecolor{pda107}{RGB}{254,215,116}
\definecolor{pda108}{RGB}{254,218,121}
\definecolor{pda109}{RGB}{254,216,117}
\definecolor{pda110}{RGB}{253,134,58}
\definecolor{pda111}{RGB}{253,135,58}
\definecolor{pda112}{RGB}{253,136,58}
\definecolor{pda113}{RGB}{253,140,60}
\definecolor{pda114}{RGB}{253,141,60}
\definecolor{pda115}{RGB}{253,142,61}
\definecolor{pda116}{RGB}{253,144,61}
\definecolor{pda117}{RGB}{253,156,66}
\definecolor{pda118}{RGB}{253,155,66}
\definecolor{pda119}{RGB}{254,160,68}
\definecolor{pda120}{RGB}{254,167,71}
\definecolor{pda121}{RGB}{254,171,73}
\definecolor{pda122}{RGB}{253,154,66}
\definecolor{pda123}{RGB}{254,162,69}
\definecolor{pda124}{RGB}{254,177,76}
\definecolor{pda125}{RGB}{254,190,89}
\definecolor{pda126}{RGB}{254,202,101}
\definecolor{pda127}{RGB}{254,183,81}
\definecolor{pda128}{RGB}{254,201,101}
\definecolor{pda129}{RGB}{254,211,111}
\definecolor{pda130}{RGB}{254,219,121}
\definecolor{pda131}{RGB}{254,174,74}
\definecolor{pda132}{RGB}{254,188,86}
\definecolor{pda133}{RGB}{254,204,104}
\definecolor{pda134}{RGB}{254,217,119}
\definecolor{pda135}{RGB}{254,186,84}
\definecolor{pda136}{RGB}{254,195,94}
\definecolor{pda137}{RGB}{254,205,105}
\definecolor{pda138}{RGB}{254,218,120}
\caption{Normalized GeneralSheaf on Roman Empire, $d=5$, five layers, against the topological reference.}
\label{tab:perds_rm_norm}
\centering
\footnotesize
\renewcommand{\arraystretch}{1.15}%
\setlength{\perdsbar}{7\baselineskip}\setlength{\perdsbar}{1.15\perdsbar}%
\setlength{\perdsrowtop}{0.7\baselineskip}\setlength{\perdsrowtop}{1.15\perdsrowtop}%
\begin{tabular}{l>{\centering\arraybackslash}p{1.55cm}>{\centering\arraybackslash}p{1.55cm}>{\centering\arraybackslash}p{1.55cm}>{\centering\arraybackslash}p{1.55cm}>{\centering\arraybackslash}p{1.55cm}l}
\toprule
\multicolumn{7}{@{}l}{\textbf{GeneralSheaf, normalized: Roman Empire}} \\
 & \multicolumn{5}{c}{Layer} & \\
\cmidrule(l){2-6}
Epoch & 1 & 2 & 3 & 4 & 5 & \\
\midrule
\multicolumn{7}{@{}l}{\emph{Spearman $\rho$}} \\
0 & \cellcolor{pda12}$-0.11_{\pm0.01}$ & \cellcolor{pda13}$-0.13_{\pm0.02}$ & \cellcolor{pda14}$-0.09_{\pm0.03}$ & \cellcolor{pda12}$-0.11_{\pm0.02}$ & \cellcolor{pda15}$-0.07_{\pm0.02}$ & \smash{\begin{tikzpicture}[y=\perdsbar,baseline={([yshift=-\perdsrowtop]current bounding box.north)}]
\fill[pda0] (0,0.0000) rectangle (0.18,0.0833);
\fill[pda1] (0,0.0833) rectangle (0.18,0.1667);
\fill[pda2] (0,0.1667) rectangle (0.18,0.2500);
\fill[pda3] (0,0.2500) rectangle (0.18,0.3333);
\fill[pda4] (0,0.3333) rectangle (0.18,0.4167);
\fill[pda5] (0,0.4167) rectangle (0.18,0.5000);
\fill[pda6] (0,0.5000) rectangle (0.18,0.5833);
\fill[pda7] (0,0.5833) rectangle (0.18,0.6667);
\fill[pda8] (0,0.6667) rectangle (0.18,0.7500);
\fill[pda9] (0,0.7500) rectangle (0.18,0.8333);
\fill[pda10] (0,0.8333) rectangle (0.18,0.9167);
\fill[pda11] (0,0.9167) rectangle (0.18,1.0000);
\draw[line width=0.3pt] (0,0) rectangle (0.18,1);
\node[anchor=north west,font=\tiny,inner sep=1pt] at (0.18,1) {1.00};
\node[anchor=west,font=\tiny,inner sep=1pt] at (0.18,0.5) {0.00};
\node[anchor=south west,font=\tiny,inner sep=1pt] at (0.18,0) {-1.00};
\end{tikzpicture}} \\
1 & \cellcolor{pda16}$-0.04_{\pm0.03}$ & \cellcolor{pda17}$-0.06_{\pm0.02}$ & \cellcolor{pda16}$-0.04_{\pm0.03}$ & \cellcolor{pda18}$-0.02_{\pm0.02}$ & \cellcolor{pda16}$-0.04_{\pm0.02}$ &  \\
5 & \cellcolor{pda19}$0.20_{\pm0.05}$ & \cellcolor{pda19}$0.19_{\pm0.03}$ & \cellcolor{pda7}$0.25_{\pm0.02}$ & \cellcolor{pda20}$0.31_{\pm0.04}$ & \cellcolor{pda21}$0.35_{\pm0.02}$ &  \\
15 & \cellcolor{pda22}$0.13_{\pm0.02}$ & \cellcolor{pda23}$0.24_{\pm0.02}$ & \cellcolor{pda24}$0.41_{\pm0.02}$ & \cellcolor{pda25}$0.52_{\pm0.02}$ & \cellcolor{pda26}$0.61_{\pm0.02}$ &  \\
50 & \cellcolor{pda27}$0.21_{\pm0.03}$ & \cellcolor{pda28}$0.44_{\pm0.03}$ & \cellcolor{pda29}$0.59_{\pm0.02}$ & \cellcolor{pda30}$0.66_{\pm0.02}$ & \cellcolor{pda31}$0.72_{\pm0.02}$ &  \\
200 & \cellcolor{pda21}$0.35_{\pm0.02}$ & \cellcolor{pda32}$0.46_{\pm0.04}$ & \cellcolor{pda29}$0.59_{\pm0.02}$ & \cellcolor{pda33}$0.67_{\pm0.02}$ & \cellcolor{pda34}$0.70_{\pm0.02}$ &  \\
final & \cellcolor{pda35}$0.49_{\pm0.03}$ & \cellcolor{pda36}$0.53_{\pm0.03}$ & \cellcolor{pda9}$0.59_{\pm0.02}$ & \cellcolor{pda33}$0.68_{\pm0.02}$ & \cellcolor{pda34}$0.70_{\pm0.02}$ &  \\
\midrule
\multicolumn{7}{@{}l}{\emph{Pearson $r$}} \\
0 & \cellcolor{pda12}$-0.10_{\pm0.01}$ & \cellcolor{pda12}$-0.10_{\pm0.01}$ & \cellcolor{pda14}$-0.09_{\pm0.02}$ & \cellcolor{pda5}$-0.09_{\pm0.02}$ & \cellcolor{pda37}$-0.05_{\pm0.02}$ & \smash{\begin{tikzpicture}[y=\perdsbar,baseline={([yshift=-\perdsrowtop]current bounding box.north)}]
\fill[pda0] (0,0.0000) rectangle (0.18,0.0833);
\fill[pda1] (0,0.0833) rectangle (0.18,0.1667);
\fill[pda2] (0,0.1667) rectangle (0.18,0.2500);
\fill[pda3] (0,0.2500) rectangle (0.18,0.3333);
\fill[pda4] (0,0.3333) rectangle (0.18,0.4167);
\fill[pda5] (0,0.4167) rectangle (0.18,0.5000);
\fill[pda6] (0,0.5000) rectangle (0.18,0.5833);
\fill[pda7] (0,0.5833) rectangle (0.18,0.6667);
\fill[pda8] (0,0.6667) rectangle (0.18,0.7500);
\fill[pda9] (0,0.7500) rectangle (0.18,0.8333);
\fill[pda10] (0,0.8333) rectangle (0.18,0.9167);
\fill[pda11] (0,0.9167) rectangle (0.18,1.0000);
\draw[line width=0.3pt] (0,0) rectangle (0.18,1);
\node[anchor=north west,font=\tiny,inner sep=1pt] at (0.18,1) {1.00};
\node[anchor=west,font=\tiny,inner sep=1pt] at (0.18,0.5) {0.00};
\node[anchor=south west,font=\tiny,inner sep=1pt] at (0.18,0) {-1.00};
\end{tikzpicture}} \\
1 & \cellcolor{pda18}$-0.02_{\pm0.03}$ & \cellcolor{pda16}$-0.04_{\pm0.03}$ & \cellcolor{pda38}$-0.03_{\pm0.03}$ & \cellcolor{pda39}$0.00_{\pm0.02}$ & \cellcolor{pda38}$-0.03_{\pm0.02}$ &  \\
5 & \cellcolor{pda40}$0.15_{\pm0.04}$ & \cellcolor{pda40}$0.15_{\pm0.03}$ & \cellcolor{pda41}$0.21_{\pm0.02}$ & \cellcolor{pda42}$0.28_{\pm0.04}$ & \cellcolor{pda43}$0.33_{\pm0.02}$ &  \\
15 & \cellcolor{pda22}$0.14_{\pm0.02}$ & \cellcolor{pda44}$0.23_{\pm0.02}$ & \cellcolor{pda24}$0.40_{\pm0.02}$ & \cellcolor{pda36}$0.53_{\pm0.03}$ & \cellcolor{pda45}$0.62_{\pm0.02}$ &  \\
50 & \cellcolor{pda23}$0.23_{\pm0.03}$ & \cellcolor{pda46}$0.46_{\pm0.03}$ & \cellcolor{pda45}$0.62_{\pm0.02}$ & \cellcolor{pda47}$0.69_{\pm0.02}$ & \cellcolor{pda10}$0.75_{\pm0.02}$ &  \\
200 & \cellcolor{pda48}$0.37_{\pm0.02}$ & \cellcolor{pda49}$0.50_{\pm0.04}$ & \cellcolor{pda50}$0.64_{\pm0.02}$ & \cellcolor{pda34}$0.71_{\pm0.02}$ & \cellcolor{pda51}$0.73_{\pm0.02}$ &  \\
final & \cellcolor{pda35}$0.49_{\pm0.03}$ & \cellcolor{pda52}$0.56_{\pm0.03}$ & \cellcolor{pda53}$0.64_{\pm0.02}$ & \cellcolor{pda31}$0.72_{\pm0.02}$ & \cellcolor{pda51}$0.74_{\pm0.02}$ &  \\
\midrule
\multicolumn{7}{@{}l}{\emph{$W_1$}} \\
0 & \cellcolor{pda66}$0.37_{\pm0.01}$ & \cellcolor{pda66}$0.37_{\pm0.01}$ & \cellcolor{pda67}$0.38_{\pm0.00}$ & \cellcolor{pda68}$0.40_{\pm0.01}$ & \cellcolor{pda69}$0.38_{\pm0.01}$ & \smash{\begin{tikzpicture}[y=\perdsbar,baseline={([yshift=-\perdsrowtop]current bounding box.north)}]
\fill[pda54] (0,0.0000) rectangle (0.18,0.0833);
\fill[pda55] (0,0.0833) rectangle (0.18,0.1667);
\fill[pda56] (0,0.1667) rectangle (0.18,0.2500);
\fill[pda57] (0,0.2500) rectangle (0.18,0.3333);
\fill[pda58] (0,0.3333) rectangle (0.18,0.4167);
\fill[pda59] (0,0.4167) rectangle (0.18,0.5000);
\fill[pda60] (0,0.5000) rectangle (0.18,0.5833);
\fill[pda61] (0,0.5833) rectangle (0.18,0.6667);
\fill[pda62] (0,0.6667) rectangle (0.18,0.7500);
\fill[pda63] (0,0.7500) rectangle (0.18,0.8333);
\fill[pda64] (0,0.8333) rectangle (0.18,0.9167);
\fill[pda65] (0,0.9167) rectangle (0.18,1.0000);
\draw[line width=0.3pt] (0,0) rectangle (0.18,1);
\node[anchor=north west,font=\tiny,inner sep=1pt] at (0.18,1) {0.90};
\node[anchor=west,font=\tiny,inner sep=1pt] at (0.18,0.5) {0.45};
\node[anchor=south west,font=\tiny,inner sep=1pt] at (0.18,0) {0.00};
\end{tikzpicture}} \\
1 & \cellcolor{pda68}$0.40_{\pm0.01}$ & \cellcolor{pda70}$0.40_{\pm0.01}$ & \cellcolor{pda71}$0.40_{\pm0.02}$ & \cellcolor{pda72}$0.41_{\pm0.01}$ & \cellcolor{pda73}$0.40_{\pm0.01}$ &  \\
5 & \cellcolor{pda74}$0.29_{\pm0.01}$ & \cellcolor{pda75}$0.28_{\pm0.02}$ & \cellcolor{pda76}$0.24_{\pm0.01}$ & \cellcolor{pda77}$0.20_{\pm0.01}$ & \cellcolor{pda78}$0.18_{\pm0.01}$ &  \\
15 & \cellcolor{pda79}$0.17_{\pm0.01}$ & \cellcolor{pda80}$0.16_{\pm0.01}$ & \cellcolor{pda81}$0.12_{\pm0.01}$ & \cellcolor{pda82}$0.11_{\pm0.01}$ & \cellcolor{pda82}$0.11_{\pm0.00}$ &  \\
50 & \cellcolor{pda83}$0.15_{\pm0.00}$ & \cellcolor{pda81}$0.12_{\pm0.01}$ & \cellcolor{pda84}$0.12_{\pm0.00}$ & \cellcolor{pda85}$0.11_{\pm0.00}$ & \cellcolor{pda86}$0.10_{\pm0.01}$ &  \\
200 & \cellcolor{pda87}$0.18_{\pm0.01}$ & \cellcolor{pda88}$0.13_{\pm0.01}$ & \cellcolor{pda89}$0.13_{\pm0.01}$ & \cellcolor{pda86}$0.10_{\pm0.01}$ & \cellcolor{pda85}$0.11_{\pm0.01}$ &  \\
final & \cellcolor{pda90}$0.18_{\pm0.01}$ & \cellcolor{pda88}$0.13_{\pm0.01}$ & \cellcolor{pda89}$0.13_{\pm0.01}$ & \cellcolor{pda91}$0.10_{\pm0.01}$ & \cellcolor{pda85}$0.11_{\pm0.01}$ &  \\
\midrule
\multicolumn{7}{@{}l}{\emph{RMSE}} \\
0 & \cellcolor{pda92}$0.11_{\pm0.00}$ & \cellcolor{pda92}$0.11_{\pm0.00}$ & \cellcolor{pda92}$0.11_{\pm0.00}$ & \cellcolor{pda92}$0.11_{\pm0.00}$ & \cellcolor{pda92}$0.11_{\pm0.00}$ & \smash{\begin{tikzpicture}[y=\perdsbar,baseline={([yshift=-\perdsrowtop]current bounding box.north)}]
\fill[pda54] (0,0.0000) rectangle (0.18,0.0833);
\fill[pda55] (0,0.0833) rectangle (0.18,0.1667);
\fill[pda56] (0,0.1667) rectangle (0.18,0.2500);
\fill[pda57] (0,0.2500) rectangle (0.18,0.3333);
\fill[pda58] (0,0.3333) rectangle (0.18,0.4167);
\fill[pda59] (0,0.4167) rectangle (0.18,0.5000);
\fill[pda60] (0,0.5000) rectangle (0.18,0.5833);
\fill[pda61] (0,0.5833) rectangle (0.18,0.6667);
\fill[pda62] (0,0.6667) rectangle (0.18,0.7500);
\fill[pda63] (0,0.7500) rectangle (0.18,0.8333);
\fill[pda64] (0,0.8333) rectangle (0.18,0.9167);
\fill[pda65] (0,0.9167) rectangle (0.18,1.0000);
\draw[line width=0.3pt] (0,0) rectangle (0.18,1);
\node[anchor=north west,font=\tiny,inner sep=1pt] at (0.18,1) {0.60};
\node[anchor=west,font=\tiny,inner sep=1pt] at (0.18,0.5) {0.30};
\node[anchor=south west,font=\tiny,inner sep=1pt] at (0.18,0) {0.00};
\end{tikzpicture}} \\
1 & \cellcolor{pda93}$0.11_{\pm0.00}$ & \cellcolor{pda93}$0.11_{\pm0.00}$ & \cellcolor{pda93}$0.11_{\pm0.00}$ & \cellcolor{pda93}$0.11_{\pm0.00}$ & \cellcolor{pda93}$0.11_{\pm0.00}$ &  \\
5 & \cellcolor{pda94}$0.16_{\pm0.00}$ & \cellcolor{pda76}$0.16_{\pm0.01}$ & \cellcolor{pda95}$0.17_{\pm0.01}$ & \cellcolor{pda96}$0.19_{\pm0.00}$ & \cellcolor{pda97}$0.19_{\pm0.00}$ &  \\
15 & \cellcolor{pda98}$0.19_{\pm0.00}$ & \cellcolor{pda74}$0.20_{\pm0.00}$ & \cellcolor{pda74}$0.20_{\pm0.00}$ & \cellcolor{pda99}$0.21_{\pm0.00}$ & \cellcolor{pda100}$0.21_{\pm0.00}$ &  \\
50 & \cellcolor{pda75}$0.19_{\pm0.00}$ & \cellcolor{pda98}$0.19_{\pm0.00}$ & \cellcolor{pda101}$0.20_{\pm0.00}$ & \cellcolor{pda102}$0.22_{\pm0.00}$ & \cellcolor{pda103}$0.23_{\pm0.01}$ &  \\
200 & \cellcolor{pda104}$0.18_{\pm0.00}$ & \cellcolor{pda105}$0.19_{\pm0.00}$ & \cellcolor{pda106}$0.22_{\pm0.00}$ & \cellcolor{pda103}$0.23_{\pm0.00}$ & \cellcolor{pda107}$0.24_{\pm0.01}$ &  \\
final & \cellcolor{pda95}$0.17_{\pm0.00}$ & \cellcolor{pda75}$0.19_{\pm0.00}$ & \cellcolor{pda106}$0.22_{\pm0.00}$ & \cellcolor{pda108}$0.23_{\pm0.00}$ & \cellcolor{pda109}$0.24_{\pm0.01}$ &  \\
\midrule
\multicolumn{7}{@{}l}{\emph{NRMSE}} \\
0 & \cellcolor{pda110}$1.48_{\pm0.01}$ & \cellcolor{pda110}$1.48_{\pm0.01}$ & \cellcolor{pda111}$1.47_{\pm0.02}$ & \cellcolor{pda112}$1.47_{\pm0.01}$ & \cellcolor{pda113}$1.44_{\pm0.01}$ & \smash{\begin{tikzpicture}[y=\perdsbar,baseline={([yshift=-\perdsrowtop]current bounding box.north)}]
\fill[pda54] (0,0.0000) rectangle (0.18,0.0833);
\fill[pda55] (0,0.0833) rectangle (0.18,0.1667);
\fill[pda56] (0,0.1667) rectangle (0.18,0.2500);
\fill[pda57] (0,0.2500) rectangle (0.18,0.3333);
\fill[pda58] (0,0.3333) rectangle (0.18,0.4167);
\fill[pda59] (0,0.4167) rectangle (0.18,0.5000);
\fill[pda60] (0,0.5000) rectangle (0.18,0.5833);
\fill[pda61] (0,0.5833) rectangle (0.18,0.6667);
\fill[pda62] (0,0.6667) rectangle (0.18,0.7500);
\fill[pda63] (0,0.7500) rectangle (0.18,0.8333);
\fill[pda64] (0,0.8333) rectangle (0.18,0.9167);
\fill[pda65] (0,0.9167) rectangle (0.18,1.0000);
\draw[line width=0.3pt] (0,0) rectangle (0.18,1);
\node[anchor=north west,font=\tiny,inner sep=1pt] at (0.18,1) {1.80};
\node[anchor=west,font=\tiny,inner sep=1pt] at (0.18,0.5) {0.90};
\node[anchor=south west,font=\tiny,inner sep=1pt] at (0.18,0) {0.00};
\end{tikzpicture}} \\
1 & \cellcolor{pda63}$1.42_{\pm0.02}$ & \cellcolor{pda114}$1.44_{\pm0.02}$ & \cellcolor{pda115}$1.43_{\pm0.02}$ & \cellcolor{pda116}$1.41_{\pm0.01}$ & \cellcolor{pda115}$1.43_{\pm0.02}$ &  \\
5 & \cellcolor{pda117}$1.30_{\pm0.03}$ & \cellcolor{pda118}$1.30_{\pm0.02}$ & \cellcolor{pda119}$1.26_{\pm0.02}$ & \cellcolor{pda120}$1.19_{\pm0.03}$ & \cellcolor{pda121}$1.15_{\pm0.02}$ &  \\
15 & \cellcolor{pda122}$1.31_{\pm0.02}$ & \cellcolor{pda123}$1.24_{\pm0.02}$ & \cellcolor{pda124}$1.09_{\pm0.02}$ & \cellcolor{pda125}$0.96_{\pm0.03}$ & \cellcolor{pda126}$0.86_{\pm0.02}$ &  \\
50 & \cellcolor{pda123}$1.23_{\pm0.02}$ & \cellcolor{pda127}$1.04_{\pm0.03}$ & \cellcolor{pda128}$0.87_{\pm0.02}$ & \cellcolor{pda129}$0.78_{\pm0.02}$ & \cellcolor{pda130}$0.69_{\pm0.03}$ &  \\
200 & \cellcolor{pda131}$1.12_{\pm0.02}$ & \cellcolor{pda132}$0.99_{\pm0.04}$ & \cellcolor{pda133}$0.84_{\pm0.02}$ & \cellcolor{pda66}$0.75_{\pm0.03}$ & \cellcolor{pda134}$0.71_{\pm0.03}$ &  \\
final & \cellcolor{pda135}$1.01_{\pm0.03}$ & \cellcolor{pda136}$0.93_{\pm0.04}$ & \cellcolor{pda137}$0.83_{\pm0.02}$ & \cellcolor{pda107}$0.74_{\pm0.02}$ & \cellcolor{pda138}$0.71_{\pm0.03}$ &  \\
\bottomrule
\end{tabular}
\end{table}

\begin{table}[p]
\definecolor{pdb0}{RGB}{255,251,195}
\definecolor{pdb1}{RGB}{255,244,176}
\definecolor{pdb2}{RGB}{255,236,158}
\definecolor{pdb3}{RGB}{255,228,141}
\definecolor{pdb4}{RGB}{254,220,123}
\definecolor{pdb5}{RGB}{254,206,106}
\definecolor{pdb6}{RGB}{254,189,88}
\definecolor{pdb7}{RGB}{254,173,74}
\definecolor{pdb8}{RGB}{253,158,67}
\definecolor{pdb9}{RGB}{253,143,61}
\definecolor{pdb10}{RGB}{253,117,53}
\definecolor{pdb11}{RGB}{252,91,46}
\definecolor{pdb12}{RGB}{254,219,121}
\definecolor{pdb13}{RGB}{254,218,120}
\definecolor{pdb14}{RGB}{254,217,119}
\definecolor{pdb15}{RGB}{254,218,121}
\definecolor{pdb16}{RGB}{254,217,117}
\definecolor{pdb17}{RGB}{254,216,117}
\definecolor{pdb18}{RGB}{254,212,112}
\definecolor{pdb19}{RGB}{254,214,115}
\definecolor{pdb20}{RGB}{254,203,103}
\definecolor{pdb21}{RGB}{254,185,83}
\definecolor{pdb22}{RGB}{254,180,78}
\definecolor{pdb23}{RGB}{254,161,69}
\definecolor{pdb24}{RGB}{254,187,85}
\definecolor{pdb25}{RGB}{254,163,69}
\definecolor{pdb26}{RGB}{254,162,69}
\definecolor{pdb27}{RGB}{254,166,71}
\definecolor{pdb28}{RGB}{254,190,89}
\definecolor{pdb29}{RGB}{254,208,108}
\definecolor{pdb30}{RGB}{254,191,90}
\definecolor{pdb31}{RGB}{254,202,101}
\definecolor{pdb32}{RGB}{254,208,109}
\definecolor{pdb33}{RGB}{255,255,204}
\definecolor{pdb34}{RGB}{255,254,200}
\definecolor{pdb35}{RGB}{255,255,203}
\definecolor{pdb36}{RGB}{255,253,198}
\definecolor{pdb37}{RGB}{255,245,179}
\definecolor{pdb38}{RGB}{255,231,147}
\definecolor{pdb39}{RGB}{255,249,189}
\definecolor{pdb40}{RGB}{253,115,53}
\definecolor{pdb41}{RGB}{252,78,42}
\definecolor{pdb42}{RGB}{255,248,188}
\definecolor{pdb43}{RGB}{252,108,51}
\definecolor{pdb44}{RGB}{252,93,46}
\definecolor{pdb45}{RGB}{59,136,190}
\definecolor{pdb46}{RGB}{95,165,205}
\definecolor{pdb47}{RGB}{141,194,220}
\definecolor{pdb48}{RGB}{179,214,232}
\definecolor{pdb49}{RGB}{213,231,241}
\definecolor{pdb50}{RGB}{236,242,245}
\definecolor{pdb51}{RGB}{249,239,233}
\definecolor{pdb52}{RGB}{252,222,205}
\definecolor{pdb53}{RGB}{249,194,167}
\definecolor{pdb54}{RGB}{242,161,127}
\definecolor{pdb55}{RGB}{225,120,96}
\definecolor{pdb56}{RGB}{206,79,69}
\definecolor{pdb57}{RGB}{239,243,245}
\definecolor{pdb58}{RGB}{242,245,246}
\definecolor{pdb59}{RGB}{243,245,246}
\definecolor{pdb60}{RGB}{240,244,246}
\definecolor{pdb61}{RGB}{245,246,247}
\definecolor{pdb62}{RGB}{247,246,246}
\definecolor{pdb63}{RGB}{237,242,245}
\definecolor{pdb64}{RGB}{233,240,244}
\definecolor{pdb65}{RGB}{231,240,244}
\definecolor{pdb66}{RGB}{253,145,62}
\definecolor{pdb67}{RGB}{253,146,62}
\definecolor{pdb68}{RGB}{253,141,60}
\definecolor{pdb69}{RGB}{253,144,61}
\definecolor{pdb70}{RGB}{253,147,62}
\definecolor{pdb71}{RGB}{253,142,61}
\definecolor{pdb72}{RGB}{253,139,59}
\definecolor{pdb73}{RGB}{255,254,201}
\definecolor{pdb74}{RGB}{255,234,153}
\definecolor{pdb75}{RGB}{255,235,156}
\definecolor{pdb76}{RGB}{253,157,67}
\definecolor{pdb77}{RGB}{255,240,168}
\definecolor{pdb78}{RGB}{252,96,47}
\definecolor{pdb79}{RGB}{252,95,47}
\definecolor{pdb80}{RGB}{252,80,43}
\definecolor{pdb81}{RGB}{255,240,166}
\definecolor{pdb82}{RGB}{252,83,43}
\definecolor{pdb83}{RGB}{252,101,49}
\caption{Unnormalized GeneralSheaf on Citeseer, $d=5$, four layers, against the topological and the zero reference.}
\label{tab:perds_citeseer_unnorm}
\centering
\footnotesize
\renewcommand{\arraystretch}{1.15}%
\setlength{\perdsbar}{7\baselineskip}\setlength{\perdsbar}{1.15\perdsbar}%
\setlength{\perdsrowtop}{0.7\baselineskip}\setlength{\perdsrowtop}{1.15\perdsrowtop}%
\begin{tabular}{l>{\centering\arraybackslash}p{1.85cm}>{\centering\arraybackslash}p{1.85cm}>{\centering\arraybackslash}p{1.85cm}>{\centering\arraybackslash}p{1.85cm}l}
\toprule
\multicolumn{6}{@{}l}{\textbf{GeneralSheaf, unnormalized: Citeseer}} \\
 & \multicolumn{4}{c}{Layer} & \\
\cmidrule(l){2-5}
Epoch & 1 & 2 & 3 & 4 & \\
\midrule
\multicolumn{6}{@{}l}{\emph{$W_1$, topological}} \\
0 & \cellcolor{pdb12}$0.35_{\pm0.00}$ & \cellcolor{pdb13}$0.35_{\pm0.00}$ & \cellcolor{pdb13}$0.35_{\pm0.00}$ & \cellcolor{pdb14}$0.36_{\pm0.00}$ & \smash{\begin{tikzpicture}[y=\perdsbar,baseline={([yshift=-\perdsrowtop]current bounding box.north)}]
\fill[pdb0] (0,0.0000) rectangle (0.18,0.0833);
\fill[pdb1] (0,0.0833) rectangle (0.18,0.1667);
\fill[pdb2] (0,0.1667) rectangle (0.18,0.2500);
\fill[pdb3] (0,0.2500) rectangle (0.18,0.3333);
\fill[pdb4] (0,0.3333) rectangle (0.18,0.4167);
\fill[pdb5] (0,0.4167) rectangle (0.18,0.5000);
\fill[pdb6] (0,0.5000) rectangle (0.18,0.5833);
\fill[pdb7] (0,0.5833) rectangle (0.18,0.6667);
\fill[pdb8] (0,0.6667) rectangle (0.18,0.7500);
\fill[pdb9] (0,0.7500) rectangle (0.18,0.8333);
\fill[pdb10] (0,0.8333) rectangle (0.18,0.9167);
\fill[pdb11] (0,0.9167) rectangle (0.18,1.0000);
\draw[line width=0.3pt] (0,0) rectangle (0.18,1);
\node[anchor=north west,font=\tiny,inner sep=1pt] at (0.18,1) {0.90};
\node[anchor=west,font=\tiny,inner sep=1pt] at (0.18,0.5) {0.45};
\node[anchor=south west,font=\tiny,inner sep=1pt] at (0.18,0) {0.00};
\end{tikzpicture}} \\
1 & \cellcolor{pdb14}$0.36_{\pm0.01}$ & \cellcolor{pdb13}$0.35_{\pm0.00}$ & \cellcolor{pdb14}$0.36_{\pm0.00}$ & \cellcolor{pdb15}$0.35_{\pm0.00}$ &  \\
5 & \cellcolor{pdb16}$0.36_{\pm0.00}$ & \cellcolor{pdb17}$0.36_{\pm0.01}$ & \cellcolor{pdb18}$0.38_{\pm0.01}$ & \cellcolor{pdb19}$0.37_{\pm0.01}$ &  \\
15 & \cellcolor{pdb20}$0.42_{\pm0.01}$ & \cellcolor{pdb21}$0.51_{\pm0.02}$ & \cellcolor{pdb22}$0.53_{\pm0.02}$ & \cellcolor{pdb23}$0.62_{\pm0.02}$ &  \\
50 & \cellcolor{pdb24}$0.50_{\pm0.01}$ & \cellcolor{pdb25}$0.62_{\pm0.01}$ & \cellcolor{pdb26}$0.62_{\pm0.01}$ & \cellcolor{pdb27}$0.60_{\pm0.01}$ &  \\
200 & \cellcolor{pdb28}$0.48_{\pm0.01}$ & \cellcolor{pdb28}$0.49_{\pm0.01}$ & \cellcolor{pdb20}$0.42_{\pm0.01}$ & \cellcolor{pdb29}$0.40_{\pm0.01}$ &  \\
final & \cellcolor{pdb30}$0.48_{\pm0.01}$ & \cellcolor{pdb30}$0.48_{\pm0.01}$ & \cellcolor{pdb31}$0.43_{\pm0.01}$ & \cellcolor{pdb32}$0.40_{\pm0.01}$ &  \\
\midrule
\multicolumn{6}{@{}l}{\emph{$W_1$, zero}} \\
0 & \cellcolor{pdb33}$0.00_{\pm0.00}$ & \cellcolor{pdb33}$0.00_{\pm0.00}$ & \cellcolor{pdb33}$0.00_{\pm0.00}$ & \cellcolor{pdb33}$0.00_{\pm0.00}$ & \smash{\begin{tikzpicture}[y=\perdsbar,baseline={([yshift=-\perdsrowtop]current bounding box.north)}]
\fill[pdb0] (0,0.0000) rectangle (0.18,0.0833);
\fill[pdb1] (0,0.0833) rectangle (0.18,0.1667);
\fill[pdb2] (0,0.1667) rectangle (0.18,0.2500);
\fill[pdb3] (0,0.2500) rectangle (0.18,0.3333);
\fill[pdb4] (0,0.3333) rectangle (0.18,0.4167);
\fill[pdb5] (0,0.4167) rectangle (0.18,0.5000);
\fill[pdb6] (0,0.5000) rectangle (0.18,0.5833);
\fill[pdb7] (0,0.5833) rectangle (0.18,0.6667);
\fill[pdb8] (0,0.6667) rectangle (0.18,0.7500);
\fill[pdb9] (0,0.7500) rectangle (0.18,0.8333);
\fill[pdb10] (0,0.8333) rectangle (0.18,0.9167);
\fill[pdb11] (0,0.9167) rectangle (0.18,1.0000);
\draw[line width=0.3pt] (0,0) rectangle (0.18,1);
\node[anchor=north west,font=\tiny,inner sep=1pt] at (0.18,1) {3.59};
\node[anchor=west,font=\tiny,inner sep=1pt] at (0.18,0.5) {1.80};
\node[anchor=south west,font=\tiny,inner sep=1pt] at (0.18,0) {0.00};
\end{tikzpicture}} \\
1 & \cellcolor{pdb33}$0.00_{\pm0.00}$ & \cellcolor{pdb33}$0.00_{\pm0.00}$ & \cellcolor{pdb33}$0.00_{\pm0.00}$ & \cellcolor{pdb33}$0.00_{\pm0.00}$ &  \\
5 & \cellcolor{pdb33}$0.00_{\pm0.00}$ & \cellcolor{pdb33}$0.00_{\pm0.00}$ & \cellcolor{pdb33}$0.00_{\pm0.00}$ & \cellcolor{pdb33}$0.00_{\pm0.00}$ &  \\
15 & \cellcolor{pdb33}$0.00_{\pm0.00}$ & \cellcolor{pdb33}$0.00_{\pm0.00}$ & \cellcolor{pdb33}$0.01_{\pm0.00}$ & \cellcolor{pdb34}$0.06_{\pm0.02}$ &  \\
50 & \cellcolor{pdb35}$0.02_{\pm0.00}$ & \cellcolor{pdb36}$0.09_{\pm0.04}$ & \cellcolor{pdb37}$0.40_{\pm0.08}$ & \cellcolor{pdb38}$0.94_{\pm0.16}$ &  \\
200 & \cellcolor{pdb39}$0.25_{\pm0.02}$ & \cellcolor{pdb7}$2.26_{\pm0.27}$ & \cellcolor{pdb40}$3.17_{\pm0.22}$ & \cellcolor{pdb41}$3.59_{\pm0.26}$ &  \\
final & \cellcolor{pdb42}$0.26_{\pm0.02}$ & \cellcolor{pdb27}$2.38_{\pm0.30}$ & \cellcolor{pdb43}$3.25_{\pm0.20}$ & \cellcolor{pdb44}$3.42_{\pm0.27}$ &  \\
\midrule
\multicolumn{6}{@{}l}{\emph{Pearson $r$, topological}} \\
0 & \cellcolor{pdb57}$-0.06_{\pm0.01}$ & \cellcolor{pdb58}$-0.03_{\pm0.01}$ & \cellcolor{pdb59}$-0.03_{\pm0.02}$ & \cellcolor{pdb50}$-0.09_{\pm0.02}$ & \smash{\begin{tikzpicture}[y=\perdsbar,baseline={([yshift=-\perdsrowtop]current bounding box.north)}]
\fill[pdb45] (0,0.0000) rectangle (0.18,0.0833);
\fill[pdb46] (0,0.0833) rectangle (0.18,0.1667);
\fill[pdb47] (0,0.1667) rectangle (0.18,0.2500);
\fill[pdb48] (0,0.2500) rectangle (0.18,0.3333);
\fill[pdb49] (0,0.3333) rectangle (0.18,0.4167);
\fill[pdb50] (0,0.4167) rectangle (0.18,0.5000);
\fill[pdb51] (0,0.5000) rectangle (0.18,0.5833);
\fill[pdb52] (0,0.5833) rectangle (0.18,0.6667);
\fill[pdb53] (0,0.6667) rectangle (0.18,0.7500);
\fill[pdb54] (0,0.7500) rectangle (0.18,0.8333);
\fill[pdb55] (0,0.8333) rectangle (0.18,0.9167);
\fill[pdb56] (0,0.9167) rectangle (0.18,1.0000);
\draw[line width=0.3pt] (0,0) rectangle (0.18,1);
\node[anchor=north west,font=\tiny,inner sep=1pt] at (0.18,1) {1.00};
\node[anchor=west,font=\tiny,inner sep=1pt] at (0.18,0.5) {0.00};
\node[anchor=south west,font=\tiny,inner sep=1pt] at (0.18,0) {-1.00};
\end{tikzpicture}} \\
1 & \cellcolor{pdb60}$-0.06_{\pm0.01}$ & \cellcolor{pdb58}$-0.04_{\pm0.01}$ & \cellcolor{pdb59}$-0.03_{\pm0.01}$ & \cellcolor{pdb59}$-0.03_{\pm0.01}$ &  \\
5 & \cellcolor{pdb58}$-0.04_{\pm0.01}$ & \cellcolor{pdb61}$-0.02_{\pm0.01}$ & \cellcolor{pdb59}$-0.03_{\pm0.01}$ & \cellcolor{pdb61}$-0.02_{\pm0.01}$ &  \\
15 & \cellcolor{pdb62}$0.01_{\pm0.01}$ & \cellcolor{pdb61}$-0.02_{\pm0.02}$ & \cellcolor{pdb59}$-0.03_{\pm0.02}$ & \cellcolor{pdb58}$-0.03_{\pm0.02}$ &  \\
50 & \cellcolor{pdb58}$-0.04_{\pm0.03}$ & \cellcolor{pdb58}$-0.04_{\pm0.03}$ & \cellcolor{pdb61}$-0.02_{\pm0.01}$ & \cellcolor{pdb59}$-0.03_{\pm0.02}$ &  \\
200 & \cellcolor{pdb61}$-0.02_{\pm0.02}$ & \cellcolor{pdb63}$-0.07_{\pm0.01}$ & \cellcolor{pdb64}$-0.10_{\pm0.02}$ & \cellcolor{pdb64}$-0.10_{\pm0.01}$ &  \\
final & \cellcolor{pdb59}$-0.03_{\pm0.02}$ & \cellcolor{pdb63}$-0.07_{\pm0.01}$ & \cellcolor{pdb50}$-0.08_{\pm0.02}$ & \cellcolor{pdb65}$-0.12_{\pm0.01}$ &  \\
\midrule
\multicolumn{6}{@{}l}{\emph{NRMSE, topological}} \\
0 & \cellcolor{pdb9}$1.42_{\pm0.01}$ & \cellcolor{pdb66}$1.40_{\pm0.01}$ & \cellcolor{pdb67}$1.39_{\pm0.01}$ & \cellcolor{pdb68}$1.44_{\pm0.01}$ & \smash{\begin{tikzpicture}[y=\perdsbar,baseline={([yshift=-\perdsrowtop]current bounding box.north)}]
\fill[pdb0] (0,0.0000) rectangle (0.18,0.0833);
\fill[pdb1] (0,0.0833) rectangle (0.18,0.1667);
\fill[pdb2] (0,0.1667) rectangle (0.18,0.2500);
\fill[pdb3] (0,0.2500) rectangle (0.18,0.3333);
\fill[pdb4] (0,0.3333) rectangle (0.18,0.4167);
\fill[pdb5] (0,0.4167) rectangle (0.18,0.5000);
\fill[pdb6] (0,0.5000) rectangle (0.18,0.5833);
\fill[pdb7] (0,0.5833) rectangle (0.18,0.6667);
\fill[pdb8] (0,0.6667) rectangle (0.18,0.7500);
\fill[pdb9] (0,0.7500) rectangle (0.18,0.8333);
\fill[pdb10] (0,0.8333) rectangle (0.18,0.9167);
\fill[pdb11] (0,0.9167) rectangle (0.18,1.0000);
\draw[line width=0.3pt] (0,0) rectangle (0.18,1);
\node[anchor=north west,font=\tiny,inner sep=1pt] at (0.18,1) {1.80};
\node[anchor=west,font=\tiny,inner sep=1pt] at (0.18,0.5) {0.90};
\node[anchor=south west,font=\tiny,inner sep=1pt] at (0.18,0) {0.00};
\end{tikzpicture}} \\
1 & \cellcolor{pdb9}$1.42_{\pm0.01}$ & \cellcolor{pdb66}$1.40_{\pm0.01}$ & \cellcolor{pdb66}$1.40_{\pm0.01}$ & \cellcolor{pdb66}$1.40_{\pm0.01}$ &  \\
5 & \cellcolor{pdb69}$1.41_{\pm0.01}$ & \cellcolor{pdb67}$1.39_{\pm0.01}$ & \cellcolor{pdb66}$1.40_{\pm0.01}$ & \cellcolor{pdb66}$1.40_{\pm0.01}$ &  \\
15 & \cellcolor{pdb70}$1.38_{\pm0.01}$ & \cellcolor{pdb69}$1.41_{\pm0.02}$ & \cellcolor{pdb9}$1.42_{\pm0.01}$ & \cellcolor{pdb9}$1.42_{\pm0.01}$ &  \\
50 & \cellcolor{pdb71}$1.43_{\pm0.02}$ & \cellcolor{pdb9}$1.43_{\pm0.02}$ & \cellcolor{pdb69}$1.41_{\pm0.01}$ & \cellcolor{pdb69}$1.41_{\pm0.01}$ &  \\
200 & \cellcolor{pdb66}$1.40_{\pm0.01}$ & \cellcolor{pdb71}$1.43_{\pm0.00}$ & \cellcolor{pdb68}$1.44_{\pm0.01}$ & \cellcolor{pdb68}$1.44_{\pm0.01}$ &  \\
final & \cellcolor{pdb66}$1.41_{\pm0.01}$ & \cellcolor{pdb71}$1.43_{\pm0.01}$ & \cellcolor{pdb71}$1.43_{\pm0.01}$ & \cellcolor{pdb72}$1.45_{\pm0.01}$ &  \\
\midrule
\multicolumn{6}{@{}l}{\emph{RMSE, zero}} \\
0 & \cellcolor{pdb33}$0.00_{\pm0.00}$ & \cellcolor{pdb33}$0.00_{\pm0.00}$ & \cellcolor{pdb33}$0.00_{\pm0.00}$ & \cellcolor{pdb33}$0.00_{\pm0.00}$ & \smash{\begin{tikzpicture}[y=\perdsbar,baseline={([yshift=-\perdsrowtop]current bounding box.north)}]
\fill[pdb0] (0,0.0000) rectangle (0.18,0.0833);
\fill[pdb1] (0,0.0833) rectangle (0.18,0.1667);
\fill[pdb2] (0,0.1667) rectangle (0.18,0.2500);
\fill[pdb3] (0,0.2500) rectangle (0.18,0.3333);
\fill[pdb4] (0,0.3333) rectangle (0.18,0.4167);
\fill[pdb5] (0,0.4167) rectangle (0.18,0.5000);
\fill[pdb6] (0,0.5000) rectangle (0.18,0.5833);
\fill[pdb7] (0,0.5833) rectangle (0.18,0.6667);
\fill[pdb8] (0,0.6667) rectangle (0.18,0.7500);
\fill[pdb9] (0,0.7500) rectangle (0.18,0.8333);
\fill[pdb10] (0,0.8333) rectangle (0.18,0.9167);
\fill[pdb11] (0,0.9167) rectangle (0.18,1.0000);
\draw[line width=0.3pt] (0,0) rectangle (0.18,1);
\node[anchor=north west,font=\tiny,inner sep=1pt] at (0.18,1) {7.55};
\node[anchor=west,font=\tiny,inner sep=1pt] at (0.18,0.5) {3.77};
\node[anchor=south west,font=\tiny,inner sep=1pt] at (0.18,0) {0.00};
\end{tikzpicture}} \\
1 & \cellcolor{pdb33}$0.00_{\pm0.00}$ & \cellcolor{pdb33}$0.00_{\pm0.00}$ & \cellcolor{pdb33}$0.00_{\pm0.00}$ & \cellcolor{pdb33}$0.00_{\pm0.00}$ &  \\
5 & \cellcolor{pdb33}$0.00_{\pm0.00}$ & \cellcolor{pdb33}$0.00_{\pm0.00}$ & \cellcolor{pdb33}$0.00_{\pm0.00}$ & \cellcolor{pdb33}$0.00_{\pm0.00}$ &  \\
15 & \cellcolor{pdb33}$0.00_{\pm0.00}$ & \cellcolor{pdb35}$0.02_{\pm0.01}$ & \cellcolor{pdb73}$0.11_{\pm0.05}$ & \cellcolor{pdb74}$1.74_{\pm0.57}$ &  \\
50 & \cellcolor{pdb73}$0.11_{\pm0.02}$ & \cellcolor{pdb75}$1.67_{\pm0.39}$ & \cellcolor{pdb28}$4.05_{\pm0.53}$ & \cellcolor{pdb76}$5.40_{\pm0.52}$ &  \\
200 & \cellcolor{pdb77}$1.22_{\pm0.13}$ & \cellcolor{pdb78}$7.11_{\pm0.34}$ & \cellcolor{pdb79}$7.15_{\pm0.32}$ & \cellcolor{pdb80}$7.50_{\pm0.54}$ &  \\
final & \cellcolor{pdb81}$1.29_{\pm0.15}$ & \cellcolor{pdb82}$7.43_{\pm0.42}$ & \cellcolor{pdb41}$7.55_{\pm0.30}$ & \cellcolor{pdb83}$7.00_{\pm0.53}$ &  \\
\bottomrule
\end{tabular}
\end{table}

\begin{table}[p]
\definecolor{pdc0}{RGB}{255,251,195}
\definecolor{pdc1}{RGB}{255,244,176}
\definecolor{pdc2}{RGB}{255,236,158}
\definecolor{pdc3}{RGB}{255,228,141}
\definecolor{pdc4}{RGB}{254,220,123}
\definecolor{pdc5}{RGB}{254,206,106}
\definecolor{pdc6}{RGB}{254,189,88}
\definecolor{pdc7}{RGB}{254,173,74}
\definecolor{pdc8}{RGB}{253,158,67}
\definecolor{pdc9}{RGB}{253,143,61}
\definecolor{pdc10}{RGB}{253,117,53}
\definecolor{pdc11}{RGB}{252,91,46}
\definecolor{pdc12}{RGB}{254,199,99}
\definecolor{pdc13}{RGB}{254,212,113}
\definecolor{pdc14}{RGB}{254,192,91}
\definecolor{pdc15}{RGB}{254,214,115}
\definecolor{pdc16}{RGB}{254,195,95}
\definecolor{pdc17}{RGB}{254,195,94}
\definecolor{pdc18}{RGB}{254,215,116}
\definecolor{pdc19}{RGB}{254,202,101}
\definecolor{pdc20}{RGB}{254,201,101}
\definecolor{pdc21}{RGB}{254,205,105}
\definecolor{pdc22}{RGB}{254,209,109}
\definecolor{pdc23}{RGB}{254,210,111}
\definecolor{pdc24}{RGB}{254,218,120}
\definecolor{pdc25}{RGB}{254,218,119}
\definecolor{pdc26}{RGB}{254,217,119}
\definecolor{pdc27}{RGB}{255,255,204}
\definecolor{pdc28}{RGB}{255,255,203}
\definecolor{pdc29}{RGB}{252,78,42}
\definecolor{pdc30}{RGB}{253,142,61}
\definecolor{pdc31}{RGB}{255,254,202}
\definecolor{pdc32}{RGB}{252,79,42}
\definecolor{pdc33}{RGB}{253,133,58}
\definecolor{pdc34}{RGB}{255,252,197}
\definecolor{pdc35}{RGB}{253,120,54}
\definecolor{pdc36}{RGB}{253,141,60}
\definecolor{pdc37}{RGB}{255,250,191}
\definecolor{pdc38}{RGB}{253,157,67}
\definecolor{pdc39}{RGB}{253,147,62}
\definecolor{pdc40}{RGB}{253,138,59}
\definecolor{pdc41}{RGB}{255,247,184}
\definecolor{pdc42}{RGB}{255,240,167}
\definecolor{pdc43}{RGB}{255,247,185}
\definecolor{pdc44}{RGB}{59,136,190}
\definecolor{pdc45}{RGB}{95,165,205}
\definecolor{pdc46}{RGB}{141,194,220}
\definecolor{pdc47}{RGB}{179,214,232}
\definecolor{pdc48}{RGB}{213,231,241}
\definecolor{pdc49}{RGB}{236,242,245}
\definecolor{pdc50}{RGB}{249,239,233}
\definecolor{pdc51}{RGB}{252,222,205}
\definecolor{pdc52}{RGB}{249,194,167}
\definecolor{pdc53}{RGB}{242,161,127}
\definecolor{pdc54}{RGB}{225,120,96}
\definecolor{pdc55}{RGB}{206,79,69}
\definecolor{pdc56}{RGB}{218,233,242}
\definecolor{pdc57}{RGB}{150,199,223}
\definecolor{pdc58}{RGB}{135,190,218}
\definecolor{pdc59}{RGB}{227,237,243}
\definecolor{pdc60}{RGB}{169,209,229}
\definecolor{pdc61}{RGB}{212,230,241}
\definecolor{pdc62}{RGB}{174,211,230}
\definecolor{pdc63}{RGB}{234,241,245}
\definecolor{pdc64}{RGB}{224,236,243}
\definecolor{pdc65}{RGB}{230,239,244}
\definecolor{pdc66}{RGB}{209,229,240}
\definecolor{pdc67}{RGB}{216,233,241}
\definecolor{pdc68}{RGB}{237,242,245}
\definecolor{pdc69}{RGB}{215,232,241}
\definecolor{pdc70}{RGB}{233,240,244}
\definecolor{pdc71}{RGB}{247,246,246}
\definecolor{pdc72}{RGB}{254,187,85}
\definecolor{pdc73}{RGB}{253,126,56}
\definecolor{pdc74}{RGB}{252,86,44}
\definecolor{pdc75}{RGB}{252,80,43}
\definecolor{pdc76}{RGB}{252,98,48}
\definecolor{pdc77}{RGB}{253,139,60}
\definecolor{pdc78}{RGB}{253,123,55}
\definecolor{pdc79}{RGB}{252,99,48}
\definecolor{pdc80}{RGB}{253,140,60}
\definecolor{pdc81}{RGB}{253,129,56}
\definecolor{pdc82}{RGB}{253,139,59}
\definecolor{pdc83}{RGB}{253,136,59}
\definecolor{pdc84}{RGB}{253,122,55}
\definecolor{pdc85}{RGB}{253,121,54}
\definecolor{pdc86}{RGB}{254,182,80}
\definecolor{pdc87}{RGB}{253,145,62}
\definecolor{pdc88}{RGB}{254,171,73}
\definecolor{pdc89}{RGB}{255,252,196}
\definecolor{pdc90}{RGB}{254,162,69}
\definecolor{pdc91}{RGB}{253,155,66}
\definecolor{pdc92}{RGB}{255,249,189}
\definecolor{pdc93}{RGB}{254,179,77}
\definecolor{pdc94}{RGB}{253,142,60}
\definecolor{pdc95}{RGB}{255,251,193}
\definecolor{pdc96}{RGB}{254,213,114}
\definecolor{pdc97}{RGB}{255,234,153}
\definecolor{pdc98}{RGB}{255,235,157}
\definecolor{pdc99}{RGB}{255,242,171}
\caption{Unnormalized JointSheafParams on Questions, $d=5$, four layers, started at the identity sheaf, against the topological and the zero reference. In the first layer the profile is constant, so its correlation is undefined.}
\label{tab:perds_questions_joint_unnorm}
\centering
\footnotesize
\renewcommand{\arraystretch}{1.15}%
\setlength{\perdsbar}{7\baselineskip}\setlength{\perdsbar}{1.15\perdsbar}%
\setlength{\perdsrowtop}{0.7\baselineskip}\setlength{\perdsrowtop}{1.15\perdsrowtop}%
\begin{tabular}{l>{\centering\arraybackslash}p{1.85cm}>{\centering\arraybackslash}p{1.85cm}>{\centering\arraybackslash}p{1.85cm}>{\centering\arraybackslash}p{1.85cm}l}
\toprule
\multicolumn{6}{@{}l}{\textbf{JointSheafParams, unnormalized: Questions}} \\
 & \multicolumn{4}{c}{Layer} & \\
\cmidrule(l){2-5}
Epoch & 1 & 2 & 3 & 4 & \\
\midrule
\multicolumn{6}{@{}l}{\emph{$W_1$, topological}} \\
0 & \cellcolor{pdc12}$0.44_{\pm0.00}$ & \cellcolor{pdc13}$0.38_{\pm0.00}$ & \cellcolor{pdc14}$0.48_{\pm0.00}$ & \cellcolor{pdc14}$0.48_{\pm0.00}$ & \smash{\begin{tikzpicture}[y=\perdsbar,baseline={([yshift=-\perdsrowtop]current bounding box.north)}]
\fill[pdc0] (0,0.0000) rectangle (0.18,0.0833);
\fill[pdc1] (0,0.0833) rectangle (0.18,0.1667);
\fill[pdc2] (0,0.1667) rectangle (0.18,0.2500);
\fill[pdc3] (0,0.2500) rectangle (0.18,0.3333);
\fill[pdc4] (0,0.3333) rectangle (0.18,0.4167);
\fill[pdc5] (0,0.4167) rectangle (0.18,0.5000);
\fill[pdc6] (0,0.5000) rectangle (0.18,0.5833);
\fill[pdc7] (0,0.5833) rectangle (0.18,0.6667);
\fill[pdc8] (0,0.6667) rectangle (0.18,0.7500);
\fill[pdc9] (0,0.7500) rectangle (0.18,0.8333);
\fill[pdc10] (0,0.8333) rectangle (0.18,0.9167);
\fill[pdc11] (0,0.9167) rectangle (0.18,1.0000);
\draw[line width=0.3pt] (0,0) rectangle (0.18,1);
\node[anchor=north west,font=\tiny,inner sep=1pt] at (0.18,1) {0.90};
\node[anchor=west,font=\tiny,inner sep=1pt] at (0.18,0.5) {0.45};
\node[anchor=south west,font=\tiny,inner sep=1pt] at (0.18,0) {0.00};
\end{tikzpicture}} \\
1 & \cellcolor{pdc12}$0.44_{\pm0.00}$ & \cellcolor{pdc15}$0.37_{\pm0.00}$ & \cellcolor{pdc16}$0.46_{\pm0.00}$ & \cellcolor{pdc17}$0.46_{\pm0.00}$ &  \\
5 & \cellcolor{pdc12}$0.44_{\pm0.00}$ & \cellcolor{pdc18}$0.37_{\pm0.00}$ & \cellcolor{pdc19}$0.43_{\pm0.00}$ & \cellcolor{pdc20}$0.43_{\pm0.00}$ &  \\
15 & \cellcolor{pdc12}$0.44_{\pm0.00}$ & \cellcolor{pdc18}$0.37_{\pm0.00}$ & \cellcolor{pdc21}$0.42_{\pm0.00}$ & \cellcolor{pdc5}$0.41_{\pm0.00}$ &  \\
50 & \cellcolor{pdc12}$0.44_{\pm0.00}$ & \cellcolor{pdc18}$0.37_{\pm0.00}$ & \cellcolor{pdc5}$0.41_{\pm0.00}$ & \cellcolor{pdc22}$0.40_{\pm0.00}$ &  \\
200 & \cellcolor{pdc12}$0.44_{\pm0.00}$ & \cellcolor{pdc23}$0.39_{\pm0.00}$ & \cellcolor{pdc24}$0.35_{\pm0.01}$ & \cellcolor{pdc25}$0.36_{\pm0.01}$ &  \\
final & \cellcolor{pdc12}$0.44_{\pm0.00}$ & \cellcolor{pdc22}$0.40_{\pm0.00}$ & \cellcolor{pdc24}$0.35_{\pm0.01}$ & \cellcolor{pdc26}$0.36_{\pm0.02}$ &  \\
\midrule
\multicolumn{6}{@{}l}{\emph{$W_1$, zero}} \\
0 & \cellcolor{pdc27}$0.00_{\pm0.00}$ & \cellcolor{pdc28}$0.01_{\pm0.00}$ & \cellcolor{pdc29}$2.75_{\pm0.01}$ & \cellcolor{pdc30}$2.18_{\pm0.01}$ & \smash{\begin{tikzpicture}[y=\perdsbar,baseline={([yshift=-\perdsrowtop]current bounding box.north)}]
\fill[pdc0] (0,0.0000) rectangle (0.18,0.0833);
\fill[pdc1] (0,0.0833) rectangle (0.18,0.1667);
\fill[pdc2] (0,0.1667) rectangle (0.18,0.2500);
\fill[pdc3] (0,0.2500) rectangle (0.18,0.3333);
\fill[pdc4] (0,0.3333) rectangle (0.18,0.4167);
\fill[pdc5] (0,0.4167) rectangle (0.18,0.5000);
\fill[pdc6] (0,0.5000) rectangle (0.18,0.5833);
\fill[pdc7] (0,0.5833) rectangle (0.18,0.6667);
\fill[pdc8] (0,0.6667) rectangle (0.18,0.7500);
\fill[pdc9] (0,0.7500) rectangle (0.18,0.8333);
\fill[pdc10] (0,0.8333) rectangle (0.18,0.9167);
\fill[pdc11] (0,0.9167) rectangle (0.18,1.0000);
\draw[line width=0.3pt] (0,0) rectangle (0.18,1);
\node[anchor=north west,font=\tiny,inner sep=1pt] at (0.18,1) {2.75};
\node[anchor=west,font=\tiny,inner sep=1pt] at (0.18,0.5) {1.37};
\node[anchor=south west,font=\tiny,inner sep=1pt] at (0.18,0) {0.00};
\end{tikzpicture}} \\
1 & \cellcolor{pdc27}$0.00_{\pm0.00}$ & \cellcolor{pdc31}$0.02_{\pm0.00}$ & \cellcolor{pdc32}$2.74_{\pm0.03}$ & \cellcolor{pdc33}$2.27_{\pm0.03}$ &  \\
5 & \cellcolor{pdc27}$0.00_{\pm0.00}$ & \cellcolor{pdc34}$0.08_{\pm0.00}$ & \cellcolor{pdc35}$2.38_{\pm0.06}$ & \cellcolor{pdc36}$2.20_{\pm0.05}$ &  \\
15 & \cellcolor{pdc27}$0.00_{\pm0.00}$ & \cellcolor{pdc37}$0.17_{\pm0.01}$ & \cellcolor{pdc38}$1.96_{\pm0.07}$ & \cellcolor{pdc39}$2.11_{\pm0.10}$ &  \\
50 & \cellcolor{pdc27}$0.00_{\pm0.00}$ & \cellcolor{pdc0}$0.11_{\pm0.01}$ & \cellcolor{pdc5}$1.26_{\pm0.07}$ & \cellcolor{pdc40}$2.23_{\pm0.20}$ &  \\
200 & \cellcolor{pdc27}$0.00_{\pm0.00}$ & \cellcolor{pdc27}$0.00_{\pm0.00}$ & \cellcolor{pdc41}$0.24_{\pm0.07}$ & \cellcolor{pdc42}$0.47_{\pm0.12}$ &  \\
final & \cellcolor{pdc27}$0.00_{\pm0.00}$ & \cellcolor{pdc27}$0.01_{\pm0.00}$ & \cellcolor{pdc1}$0.35_{\pm0.11}$ & \cellcolor{pdc43}$0.24_{\pm0.08}$ &  \\
\midrule
\multicolumn{6}{@{}l}{\emph{Pearson $r$, topological}} \\
0 &  & \cellcolor{pdc56}$-0.22_{\pm0.01}$ & \cellcolor{pdc57}$-0.56_{\pm0.01}$ & \cellcolor{pdc58}$-0.61_{\pm0.01}$ & \smash{\begin{tikzpicture}[y=\perdsbar,baseline={([yshift=-\perdsrowtop]current bounding box.north)}]
\fill[pdc44] (0,0.0000) rectangle (0.18,0.0833);
\fill[pdc45] (0,0.0833) rectangle (0.18,0.1667);
\fill[pdc46] (0,0.1667) rectangle (0.18,0.2500);
\fill[pdc47] (0,0.2500) rectangle (0.18,0.3333);
\fill[pdc48] (0,0.3333) rectangle (0.18,0.4167);
\fill[pdc49] (0,0.4167) rectangle (0.18,0.5000);
\fill[pdc50] (0,0.5000) rectangle (0.18,0.5833);
\fill[pdc51] (0,0.5833) rectangle (0.18,0.6667);
\fill[pdc52] (0,0.6667) rectangle (0.18,0.7500);
\fill[pdc53] (0,0.7500) rectangle (0.18,0.8333);
\fill[pdc54] (0,0.8333) rectangle (0.18,0.9167);
\fill[pdc55] (0,0.9167) rectangle (0.18,1.0000);
\draw[line width=0.3pt] (0,0) rectangle (0.18,1);
\node[anchor=north west,font=\tiny,inner sep=1pt] at (0.18,1) {1.00};
\node[anchor=west,font=\tiny,inner sep=1pt] at (0.18,0.5) {0.00};
\node[anchor=south west,font=\tiny,inner sep=1pt] at (0.18,0) {-1.00};
\end{tikzpicture}} \\
1 &  & \cellcolor{pdc59}$-0.16_{\pm0.01}$ & \cellcolor{pdc60}$-0.46_{\pm0.01}$ & \cellcolor{pdc57}$-0.56_{\pm0.01}$ &  \\
5 &  & \cellcolor{pdc49}$-0.09_{\pm0.00}$ & \cellcolor{pdc61}$-0.26_{\pm0.01}$ & \cellcolor{pdc62}$-0.44_{\pm0.02}$ &  \\
15 &  & \cellcolor{pdc49}$-0.08_{\pm0.00}$ & \cellcolor{pdc56}$-0.22_{\pm0.01}$ & \cellcolor{pdc61}$-0.26_{\pm0.03}$ &  \\
50 &  & \cellcolor{pdc63}$-0.09_{\pm0.00}$ & \cellcolor{pdc64}$-0.17_{\pm0.02}$ & \cellcolor{pdc65}$-0.13_{\pm0.03}$ &  \\
200 &  & \cellcolor{pdc66}$-0.29_{\pm0.01}$ & \cellcolor{pdc67}$-0.22_{\pm0.04}$ & \cellcolor{pdc68}$-0.07_{\pm0.04}$ &  \\
final &  & \cellcolor{pdc69}$-0.24_{\pm0.01}$ & \cellcolor{pdc70}$-0.11_{\pm0.03}$ & \cellcolor{pdc71}$0.01_{\pm0.06}$ &  \\
\midrule
\multicolumn{6}{@{}l}{\emph{NRMSE, topological}} \\
0 & \cellcolor{pdc72}$1.00_{\pm0.00}$ & \cellcolor{pdc73}$1.53_{\pm0.01}$ & \cellcolor{pdc74}$1.75_{\pm0.01}$ & \cellcolor{pdc75}$1.79_{\pm0.00}$ & \smash{\begin{tikzpicture}[y=\perdsbar,baseline={([yshift=-\perdsrowtop]current bounding box.north)}]
\fill[pdc0] (0,0.0000) rectangle (0.18,0.0833);
\fill[pdc1] (0,0.0833) rectangle (0.18,0.1667);
\fill[pdc2] (0,0.1667) rectangle (0.18,0.2500);
\fill[pdc3] (0,0.2500) rectangle (0.18,0.3333);
\fill[pdc4] (0,0.3333) rectangle (0.18,0.4167);
\fill[pdc5] (0,0.4167) rectangle (0.18,0.5000);
\fill[pdc6] (0,0.5000) rectangle (0.18,0.5833);
\fill[pdc7] (0,0.5833) rectangle (0.18,0.6667);
\fill[pdc8] (0,0.6667) rectangle (0.18,0.7500);
\fill[pdc9] (0,0.7500) rectangle (0.18,0.8333);
\fill[pdc10] (0,0.8333) rectangle (0.18,0.9167);
\fill[pdc11] (0,0.9167) rectangle (0.18,1.0000);
\draw[line width=0.3pt] (0,0) rectangle (0.18,1);
\node[anchor=north west,font=\tiny,inner sep=1pt] at (0.18,1) {1.80};
\node[anchor=west,font=\tiny,inner sep=1pt] at (0.18,0.5) {0.90};
\node[anchor=south west,font=\tiny,inner sep=1pt] at (0.18,0) {0.00};
\end{tikzpicture}} \\
1 & \cellcolor{pdc72}$1.00_{\pm0.00}$ & \cellcolor{pdc33}$1.49_{\pm0.01}$ & \cellcolor{pdc76}$1.68_{\pm0.01}$ & \cellcolor{pdc74}$1.75_{\pm0.01}$ &  \\
5 & \cellcolor{pdc72}$1.00_{\pm0.00}$ & \cellcolor{pdc77}$1.45_{\pm0.00}$ & \cellcolor{pdc78}$1.54_{\pm0.01}$ & \cellcolor{pdc79}$1.68_{\pm0.01}$ &  \\
15 & \cellcolor{pdc72}$1.00_{\pm0.00}$ & \cellcolor{pdc80}$1.45_{\pm0.00}$ & \cellcolor{pdc81}$1.51_{\pm0.01}$ & \cellcolor{pdc35}$1.56_{\pm0.02}$ &  \\
50 & \cellcolor{pdc72}$1.00_{\pm0.00}$ & \cellcolor{pdc82}$1.45_{\pm0.00}$ & \cellcolor{pdc33}$1.48_{\pm0.02}$ & \cellcolor{pdc83}$1.47_{\pm0.02}$ &  \\
200 & \cellcolor{pdc72}$1.00_{\pm0.00}$ & \cellcolor{pdc10}$1.58_{\pm0.01}$ & \cellcolor{pdc84}$1.55_{\pm0.02}$ & \cellcolor{pdc80}$1.44_{\pm0.03}$ &  \\
final & \cellcolor{pdc72}$1.00_{\pm0.00}$ & \cellcolor{pdc85}$1.56_{\pm0.01}$ & \cellcolor{pdc83}$1.47_{\pm0.03}$ & \cellcolor{pdc39}$1.39_{\pm0.04}$ &  \\
\midrule
\multicolumn{6}{@{}l}{\emph{RMSE, zero}} \\
0 & \cellcolor{pdc27}$0.00_{\pm0.00}$ & \cellcolor{pdc28}$0.06_{\pm0.00}$ & \cellcolor{pdc9}$11.35_{\pm0.11}$ & \cellcolor{pdc86}$8.33_{\pm0.09}$ & \smash{\begin{tikzpicture}[y=\perdsbar,baseline={([yshift=-\perdsrowtop]current bounding box.north)}]
\fill[pdc0] (0,0.0000) rectangle (0.18,0.0833);
\fill[pdc1] (0,0.0833) rectangle (0.18,0.1667);
\fill[pdc2] (0,0.1667) rectangle (0.18,0.2500);
\fill[pdc3] (0,0.2500) rectangle (0.18,0.3333);
\fill[pdc4] (0,0.3333) rectangle (0.18,0.4167);
\fill[pdc5] (0,0.4167) rectangle (0.18,0.5000);
\fill[pdc6] (0,0.5000) rectangle (0.18,0.5833);
\fill[pdc7] (0,0.5833) rectangle (0.18,0.6667);
\fill[pdc8] (0,0.6667) rectangle (0.18,0.7500);
\fill[pdc9] (0,0.7500) rectangle (0.18,0.8333);
\fill[pdc10] (0,0.8333) rectangle (0.18,0.9167);
\fill[pdc11] (0,0.9167) rectangle (0.18,1.0000);
\draw[line width=0.3pt] (0,0) rectangle (0.18,1);
\node[anchor=north west,font=\tiny,inner sep=1pt] at (0.18,1) {14.38};
\node[anchor=west,font=\tiny,inner sep=1pt] at (0.18,0.5) {7.19};
\node[anchor=south west,font=\tiny,inner sep=1pt] at (0.18,0) {0.00};
\end{tikzpicture}} \\
1 & \cellcolor{pdc27}$0.00_{\pm0.00}$ & \cellcolor{pdc31}$0.11_{\pm0.00}$ & \cellcolor{pdc87}$11.23_{\pm0.17}$ & \cellcolor{pdc88}$9.16_{\pm0.19}$ &  \\
5 & \cellcolor{pdc27}$0.00_{\pm0.00}$ & \cellcolor{pdc89}$0.50_{\pm0.03}$ & \cellcolor{pdc90}$9.87_{\pm0.19}$ & \cellcolor{pdc91}$10.42_{\pm0.36}$ &  \\
15 & \cellcolor{pdc27}$0.00_{\pm0.00}$ & \cellcolor{pdc92}$1.00_{\pm0.08}$ & \cellcolor{pdc93}$8.59_{\pm0.25}$ & \cellcolor{pdc94}$11.42_{\pm0.65}$ &  \\
50 & \cellcolor{pdc27}$0.00_{\pm0.00}$ & \cellcolor{pdc95}$0.70_{\pm0.06}$ & \cellcolor{pdc96}$6.02_{\pm0.30}$ & \cellcolor{pdc29}$14.38_{\pm1.74}$ &  \\
200 & \cellcolor{pdc27}$0.00_{\pm0.00}$ & \cellcolor{pdc27}$0.03_{\pm0.00}$ & \cellcolor{pdc1}$1.81_{\pm0.52}$ & \cellcolor{pdc97}$3.37_{\pm0.72}$ &  \\
final & \cellcolor{pdc27}$0.00_{\pm0.00}$ & \cellcolor{pdc28}$0.04_{\pm0.01}$ & \cellcolor{pdc98}$3.10_{\pm1.01}$ & \cellcolor{pdc99}$2.15_{\pm0.67}$ &  \\
\bottomrule
\end{tabular}
\end{table}

\FloatBarrier

\end{document}